\def\doi{8 (2:06) 2012}
\tikzset{%
  >=latex',%
  single step/.style={>=to,->},%
  etc/.style={edge from parent/.style={-,dotted,thick,draw}},%
  shorten/.style={shorten >=#1, shorten <=#1},%
  level distance=1cm,%
  sibling distance = 10mm,%
  edge from parent/.style={->,draw},%
  fun/.style={>=to,->},%
  node name/.style args={#1:#2}{label={[black!70]#1:\small$#2$}}%
  }%
\newcommand\wsuc\oh
\newcommand\nodeAtPos[2]{\symb{node}_{#1}(#2)}
\newcommand\posAcy[1]{\calPos^a(#1)}
\newcommand\predAcy[2]{\symb{Pre}^a_{#1}(#2)}
\newcommand\lebot{\le_\bot}
\newcommand\lbot{<_\bot}
\newcommand\leboti{\lebot^\textsf{I}}
\newcommand\lebotp{\lebot^\textsf{P}}
\newcommand\lebots{\lebot^\textsf{S}}
\newcommand\lebotr{\lebot^\textsf{R}}
\newcommand\lbotr{\lbot^\textsf{R}}
\newcommand\concat{\cdot}
\newcommand\lub{\sqcup}
\newcommand\glb{\sqcap}
\newcommand\Lub{\bigsqcup}
\newcommand\Glb{\bigsqcap}
\newcommand\similar[2]{\symb{sim}(#1,#2)}
\newcommand\similarr[2]{\abssim{\truncr{}{}}(#1,#2)}
\newcommand\abssim[1]{\symb{sim}_{#1}}
\newcommand\dd{\mathbf{d}}
\newcommand\ddr{\dd_{\truncr{}{}}}
\newcommand\trunc[2]{#1|#2}
\newcommand\truncr[2]{#1\mathclose{\ddagger}#2}
\newcommand\truncs[2]{#1\mathclose{\dagger}#2}
\newcommand\fNodes[2]{N^{#1}_{=#2}}
\newcommand\tNodes[2]{N^{#1}_{<#2}}
\newcommand\nats{{\mathbb N}}
\newcommand\reals{{\mathbb R}}
\newcommand\realsp{\reals^+}
\newcommand\realsnn{\reals^+_0}
\renewcommand\epsilon{\varepsilon}
\newcommand\oh\widehat
\newcommand\ol\overline
\newcommand\ot\widetilde
\newcommand\canon[1]{\calC(#1)}
\newcommand\unrav[1]{\calU\!\left(#1\right)}
\newcommand\quotient[2]{#1/\!\raisebox{-.65ex}{\ensuremath{#2}}}
\newcommand\eqc[2]{[#1]_{#2}}
\newcommand\idrel[1]{\calI_{#1}}
\newcommand\isom{\cong}
\newcommand\nisom{\not\cong}
\newcommand\calC{\mathcal{C}}
\newcommand\calG{\mathcal{G}}
\newcommand\calI{\mathcal{I}}
\newcommand\calP{\mathcal{P}}
\newcommand\calR{\mathcal{R}}
\newcommand\calT{\mathcal{T}}
\newcommand\calU{\mathcal{U}}
\newcommand\calV{\mathcal{V}}
\newcommand\calPos{\calP}
\mathchardef\mhyphen="2D
\newcommand\fcolon{\colon\,}
\newcommand\funto{\rightarrow}
\newcommand\limto{\rightarrow}
\newcommand\len[1]{\left\lvert #1 \right\rvert}
\newcommand\prefix[2]{#1|_{#2}}
\newcommand\seq[1]{\langle #1 \rangle}
\newcommand\emptyseq{\seq{}}
\newcommand\srank[1]{\symb{ar}(#1)}
\newcommand\rank[2]{\symb{ar}_{#1}(#2)}
\newcommand\nin{\not\in}
\newcommand\prs{p}
\newcommand\mrs{m}
\newcommand\isoto{\isom}
\newcommand\homto{\rightarrow}
\newcommand{\setcom}[2]{\set{#1\left\vert\vphantom{#1}\,#2\right.}}
\newcommand{\set}[1]{\left\{#1\right\}}
\newcommand\sdepth[2]{{#2}\mhyphen\symb{depth}(#1)}
\newcommand\pos[1]{\calPos(#1)}
\newcommand\nodePos[2]{\calPos_{#1}(#2)}
\newcommand\nodePosAcy[2]{\calPos^a_{#1}(#2)}
\newcommand\depth[2]{\symb{depth}_{#1}(#2)}
\newcommand\gdepth[1]{\symb{depth}(#1)}
\newcommand\iptgraphs[1][\Sigma]{\calG^\infty(#1_\bot)}
\newcommand\tgraphs[1][\Sigma]{\calG(#1)}
\newcommand\itgraphs[1][\Sigma]{\calG^\infty(#1)}
\newcommand\ctgraphs[1][\Sigma]{\calG_\calC(#1)}
\newcommand\ictgraphs[1][\Sigma]{\calG^\infty_\calC(#1)}
\newcommand\ipctgraphs[1][\Sigma]{\calG^\infty_\calC(#1_\bot)}
\newcommand\ipterms[1][\Sigma]{\calT^\infty(#1_\bot)}
\newcommand\terms[1][\Sigma]{\calT(#1)}
\newcommand\iterms[1][\Sigma]{\calT^\infty(#1)}
\newcommand\ivterms[1][\Sigma]{\calT^\infty(#1,\calV)}
\newcommand\symb[1]{\mathsf{#1}}
\newcommand\glab{\symb{lab}}
\newcommand\gsuc{\symb{suc}}
\newcommand\atPos[2]{#1|_{#2}}
\newcommand\substAtPos[3]{#1[#3]_{#2}}
\newcommand\lhs[1]{#1_{l}}
\newcommand\rhs[1]{#1_{r}}
\newcommand\subgraph[2]{#1|_{#2}}
\def\nothing{}
\let\oldTo\to
\newcommand\preright{\mapsto}
\newcommand\finleftright{\leftrightarrow}
\newcommand\finright{\oldTo}
\newcommand\finleft{\leftarrow}
\newcommand\weakright{\hookrightarrow}
\newcommand\weakleft{\hookleftarrow}
\newcommand\strongright{\twoheadrightarrow}
\newcommand\strongleft{\twoheadleftarrow}
\newcommand\mrsright{\mathrel{\hookrightarrow^{\hspace{-10pt}m}\hspace{2pt}}}
\newcommand\mrsleft{\mathrel{\hookleftarrow^{\hspace{-7pt}m}\hspace{0pt}}}
\newcommand\prsright{\mathrel{\hookrightarrow^{\hspace{-8pt}p}\hspace{4pt}}}
\newcommand\prsleft{\mathrel{\hookleftarrow^{\hspace{-5pt}p}\hspace{0pt}}}
\newcommand{\RewArr}[2] {
  \RewStmt{#1}{\nothing}{#2}
}
\newcommand{\RewStmt}[3] {
  \def\RewArrArr{#1}
  \def\RewArrRhs{#2}
  \def\RewArrIter{#3}
  \RewArrI
}
\makeatletter \newcommand{\RewArrI}[1][\nothing] { \def\RewArrCxt{#1}
  \ifthenelse{\equal{\RewArrArr}{\oldTo} \OR
    \equal{\RewArrArr}{\preright} \OR
    \equal{\RewArrArr}{\finright} \OR
    \equal{\RewArrArr}{\finleftright} \OR
    \equal{\RewArrArr}{\prsright} \OR
    \equal{\RewArrArr}{\mrsright} \OR
    \equal{\RewArrArr}{\strongright} \OR
    \equal{\RewArrArr}{\weakright}} {
    \RewArrArr\ifthenelse{\equal{\RewArrIter}{\nothing}}{}{^{\RewArrIter}}\ifthenelse{\equal{\RewArrCxt}{\nothing}}{}{_{\RewArrCxt}}
  } { \ifthenelse{\equal{\RewArrArr}{\finleft} \OR
      \equal{\RewArrArr}{\prsleft} \OR
      \equal{\RewArrArr}{\mrsleft} \OR
      \equal{\RewArrArr}{\strongleft} \OR
      \equal{\RewArrArr}{\weakleft}}{
      \RewArrArr\ifthenelse{\equal{\RewArrIter}{\nothing}}{}{^{\RewArrIter}}\ifthenelse{\equal{\RewArrCxt}{\nothing}}{}{_{\RewArrCxt}}
    } { \latex@error{Rewrite arrow not defined}\@ehc } } \RewArrRhs }
\newcommand{\twoheadleftrightarrow}{\twoheadleftarrow\hspace{-7pt}\twoheadrightarrow}
\newcommand{\preto}{\RewArr{\preright}{\nothing}}
\renewcommand{\to}{\RewArr{\finright}{\nothing}}
\newcommand{\fto}{\RewArr{\finright}}
\newcommand{\wpato}{\RewArr{\prsright}{\nothing}}
\newcommand{\wpacont}{\RewStmt{\prsright}{\dots}{\nothing}}
\newcommand{\wmacont}{\RewStmt{\mrsright}{\dots}{\nothing}}
\newcommand{\wmato}{\RewArr{\mrsright}{\nothing}}
\newcommand{\join}[1][\nothing]
{
  \ifthenelse{\equal{#1}{\nothing}}{\downarrow}{\downarrow_{#1}}
}
\newcommand{\wconv}[1][\nothing]
{
  \ifthenelse{\equal{#1}{\nothing}}{\leftrightarrow^{w}}{\leftrightarrow^{w}_{#1}}
}
\newcommand{\sconv}[1][\nothing]
{
  \ifthenelse{\equal{#1}{\nothing}}{\twoheadleftrightarrow}{\twoheadleftrightarrow_{#1}}
}
\newcommand\cons{{\,::\,}}
\theoremstyle{definition}\newtheorem{definition}[thm]{Definition}
\theoremstyle{plain}\newtheorem{theorem}[thm]{Theorem}
\theoremstyle{plain}\newtheorem{corollary}[thm]{Corollary}
\theoremstyle{plain}\newtheorem{lemma}[thm]{Lemma}
\theoremstyle{plain}\newtheorem{proposition}[thm]{Proposition}
\theoremstyle{definition}\newtheorem{example}[thm]{Example}
\theoremstyle{definition}\newtheorem{remark}[thm]{Remark}
\newcommand\continue[1][]{\todo{{\bf continue here}}}%
\begin{document}

\title[Modes of Convergence for Term Graph Rewriting]{Modes of
  Convergence for Term Graph Rewriting\rsuper*}

\author[Patrick Bahr]{Patrick Bahr}
\address{
  Department of Computer Science, University of Copenhagen (DIKU)
  \newline
  Universitetsparken 1, 2100 Copenhagen, Denmark}
\urladdr{http://www.diku.dk/\~{}paba}
\email{paba@diku.dk}

\thanks{} 

\keywords{term graphs, partial order, metric, infinitary rewriting,
  graph rewriting}%
\subjclass{F.4.2, F.1.1}%
\titlecomment{{\lsuper*}Parts of this paper have appeared in the proceedings of
  RTA 2011 \cite{bahr11rta}.}%


\begin{abstract}
  \noindent 
  Term graph rewriting provides a simple mechanism to finitely
  represent restricted forms of infinitary term rewriting. The
  correspondence between infinitary term rewriting and term graph
  rewriting has been studied to some extent. However, this endeavour
  is impaired by the lack of an appropriate counterpart of infinitary
  rewriting on the side of term graphs. We aim to fill this gap by
  devising two modes of convergence based on a partial order respectively a
  metric on term graphs. The thus obtained structures generalise
  corresponding modes of convergence that are usually studied in
  infinitary term rewriting. 

  We argue that this yields a common framework in which both term
  rewriting and term graph rewriting can be studied. In order to
  substantiate our claim, we compare convergence on term graphs and on
  terms. In particular, we show that the modes of convergence on term
  graphs are conservative extensions of the corresponding modes of
  convergence on terms and are preserved under unravelling term graphs
  to terms. Moreover, we show that many of the properties known from
  infinitary term rewriting are preserved. This includes the intrinsic
  completeness of both modes of convergence and the fact that
  convergence via the partial order is a conservative extension of the
  metric convergence.
\end{abstract}

\maketitle

\section{Introduction}
\label{S:one}

Non-terminating computations are not necessarily undesirable. For
instance, the termination of a reactive system would be usually
considered a critical failure. Even computations that, given an input
$x$, should produce an output $y$ are not necessarily terminating in
nature either. For example, the various iterative approximation
algorithms for $\pi$ produce approximations of increasing accuracy
without ever terminating with the exact value of $\pi$. While such
iterative approximation computations might not reach the exact target
value, they are able to come arbitrary close to the correct value
within finite time.

It is this kind of non-terminating computations which is the subject
of infinitary term rewriting \cite{kennaway03book}. It extends the
theory of term rewriting by giving a meaning to transfinite reductions
instead of dismissing them as undesired and meaningless
artifacts. Following the paradigm of iterative approximations, the
result of a transfinite reduction is simply the term that is
approximated by the reduction. In general, such a result term can be
infinite. For example, starting from the term $rep(0)$, the rewrite
rule $rep(x) \to x \cons rep(x)$ produces a reduction
\[
rep(0) \to 0 \cons rep(0) \to 0 \cons 0 \cons rep(0) \to 0 \cons 0
\cons 0 \cons rep(0) \to \dots
\]
that approximates the infinite term $0 \cons 0 \cons 0 \cons
\dots$. Here, we use $\cons$ as a binary symbol that we write infix
and assume to associate to the right. That is, the term $0 \cons 0
\cons rep(0)$ is parenthesised as $0 \cons (0 \cons rep(0))$. Think of
the $\cons$ symbol as the list constructor \emph{cons}.

\emph{Term graphs}, on the other hand, allow us to explicitly
represent and reason about sharing and recursion \cite{ariola97ic} by
dropping the restriction to a tree structure, which we have for
terms. Apart from that, term graphs also provide a finite
representation of certain infinite terms, viz.\ \emph{rational
  terms}. As Kennaway et
al.~\cite{kennaway95segragra,kennaway94toplas} have shown, this can be
leveraged in order to finitely represent restricted forms of
infinitary term rewriting using \emph{term graph rewriting}.

In this paper, we extend the theory of infinitary term rewriting to
the setting of term graphs. To this end, we devise modes of
convergence that constrain reductions of transfinite length in a
meaningful way. Our approach to convergence is twofold: we generalise
the metric on terms that is used to define convergence for infinitary
term rewriting \cite{dershowitz91tcs} to term graphs. In a similar
way, we generalise the partial order on terms that has been recently
used to define a closely related notion of convergence for infinitary
term rewriting \cite{bahr10rta2}. The use of two different -- but on
terms closely related -- approaches to convergence will allow us both
to assess the appropriateness of the resulting infinitary calculi and
to compare them against the corresponding infinitary calculi of term
rewriting.

\subsection{Motivation}
\label{sec:motivation}

\subsubsection{Lazy Evaluation}
\label{sec:lazy-evaluation}

Term rewriting is a useful formalism for studying declarative
programs, in particular, functional programs. A functional program
essentially consists of functions defined by a set of equations and an
expression that is supposed to be evaluated according to these
equations. The conceptual process of evaluating an expression is
nothing else than term rewriting.

A particularly interesting feature of modern functional programming
languages, such as Haskell~\cite{marlow10haskell}, is the ability to
use \emph{conceptually} infinite computations and data structures. For
example, the following definition of a function \texttt{from}
constructs for each number $n$ the infinite list of consecutive
numbers starting from $n$:
\begin{center}
  \verb|from(n) = n :: from(s(n))|
\end{center}
Here, we use the binary infix symbol \texttt{::} to denote the list
constructor \emph{cons} and \texttt{s} for the successor
function. While we cannot use the infinite list generated by
\texttt{from} directly -- the evaluation of an expression of the form
\texttt{from n} does not terminate -- we can use it in a setting in
which we only read a finite prefix of the infinite list conceptually
defined by \texttt{from}. Functional languages such as Haskell allow
this use of semantically infinite data structures through a
\emph{non-strict evaluation} strategy, which delays the evaluation of
a subexpression until its result is actually required for further
evaluation of the expression. This non-strict semantics is not only a
conceptual neatness but in fact one of the major features that make
functional programs highly modular~\cite{hughes89cj}.

The above definition of the function \texttt{from} can be represented
as a term rewriting system with the following rule:
\[
from(x) \to x \cons from(s(x))
\]
Starting with the term $from(0)$, we then obtain the following
infinite reduction:
\[
from(0) \to 0 \cons from(s(0)) \to 0 \cons s(0) \cons from(s(s(0))) \to
\dots
\]
Infinitary term rewriting~\cite{kennaway03book} provides a notion of
convergence that may assign a meaningful result term to such an
infinite reduction provided there exists one. In this sense, the above
reduction converges to the infinite term $0 \cons s(0) \cons s(s(0))
\cons \dots$, which represents the infinite list of numbers $0, 1, 2,
\dots$. Due to this extension of term rewriting with explicit limit
constructions for non-terminating reductions, infinitary term
rewriting allows us to directly reason about non-terminating functions
and infinite data structures.

Non-strict evaluation is rarely found unescorted, though. Usually, it
is implemented as \emph{lazy evaluation}~\cite{henderson76popl}, which
complements a non-strict evaluation strategy with \emph{sharing}. The
latter avoids duplication of subexpressions by using pointers instead
of copying. For example, the function \texttt{from} above duplicates
its argument \texttt{n} -- it occurs twice on the right-hand side of
the defining equation. A lazy evaluator simulates this duplication by
inserting two pointers pointing to the actual argument. Sharing is a
natural companion for non-strict evaluation as it avoids re-evaluation
of expressions that are duplicated before they are evaluated.

The underlying formalism that is typically used to obtain sharing for
functional programming languages is term graph
rewriting~\cite{jones87book,plasmeijer93book}. Term graph
rewriting~\cite{barendregt87parle,plump99hggcbgt} uses graphs to
represent terms thus allowing multiple arcs to point to the same
node. For example, term graphs allows us to change the representation
of the term rewrite rule defining the function $from$ by replacing
\begin{center}
  \begin{tikzpicture}[
  level distance=9mm,%
  sibling distance = 9mm]
    \node (c) {$\cons$}%
    child{%
      node (x2) {$x$}%
    } child {%
      node (from) {$from$}%
      child {%
        node (s) {$s$}%
        child {%
          node {$x$}%
        }%
      }%
    };%
    \node [left=5mm] at (x2) {the tree representation};
    \node[node distance=7cm,right of=c] {$\cons$}%
    child{%
      node (x) {$x$}%
    } child {%
      node {$from$}%
      child {%
        node (s) {$s$}%
      }%
    };%
    \draw (s) edge[->,out=-105,in=-90] (x); \path (from) -- (x)
    node[midway] {by a graph representation};
  \end{tikzpicture}
\end{center}
which shares the variable $x$ by having two arcs pointing to it.

While infinitary term rewriting is used to model the non-strictness of
lazy evaluation, term graph rewriting models the sharing part of it.
By endowing term graph rewriting with a notion of convergence, we aim
to unify the two formalisms into one calculus, thus allowing us to
model both aspects withing the same calculus.

\subsubsection{Rational Terms}
\label{sec:rational-terms}

Term graphs can do more than only share common subexpressions. Through
cycles term graphs may also provide a finite representation of certain
infinite terms -- so-called \emph{rational terms}. For example, the
infinite term $0 \cons 0 \cons 0 \cons \dots$ can be represented as
the finite term graph
\begin{center}
  \begin{tikzpicture}[
  level distance=9mm,%
  sibling distance = 9mm]
    \node (c) {$\cons$}%
    child {%
      node {$0$}%
    } child [missing];%
    \draw (c) edge[->,out=-45,in=0,min distance=1cm] (c);%
  \end{tikzpicture}
\end{center}
Since a single node on a cycle in a term graph represents infinitely
many corresponding subterms, the contraction of a single term graph
redex may correspond to a transfinite term reduction that contracts
infinitely many term redexes. For example, if we apply the rewrite
rule $0 \to s(0)$ to the above term graph, we obtain a term graph that
represents the term $s(0)\cons s(0) \cons s(0) \cons \dots$, which can
only be obtained from the term $0 \cons 0 \cons 0 \cons \dots$ via a
\emph{transfinite} term reduction with the rule $0 \to s(0)$. Kennaway
et al.~\cite{kennaway94toplas} investigated this correspondence
between cyclic term graph rewriting and infinitary term
rewriting. Among other results they characterise a subset of
transfinite term reductions -- called \emph{rational reductions} --
that can be simulated by a corresponding finite term graph
reduction. The above reduction from the term $0 \cons 0 \cons 0 \cons
\dots$ is an example of such a rational reduction.

With the help of a unified formalism for infinitary and term graph
rewriting, it should be easier to study the correspondence between
infinitary term rewriting and finitary term graph rewriting
further. The move from an infinitary term rewriting system to a term
graph rewriting system only amounts to a change in the degree of
sharing if we use infinitary term graph rewriting as a common
framework.

Reconsider the term rewrite rule $rep(x) \to x \cons rep(x)$, which
defines a function $rep$ that repeats its argument infinitely often:
\[
rep(0) \to 0 \cons rep(0) \to 0 \cons 0 \cons rep(0) \to  0 \cons 0
\cons 0 \cons rep(0) \to \quad \dots \quad 0 \cons 0 \cons 0 \cons \dots
\]
This reduction happens to be not a rational reduction in the sense of
Kennaway et al.~\cite{kennaway94toplas}.

The move from the term rule $rep(x) \to x \cons rep(x)$ to a term
graph rule is a simple matter of introducing sharing of common
subexpressions:
\begin{center}
  \begin{tikzpicture}[
  level distance=9mm,%
  sibling distance = 9mm]
    \node (l) {$rep$}%
    child {%
      node (x) {$x$}%
    };%

    \node[node distance=1.5cm,right=of l]
    (r) {$\cons$}%
    child {%
      node (b) {$x$}%
    } child {%
      node (c) {$rep$}%
      child {%
        node (a') {$x$}%
      }%
    };%
    
    \draw (l) edge[single step,shorten=5mm] (r);%
    
    \node [node distance=6cm,right=of r]
    (l) {$rep$}%
    child {%
      node (x) {$x$}%
    };%
    
    \path (c) -- (x) node[midway] {is represented by};

    \node[node distance=1.5cm,right=of l]
    (r) {$\cons$};%
    \draw%
    (r) edge[->,out=-45,in=-25] (l)%
    (r) edge[->,out=-135,in=0] (x);%
    \draw (l) edge[single step,shorten=5mm] (r);%
  \end{tikzpicture}
\end{center}
Instead of creating a fresh copy of the redex on the right-hand side,
the redex is reused by placing an edge from the right-hand side of the
rule to its left-hand side. This allows us to represent the infinite
reduction approximating the infinite term $0 \cons 0 \cons 0 \cons
\dots$ with the following single step term graph reduction induced by
the above term graph rule:
\begin{center}
  \begin{tikzpicture}[
  level distance=9mm,%
  sibling distance = 9mm]
    \node [node distance=4cm,right=of r]
    (l) {$rep$}%
    child {%
      node (x) {$0$}%
    };%

    \node[node distance=1.5cm,right=of l]%
    (r) {$\cons$}%
    child {%
      node (b) {$0$}%
    } child [missing];%
    \draw%
    (r) edge[min distance=1cm,out=-45,in=0,->] (r);%
    \draw (l) edge[single step,shorten=5mm] (r);%
  \end{tikzpicture}
\end{center}
Via its cyclic structure the resulting term graph represents the
infinite term $0 \cons 0 \cons 0 \cons \dots$.

Since both transfinite term reductions and the corresponding finite
term graph reductions can be treated within the same formalism, we
hope to provide a tool for studying the ability of cyclic term graph
rewriting to finitely represent transfinite term reductions.

\subsection{Contributions \& Related Work}
\label{sec:contr-relat}

\subsubsection{Contributions}
\label{sec:contributions}

The main contributions of this paper are the following:
\begin{enumerate}[(i)]
\item We devise a partial order on term graphs based on a restricted
  class of graph homomorphisms. We show that this partial order forms
  a complete semilattice and thus is technically suitable for defining
  a notion of convergence (Theorem~\ref{thm:graphBcpo}). Moreover, we
  illustrate alternative partial orders and show why they are not
  suitable for formalising convergence on term graphs.
\item Independently, we devise a metric on term graphs and show that
  it forms a complete ultrametric space on term graphs
  (Theorem~\ref{thr:metricGraphCompl}).
\item Based on the partial order respectively the metric we define a
  notion of \emph{weak convergence} for infinitary term graph
  rewriting. We show that -- similar to the term rewriting case
  \cite{bahr10rta2} -- the metric calculus of infinitary term graph
  rewriting is the \emph{total fragment} of the partial order calculus
  of infinitary term graph rewriting
  (Theorem~\ref{thr:graphTotalConv}).
\item We confirm that the partial order and the metric on term graphs
  generalise the partial order respectively the metric that is used
  for infinitary term rewriting
  (Proposition~\ref{prop:liminfGeneralise} and
  \ref{prop:limGeneralise}). Moreover, we show that the
  corresponding notions of convergence are preserved by unravelling
  term graphs to terms thus establishing the soundness of our notions
  of convergence on term graphs w.r.t.\ the convergence on terms
  (Theorems~\ref{thr:liminfUnrav} and \ref{thr:limUnrav}).
\item We substantiate the appropriateness of our calculi by a number
  of examples that illustrate how increasing the sharing gradually
  reduces the number of steps necessary to reach the final result --
  eventually, from an infinite number of steps to a finite number
  (Sections~\ref{sec:infin-term-graph} and \ref{sec:soundn--compl}).
\end{enumerate}

\subsubsection{Related Work}
\label{sec:related-work}

Calculi with explicit sharing and/or recursion, e.g.\ via
\emph{letrec}, can also be considered as a form of term graph
rewriting. Ariola and Klop~\cite{ariola97ic} recognised that adding
such an explicit recursion mechanism to the lambda calculus may break
confluence. In order to reconcile this, Ariola and
Blom~\cite{ariola02apal,ariola05ptc} developed a notion of skew
confluence that allows them to define an infinite normal form in the
vein of Böhm trees.

Recently, we have investigated other notions of convergence for term
graph rewriting~\cite{bahr12rta,bahr12mscs} that use simpler variants
of the partial order and the metric that we use in this paper. Both of
them have theoretically pleasing properties, e.g.\ the ideal
completion and the metric completion of the set of finite term graphs
both yield the set of all term graphs. However, the resulting notions
of weak convergence are not fully satisfying and in fact
counterintuitive for some cases. We will discuss this alternative
approach and compare it to the present approach in more detail in
Sections~\ref{sec:partial-order-lebot1} and
\ref{sec:alternative-metric}.

\subsection{Overview}
\label{sec:overview}
The structure of this paper is as follows: in
Section~\ref{sec:preliminaries}, we provide the necessary background
for metric spaces, partially ordered sets and term rewriting. In
Section~\ref{sec:infin-term-rewr}, we give an overview of infinitary
term rewriting.  Section~\ref{sec:term-graphs} provides the necessary
theory for graphs and term
graphs. Sections~\ref{sec:partial-order-lebot1} and
\ref{sec:alternative-metric} form the core of this paper. In these
sections we study the partial order and the metric on term graphs that
are the basis for the modes of convergence we propose in this
paper. In Section~\ref{sec:metric-vs.-partial}, we then compare the
two resulting modes of convergence. Moreover, in
Section~\ref{sec:infin-term-graph}, we use these two modes of
convergence to study two corresponding infinitary term graph rewriting
calculi. In Section~\ref{sec:soundn--compl}, we study the
correspondences between infinitary term graph rewriting and infinitary
term rewriting.

Some proofs have been omitted from the main body of the text. These
proofs can be found in the appendix of this paper.
\tableofcontents

\section{Preliminaries}
\label{sec:preliminaries}

We assume the reader to be familiar with the basic theory of ordinal
numbers, orders and topological spaces \cite{kelley55book}, as well as
term rewriting \cite{terese03book}. In order to make this paper
self-contained, we briefly recall all necessary preliminaries.

\subsection{Sequences}
We use the von Neumann definition of ordinal numbers. That is, an
\emph{ordinal number} (or simply \emph{ordinal}) $\alpha$ is the set
of all ordinal numbers strictly smaller than $\alpha$. In particular,
each natural number $n\in \nats$ is an ordinal number with $n =
\set{0,1,\dots,n - 1}$. The least infinite ordinal number is denoted
by $\omega$ and is the set of all natural numbers. Ordinal numbers
will be denoted by lower case Greek letters $\alpha, \beta, \gamma,
\lambda, \iota$.

A \emph{sequence} $S$ of \emph{length} $\alpha$ in a set $A$, written
$(a_\iota)_{\iota < \alpha}$, is a function from $\alpha$ to $A$ with
$\iota \mapsto a_\iota$ for all $\iota \in \alpha$. We use $\len{S}$
to denote the length $\alpha$ of $S$. If $\alpha$ is a limit ordinal,
then $S$ is called \emph{open}. Otherwise, it is called
\emph{closed}. If $\alpha$ is a finite ordinal, then $S$ is called
\emph{finite}. Otherwise, it is called \emph{infinite}. For a finite
sequence $(a_i)_{i < n}$ we also use the notation
$\seq{a_0,a_1,\dots,a_{n-1}}$. In particular, $\emptyseq$ denotes the
empty sequence. We write $A^*$ for the set of all finite sequences in
$A$.

The \emph{concatenation} $(a_\iota)_{\iota<\alpha}\concat
(b_\iota)_{\iota<\beta}$ of two sequences is the sequence
$(c_\iota)_{\iota<\alpha+\beta}$ with $c_\iota = a_\iota$ for $\iota <
\alpha$ and $c_{\alpha+\iota} = b_\iota$ for $\iota < \beta$. A
sequence $S$ is a (proper) \emph{prefix} of a sequence $T$, denoted $S
\le T$ (respectively $S < T$), if there is a (non-empty) sequence $S'$
with $S\concat S' = T$. The prefix of $T$ of length $\beta\le\len{T}$
is denoted $\prefix{T}{\beta}$. Similarly, a sequence $S$ is a
(proper) \emph{suffix} of a sequence $T$ if there is a (non-empty)
sequence $S'$ with $S'\concat S = T$.

\subsection{Metric Spaces}
\label{sec:metric-spaces}

A pair $(M,\dd)$ is called a \emph{metric space} if $\dd$ is a
\emph{metric} on the set $M$. That is, $\dd \fcolon M \times M \to
\realsnn$ is a function satisfying $\dd(x,y) = 0$ iff $x=y$
(identity), $\dd(x, y) = \dd(y, x)$ (symmetry), and $\dd(x, z) \le
\dd(x, y) + \dd(y, z)$ (triangle inequality), for all $x,y,z\in M$.
If $\dd$ instead of the triangle inequality, satisfies the stronger
property $\dd(x, z) \le \max \set{ \dd(x, y),\dd(y, z)}$ (strong
triangle), then $(M,\dd)$ is called an \emph{ultrametric space}.

Let $(a_\iota)_{\iota<\alpha}$ be a sequence in a metric space
$(M,\dd)$. The sequence $(a_\iota)_{\iota<\alpha}$ \emph{converges} to
an element $a\in M$, written $\lim_{\iota\limto\alpha} a_\iota$, if,
for each $\varepsilon \in \realsp$, there is a $\beta < \alpha$ such
that $\dd(a,a_\iota) < \varepsilon$ for every $\beta < \iota <
\alpha$; $(a_\iota)_{\iota<\alpha}$ is \emph{continuous} if
$\lim_{\iota\limto\lambda} a_\iota = a_\lambda$ for each limit ordinal
$\lambda < \alpha$. Intuitively speaking, $(a_\iota)_{\iota<\alpha}$
converges to $a$ if the metric distance between the elements $a_\iota$
of the sequence and $a$ tends to $0$ as the index $\iota$ approaches
$\alpha$, i.e.\ they approximate $a$ arbitrarily well. Accordingly,
$(a_\iota)_{\iota<\alpha}$ is continuous if it does not leap to a
distant object at limit ordinal indices.

The sequence $(a_\iota)_{\iota<\alpha}$ is called \emph{Cauchy} if,
for any $\varepsilon \in \realsp$, there is a $\beta<\alpha$ such
that, for all $\beta < \iota < \iota' < \alpha$, we have that
$\dd(m_\iota,m_{\iota'}) < \varepsilon$. That is, the elements
$a_\iota$ of the sequence move closer and closer to each other as the
index $\iota$ approaches $\alpha$.

A metric space is called \emph{complete} if each of its non-empty
Cauchy sequences converges. That is, whenever the elements $a_\iota$
of a sequence move closer and closer together, they in fact
approximate an existing object of the metric space, viz.\
$\lim_{\iota\limto\alpha} a_\iota$.

\subsection{Partial Orders}
\label{sec:partial-orders}
A \emph{partial order} $\le$ on a set $A$ is a binary relation on $A$
such that $x \le y, y \le z$ implies $x\le z$ (transitivity); $x \le
x$ (reflexivity); and $x \le y, y \le x$ implies $x = y$
(antisymmetry) for all $x,y,z \in A$. The pair $(A,\le)$ is then
called a \emph{partially ordered set}. A subset $D$ of the underlying
set $A$ is called \emph{directed} if it is non-empty and each pair of
elements in $D$ has an upper bound in $D$. A partially ordered set
$(A, \le)$ is called a \emph{complete partial order} (\emph{cpo}) if
it has a least element and each directed set $D$ has a \emph{least
  upper bound} (\emph{lub}) $\Lub D$. A cpo $(A, \le)$ is called a
\emph{complete semilattice} if every \emph{non-empty} set $B$ has
\emph{greatest lower bound} (\emph{glb}) $\Glb B$. In particular, this
means that, in a complete semilattice, the \emph{limit inferior} of
any sequence $(a_\iota)_{\iota<\alpha}$, defined by $\liminf_{\iota
  \limto \alpha}a_\iota = \Lub_{\beta<\alpha} \left(\Glb_{\beta \le
    \iota < \alpha} a_\iota\right)$, always exists.

There is also an alternative characterisation of complete
semilattices: a partially ordered set $(A, \le)$ is called
\emph{bounded complete} if each set $B \subseteq A$ that has an upper
bound in $A$ also has a least upper bound in $A$. Two elements $a,b\in
A$ are called \emph{compatible} if they have a common upper bound,
i.e.\ there is some $c \in A$ with $a,b \le c$.
\begin{proposition}[bounded complete cpo = complete semilattice,
  \cite{kahn93tcs}]
  \label{prop:bcpoCompSemi}
  Given a cpo $(A,\le)$ the following are equivalent:
  \begin{enumerate}[\em(i)]
  \item $(A,\le)$ is a complete semilattice.
  \item $(A,\le)$ is bounded complete.
  \item Each two compatible elements in $A$ have a least upper bound.\qed
  \end{enumerate}
\end{proposition}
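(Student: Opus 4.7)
The plan is to prove the cycle (i) $\Rightarrow$ (ii) $\Rightarrow$ (iii) $\Rightarrow$ (i). The first two implications are essentially bookkeeping. For (i) $\Rightarrow$ (ii): given $B \subseteq A$ with an upper bound, the set $U$ of upper bounds of $B$ is non-empty, so by the complete-semilattice property $\Glb U$ exists, and a direct check shows $\Glb U = \Lub B$; the case $B = \emptyset$ is covered by the least element provided by the cpo structure. For (ii) $\Rightarrow$ (iii), two compatible elements $a, b$ form a set $\set{a, b}$ that has an upper bound by definition, so by bounded completeness it admits a least upper bound.

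The crux is (iii) $\Rightarrow$ (i). Given a non-empty $B \subseteq A$, I would construct $\Glb B$ as the directed join of the set of lower bounds
\[
L = \setcomp{a \in A}{a \le b \text{ for all } b \in B}.
\]
The least element of the cpo belongs to $L$, so $L$ is non-empty. The key step is to verify that $L$ is directed: given $l_1, l_2 \in L$, both lie below every $b \in B$, so $l_1$ and $l_2$ are compatible; by hypothesis (iii) their lub $l_1 \lub l_2$ exists, and since every $b \in B$ is an upper bound of $\set{l_1, l_2}$, we still have $l_1 \lub l_2 \le b$, whence $l_1 \lub l_2 \in L$. Thus $L$ is even closed under pairwise joins, and in particular directed.

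Since $(A, \le)$ is a cpo, $\Lub L$ exists. To finish, I would observe that every $b \in B$ is an upper bound of $L$ and therefore $\Lub L \le b$, so $\Lub L$ is a lower bound of $B$; conversely, every lower bound of $B$ lies in $L$ and is dominated by $\Lub L$. Hence $\Lub L = \Glb B$, as required. The main obstacle to anticipate is the careful double use of (iii) in this last implication: it is needed both to force $L$ to be directed and to guarantee that the joins remain inside $L$. Once these two observations are in place, the existence of $\Glb B$ follows immediately from the cpo property.
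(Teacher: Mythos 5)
Your proof is correct: the cycle (i) $\Rightarrow$ (ii) $\Rightarrow$ (iii) $\Rightarrow$ (i) goes through, and the two delicate points — handling $B = \emptyset$ via the least element in (i) $\Rightarrow$ (ii), and using (iii) both to make the set of lower bounds $L$ directed and to keep the pairwise joins inside $L$ in (iii) $\Rightarrow$ (i) — are exactly the ones that need care and are handled properly. The paper states this proposition as a cited result without giving a proof, and your argument is the standard one for it, so there is nothing in the paper to diverge from.
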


Given two partially ordered sets $(A,\le_A)$ and $(B,\le_B)$, a
function $\phi\fcolon A \to B$ is called \emph{monotonic} iff $a_1
\le_A a_2$ implies $\phi(a_1) \le_B \phi(a_2)$. In particular, we have
that a sequence $(b_\iota)_{\iota<\alpha}$ in $(B,\le_B)$ is monotonic
if $b_\iota \le_B b_\gamma$ for all $\iota\le \gamma<\alpha$.
\subsection{Terms}
\label{sec:terms}

Since we are interested in the infinitary calculus of term rewriting,
we consider the set $\iterms$ of \emph{infinitary terms} (or simply
\emph{terms}) over some \emph{signature} $\Sigma$. A \emph{signature}
$\Sigma$ is a countable set of symbols such that each symbol
$f\in\Sigma$ is associated with an arity $\srank{f}\in \nats$, and we
write $\Sigma^{(n)}$ for the set of symbols in $\Sigma$ that have
arity $n$. The set $\iterms$ is defined as the \emph{greatest} set $T$
such that $t \in T$ implies $t = f(t_1,\dots, t_k)$ for some $f \in
\Sigma^{(k)}$ and $t_1,\dots,t_k\in T$. For each constant symbol $c\in
\Sigma^{(0)}$, we write $c$ for the term $c()$. For a term $t \in
\iterms$ we use the notation $\pos{t}$ to denote the \emph{set of
  positions} in $t$. $\pos{t}$ is the least subset of $\nats^{*}$ such
that $\emptyseq \in \pos{t}$ and $\seq{i}\concat\pi \in \pos{t}$ if $t
= f(t_0,\dots,t_{k-1})$ with $0 \le i < k$ and $\pi \in
\pos{t_i}$. For terms $s,t \in \iterms$ and a position $\pi \in
\pos{t}$, we write $\atPos{t}{\pi}$ for the \emph{subterm} of $t$ at
$\pi$, $t(\pi)$ for the function symbol in $t$ at $\pi$, and
$\substAtPos{t}{\pi}{s}$ for the term $t$ with the subterm at $\pi$
replaced by $s$. As positions are sequences, we use the prefix order
$\le$ defined on them. A position is also called an \emph{occurrence}
if the focus lies on the subterm at that position rather than the
position itself. The set $\terms$ of \emph{finite terms} is the subset
of $\iterms$ that contains all terms with a finite set of positions.

On $\iterms$ a similarity measure $\similar{\cdot}{\cdot}\fcolon
\iterms \times \iterms \funto \omega+1$ is defined as follows
\[
\similar{s}{t} = \min \setcom{\len{\pi}}{\pi \in \pos{s}\cap\pos{t}, s(\pi) \neq
  t(\pi)} \cup \set{\omega} \qquad \text {for } s,t\in \iterms
\]
That is, $\similar{s}{t}$ is the minimal depth at which $s$ and $t$
differ, respectively $\omega$ if $s = t$. Based on this similarity
measure, a distance function $\dd$ is defined by $\dd(s,t) =
2^{-\similar{s}{t}}$, where we interpret $2^{-\omega}$ as $0$. The
pair $(\iterms, \dd)$ is known to form a complete ultrametric space
\cite{arnold80fi}.

\emph{Partial terms}, i.e.\ terms over signature $\Sigma_\bot = \Sigma
\uplus \set{\bot}$ with $\bot$ a fresh nullary symbol, can be endowed
with a binary relation $\lebot$ by defining $s \lebot t$ iff $s$ can
be obtained from $t$ by replacing some subterm occurrences in $t$ by
$\bot$. Interpreting the term $\bot$ as denoting ``undefined'',
$\lebot$ can be read as ``is less defined than''. The pair
$(\ipterms,\lebot)$ is known to form a complete semilattice
\cite{goguen77jacm}. To explicitly distinguish them from partial
terms, we call terms in $\iterms$ \emph{total}.

\subsection{Term Rewriting Systems}
\label{sec:abstr-reduct-syst}

For term rewriting systems, we have to consider terms with
variables. To this end, we assume a countably infinite set $\calV$ of
variables and extend a signature $\Sigma$ to a signature $\Sigma_\calV
= \Sigma \uplus \calV$ with variables in $\calV$ as nullary
symbols. Instead of $\iterms[\Sigma_\calV]$ we also write $\ivterms$.
A \emph{term rewriting system} (TRS) $\calR$ is a pair $(\Sigma, R)$
consisting of a signature $\Sigma$ and a set $R$ of \emph{term rewrite
  rules} of the form $l \to r$ with $l \in \ivterms \setminus \calV$
and $r \in \ivterms$ such that all variables occurring in $r$ also
occur in $l$. Note that both the left- and the right-hand side may be
infinite. We usually use $x,y,z$ and primed respectively indexed variants
thereof to denote variables in $\calV$. A \emph{substitution} $\sigma$
is a mapping from $\calV$ to $\ivterms$. Such a substitution $\sigma$
can be uniquely lifted to a homomorphism from $\ivterms$ to $\ivterms$
mapping a term $t\in\ivterms$ to $t\sigma$ by setting $x\sigma =
\sigma(x)$ if $x\in\calV$ and $f(t_1,\dots,t_n)\sigma =
f(t_1\sigma,\dots,t_n\sigma)$ if $f \in \Sigma^{(n)}$.

As in the finitary setting, every TRS $\calR$ defines a \emph{rewrite
  relation} $\to[\calR]$ that indicates \emph{rewrite steps}:
\[
s \to[\calR] t \iff \exists \pi \in \pos{s}, l\to r \in
R, \sigma\colon\; \atPos{s}{\pi} = l\sigma, t = \substAtPos{s}{\pi}{r\sigma}
\]
Instead of $s \to[\calR] t$, we sometimes write $s \to[\pi,\rho] t$ in
order to indicate the applied rule $\rho$ and the position $\pi$, or
simply $s \to t$. The subterm $\atPos{s}{\pi}$ is called a
\emph{$\rho$-redex} or simply \emph{redex}, $r\sigma$ its
\emph{contractum}, and $\atPos{s}{\pi}$ is said to be
\emph{contracted} to $r\sigma$.

\section{Infinitary Term Rewriting}
\label{sec:infin-term-rewr}

Before pondering over the right approach to an infinitary calculus of
term graph rewriting, we want to provide a brief overview of
infinitary term
rewriting~\cite{kennaway03book,bahr10rta2,blom04rta}. This should give
an insight into the different approaches to dealing with infinite
reductions. However, in contrast to the majority of the literature on
infinitary term rewriting, which is concerned with strong convergence~
\cite{kennaway03book,kennaway95ic}, we will only consider weak notions
of convergence in this paper; cf.~\cite{dershowitz91tcs, kahrs07ai,
  simonsen10rta}. This weak form of convergence, also called
\emph{Cauchy convergence}, is entirely based on the sequence of
objects produced by rewriting without considering \emph{how} the
rewrite rules are applied.

A \emph{(transfinite) reduction} in a term rewriting system $\calR$,
is a sequence $S = (t_\iota \to_\calR t_{\iota +1})_{\iota < \alpha}$
of rewrite steps in $\calR$. Note that the underlying sequence of
terms $(t_\iota)_{\iota<\wsuc\alpha}$ has length $\wsuc\alpha$, where
$\wsuc\alpha = \alpha$ if $S$ is open, and $\wsuc\alpha = \alpha + 1$
if $S$ is closed. The reduction $S$ is called \emph{$\mrs$-continuous}
in $\calR$, written $S\fcolon t_0 \wmacont[\calR]$, if the sequence of
terms $(t_\iota)_{\iota < \wsuc\alpha}$ is continuous in
$(\iterms,\dd)$, i.e.\ $\lim_{\iota\limto\lambda} t_\iota = t_\lambda$
for each limit ordinal $\lambda < \alpha$. The reduction $S$ is said
to \emph{$\mrs$-converge} to a term $t$ in $\calR$, written $S\fcolon
t_0 \wmato[\calR] t$, if it is $\mrs$-continuous and
$\lim_{\iota\limto\wsuc\alpha} t_\iota = t$.

\begin{example}
  \label{ex:termRewr}
  Consider the term rewriting system $\calR$ containing the rule
  $\rho_1\fcolon a \cons x \to b \cons a \cons x$. By repeatedly
  applying $\rho_1$, we obtain the infinite reduction
  \[
  S\fcolon a \cons c \to b \cons a \cons c \to b \cons b \cons a \cons c \to \dots
  \]
  The position at which two consecutive terms differ moves deeper and
  deeper during the reduction $S$, i.e.\ the $\dd$-distance between
  them tends to $0$. Hence, $S$ $\mrs$-converges to the infinite term
  $s = b \cons b \cons b \cons \dots$, i.e.\ $S\fcolon a \cons c
  \wmato s$.

  Now consider a TRS with the slightly different rule $\rho_2\fcolon a
  \cons x \to a \cons b \cons x$. This TRS yields a reduction
  \[
  S'\fcolon a \cons c \to a \cons b \cons c \to a \cons b \cons b \cons c \to \dots
  \]
  Even though the rule $\rho_2$ is applied at the root of the term in
  each step of $S'$, the $\dd$-distance between two consecutive terms
  tends to $0$ again. The reduction $S'$ $\mrs$-converges to the
  infinite term $s' = a \cons b \cons b \cons \dots$, i.e.\ $S'\fcolon
  a \cons c \wmato s'$.
\end{example}

In contrast to the weak $\mrs$-convergence that we consider here,
strong $\mrs$-convergence~\cite{kennaway03book,kennaway95ic}
additionally requires that the depth of the contracted redexes tends
to infinity as the reduction approaches a limit ordinal. Concerning
Example~\ref{ex:termRewr} above, we have for instance that $S$ also
strongly $\mrs$-converges -- the rule is applied at increasingly deep
redexes -- whereas $S'$ does not strongly $\mrs$-converge -- each step
in $S'$ results from a contraction at the root.

In the partial order model of infinitary rewriting~\cite{bahr10rta2},
convergence is defined via the limit inferior in the complete
semilattice $(\ipterms,\lebot)$. Given a TRS $\calR = (\Sigma,R)$, we
extend it to $\calR_\bot = (\Sigma_\bot,R)$ by adding the fresh
constant symbol $\bot$ such that it admits all terms in $\ipterms$. A
reduction $S = (t_\iota \to_{\calR_\bot} t_{\iota +1})_{\iota <
  \alpha}$ in this system $\calR_\bot$ is called
\emph{$\prs$-continuous} in $\calR$, written $S\fcolon t_0
\wpacont[\calR]$, if $\liminf_{\iota \limto \lambda} t_\iota =
t_\lambda$ for each limit ordinal $\lambda < \alpha$. The reduction
$S$ is said to \emph{$\prs$-converge} to a term $t$ in $\calR$,
written $S\fcolon t_0 \wpato[\calR] t$, if it is $\prs$-continuous and
$\liminf_{\iota\limto\wsuc\alpha} t_\iota = t$.

The distinguishing feature of the partial order approach is that each
continuous reduction also converges due to the semilattice structure
of partial terms.  Moreover, $\prs$-convergence provides a
conservative extension to $\mrs$-convergence that allows rewriting
modulo \emph{meaningless terms}~\cite{bahr10rta2} by essentially
mapping those parts of the reduction to $\bot$ that are divergent
according to the metric mode of convergence.

Intuitively, the limit inferior in $(\ipterms,\lebot)$ -- and thus
$\prs$-convergence -- describes an approximation process that
accumulates each piece of information that remains \emph{stable} from
some point onwards. This is based on the ability of the partial order
$\lebot$ to capture a notion of \emph{information preservation}, i.e.\
$s \lebot t$ iff $t$ contains at least the same information as $s$
does but potentially more. A monotonic sequence of terms $t_0 \lebot
t_1 \lebot \dots$ thus approximates the information contained in
$\Lub_{i<\omega} t_i$. Given this reading of $\lebot$, the glb $\Glb
T$ of a set of terms $T$ captures the common (non-contradicting)
information of the terms in $T$. Leveraging this observation, a
sequence that is not necessarily monotonic can be turned into a
monotonic sequence $t_j = \Glb_{j \le i < \omega} s_i$ such that each
$t_j$ contains exactly the information that remains stable in
$(s_i)_{i<\omega}$ from $j$ onwards. Hence, the limit inferior
$\liminf_{i \limto \omega} s_i = \Lub_{j < \omega}\Glb_{j \le i <
  \omega} s_i$ is the term that contains the accumulated information
that eventually remains stable in $(s_i)_{i<\omega}$. This is
expressed as an approximation of the monotonically increasing
information that remains stable from some point on.

\begin{example}
  \label{ex:termRewr2}
  Reconsider the system from Example~\ref{ex:termRewr}. The reduction
  $S$ also $\prs$-converges to $s$. This can be seen by forming the
  sequence $(\Glb_{j \le i < \omega} s_i)_{i < \omega}$ of stable
  information of the underlying sequence $(s_i)_{i<\omega}$ of terms
  in $S$:
  \begin{center}
    \begin{tikzpicture}[node distance=3cm]
      \node (s0) {$\cons$}%
      child {%
        node {$\bot$}%
      } child {%
        node {$\bot$}%
      };%
      \node[right of=s0] (s1) {$\cons$}%
      child {%
        node {$b$}%
      } child {%
        node {$\cons$}%
        child {%
          node {$\bot$}%
        } child {%
          node {$\bot$}%
        }
      };%
      \node[right of=s1] (s2) {$\cons$}%
      child {%
        node {$b$}%
      } child {%
        node {$\cons$}%
        child {%
          node {$b$}%
        } child {%
          node {$\cons$}%
          child {%
            node {$\bot$}%
          } child {%
            node {$\bot$}%
          }
        }
      };%

      \node[node distance=6cm, right of=s2] (s) {$\cons$}%
      child {%
        node {$b$}%
      } child {%
        node {$\cons$}%
        child {%
          node {$b$}%
        } child {%
          node {$\cons$}%
          child {%
            node {$b$}%
          } child [etc]
        }
      };%
      \foreach \i/\d in {0/0,1/3,2/6}{%
        \node (i\i) at ($(s\i)+(\d mm,-4)$) {$(\Glb_{\i\le i<\omega} s_i)$};%
      }%
      \node[right of=i2] {$\dots$};
      
      \node (i) at ($(s)+(6 mm,-4)$) {$(s)$};%

    \end{tikzpicture}
  \end{center}
  This sequence approximates the term $s = b\cons b\cons b \cons
  \dots$.
  
  Now consider the rule $\rho_1$ together with the rule
  $\rho_3\fcolon b \cons x \to a \cons b \cons x$. Starting with the
  same term, but applying the two rules alternately at the root, we
  obtain the reduction sequence
  \[
  T\fcolon a \cons c \to b \cons a \cons c \to a \cons b \cons a \cons c \to b \cons a \cons b \cons a \cons c \to \dots
  \]
  Now the differences between two consecutive terms occur right below
  the root symbol ``$\cons$''. Hence, $T$ does not
  $\mrs$-converge. This, however, only affects the left argument of
  each ``$\cons$''. Following the right argument position, the bare
  list structure becomes eventually stable. The sequence $(\Glb_{j
    \le i < \omega} s_i)_{i < \omega}$ of stable information
  \begin{center}
    \begin{tikzpicture}[node distance=3cm]
      \node (s0) {$\cons$}%
      child {%
        node {$\bot$}%
      } child {%
        node {$\bot$}%
      };%
      \node[right of=s0] (s1) {$\cons$}%
      child {%
        node {$\bot$}%
      } child {%
        node {$\cons$}%
        child {%
          node {$\bot$}%
        } child {%
          node {$\bot$}%
        }
      };%
      \node[right of=s1] (s2) {$\cons$}%
      child {%
        node {$\bot$}%
      } child {%
        node {$\cons$}%
        child {%
          node {$\bot$}%
        } child {%
          node {$\cons$}%
          child {%
            node {$\bot$}%
          } child {%
            node {$\bot$}%
          }
        }
      };%
      \node[node distance=6cm,right of=s2] (s) {$\cons$}%
      child {%
        node {$\bot$}%
      } child {%
        node {$\cons$}%
        child {%
          node {$\bot$}%
        } child {%
          node {$\cons$}%
          child {%
            node {$\bot$}%
          } child[etc]
        }
      };%
      \foreach \i/\d in {0/0,1/3,2/6}{%
        \node (i\i) at ($(s\i)+(\d mm,-4)$) {$(\Glb_{\i\le i<\omega} s_i)$};%
      }%
      \node[right of=i2] {$\dots$};
      \node (i) at ($(s)+(6 mm,-4)$) {$(t)$};%

    \end{tikzpicture}
  \end{center}
  approximates the term $t = \bot \cons \bot \cons \bot \dots$. Hence,
  $T$ $\prs$-converges to $t$.
\end{example}

Note that in both the metric and the partial order setting continuity
is simply the convergence of every proper prefix: a reduction $S =
(t_\iota \to t_{\iota +1})_{\iota < \alpha}$ is $\mrs$-continuous
(respectively $\prs$-continuous) iff every proper prefix
$\prefix{S}{\beta}$ $\mrs$-converges (respectively $\prs$-converges)
to $t_\beta$.

In order to define $\prs$-convergence, we had to extend terms with
partiality. However, apart from this extension, both $\mrs$- and
$\prs$-convergence coincide. To describe this more precisely we use
the following terms: a reduction $S\fcolon s\wpacont$ is
\emph{$\prs$-continuous in $\iterms$} iff each term in $S$ is total,
i.e.\ in $\iterms$; a reduction $S\fcolon s\wpato t$ is called
\emph{$\prs$-convergent in $\iterms$} iff $t$ and each term in $S$ is
total. We then have the following theorem:
\begin{theorem}[$\prs$-convergence in $\iterms$ = $\mrs$-convergence,
  \cite{bahr09master}]
  \label{thr:strongExt}
  \todo{replace with reference to the LMCS paper once it's accepted}%
  For every reduction $S$ in a TRS the following equivalences hold:
  \\[.5em]
    \begin{inparaenum}[\em(i)]
    \begin{tabular}{@{\quad}r@{\;\,}l@{\;}l@{\qquad iff \qquad}l}
    \item& $S\fcolon s \wpato t$ &in $\iterms$ & $S\fcolon
      s \wmato t$\\%
    \item& $S\fcolon s \wpacont$ &in $\iterms$ & $S\fcolon s
      \wmacont$\rlap{\hbox to 100 pt{\hfill\qEd}}%
  \end{tabular}
    \end{inparaenum}
\end{theorem}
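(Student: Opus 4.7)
The plan is to isolate a single technical lemma and derive both parts of the theorem from it. The lemma states: for any sequence $(t_\iota)_{\iota < \alpha}$ in $\iterms$ (i.e.\ of total terms) and any $t \in \iterms$, one has $\lim_{\iota \limto \alpha} t_\iota = t$ in the metric space $(\iterms, \dd)$ iff $\liminf_{\iota \limto \alpha} t_\iota = t$ in the complete semilattice $(\ipterms, \lebot)$. Given this lemma, statement~(i) is immediate: a reduction $S\fcolon s \wpato t$ in $\iterms$ has every term in $\iterms$ by definition, so we may apply the lemma to the underlying sequence $(t_\iota)_{\iota < \wsuc\alpha}$ to obtain $S\fcolon s \wmato t$, and vice versa. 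Statement~(ii) follows from~(i) by the prefix characterisation of continuity recalled just before the theorem: both $\mrs$- and $\prs$-continuity of $S$ amount to the corresponding convergence of every proper prefix $\prefix{S}{\beta}$ to $t_\beta$, so the equivalence at each prefix delivers the equivalence of continuity.

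To prove the lemma, I would rewrite both convergence conditions in a common position-wise form. Unfolding $\liminf_{\iota \limto \alpha} t_\iota = \Lub_{\beta < \alpha}\Glb_{\beta \le \iota < \alpha} t_\iota$ and using that $\Glb$ and $\Lub$ in $(\ipterms, \lebot)$ are computed position-wise, the equality $\liminf_{\iota \limto \alpha} t_\iota = t$ becomes the pointwise stability condition: for every $\pi \in \pos{t}$ there is some $\beta < \alpha$ such that $\pi \in \pos{t_\iota}$ and $t_\iota(\pi) = t(\pi)$ for all $\iota$ with $\beta \le \iota < \alpha$, and no position outside $\pos{t}$ carries a non-$\bot$ label in any such tail. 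Totality of $t$ rules out $\bot$-labels at positions of $t$, and totality of every $t_\iota$ together with the determinism of tree shape rules out the second possibility. On the metric side, $\lim_{\iota \limto \alpha} t_\iota = t$ unfolds to: for every $n \in \nats$ there is $\beta < \alpha$ such that $t_\iota$ and $t$ coincide on all positions of length at most $n$ whenever $\beta < \iota < \alpha$. Since there are only finitely many such positions, one can consolidate the finitely many thresholds from the pointwise condition into a single $\beta$ for each depth, and the two unfoldings become equivalent.

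The main obstacle is precisely the totality assumption on the $t_\iota$: without it, a sequence of partial terms could $\prs$-converge to a total $t$ while still containing $\bot$-holes at positions where $t$ has genuine symbols, keeping the metric distance bounded away from zero. This asymmetry is exactly what forces the hypothesis ``in $\iterms$'' in the theorem and reflects the fact that $\prs$-convergence is a proper extension of $\mrs$-convergence. The bookkeeping between the closed and open cases (where $\wsuc\alpha$ is $\alpha+1$ respectively $\alpha$) is routine once the lemma is set up, since both sides of the equivalence depend only on the underlying sequence of terms.
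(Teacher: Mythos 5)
Your proposal is sound, and it is essentially the intended argument: the paper itself does not prove Theorem~\ref{thr:strongExt} (it is imported from the cited source), but it proves the exact term graph analogue, Theorem~\ref{thr:graphTotalConv}, by precisely your decomposition -- a ``metric limit $=$ limit inferior'' statement for Cauchy sequences (Proposition~\ref{prop:graphLimLimInf}) together with ``total limit inferior $\Rightarrow$ metric convergence'' (Proposition~\ref{prop:graphLimInfLim}), which your single lemma combines, including the reliance on finite arities to consolidate finitely many thresholds per depth (the paper flags this restriction with the $f/\omega$ example). Two small points: in deriving (i) you should apply the lemma not only to the full underlying sequence but at every limit ordinal $\lambda<\alpha$ as well, since $\mrs$- and $\prs$-convergence each include the corresponding continuity requirement -- your prefix observation for (ii) covers this, so it is a matter of phrasing; and the clause ``no position outside $\pos{t}$ carries a non-$\bot$ label in any such tail'' is redundant given that $t$ is total (total terms are maximal w.r.t.\ $\lebot$, so $t \lebot \liminf$ already forces equality), which you in effect note via the tree-shape argument.
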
\smallskip

\noindent
Example~\ref{ex:termRewr2} illustrates the correspondence between
$\prs$- and $\mrs$-convergence: the reduction $S$ $\prs$-converges in
$\iterms$ and $\mrs$-converges whereas the reduction $T$
$\prs$-converges but not in $\iterms$ and thus does not
$\mrs$-converge.

Kennaway \cite{kennaway92rep} and Bahr \cite{bahr10rta} investigated
abstract models of infinitary rewriting based on metric spaces
respectively partially ordered sets. We shall take these abstract
models as a basis to formulate a theory of infinitary term graph
reductions. The key question that we have to address is what an
appropriate metric space respectively partial order on term graphs
looks like.

\section{Graphs \& Term Graphs}
\label{sec:term-graphs}

This section provides the basic notions for term graphs and more
generally for graphs.  Terms over a signature, say $\Sigma$, can be
thought of as rooted trees whose nodes are labelled with symbols from
$\Sigma$. Moreover, in these trees a node labelled with a $k$-ary
symbol is restricted to have out-degree $k$ and the outgoing edges are
ordered. In this way the $i$-th successor of a node labelled with a
symbol $f$ is interpreted as the root node of the subtree that
represents the $i$-th argument of $f$. For example, consider the term
$f(a,h(a,b))$. The corresponding representation as a tree is shown in
Figure~\ref{fig:exTermTree}.

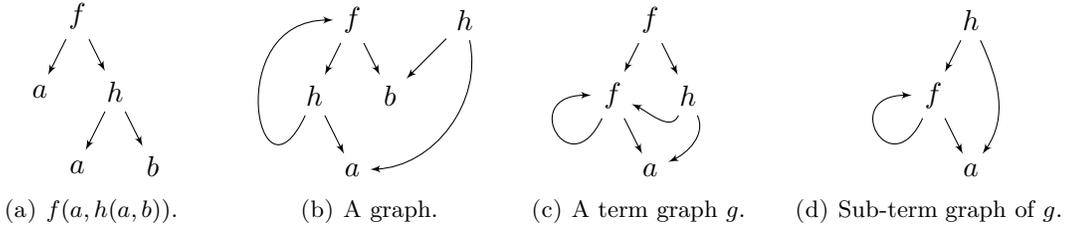
\begin{figure}
  \centering
  \hspace{-1cm}
  \subfloat[$f(a,h(a,b))$.]{
    \label{fig:exTermTree}
  \hspace{1cm}
    \begin{tikzpicture}[->,baseline=(b.base)]
      \node (r1) at (0,0)  {$f$}
      child {
        node  {$a$}
      } child {
        node {$h$}
        child {
          node {$a$}
        } child {
          node (b) {$b$}
        }
      };
    \end{tikzpicture}
  \hspace{1cm}
  }
  \hspace{-10mm}
  \subfloat[A graph.]{
    \label{fig:exGraph}
  \hspace{.5cm}
    \begin{tikzpicture}[->,baseline=(a.base)]
      \node (f) {$f$}
      child {
        node (g) {$h$}
        child [missing]
        child {
          node (a) {$a$}
        }
      } child {
        node (b) {$b$}
      };
      \node [node distance=1cm,right=of f] {$h$}
      edge (b)
      edge [bend left=50] (a);
      \path[use as bounding box] (-1.3,0);
      \draw (g) edge[out=-115,in=180, min distance=1.5cm] (f); 
    \end{tikzpicture}
  \hspace{.5cm}
  }
  \hspace{-10mm}
  \subfloat[A term graph $g$.]{%
    \label{fig:exTermGraph}%
  \hspace{7mm}
    \begin{tikzpicture}[baseline=(n4.base)]
      \node (n1) {$f$}
      child {
        node (n2) {$f$}
        child [missing]
        child {
          node (n4) {$a$}
        }
      }
      child {
        node (n3) {$h$}
      };
      \draw[->] (n3) edge[out=245, in=-25] (n2);
      \draw[->] (n3) edge[out=295, in=25] (n4);
      \path[use as bounding box] (-1.5,0);
      \draw[->] (n2) edge[out=245, in=180, loop] (n2);
    \end{tikzpicture}
  \hspace{10mm}
    }%
  \hspace{-10mm}
  \subfloat[Sub-term graph of $g$.]{%
    \label{fig:exSubTermGraph}%
  \hspace{1.5cm}
    \begin{tikzpicture}[baseline=(n4.base)]
      \node (n1) {$h$}
      child {
        node (n2) {$f$}
        child [missing]
        child {
          node (n4) {$a$}
        }
      }
      child [missing];
      \path[use as bounding box] (-1.5,0);
      \draw[->] (n1) edge[out=295, in=55] (n4);
      \draw[->] (n2) edge[out=245, in=180, loop] (n2);
    \end{tikzpicture}
    \hspace{1.5cm}
    }%
  \hspace{-10mm}
    \caption{Tree representation of a term and generalisation to
      (term) graphs.}
\end{figure}

In term graphs, the restriction to a tree structure is abolished. The
corresponding notion of term graphs we are using is taken from
Barendregt et al.~\cite{barendregt87parle}. We begin by defining the
underlying notion of graphs.

\begin{definition}[graphs]
  \label{def:graph}
  Let $\Sigma$ be a signature. A \emph{graph} over $\Sigma$ is a tuple
  $g = (N,\glab,\gsuc)$ consisting of a set $N$ (of \emph{nodes}), a
  \emph{labelling function} $\glab\fcolon N \funto \Sigma$, and a
  \emph{successor function} $\gsuc\fcolon N \funto N^*$ such that
  $\len{\gsuc(n)} = \srank{\glab(n)}$ for each node $n\in N$, i.e.\ a
  node labelled with a $k$-ary symbol has precisely $k$
  successors. The graph $g$ is called \emph{finite} whenever the
  underlying set $N$ of nodes is finite. If $\gsuc(n) =
  \seq{n_0,\dots,n_{k-1}}$, then we write $\gsuc_{i}(n)$ for
  $n_i$. Moreover, we use the abbreviation $\rank{g}{n}$ for the arity
  $\srank{\glab(n)}$ of $n$.
\end{definition}

\begin{example}
  Let $\Sigma = \set{f/2,h/2,a/0,b/0}$ be a signature. The graph over
  $\Sigma$, depicted in Figure~\ref{fig:exGraph}, is given by the
  triple $(N, \glab, \gsuc)$ with $N = \set{n_0,n_1,n_2,n_3,n_4}$,
  $\glab(n_0) = f, \glab(n_1) = \glab(n_4) = h, \glab(n_2) = b,
  \glab(n_3)=a$ and $\gsuc(n_0) = \seq{n_1,n_2},
  \gsuc(n_1)=\seq{n_0,n_3}, \gsuc(n_2) = \gsuc(n_3) =\seq{},
  \gsuc(n_4) = \seq{n_2,n_3}$.
\end{example}

\begin{definition}[paths, reachability]
  \label{def:graphPath}
  Let $g = (N,\glab,\gsuc)$ be a graph and $n,m \in N$.
  \begin{enumerate}[(i)]
  \item A \emph{path} in $g$ from $n$ to $m$ is a finite sequence
    $\pi \in\nats^*$ such that either
    \begin{enumerate}[$-$]
    \item $\pi$ is empty and $n = m$, or
    \item $\pi = \seq{i}\concat\pi'$ with $0 \le i < \rank{g}{n}$ and
      the suffix $\pi'$ is a path in $g$ from $\gsuc_i(n)$ to $m$.
    \end{enumerate}
  \item If there exists a path from $n$ to $m$ in $g$, we say that
    $m$ is \emph{reachable} from $n$ in $g$.
  \end{enumerate}
\end{definition}

\noindent
Since paths are sequences, we may use the prefix order on sequences
for paths as well. That is, we write $\pi_1\le\pi_2$ (respectively
$\pi_1<\pi_2$) if there is a (non-empty) path $\pi_3$ with $\pi_1
\concat \pi_3 = \pi_2$.

\begin{definition}[term graphs]
  \label{def:tgraph}
  Given a signature $\Sigma$, a \emph{term graph} $g$ over $\Sigma$ is
  a tuple $(N,\glab,\gsuc,r)$ consisting of an \emph{underlying} graph
  $(N,\glab,\gsuc)$ over $\Sigma$ whose nodes are all reachable from
  the \emph{root node} $r\in N$. The term graph $g$ is called
  \emph{finite} if the underlying graph is finite, i.e.\ the set $N$
  of nodes is finite. The class of all term graphs over $\Sigma$ is
  denoted $\itgraphs$; the class of all finite term graphs over
  $\Sigma$ is denoted $\tgraphs$. We use the notation $N^{g}$,
  $\glab^{g}$, $\gsuc^{g}$ and $r^{g}$ to refer to the respective
  components $N$,$\glab$, $\gsuc$ and $r$ of $g$. In analogy to
  subterms, term graphs have \emph{sub-term graphs}. Given a graph or
  a term graph $h$ and a node $n$ in $h$, we write $\subgraph{h}{n}$
  to denote the sub-term graph of $h$ rooted in $n$.
\end{definition}

\begin{example}
  Let $\Sigma = \set{f/2,h/2,c/0}$ be a signature. The term graph $g$
  over $\Sigma$, depicted in Figure~\ref{fig:exTermGraph}, is given by
  the quadruple $(N, \glab, \gsuc,r)$, where $N = \set{r, n_1, n_2,
    n_3}$, $\gsuc(r) = \seq{n_1, n_2}$, $\gsuc(n_1) = \seq{n_1, n_3}$,
  $\gsuc(n_2) = \seq{n_1, n_3}$, $\gsuc(n_3) = \emptyseq$ and
  $\glab(r) = \glab(n_1) = f$, $\glab(n_2) = h$, $\glab(n_3) =
  c$. Figure~\ref{fig:exSubTermGraph} depicts the sub-term graph
  $\subgraph{g}{n_2}$ of $g$.
\end{example}

Paths in a graph are not absolute but relative to a starting node. In
term graphs, however, we have a distinguished root node from which
each node is reachable. Paths relative to the root node are central
for dealing with term graphs:
\begin{definition}[positions, depth, cyclicity, trees]
  \label{def:tgraphOcc}
  Let $g \in \itgraphs$ and $n \in N^g$.
  \begin{enumerate}[(i)]
  \item A \emph{position} of $n$ in $g$ is a path in the underlying
    graph of $g$ from $r^g$ to $n$. The set of all positions in $g$ is
    denoted $\pos{g}$; the set of all positions of $n$ in $g$ is
    denoted $\nodePos{g}{n}$.\footnote{The notion/notation of
      positions is borrowed from terms: Every position $\pi$ of a node
      $n$ corresponds to the subterm represented by $n$ occurring at
      position $\pi$ in the unravelling of the term graph to a term.}
  \item The \emph{depth} of $n$ in $g$, denoted $\depth{g}{n}$, is the
    minimum of the lengths of the positions of $n$ in $g$, i.e.\
    $\depth{g}{n} = \min \setcom{\len{\pi}}{\pi \in \nodePos{g}{n}}$.
  \item For a position $\pi \in \pos{g}$, we write
    $\nodeAtPos{g}{\pi}$ for the unique node $n\in N^g$ with $\pi \in
    \nodePos{g}{n}$ and $g(\pi)$ for its symbol $\glab^g(n)$.
  \item A position $\pi\in\pos{g}$ is called \emph{cyclic} if there
    are paths $\pi_1 <\pi_2 \le \pi$ with $\nodeAtPos{g}{\pi_1} =
    \nodeAtPos{g}{\pi_2}$, i.e.\ $\pi$ passes a node twice. The
    non-empty path $\pi'$ with $\pi_1\concat \pi' = \pi_2$ is then
    called a \emph{cycle} of $\nodeAtPos{g}{\pi_1}$. A position that
    is not cyclic is called \emph{acyclic}. If $g$ has a cyclic
    position, $g$ is called cyclic; otherwise $g$ is called acyclic.
  \item The term graph $g$ is called a \emph{term tree} if each node
    in $g$ has exactly one position.
  \end{enumerate}
\end{definition}

\noindent
Note that the labelling function of graphs -- and thus term graphs --
is \emph{total}. In contrast, Barendregt et al.\
\cite{barendregt87parle} considered \emph{open} (term) graphs with a
\emph{partial} labelling function such that unlabelled nodes denote
holes or variables. This is reflected in their notion of homomorphisms
in which the homomorphism condition is suspended for unlabelled nodes.

\subsection{Homomorphisms}
\label{sec:homomorphisms}

Instead of a partial node labelling function for term graphs, we chose
a \emph{syntactic} approach that is closer to the representation in
terms: variables, holes and ``bottoms'' are represented as
distinguished syntactic entities. We achieve this on term graphs by
making the notion of homomorphisms dependent on a set of constant
symbols $\Delta$ for which the homomorphism condition is suspended:

\begin{definition}[$\Delta$-homomorphisms]
  \label{def:D-hom}
  Let $\Sigma$ be a signature, $\Delta\subseteq \Sigma^{(0)}$, and
  $g,h \in \itgraphs$.
  \begin{enumerate}[(i)]
  \item A function $\phi\fcolon N^g \funto N^h$ is called
    \emph{homomorphic}\ in $n \in N^g$ if the following holds:
    \begin{align*}
      \glab^g(n) &= \glab^h(\phi(n))
      \tag{labelling}\\
      \phi(\gsuc^g_i(n)) &= \gsuc^h_i(\phi(n)) \quad \text{ for all } 0 \le i <
      \rank{g}{n} \tag{successor}
    \end{align*}
  \item A \emph{$\Delta$-homomorphism} $\phi$ from $g$ to $h$, denoted
    $\phi\fcolon g \homto_\Delta h$, is a function $\phi\fcolon N^g
    \funto N^h$ that is homomorphic in $n$ for all $n \in N^g$ with
    $\glab^g(n) \nin \Delta$ and satisfies 
    \begin{gather}
      \phi(r^g) = r^h \tag{root}
    \end{gather}
  \end{enumerate}
\end{definition}

\noindent
Note that, for $\Delta = \emptyset$, we get the usual notion of
homomorphisms on term graphs (e.g.\
Barendsen~\cite{barendsen03book}). The $\Delta$-nodes can be thought
of as holes in the term graphs that can be filled with other term
graphs. For example, if we have a distinguished set of variable
symbols $\calV \subseteq \Sigma^{(0)}$, we can use
$\calV$-homomorphisms to formalise the matching step of term graph
rewriting, which requires the instantiation of variables.

\begin{figure}
  \centering
  \subfloat[A homomorphism.]{
  \begin{tikzpicture}
      \node (g) {$f$}%
      child {%
        node (g1) {$h$}%
        child {%
          node (g11) {$a$}%
        }%
      } child {%
        node (g2) {$a$}%
      };%
      \node[node distance=3cm, right=of g] (h) {$f$}%
      child {%
        node (h1) {$h$}%
        child {%
          node (h11) {$a$}%
        }%
      } child [missing];%
      \draw%
      (h) edge[->,bend left=20] (h11);%
      \begin{scope}[|-to,black!50,dashed, thin]
        \path%
        (g) edge (h)%
        (g1) edge[bend left=20] (h1)%
        (g2) edge (h11)%
        (g11) edge (h11)%
        ;
        \path(g) -- (h) node[midway,above,black!70] {$\phi$};
      \end{scope}

      \begin{scope}[black!90, node distance=2.2cm]
        \node[below=of g] (g) {$g_1$};
        \node[node distance=1mm,left=of g] {$\phi\fcolon$};
        \node[below=of h] (h) {$g_2$};
        \draw[fun,shorten=5mm] (g) -- (h) node[pos=.85,below] {\phantom{\tiny$\set{a,b}$}};
      \end{scope}
    \end{tikzpicture}
    \label{fig:deltaHom1}
  }
  \hspace{1.5cm}
  \subfloat[A $\set{a,b}$-homomorphism.]{
  \begin{tikzpicture}
      \node (g) {$f$}%
      child {%
        node (g1) {$a$}%
      } child {%
        node (g2) {$b$}%
      };%
      \node[node distance=3cm, right=of g] (h) {$f$}%
      child {%
        node (h1) {$h$}%
        child {%
          node {$a$}%
        }%
      } child [missing];%
      \draw (h) edge[->,out=-55,in=0,loop] (h);%
      \begin{scope}[|-to,black!50,dashed, thin]
        \path%
        (g) edge (h)%
        (g1) edge[bend right=20] (h1)%
        (g2) edge[bend right=20] (h)%
        ;
        \path(g) -- (h) node[midway,above,black!70] {$\psi$};
      \end{scope}

      \begin{scope}[black!90, node distance=2.2cm]
        \node[below=of g] (g) {$g_3$};
        \node[node distance=1mm,left=of g] {$\psi\fcolon$};
        \node[below=of h] (h) {$g_4$};
        \draw[fun,shorten=5mm] (g) -- (h) node[pos=.85,below] {\tiny$\set{a,b}$};
      \end{scope}
    \end{tikzpicture}
    \label{fig:deltaHom2}
    }
  \caption{$\Delta$-homomorphisms.}
  \label{fig:deltaHom}
\end{figure}
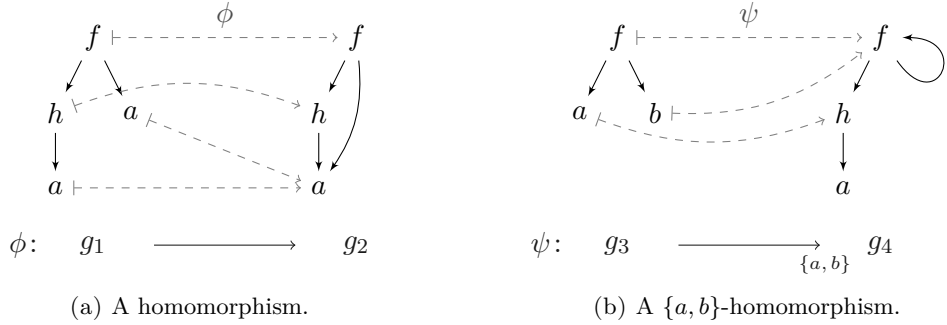

\begin{example}
  \label{ex:deltaHom}
  Figure~\ref{fig:deltaHom} depicts two functions $\phi$ and
  $\psi$. Whereas $\phi$ is a homomorphism, the function $\psi$ is not
  a homomorphism since, for example, the node labelled $a$ in $g_3$ is
  mapped to a node labelled $h$ in $g_3$. Nevertheless, $\psi$ is a
  $\set{a,b}$-homomorphism. Note that $\Delta$-homomorphisms may
  introduce additional sharing in the target term graph by mapping
  several nodes in the source to the same node in the target.
\end{example}

\begin{proposition}[$\Delta$-homomorphism preorder]
  \label{prop:catgraph}
  The $\Delta$-homomorphisms on $\itgraphs$ form a category that is a
  preorder, i.e.\ there is at most one $\Delta$-homomorphism from one
  term graph to another.
\end{proposition}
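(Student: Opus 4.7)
The plan is to dispatch the two claims in turn: first check that the $\Delta$-homomorphisms, together with identities and composition of functions, form a category; then show that this category is thin.

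For the categorical structure the work is routine. The identity function $\id_{N^g}\fcolon N^g \funto N^g$ is evidently a $\Delta$-homomorphism $g \homto_\Delta g$, since the root, labelling and successor conditions reduce to reflexive equalities. For composition, suppose $\phi\fcolon g \homto_\Delta h$ and $\psi\fcolon h \homto_\Delta k$. The root condition for $\psi\circ\phi$ is immediate. For a node $n\in N^g$ with $\glab^g(n)\nin\Delta$, the labelling condition for $\phi$ at $n$ gives $\glab^h(\phi(n)) = \glab^g(n) \nin \Delta$, so $\psi$ is homomorphic at $\phi(n)$ as well; chaining labels and successors through $\phi$ and then $\psi$ then yields the required homomorphism conditions for $\psi\circ\phi$ at $n$. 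Associativity and the identity laws are inherited from ordinary function composition.

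The interesting part is thinness: given $\phi,\psi\fcolon g \homto_\Delta h$, I want to show $\phi = \psi$. The key observation, and the one that makes the statement true, is that $\Delta \subseteq \Sigma^{(0)}$: every node of positive arity has its label outside $\Delta$, so both the labelling and the successor conditions are enforced there by \emph{any} $\Delta$-homomorphism. Since every node of $g$ is reachable from $r^g$, I would proceed by induction on the length of a position $\pi \in \nodePos{g}{n}$, showing that $\phi(n) = \psi(n)$ for all $n \in N^g$. The base case $\pi = \emptyseq$, i.e.\ $n = r^g$, is handled by the root condition, which forces $\phi(r^g) = r^h = \psi(r^g)$. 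For the inductive step, write $\pi = \pi' \concat \seq{i}$ and set $m = \nodeAtPos{g}{\pi'}$, so $n = \gsuc^g_i(m)$. Then $\rank{g}{m} > i \ge 0$, whence $\glab^g(m)\nin\Delta$, and both $\phi$ and $\psi$ satisfy the successor condition at $m$; combined with the induction hypothesis $\phi(m) = \psi(m)$, this yields $\phi(n) = \gsuc^h_i(\phi(m)) = \gsuc^h_i(\psi(m)) = \psi(n)$.

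The only delicate point in the whole argument is precisely this appeal to $\Delta \subseteq \Sigma^{(0)}$; without it, a $\Delta$-labelled node with successors would leave the values of $\phi$ on those successors unconstrained, and uniqueness would fail. Everything else is bookkeeping on top of the category structure of $\symb{Set}$.
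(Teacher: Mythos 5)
Your proof is correct and follows essentially the same route as the paper's: the category axioms are checked directly, and uniqueness is obtained by induction along positions from the root, with the decisive observation that $\Delta \subseteq \Sigma^{(0)}$ forces any node possessing a successor to satisfy the labelling and successor conditions. The only cosmetic difference is that you induct on the length of an arbitrary position of a node while the paper inducts on its depth (the minimal such length); the argument is otherwise identical.
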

\begin{proof}
  The identity $\Delta$-homomorphism is obviously the identity mapping
  on the set of nodes. Moreover, an easy equational reasoning reveals
  that the composition of two $\Delta$-homomorphisms is again a
  $\Delta$-homomorphism. Associativity of this composition is obvious
  as $\Delta$-homomorphisms are functions.

  To show that the category is a preorder, assume that there are two
  $\Delta$-homomorphisms $\phi_1,\phi_2\fcolon g \homto_\Delta h$. We
  prove that $\phi_1 = \phi_2$ by showing that $\phi_1(n) = \phi_2(n)$
  for all $n \in N^g$ by induction on the depth of $n$ in $g$.

  Let $\depth{g}{n} = 0$, i.e.\ $n = r^g$. By the root condition for
  $\phi$, we have that $\phi_1(r^g) = r^h = \phi_2(r^g)$. Let
  $\depth{g}{n} = d > 0$. Then $n$ has a position $\pi \concat \seq i$
  in $g$ such that $\depth{g}{n'}<d$ for $n' =
  \nodeAtPos{g}{\pi}$. Hence, we can employ the induction hypothesis
  for $n'$. Moreover, since $n'$ has at least one successor node,
  viz.\ $n$, it cannot be labelled with a nullary symbol and a
  fortiori not with a symbol in $\Delta$. Therefore, the
  $\Delta$-homomorphisms $\phi_1$ and $\phi_2$ are homomorphic in $n'$
  and we can thus reason as follows:
  \begin{align*}
    \phi_1(n) &= \gsuc^h_i(\phi_1(n'))
    \tag{successor condition for $\phi_1$}\\
    &= \gsuc^h_i(\phi_2(n'))
    \tag{ind. hyp.} \\
    &= \phi_2(n)
    \tag{successor condition for $\phi_2$}
  \end{align*}
\end{proof}
As a consequence, whenever there are two $\Delta$-homomorphisms
$\phi\fcolon g \homto_\Delta h$ and $\psi\fcolon h \homto_\Delta g$,
they are inverses of each other, i.e.\
\emph{$\Delta$-isomorphisms}. If two term graphs are
\emph{$\Delta$-isomorphic}, we write $g \isom_\Delta h$.

For the two special cases $\Delta = \emptyset$ and $\Delta =
\set{\sigma}$, we write $\phi\fcolon g \homto h$ respectively $\phi\fcolon g
\homto_\sigma h$ instead of $\phi\fcolon g \homto_\Delta h$ and call
$\phi$ a \emph{homomorphism} respectively a \emph{$\sigma$-homomorphism}. The same
convention applies to $\Delta$-isomorphisms.

The structure of positions permits a convenient characterisation of
$\Delta$-homomorphisms:
\begin{lemma}[characterisation of $\Delta$-homomorphisms]
  \label{lem:canhom}
  For $g,h\in \itgraphs$, a function $\phi\fcolon N^g \funto N^h$ is a
  $\Delta$-homomorphism $\phi\fcolon g \homto_\Delta h$ iff the
  following holds for all $n \in N^g$:
  \begin{center}
    \begin{inparaenum}[(a)]
    \item $\nodePos{g}{n} \subseteq \nodePos{h}{\phi(n)}$,\quad and
      \label{item:canhom1}
      \qquad
    \item $\glab^g(n)\nin\Delta \quad\implies\quad \glab^g(n) =
      \glab^h(\phi(n))$.
      \label{item:canhom2}
    \end{inparaenum}
  \end{center}
\end{lemma}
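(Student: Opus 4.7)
The plan is to treat the two conditions separately: condition (b) is essentially a rephrasing of the labelling condition, so the substantive content is to show that, granted (b), condition (a) is equivalent to the conjunction of the root condition and the successor condition.

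For the ``only if'' direction, I would fix a $\Delta$-homomorphism $\phi\fcolon g \homto_\Delta h$, note that (b) is immediate from the labelling condition, and prove (a) by induction on the length of a position $\pi \in \nodePos{g}{n}$. In the base case $\pi = \emptyseq$ we have $n = r^g$, and the root condition gives $\phi(n) = r^h$, so $\emptyseq \in \nodePos{h}{\phi(n)}$. In the inductive step $\pi = \pi'\concat\seq{i}$, put $n' = \nodeAtPos{g}{\pi'}$, so that $n = \gsuc^g_i(n')$. The induction hypothesis applied to $\pi' \in \nodePos{g}{n'}$ yields $\pi' \in \nodePos{h}{\phi(n')}$. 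The crucial observation here is that $n'$ has at least one successor (namely $n$), so $\glab^g(n')$ has positive arity and therefore cannot lie in $\Delta \subseteq \Sigma^{(0)}$; hence the successor condition applies at $n'$, giving $\phi(n) = \gsuc^h_i(\phi(n'))$, and therefore $\pi'\concat\seq{i} \in \nodePos{h}{\phi(n)}$.

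For the ``if'' direction, I would assume (a) and (b) and derive the three defining conditions. The labelling condition is precisely (b). For the root condition, $\emptyseq \in \nodePos{g}{r^g}$ together with (a) yields $\emptyseq \in \nodePos{h}{\phi(r^g)}$, i.e., $\phi(r^g) = r^h$. For the successor condition, fix $n \in N^g$ with $\glab^g(n) \notin \Delta$ and $0 \le i < \rank{g}{n}$, and pick any position $\pi \in \nodePos{g}{n}$ (which exists by reachability in term graphs). Then $\pi\concat\seq{i} \in \nodePos{g}{\gsuc^g_i(n)}$, and (a) gives
\[
  \pi\concat\seq{i} \;\in\; \nodePos{h}{\phi(\gsuc^g_i(n))}.
\]
On the other hand, (a) gives $\pi \in \nodePos{h}{\phi(n)}$, and (b) ensures $\glab^h(\phi(n)) = \glab^g(n)$, so the arities match and $\gsuc^h_i(\phi(n))$ is well-defined with $\pi\concat\seq{i} \in \nodePos{h}{\gsuc^h_i(\phi(n))}$. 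Since every position determines a unique node (by the definition of $\nodeAtPos{h}{\cdot}$), the two nodes $\phi(\gsuc^g_i(n))$ and $\gsuc^h_i(\phi(n))$ must coincide.

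The only mildly delicate step is the induction in the ``only if'' direction, where one needs to know that the successor condition is actually available at the intermediate node $n'$; this is handled by the arity argument above. Everything else is bookkeeping with positions and the uniqueness of the node at a given position.
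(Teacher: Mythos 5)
Your proposal is correct and follows essentially the same route as the paper's proof: condition (b) is identified with the labelling condition, the ``only if'' direction establishes (a) by induction on the length of positions using the root and successor conditions, and the ``if'' direction recovers the root and successor conditions from $\emptyseq \in \nodePos{g}{r^g}$ and the uniqueness of the node at a given position. Your explicit remark that the intermediate node $n'$ has a successor and hence cannot be labelled in $\Delta \subseteq \Sigma^{(0)}$ (so the successor condition is applicable there) is a point the paper leaves implicit in this lemma, but it is exactly the right justification and does not change the argument.
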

\begin{proof}
  \def\itema{(\ref{item:canhom1})}%
  \def\itemb{(\ref{item:canhom2})}%
  For the ``only if'' direction, assume that $\phi\fcolon g
  \homto_\Delta h$. \itemb{} is the labelling condition and is
  therefore satisfied by $\phi$. To establish \itema{}, we show the
  equivalent statement
  \[
  \forall \pi \in \pos{g}.\; \forall n \in N^g. \; \pi \in \nodePos{g}{n} \implies
  \pi \in \nodePos{h}{\phi(n)}
  \]
  We do so by induction on the length of $\pi$: if $\pi = \emptyseq$,
  then $\pi \in \nodePos{g}{n}$ implies $n = r^g$. By the root
  condition, we have $\phi(r^g)=r^h$ and, therefore, $\pi = \emptyseq
  \in \phi(r^g)$. If $\pi = \pi' \concat \seq i$, then let $n' =
  \nodeAtPos{g}{\pi'}$. Consequently, $\pi' \in \nodePos{g}{n'}$ and,
  by induction hypothesis, $\pi' \in \nodePos{h}{\phi(n')}$. Since
  $\pi = \pi'\concat \seq i$, we have $\gsuc^g_i(n') = n$. By the
  successor condition we can conclude $\phi(n) =
  \gsuc^h_i(\phi(n'))$. This and $\pi' \in \nodePos{h}{\phi(n')}$
  yields that $\pi'\concat \seq i \in \nodePos{h}{\phi(n)}$.

  For the ``if'' direction, we assume \itema{} and \itemb{}. The
  labelling condition follows immediately from \itemb{}. For the root
  condition, observe that since $\emptyseq \in \nodePos{g}{r^g}$, we
  also have $\emptyseq \in \nodePos{h}{\phi(r^g)}$. Hence, $\phi(r^g)
  = r^h$. In order to show the successor condition, let $n, n' \in
  N^g$ and $0 \le i < \rank{g}{n}$ such that $\gsuc^g_i(n) = n'$. Then
  there is a position $\pi \in \nodePos{g}{n}$ with $\pi \concat \seq
  i \in \nodePos{g}{n'}$. By \itema{}, we can conclude that $\pi \in
  \nodePos{h}{\phi(n)}$ and $\pi \concat \seq i \in
  \nodePos{h}{\phi(n')}$ which implies that $\gsuc^h_i(\phi(n)) =
  \phi(n')$.
\end{proof}

By Proposition~\ref{prop:catgraph}, there is at most one
$\Delta$-homomorphism between two term graphs. The lemma above
uniquely defines this $\Delta$-homomorphism: if there is a
$\Delta$-homomorphism from $g$ to $h$, it is defined by $\phi(n) =
n'$, where $n'$ is the unique node $n' \in N^h$ with $\nodePos{g}{n}
\subseteq \nodePos{h}{n'}$. Moreover, while it is not true for
arbitrary $\Delta$-homomorphisms, we have that homomorphisms are
surjective.

\begin{lemma}[homomorphisms are surjective]
  \label{lem:homSurj}
  Every homomorphism $\phi\fcolon g \to h$, with $g, h \in \itgraphs$,
  is surjective.
\end{lemma}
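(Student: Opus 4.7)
The plan is to exploit two facts: (1) since $h$ is a term graph, every node of $h$ is reachable from $r^h$, and (2) an ordinary homomorphism (the case $\Delta = \emptyset$) preserves labels everywhere, hence preserves arities everywhere. Together these let us trace any node in $h$ back along a path from the root and lift that path through $\phi$ into $g$.

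More concretely, I would prove the following statement by induction on $d$: for every $m \in N^h$ with $\depth{h}{m} = d$, there exists $n \in N^g$ with $\phi(n) = m$. The base case $d = 0$ means $m = r^h$, and the root condition gives $\phi(r^g) = r^h$. For the step, suppose $\depth{h}{m} = d+1$. Then $m$ has a position of length $d+1$ in $h$, so there exist $m' \in N^h$ and $0 \le i < \rank{h}{m'}$ with $\depth{h}{m'} \le d$ and $\gsuc^h_i(m') = m$. By the induction hypothesis, pick $n' \in N^g$ with $\phi(n') = m'$. Because $\Delta = \emptyset$, the labelling condition applies at $n'$, giving $\glab^g(n') = \glab^h(m')$, hence $\rank{g}{n'} = \rank{h}{m'} > i$. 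Thus $\gsuc^g_i(n')$ is defined, and the successor condition yields
\[
\phi\bigl(\gsuc^g_i(n')\bigr) = \gsuc^h_i(\phi(n')) = \gsuc^h_i(m') = m,
\]
so $m$ lies in the image of $\phi$. Since every node of $h$ has some finite depth (reachability from the root), this covers all of $N^h$.

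There is no real obstacle here; the only subtle point is recognising \emph{why} surjectivity fails for general $\Delta$-homomorphisms and succeeds for $\Delta = \emptyset$: in the step above one needs $\rank{g}{n'} \ge \rank{h}{m'}$ in order to pull the successor at index $i$ back, and this is precisely what the (unrestricted) labelling condition of a homomorphism guarantees. Invoking Lemma~\ref{lem:canhom} (positions are preserved) would give an alternative phrasing — pick any position $\pi$ of $m$ in $h$, locate the node $\nodeAtPos{g}{\pi}$ in $g$, and check $\phi(\nodeAtPos{g}{\pi}) = m$ — but the inductive argument above seems cleanest.
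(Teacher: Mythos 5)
Your proof is correct and is exactly the argument the paper has in mind: the paper dismisses it as "an easy induction on the depth of the nodes in $h$", and your induction (root condition at depth $0$, then labelling condition to preserve arity plus successor condition to pull back the $i$-th successor) fills in that sketch faithfully. The only cosmetic point is that your inductive step applies the hypothesis to a node $m'$ with $\depth{h}{m'} \le d$ rather than $= d$, so phrase it as course-of-values induction (or prove the statement for all nodes of depth at most $d$).
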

\begin{proof}
  Follows from an easy induction on the depth of the nodes in $h$.
\end{proof}

The $\set{a,b}$-homomorphism illustrated in
Figure~\ref{fig:deltaHom2}, shows that the above lemma does not hold
for $\Delta$-homomorphisms in general.

\subsection{Isomorphisms \& Isomorphism Classes}
\label{sec:isom--canon}

When dealing with term graphs, in particular, when studying term graph
transformations, we do not want to distinguish between isomorphic term
graphs. Distinct but isomorphic term graphs do only differ in the
naming of nodes and are thus an unwanted artifact of the definition of
term graphs. In this way, equality up to isomorphism is similar to
$\alpha$-equivalence of $\lambda$-terms and has to be dealt with.

In this section, we characterise isomorphisms and more generally
$\Delta$-isomorphisms. From this we derive two canonical
representations of isomorphism classes of term graphs. One is simply a
subclass of the class of term graphs while the other one is based on
the structure provided by the positions of term graphs. The relevance
of the former representation is derived from the fact that we still
have term graphs that can be easily manipulated whereas the latter is
more technical and will be helpful for constructing term graphs up to
isomorphism.

Note that a bijective $\Delta$-homomorphism is not necessarily a
$\Delta$-isomorphism. To realise this, consider two term graphs $g,h$,
each with one node only. Let the node in $g$ be labelled with $a$ and
the node in $h$ with $b$ then the only possible $a$-homomorphism from
$g$ to $h$ is clearly a bijection but not an $a$-isomorphism. On the
other hand, bijective homomorphisms indeed are isomorphisms.
\begin{lemma}[bijective homomorphisms are isomorphisms]
  \label{lem:isomBij}
  Let $g,h \in \itgraphs$ and $\phi\fcolon g \homto h$. Then the following are
  equivalent
  \begin{enumerate}[(a)]
  \item $\phi$ is an isomorphism. \label{item:isomBij-a}
  \item $\phi$ is bijective. \label{item:isomBij-b}
  \item $\phi$ is injective. \label{item:isomBij-c}
  \end{enumerate}
\end{lemma}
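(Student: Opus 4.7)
The plan is to prove the cycle \textbf{(a)} $\Rightarrow$ \textbf{(b)} $\Rightarrow$ \textbf{(c)} $\Rightarrow$ \textbf{(a)}. The first two implications are immediate: an isomorphism has an inverse and is therefore bijective, and every bijection is injective. All the content lies in \textbf{(c)} $\Rightarrow$ \textbf{(a)}, so that is what I would focus on.

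For \textbf{(c)} $\Rightarrow$ \textbf{(a)}, the first observation is that the hypothesis $\phi\colon g \homto h$ combined with Lemma~\ref{lem:homSurj} gives surjectivity of $\phi$; together with injectivity this yields that $\phi$ is a bijection, and hence admits a set-theoretic inverse $\phi^{-1}\colon N^h \to N^g$. It then suffices to verify that $\phi^{-1}$ is itself a homomorphism $h \homto g$, because then $\phi$ and $\phi^{-1}$ are mutual inverses in the category of homomorphisms and $\phi$ is an isomorphism.

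To check that $\phi^{-1}$ is a homomorphism, I would apply the characterisation of Lemma~\ref{lem:canhom} with $\Delta = \emptyset$, which requires two properties for every $n' \in N^h$, writing $n = \phi^{-1}(n')$: (i) $\nodePos{h}{n'} \subseteq \nodePos{g}{n}$, and (ii) $\glab^h(n') = \glab^g(n)$. Condition (ii) is immediate from the labelling condition for the homomorphism $\phi$ applied to $n$. For condition (i), given a position $\pi = \seq{i_1,\dots,i_k} \in \nodePos{h}{n'}$, I would trace $\pi$ from $r^g$ in $g$: define $m_0 = r^g$ and $m_j = \gsuc^g_{i_j}(m_{j-1})$. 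An induction using the successor condition for $\phi$ together with the root condition $\phi(r^g)=r^h$ shows that $\phi(m_j)$ equals the node at position $\seq{i_1,\dots,i_j}$ in $h$, so by the labelling condition $m_{j-1}$ has the same arity as that node, making each $\gsuc^g_{i_j}(m_{j-1})$ well-defined. In particular $\phi(m_k) = n' = \phi(n)$, and by injectivity $m_k = n$, whence $\pi \in \nodePos{g}{n}$.

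The only delicate point, and really the main obstacle, is checking that the step-by-step trace in $g$ does not run off the end of some $\gsuc^g$ sequence; this is why I want the induction to carry along the equality $\phi(m_j) = \nodeAtPos{h}{\seq{i_1,\dots,i_j}}$ so that the arity equality furnished by the labelling condition rules out that possibility. Once (i) and (ii) are in place, Lemma~\ref{lem:canhom} turns $\phi^{-1}$ into a homomorphism and the proof is complete.
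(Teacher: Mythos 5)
Your proof is correct, and its skeleton coincides with the paper's: all the content is in getting from injectivity to isomorphism, surjectivity comes from Lemma~\ref{lem:homSurj}, and the problem reduces to checking that the set-theoretic inverse $\phi^{-1}$ is a homomorphism $h \homto g$. Where you differ is in how that check is carried out. The paper verifies the root, labelling and successor conditions for $\phi^{-1}$ directly by a short equational argument: writing $n = \phi^{-1}(n')$, the conditions for $\phi$ at $n$ give $\glab^h(n') = \glab^g(n)$ and $\phi(\gsuc^g_i(n)) = \gsuc^h_i(n')$, and applying $\phi^{-1}$ to the latter yields $\phi^{-1}(\gsuc^h_i(n')) = \gsuc^g_i(\phi^{-1}(n'))$ -- no positions are needed. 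You instead invoke the position-based characterisation of Lemma~\ref{lem:canhom} and establish $\nodePos{h}{n'} \subseteq \nodePos{g}{\phi^{-1}(n')}$ by tracing each position step by step through $g$, using the arity equality supplied by the labelling condition to keep the trace well-defined and injectivity to conclude that it lands on the intended node. This is sound, but heavier than necessary: the positional induction essentially re-proves (in the reverse direction) the kind of argument already contained in the proofs of Lemma~\ref{lem:canhom} and Lemma~\ref{lem:homSurj}, and as a by-product re-derives surjectivity, which you had already obtained by citation. What your route buys is that it never has to reason about $\phi^{-1}$ inside the successor condition, only about $\phi$ and positions; what it costs is the extra bookkeeping of the well-definedness of the trace, which you correctly identified as the delicate point.
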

\begin{proof}
  The implication (\ref{item:isomBij-a}) $\Rightarrow$
  (\ref{item:isomBij-b}) is trivial. The equivalence
  (\ref{item:isomBij-b}) $\Leftrightarrow$ (\ref{item:isomBij-c})
  follows from Lemma~\ref{lem:homSurj}. For the implication
  (\ref{item:isomBij-b}) $\Rightarrow$ (\ref{item:isomBij-a}),
  consider the inverse $\phi^{-1}$ of $\phi$. We need to show that
  $\phi^{-1}$ is a homomorphism from $h$ to $g$. The root condition
  follows immediately from the root condition for $\phi$. Similarly,
  an easy equational reasoning reveals that $\phi^{-1}$ is homomorphic
  in $N^h$ since $\phi$ is homomorphic in all $n\in N^g$.
\end{proof}

From the characterisation of $\Delta$-homomorphisms in
Lemma~\ref{lem:canhom}, we immediately obtain a characterisation of
$\Delta$-isomorphisms as follows:
\begin{lemma}[characterisation of $\Delta$-isomorphisms]
  \label{lem:isomCan}
  For all $g,h \in \itgraphs$, a function $\phi\fcolon N^g \funto N^h$
  is a $\Delta$-isomorphism iff for all $n \in N^g$
    \begin{enumerate}[\em(a)]
    \item $\nodePos{h}{\phi(n)} = \nodePos{g}{n}$, and
    \item $\glab^g(n) = \glab^h(\phi(n))$ or
      $\glab^g(n),\glab^h(\phi(n))\in\Delta$.
    \end{enumerate}
\end{lemma}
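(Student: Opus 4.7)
The plan is to derive both directions from Lemma~\ref{lem:canhom}, exploiting the fact that a $\Delta$-isomorphism is, by definition, a $\Delta$-homomorphism $\phi\fcolon g \homto_\Delta h$ whose inverse $\phi^{-1}\fcolon h \homto_\Delta g$ is also a $\Delta$-homomorphism.

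For the ``only if'' direction, I would take a $\Delta$-isomorphism $\phi$ and apply Lemma~\ref{lem:canhom} to both $\phi$ and $\phi^{-1}$. From $\phi$ applied at $n$ I get $\nodePos{g}{n} \subseteq \nodePos{h}{\phi(n)}$; from $\phi^{-1}$ applied at $\phi(n)$ I get $\nodePos{h}{\phi(n)} \subseteq \nodePos{g}{\phi^{-1}(\phi(n))} = \nodePos{g}{n}$. Together this yields (a). For (b), the labelling condition of Lemma~\ref{lem:canhom} applied to $\phi$ says that $\glab^g(n) \nin \Delta$ implies $\glab^g(n) = \glab^h(\phi(n))$, and applied to $\phi^{-1}$ at $\phi(n)$ it says that $\glab^h(\phi(n)) \nin \Delta$ implies $\glab^h(\phi(n)) = \glab^g(n)$. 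The contrapositives together give that either the two labels agree or both lie in $\Delta$.

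For the ``if'' direction, assume that $\phi\fcolon N^g \funto N^h$ satisfies (a) and (b). Condition (a) is exactly condition~(\ref{item:canhom1}) of Lemma~\ref{lem:canhom}, and (b) implies condition~(\ref{item:canhom2}) since whenever $\glab^g(n) \nin \Delta$ the second disjunct of (b) fails and hence $\glab^g(n) = \glab^h(\phi(n))$. Thus $\phi$ is already a $\Delta$-homomorphism $g \homto_\Delta h$. It remains to construct an inverse $\Delta$-homomorphism. I would first establish that $\phi$ is bijective: injectivity is immediate from (a), since every node of a term graph has at least one position and two nodes sharing a position must coincide; surjectivity follows by taking any $m \in N^h$ with some position $\pi$, considering $n = \nodeAtPos{g}{\pi}$, noting that $\pi \in \nodePos{g}{n} = \nodePos{h}{\phi(n)}$ by (a), and concluding $\phi(n) = m$ from uniqueness of node-at-position. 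Hence $\phi^{-1}$ exists; applying Lemma~\ref{lem:canhom} in the reverse direction, (a) rewritten as $\nodePos{g}{\phi^{-1}(m)} = \nodePos{h}{m}$ and (b) rewritten for $n = \phi^{-1}(m)$ show that $\phi^{-1}$ satisfies both conditions of that lemma, so $\phi^{-1}\fcolon h \homto_\Delta g$, making $\phi$ a $\Delta$-isomorphism.

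The only mildly delicate step is the surjectivity argument in the ``if'' direction, and this is really the crux of why the position-based characterisation suffices: because in a term graph every node is reachable from the root, the position sets $\nodePos{\cdot}{\cdot}$ completely determine the nodes and automatically force a bijection once (a) holds. Everything else reduces to mechanically combining Lemma~\ref{lem:canhom} with its symmetric application to $\phi^{-1}$.
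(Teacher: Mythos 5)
Your overall route is the same as the paper's: the paper proves this lemma as an immediate consequence of Lemma~\ref{lem:canhom} together with Proposition~\ref{prop:catgraph}, and your ``only if'' direction and the first half of your ``if'' direction are exactly that argument spelled out. There is, however, one genuine gap, and it sits precisely at the step you yourself call the crux: surjectivity. You take $m \in N^h$, pick $\pi \in \nodePos{h}{m}$ and then ``consider $n = \nodeAtPos{g}{\pi}$'' --- but this node only exists if $\pi \in \pos{g}$, i.e.\ you are implicitly assuming $\pos{h} \subseteq \pos{g}$. Condition (a) quantifies only over nodes of $g$ and by itself yields just $\pos{g} \subseteq \pos{h}$; it does not ``automatically force a bijection''. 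For instance, with $\Delta = \set{c}$, $g$ a single $c$-node and $h$ the term tree $f(c,c)$, the map sending the root of $g$ to the root of $h$ is a $\Delta$-homomorphism satisfying (a), yet it is not surjective --- it is condition (b) that rules this situation out, and your surjectivity argument never uses (b).

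The repair is short but must be made explicit: show by induction on the length of $\pi$ (or on $\depth{h}{m}$, as in the proof of Lemma~\ref{lem:homSurj}) that every $\pi \in \pos{h}$ lies in $\pos{g}$ with $\nodeAtPos{h}{\pi} = \phi(\nodeAtPos{g}{\pi})$. For the step, let $\pi\concat\seq{i} \in \pos{h}$ and $\nodeAtPos{h}{\pi} = \phi(n)$; since $\phi(n)$ has an $i$-th successor, $\glab^h(\phi(n))$ is not nullary and hence not in $\Delta$, so (b) forces $\glab^g(n) = \glab^h(\phi(n))$, whence $\gsuc^g_i(n)$ is defined and, $\phi$ being homomorphic in $n$, $\phi(\gsuc^g_i(n)) = \gsuc^h_i(\phi(n))$. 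With this in place the rest of your argument goes through; alternatively, once $\phi$ is bijective and $\phi^{-1}$ is checked against Lemma~\ref{lem:canhom}, Proposition~\ref{prop:catgraph} gives directly that the two $\Delta$-homomorphisms are mutually inverse, which is how the paper compresses the proof.
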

\begin{proof}
  Immediate consequence of Lemma~\ref{lem:canhom} and
  Proposition~\ref{prop:catgraph}.
\end{proof}

Note that whenever $\Delta$ is a singleton set, the condition
$\glab^g(n),\glab^h(\phi(n)) \in \Delta$ in the above lemma implies
$\glab^g(n) = \glab^h(\phi(n))$. Therefore, we obtain the following
corollary:
\begin{corollary}[$\sigma$-isomorphism = isomorphism]
  \label{cor:sig-isom-isom}
  Given $g,h \in \itgraphs$ and $\sigma \in \Sigma^{(0)}$, we have $g
  \isom h$ iff $g \isom_\sigma h$.
\end{corollary}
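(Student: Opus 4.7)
The plan is to apply Lemma~\ref{lem:isomCan} twice, once with $\Delta = \{\sigma\}$ and once with $\Delta = \emptyset$, and exploit the fact that the disjunction appearing in condition (b) of that lemma collapses when $\Delta$ is a singleton.

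First I would dispatch the ``only if'' direction: if $\phi \fcolon g \isom h$ witnesses $g \isom h$, then by Lemma~\ref{lem:isomCan} (with $\Delta = \emptyset$) we have $\nodePos{h}{\phi(n)} = \nodePos{g}{n}$ and $\glab^g(n) = \glab^h(\phi(n))$ for every $n \in N^g$. These conditions trivially imply the weaker conditions of Lemma~\ref{lem:isomCan} for $\Delta = \{\sigma\}$, so $\phi$ is also a $\sigma$-isomorphism, yielding $g \isom_\sigma h$.

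For the ``if'' direction, suppose $\phi \fcolon N^g \to N^h$ is a $\sigma$-isomorphism. By Lemma~\ref{lem:isomCan} applied with $\Delta = \{\sigma\}$, we have $\nodePos{h}{\phi(n)} = \nodePos{g}{n}$, and moreover for each $n \in N^g$ either $\glab^g(n) = \glab^h(\phi(n))$ or $\glab^g(n), \glab^h(\phi(n)) \in \{\sigma\}$. In the latter case both labels are necessarily equal to $\sigma$, so in either case $\glab^g(n) = \glab^h(\phi(n))$ holds. Hence $\phi$ satisfies conditions (a) and (b) of Lemma~\ref{lem:isomCan} with $\Delta = \emptyset$, which means $\phi$ is an isomorphism, so $g \isom h$.

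There is no real obstacle here; the corollary is essentially an unpacking of the definitions once Lemma~\ref{lem:isomCan} is in place. The only point worth stating explicitly is the collapse of the disjunction ``$\glab^g(n) = \glab^h(\phi(n))$ or $\glab^g(n),\glab^h(\phi(n)) \in \Delta$'' into a plain equality when $\Delta$ has a single element, which is exactly the content of the remark preceding the corollary.
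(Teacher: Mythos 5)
Your proposal is correct and follows exactly the paper's own route: the corollary is obtained from Lemma~\ref{lem:isomCan} by observing that for a singleton $\Delta=\set{\sigma}$ the disjunct $\glab^g(n),\glab^h(\phi(n))\in\Delta$ forces both labels to equal $\sigma$, so the characterisations for $\Delta=\set{\sigma}$ and $\Delta=\emptyset$ coincide. Nothing is missing.
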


Note that the above equivalence does not hold for
$\Delta$-homomorphisms with more than one symbol in $\Delta$: consider
the term graphs $g = a$ and $h = b$ consisting of a single node
labelled $a$ respectively $b$. While $g$ and $h$ are
$\Delta$-isomorphic for $\Delta = \set{a,b}$, they are not isomorphic.

\subsubsection{Canonical Term Graphs}
\label{sec:canon-term-graphs}

From the Lemmas~\ref{lem:isomBij} and \ref{lem:isomCan} we learned
that isomorphisms between term graphs are bijections that preserve and
reflect the positions as well as the labelling of each node. These
findings motivate the following definition of canonical term graphs as
candidates for representatives of isomorphism classes:
\begin{definition}[canonical term graphs]
  \label{def:canTgraph}
  A term graph $g$ is called \emph{canonical} if $n = \nodePos{g}{n}$
  holds for each $n \in N^g$. That is, each node is the set of its
  positions in the term graph. The set of all (finite) canonical term
  graphs over $\Sigma$ is denoted $\ictgraphs$ (respectively $\ctgraphs$).
  Given a term graph $h \in \ictgraphs$, its \emph{canonical
    representative} $\canon{h}$ is the canonical term graph given by
  \begin{align*}
    &N^{\canon{h}} = \setcom{\nodePos{h}{n}}{n \in N} \qquad r^{\canon{h}}
    = \nodePos{h}{r}  \qquad \glab^{\canon{h}}(\nodePos{h}{n}) =
    \glab^h(n) \quad \text{for all } n \in N \\
    &\gsuc^{\canon{h}}_i(\nodePos{h}{n}) = \nodePos{h}{\gsuc^{h}_i(n)}
    \quad \text{for all } n \in N, 0 \le i < \rank{h}{n}
\end{align*}
\end{definition}
The above definition follows a well-known approach to obtain, for each
term graph $g$, a canonical representative $\canon g$
\cite{plump99hggcbgt}. One can easily see that $\canon{g}$ is a
well-defined canonical term graph. With this definition we indeed
capture a notion of canonical representatives of isomorphism classes:
\begin{proposition}[canonical term graphs are isomorphism class
  representatives]
  \label{prop:canon}
  Given $g \in \itgraphs$, the term graph $\canon{g}$ canonically
  represents the equivalence class $\eqc{g}{\isom}$. More precisely,
  it holds that
  \begin{center}
    \begin{inparaenum}[\em(i)]
    \item $\eqc{g}{\isom} = \eqc{\canon{g}}{\isom}$\,, \quad and\qquad
    \item $\eqc{g}{\isom} = \eqc{h}{\isom}$ \quad iff \quad $\canon{g}
      = \canon{h}$.
    \end{inparaenum}
  \end{center}
  In particular, we have, for all canonical term graphs $g,h$, that $g =
  h$ iff $g \isom h$.
\end{proposition}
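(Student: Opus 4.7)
The plan is to prove the three claims in the proposition by exploiting the characterisations of (iso)morphisms given in Lemmas~\ref{lem:canhom}, \ref{lem:isomBij} and \ref{lem:isomCan}, together with the obvious fact that distinct nodes of a term graph have disjoint sets of positions (each position $\pi \in \pos{g}$ determines a unique node $\nodeAtPos{g}{\pi}$).

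First I would establish the auxiliary fact that $g \isom \canon{g}$ for every $g \in \itgraphs$. Consider the map $\phi\fcolon N^g \funto N^{\canon{g}}$ defined by $\phi(n) = \nodePos{g}{n}$. The definition of $\canon{g}$ makes $\phi$ trivially surjective, and the observation above makes it injective (different nodes yield disjoint, hence unequal, non-empty position sets). To see that $\phi$ is a homomorphism I would apply Lemma~\ref{lem:canhom}: condition (b) holds because $\glab^{\canon{g}}(\nodePos{g}{n}) = \glab^g(n)$ by construction, and condition (a) is the equality $\nodePos{g}{n} = \nodePos{\canon{g}}{\phi(n)}$, which again follows directly from the definition of $\canon g$ (a straightforward induction on the length of positions in $g$ shows that $\pi \in \nodePos{g}{n}$ iff $\pi \in \nodePos{\canon g}{\nodePos{g}{n}}$, using the defining equations for $r^{\canon g}$ and $\gsuc^{\canon g}$). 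Having $\phi$ as a bijective homomorphism, Lemma~\ref{lem:isomBij} turns it into an isomorphism, yielding $\eqc{g}{\isom} = \eqc{\canon{g}}{\isom}$, i.e.\ claim~(i).

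For claim~(ii), the ``if'' direction is immediate from (i): if $\canon g = \canon h$ then $g \isom \canon g = \canon h \isom h$. For the ``only if'' direction, assume $\psi\fcolon g \isom h$. Lemma~\ref{lem:isomCan} gives $\nodePos{h}{\psi(n)} = \nodePos{g}{n}$ and $\glab^g(n) = \glab^h(\psi(n))$ for every $n \in N^g$ (the disjunction with $\Delta$-labels is vacuous for $\Delta = \emptyset$). Hence the node sets of $\canon g$ and $\canon h$ coincide, namely $\setcom{\nodePos{g}{n}}{n\in N^g} = \setcom{\nodePos{h}{m}}{m \in N^h}$, as does the root ($\psi(r^g) = r^h$ implies $\nodePos{g}{r^g} = \nodePos{h}{r^h}$). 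A direct verification using the defining clauses of $\canon{\cdot}$ together with the labelling and successor conditions for $\psi$ shows that the labelling and successor functions of $\canon g$ and $\canon h$ agree on this common node set, so $\canon g = \canon h$.

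Finally, for the supplementary claim about canonical term graphs, assume $g, h$ are canonical with $g \isom h$, via some $\psi$. By Lemma~\ref{lem:isomCan}, $\nodePos{h}{\psi(n)} = \nodePos{g}{n}$ for every $n \in N^g$. Since both $g$ and $h$ are canonical, $n = \nodePos{g}{n}$ and $\psi(n) = \nodePos{h}{\psi(n)}$, so $\psi(n) = n$. Hence $\psi$ is the identity, which forces $N^g = N^h$ as well as $r^g = r^h$, $\glab^g = \glab^h$ and $\gsuc^g = \gsuc^h$, i.e.\ $g = h$. The main (though entirely routine) obstacle is the bookkeeping around positions: one must repeatedly invoke the fact that in any term graph a position identifies a unique node to pass between ``positions of a node'' and ``nodes as sets of positions''; once this translation is in place all three claims reduce to direct applications of the lemmas already proved.
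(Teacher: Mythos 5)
Your proposal is correct and follows essentially the same route as the paper, which disposes of the proposition as a straightforward consequence of the position-based characterisation of isomorphisms (Lemma~\ref{lem:isomCan}); your write-up merely spells out the routine details (the bijection $n \mapsto \nodePos{g}{n}$, via Lemmas~\ref{lem:canhom} and \ref{lem:isomBij}, and the verification that $\canon{g}$ is determined by the position data), all of which check out.
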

\begin{proof}
  Straightforward consequence of Lemma~\ref{lem:isomCan}.
\end{proof}

\subsubsection{Labelled Quotient Trees}
\label{sec:labell-quot-tree}

Intuitively, term graphs can be thought of as ``terms with sharing'',
i.e.\ terms in which occurrences of the same subterm may be
identified. The representation of isomorphic term graphs as
\emph{labelled quotient trees}, which we shall study in this section,
makes use of and formalises this intuition. To this end, we introduce
an equivalence relation on the positions of a term graph that captures
the sharing in a term graph:
\begin{definition}[aliasing positions]
  Given a term graph $g$ and two positions $\pi_1,\pi_2 \in \pos{g}$,
  we say that $\pi_1$ and $\pi_2$ \emph{alias each other} in $g$,
  denoted $\pi_1 \sim_g \pi_2$, if $\nodeAtPos{g}{\pi_1} =
  \nodeAtPos{g}{\pi_2}$.
\end{definition}
One can easily see that the thus defined relation $\sim_g$ on
$\pos{g}$ is an equivalence relation. Moreover, the partition on
$\pos{g}$ induced by $\sim_g$ is simply the set
$\setcom{\nodePos{g}{n}}{n\in N^g}$ that contains the sets of
positions of nodes in $g$.

\begin{example}
  For the term graph $g_2$ illustrated in Figure~\ref{fig:deltaHom1},
  we have that $\seq{0,0} \sim_{g_2} \seq{1}$ as both $\seq{0,0}$ and
  $\seq{1}$ are positions of the $a$-node in $g_2$. For the term graph
  $g_4$ in Figure~\ref{fig:deltaHom2}, $\emptyseq \sim_{g_4} \seq{1}
  \sim_{g_4} \seq {1,1} \sim_{g_4} \dots$ as all finite sequences over
  $1$ are positions of the $f$-node in $g_4$.
\end{example}

The characterisation of $\Delta$-homomorphisms of
Lemma~\ref{lem:canhom} can be recast in terms of aliasing positions,
which then yields the following characterisation of the
\emph{existence} of $\Delta$-homomorphisms:
\begin{lemma}[characterisation of $\Delta$-homomorphisms]
  \label{lem:occrephom}
  Given $g,h\in \itgraphs$, there is a $\Delta$-homomorphism
  $\phi\fcolon g \homto_\Delta h$ iff, for all $\pi,\pi'\in\pos{g}$,
  we have
  \begin{center}
    \begin{inparaenum}[\em(a)]
    \item $\pi \sim_g \pi' \quad\implies\quad \pi \sim_h \pi'$\,,
      \quad and\qquad
      \label{item:occrephom1}
    \item $g(\pi) \nin \Delta \quad\implies\quad g(\pi) = h(\pi)$.
      \label{item:occrephom2}
    \end{inparaenum}
  \end{center}

\end{lemma}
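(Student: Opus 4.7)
The plan is to reduce this statement to the earlier characterisation of $\Delta$-homomorphisms via positions (Lemma \ref{lem:canhom}). That lemma already translates the node-level conditions into position-level conditions; the remaining work is simply to re-express those in terms of the aliasing relation $\sim_g$.

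For the ``only if'' direction, suppose $\phi\fcolon g \homto_\Delta h$ exists. Given $\pi,\pi' \in \pos{g}$ with $\pi \sim_g \pi'$, let $n = \nodeAtPos{g}{\pi} = \nodeAtPos{g}{\pi'}$, so $\pi,\pi' \in \nodePos{g}{n}$. By Lemma~\ref{lem:canhom}\,\itema{} we get $\pi,\pi' \in \nodePos{h}{\phi(n)}$, hence $\nodeAtPos{h}{\pi} = \phi(n) = \nodeAtPos{h}{\pi'}$, i.e.\ $\pi \sim_h \pi'$. For condition~(b), if $\pi \in \pos{g}$ with $g(\pi)\nin\Delta$, set $n = \nodeAtPos{g}{\pi}$; Lemma~\ref{lem:canhom}\,\itemb{} gives $\glab^g(n) = \glab^h(\phi(n))$, and since $\pi \in \nodePos{h}{\phi(n)}$ (by \itema{}), $h(\pi) = \glab^h(\phi(n)) = \glab^g(n) = g(\pi)$.

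For the ``if'' direction, I will construct $\phi$ explicitly. First observe that taking $\pi' = \pi$ in condition~(a) gives $\pi \sim_h \pi$ for every $\pi \in \pos{g}$; in particular $\pos{g} \subseteq \pos{h}$. I then define $\phi\fcolon N^g \funto N^h$ by $\phi(n) = \nodeAtPos{h}{\pi}$ for any $\pi \in \nodePos{g}{n}$. This is well-defined: any two positions $\pi,\pi' \in \nodePos{g}{n}$ satisfy $\pi \sim_g \pi'$, so by (a) also $\pi \sim_h \pi'$, meaning they pick the same node in $h$. By construction, $\nodePos{g}{n} \subseteq \nodePos{h}{\phi(n)}$, which is condition \itema{} of Lemma~\ref{lem:canhom}. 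For \itemb{}, if $\glab^g(n) \nin \Delta$, choose any $\pi \in \nodePos{g}{n}$; then $g(\pi) = \glab^g(n) \nin \Delta$, and (b) gives $g(\pi) = h(\pi) = \glab^h(\phi(n))$. Applying Lemma~\ref{lem:canhom} then yields the desired $\Delta$-homomorphism $\phi\fcolon g \homto_\Delta h$.

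No substantive obstacle arises: the only subtlety is the implicit observation that the aliasing condition in (a) forces $\pos{g} \subseteq \pos{h}$ (needed so that $\phi$ maps into $N^h$ and the labelling comparison in (b) makes sense). Once this is noted, the argument is a direct translation between the ``node-based'' formulation of Lemma~\ref{lem:canhom} and the ``position-based'' formulation via $\sim_g$.
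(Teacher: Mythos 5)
Your proposal is correct and follows essentially the same route as the paper: both directions are reduced to the node-level characterisation of Lemma~\ref{lem:canhom}, and your explicit definition of $\phi(n)$ as the node of $h$ at any position in $\nodePos{g}{n}$ (with well-definedness via condition~(a), including the observation that reflexivity of $\sim_g$ forces $\pos{g}\subseteq\pos{h}$) is the same construction the paper uses, merely phrased directly rather than as ``the unique $m$ with $\nodePos{g}{n}\subseteq\nodePos{h}{m}$''.
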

\begin{proof}
  \def\itema{(\ref{item:occrephom1})}%
  \def\itemb{(\ref{item:occrephom2})}%
  \def\itemca{(\ref{item:canhom1})}%
  \def\itemcb{(\ref{item:canhom2})}%
  \def\itemap{(\ref{item:canhom1}')}%
  \def\itembp{(\ref{item:canhom2}')}%
  For the ``only if'' direction, assume that $\phi$ is a
  $\Delta$-homomorphism from $g$ to $h$. Then we can use the
  properties \itemca{} and \itemcb{} of Lemma~\ref{lem:canhom}, which
  we will refer to as \itemap{} and \itembp{} to avoid confusion. In
  order to show \itema{}, assume $\pi \sim_g \pi'$. Then there is some
  node $n \in N^g$ with $\pi,\pi' \in \nodePos{g}{n}$. \itemap{}
  yields $\pi,\pi' \in \phi(n)$ and, therefore, $\pi \sim_h \pi'$. To
  show \itemb{}, we assume some $\pi \in \pos{g}$ with $g(\pi) \nin
  \Delta$. Then we can reason as follows:
  \[
  g(\pi) = \glab^g(\nodeAtPos{g}{\pi}) \stackrel{\text{\itembp}}{=}
  \glab^h(\phi(\nodeAtPos{g}{\pi}))  \stackrel{\text{\itemap}}{=}
  \glab^h(\nodeAtPos{h}{\pi}) = h(\pi)
  \]
  For the converse direction, assume that both \itema{} and \itemb{}
  hold. Define the function $\phi\fcolon N^g \funto N^h$ by $\phi(n) =
  m$ iff $\nodePos{g}{n} \subseteq \nodePos{h}{m}$ for all $n \in N^g$
  and $m \in N^h$. To see that this is well-defined, we show at first
  that, for each $n \in N^g$, there is at most one $m \in N^h$ with
  $\nodePos{g}{n} \subseteq \nodePos{h}{m}$. Suppose there is another
  node $m' \in N^h$ with $\nodePos{g}{n} \subseteq
  \nodePos{h}{m'}$. Since $\nodePos{g}{n} \neq \emptyset$, this
  implies $\nodePos{h}{m} \cap \nodePos{h}{m'} \neq \emptyset$. Hence,
  $m = m'$. Secondly, we show that there is at least one such node
  $m$. Choose some $\pi^* \in \nodePos{g}{n}$. Since then $\pi^*
  \sim_g \pi^*$ and, by \itema{}, also $\pi^* \sim_h \pi^*$ holds,
  there is some $m \in N^h$ with $\pi^* \in \nodePos{h}{m}$. For each
  $\pi \in \nodePos{g}{n}$, we have $\pi^* \sim_g \pi$ and, therefore,
  $\pi^* \sim_h \pi$ by \itema{}. Hence, $\pi \in \nodePos{h}{m}$. So
  we know that $\phi$ is well-defined. By construction, $\phi$
  satisfies \itemap{}. Moreover, because of \itemb{}, it is also
  easily seen to satisfy \itembp{}. Hence, $\phi$ is a homomorphism
  from $g$ to $h$.
\end{proof}
Intuitively, Clause (\ref{item:occrephom1}) states that $h$ has at
least as much sharing of nodes as $g$ has, whereas Clause
(\ref{item:occrephom2}) states that $h$ has at least the same
non-$\Delta$-labelling as $g$. In this sense, the above
characterisation confirms the intuition about $\Delta$-homomorphisms
that we mentioned in Example~\ref{ex:deltaHom}, viz.\
$\Delta$-homomorphisms may only introduce sharing and relabel
$\Delta$-nodes. This can be observed in the two $\Delta$-homomorphisms
illustrated in Figure~\ref{fig:deltaHom}.

From the above characterisations of the existence of
$\Delta$-homomorphisms, we can easily derive the following
characterisation of $\Delta$-isomorphisms using the uniqueness of
$\Delta$-homomorphisms between two term graphs:
\begin{lemma}[characterisation of $\Delta$-isomorphisms]
  \label{lem:isomOcc}
  For all $g,h \in \itgraphs$, $g \isom_\Delta h$ iff
  \begin{center}
    \begin{inparaenum}[\em(a)]
    \item $\sim_g\ =\ \sim_h$\,, \quad and\qquad
    \item $g(\pi) = h(\pi)$ or $g(\pi),h(\pi) \in \Delta$ \quad for
      all $\pi \in \pos{g}$.
    \end{inparaenum}
  \end{center}
\end{lemma}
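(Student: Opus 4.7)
The plan is to reduce this directly to the characterisation of the \emph{existence} of $\Delta$-homomorphisms in Lemma~\ref{lem:occrephom}, applied in both directions, together with the observation (recorded immediately after Proposition~\ref{prop:catgraph}) that any pair of $\Delta$-homomorphisms $g \homto_\Delta h$ and $h \homto_\Delta g$ are mutually inverse $\Delta$-isomorphisms. This turns the statement into a straightforward bookkeeping exercise about the two clauses of Lemma~\ref{lem:occrephom}.

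For the ``only if'' direction, I would assume $g \isom_\Delta h$ via a $\Delta$-isomorphism $\phi$. Then both $\phi$ and $\phi^{-1}$ are $\Delta$-homomorphisms, so Lemma~\ref{lem:occrephom} applied to each yields $\pi \sim_g \pi' \Leftrightarrow \pi \sim_h \pi'$, giving (a); note in particular that (a) implies $\pos{g} = \pos{h}$ since the position set is the union of the equivalence classes. For (b), fix $\pi \in \pos{g}$. If $g(\pi) \nin \Delta$, Lemma~\ref{lem:occrephom} applied to $\phi$ yields $g(\pi) = h(\pi)$; otherwise $g(\pi) \in \Delta$, and one also has $h(\pi) \in \Delta$, for if $h(\pi) \nin \Delta$ then Lemma~\ref{lem:occrephom} applied to $\phi^{-1}$ would give $h(\pi) = g(\pi) \in \Delta$, a contradiction.

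For the ``if'' direction, assume (a) and (b). I read off the two hypotheses of Lemma~\ref{lem:occrephom} for the pair $(g,h)$: clause (a) of that lemma is just $\sim_g\ \subseteq\ \sim_h$, which is immediate from our (a); clause (b) of that lemma demands $g(\pi) \nin \Delta \Rightarrow g(\pi) = h(\pi)$, which follows from our (b) since then $g(\pi), h(\pi) \in \Delta$ is excluded. Thus a $\Delta$-homomorphism $\phi\fcolon g \homto_\Delta h$ exists; swapping the roles of $g$ and $h$ gives a $\Delta$-homomorphism $\psi\fcolon h \homto_\Delta g$. By the consequence of Proposition~\ref{prop:catgraph} recalled above, $\phi$ and $\psi$ are inverse $\Delta$-isomorphisms, so $g \isom_\Delta h$.

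No real obstacle is expected, since the hard work has been done in Lemma~\ref{lem:occrephom}. The only mildly subtle point is the case $g(\pi) \in \Delta$ in the ``only if'' direction, where one must invoke the reverse $\Delta$-homomorphism $\phi^{-1}$ to exclude $h(\pi) \nin \Delta$; this is exactly the asymmetry that prevents condition (b) from being stated simply as $g(\pi) = h(\pi)$.
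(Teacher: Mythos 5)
Your proposal is correct and follows exactly the paper's route: the paper proves this lemma as an ``immediate consequence of Lemma~\ref{lem:occrephom} and Proposition~\ref{prop:catgraph}'', i.e.\ applying the existence characterisation of $\Delta$-homomorphisms in both directions and using the fact that mutually opposed $\Delta$-homomorphisms are inverse $\Delta$-isomorphisms. You merely spell out the bookkeeping (including the symmetry point $\pos{g}=\pos{h}$ and the $\Delta$-labelled case) that the paper leaves implicit.
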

\begin{proof}
  Immediate consequence of Lemma~\ref{lem:occrephom} and
  Proposition~\ref{prop:catgraph}.
\end{proof}

\begin{rem}
  \label{rem:catTgraphs}
  $\Delta$-homomorphisms can be naturally lifted to the set of
  isomorphism classes $\quotient{\itgraphs}{\isom}$: we say that two
  $\Delta$-homomorphisms $\phi\fcolon g \homto_\Delta h$,
  $\phi'\fcolon g' \homto_\Delta h'$, are isomorphic, written $\phi
  \isom \phi'$ iff there are isomorphisms $\psi_1\fcolon g \isoto g'$
  and $\psi_2\fcolon h \isoto h'$ such that $\psi_2 \circ \phi = \phi'
  \circ \psi_1$. Given a $\Delta$-homomorphism $\phi\fcolon g
  \homto_\Delta h$ in $\itgraphs$, $\eqc{\phi}\isom\fcolon
  \eqc{g}\isom \homto_\Delta \eqc{h}\isom$ is a $\Delta$-homomorphism
  in $\quotient{\itgraphs}{\isom}$. These $\Delta$-homomorphisms then
  form a category which can easily be show to be isomorphic to the
  category of $\Delta$-homomorphisms on $\ictgraphs$ via the mapping
  $\eqc{\cdot}\isom$.
\end{rem}

Lemma~\ref{lem:isomOcc} has shown that term graphs can be
characterised up to isomorphism by only giving the equivalence
$\sim_g$ and the labelling $g(\cdot)\fcolon \pi \mapsto g(\pi)$ of the
involved term graphs. This observation gives rise to the following
definition:

\begin{definition}[labelled quotient trees]
  \label{def:occRep}
  A \emph{labelled quotient tree} over signature $\Sigma$ is a triple
  $(P,l,\sim)$ consisting of a non-empty set $P \subseteq \nats^*$, a
  function $l\fcolon P \funto \Sigma$, and an equivalence relation
  $\sim$ on $P$ that satisfies the following conditions for all
  $\pi,\pi' \in \nats^*$ and $i \in \nats$:
  \begin{align*}
    \pi\concat \seq i \in P \quad &\implies \quad \pi \in P \quad \text{
      and } \quad i < \srank{l(\pi)}
    \tag{reachability} \\
    \pi \sim \pi' \quad &\implies \quad
    \begin{cases}
      l(\pi) = l(\pi') &\text{ and }\\
      \pi\concat \seq i \sim \pi' \concat \seq i &\text{ for all }\; i <
      \srank{l(\pi)}
    \end{cases}
    \tag{congruence}
  \end{align*}
\end{definition}
\noindent In other words, a labelled quotient tree $(P,l,\sim)$ is a
ranked tree domain $P$ together with a congruence $\sim$ on it and a
labelling function $l\fcolon \quotient{P}{\sim} \funto \Sigma$ that
honours the rank. Also note that since $P$ must be non-empty, the
reachability condition implies that $\emptyseq \in P$.

\begin{example}
  The term graph $g_2$ depicted in Figure~\ref{fig:deltaHom1} is
  represented up to isomorphism by the labelled quotient tree
  $(P,l,\sim)$ with $P = \set{\emptyseq,\seq 0,\seq {0,0},\seq 1}$,
  $l(\emptyseq) = f$, $l(\seq 0) = h$, $l(\seq {0,0}) = l(\seq 1) = a$
  and $\sim$ the least equivalence relation on $P$ with $\seq{0,0}
  \sim \seq 1$.
\end{example}

The following lemma confirms that labelled quotient trees uniquely
characterise any term graph up to isomorphism:
\begin{lemma}[labelled quotient trees are canonical]
  \label{lem:occrep}
  Each term graph $g \in \itgraphs$ induces a \emph{canonical labelled
    quotient tree} $(\pos{g},g(\cdot),\sim_g)$ over $\Sigma$. Vice
  versa, for each labelled quotient tree $(P,l,\sim)$ over $\Sigma$
  there is a unique canonical term graph $g\in \ictgraphs$ whose
  canonical labelled quotient tree is $(P,l,\sim)$, i.e.\ $\pos{g} =
  P$, $g(\pi) = l(\pi)$ for all $\pi \in P$, and $\sim_g\ =\ \sim$.
\end{lemma}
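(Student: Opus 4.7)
The plan is to handle the two halves of the correspondence separately, and then invoke earlier results for uniqueness.

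For the forward direction, I would verify directly from the definitions that $(\pos{g}, g(\cdot), \sim_g)$ satisfies the axioms of a labelled quotient tree. Non-emptiness is immediate since $\emptyseq \in \pos{g}$ (it is a position of $r^g$). The reachability condition follows from Definition~\ref{def:graphPath}: if $\pi \concat \seq{i} \in \pos{g}$, then the defining recursion on paths forces $\pi \in \pos{g}$ and $i < \rank{g}{\nodeAtPos{g}{\pi}} = \srank{g(\pi)}$. For the congruence condition, $\pi \sim_g \pi'$ means $\nodeAtPos{g}{\pi} = \nodeAtPos{g}{\pi'}$, so trivially $g(\pi) = g(\pi')$; and by the successor function, the $i$-th successor of this common node is reached by both $\pi \concat \seq{i}$ and $\pi' \concat \seq{i}$, giving $\pi \concat \seq{i} \sim_g \pi' \concat \seq{i}$.

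For the backward direction, I would construct $g$ explicitly from $(P,l,\sim)$ by taking nodes to be the $\sim$-equivalence classes of $P$, and then passing to the canonical representative. Concretely, set
\begin{align*}
N^g &= \setcom{\eqc{\pi}{\sim}}{\pi \in P}, \qquad r^g = \eqc{\emptyseq}{\sim}, \\
\glab^g(\eqc{\pi}{\sim}) &= l(\pi), \qquad \gsuc^g_i(\eqc{\pi}{\sim}) = \eqc{\pi \concat \seq{i}}{\sim} \text{ for } i < \srank{l(\pi)}.
\end{align*}
The congruence condition on $\sim$ makes both $\glab^g$ and $\gsuc^g$ well-defined, and the reachability condition on $P$ combined with a straightforward induction on $\len{\pi}$ shows that every $\eqc{\pi}{\sim}$ is reachable from the root via the path $\pi$. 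Hence $g$ is indeed a term graph. A parallel induction on $\len{\pi}$ then shows that $\pi \in \nodePos{g}{\eqc{\pi}{\sim}}$ for every $\pi \in P$, and conversely that every position of $\eqc{\pi}{\sim}$ lies in the $\sim$-class of $\pi$; this gives $\pos{g} = P$, $\sim_g\ =\ \sim$, and $\nodePos{g}{\eqc{\pi}{\sim}} = \eqc{\pi}{\sim}$, so passing to $\canon{g}$ (which merely relabels each node by its set of positions) yields a canonical term graph realising the given labelled quotient tree.

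For uniqueness, suppose $g, g' \in \ictgraphs$ both have $(P,l,\sim)$ as their canonical labelled quotient tree. Then $\sim_g\ =\ \sim_{g'}$ and $g(\pi) = g'(\pi)$ for all $\pi$, so Lemma~\ref{lem:isomOcc} (with $\Delta = \emptyset$) yields $g \isom g'$. Since $g$ and $g'$ are both canonical, Proposition~\ref{prop:canon} forces $g = g'$.

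The only step with any subtlety is the mutual verification that $\pos{g} = P$ and $\sim_g\ =\ \sim$ for the constructed $g$; the inclusions need to be checked in both directions by induction on position length, using the reachability axiom to keep positions inside $P$ and the congruence axiom to keep the equivalence relation intact. Everything else is routine bookkeeping over the definitions.
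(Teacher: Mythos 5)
Your proposal is correct and follows essentially the same route as the paper: the forward direction is a direct check of the axioms, the backward direction builds the term graph on $\sim$-equivalence classes (well-defined by congruence, reachable by the reachability axiom, with an induction showing each class is exactly the position set of its node, so the graph is already canonical), and uniqueness is obtained from Lemma~\ref{lem:isomOcc} together with Proposition~\ref{prop:canon}. The only cosmetic difference is your final appeal to $\canon{g}$, which is superfluous once $\nodePos{g}{\eqc{\pi}{\sim}} = \eqc{\pi}{\sim}$ is established, exactly as in the paper.
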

\begin{proof}
  The first part is trivial: $(\pos{g},g(\cdot),\sim_g)$ satisfies the
  conditions from Definition~\ref{def:occRep}.
  
  For the second part, let $(P,l, \sim)$ be a labelled quotient
  tree. Define the term graph $g = (N,\glab,\gsuc,r)$ by
  \begin{align*}
    N &= \quotient{P}{\sim} &
    \glab(n) = f \quad &\text{ iff } \quad \exists \pi \in n.\;l(\pi)
    = f\\
    r &= \eqc{\emptyseq}{\sim} &
    \gsuc_i(n) = n' \quad &\text{ iff } \quad \exists \pi \in
    n.\; \pi\concat \seq i \in n'
  \end{align*}
  The functions $\glab$ and $\gsuc$ are well-defined due to the
  congruence condition satisfied by $(P,l, \sim)$. Since $P$ is
  non-empty and closed under prefixes, it contains $\emptyseq$. Hence,
  $r$ is well-defined. Moreover, by the reachability condition, each
  node in $N$ is reachable from the root node. An easy induction proof
  shows that $\nodePos{g}{n} = n$ for each node $n \in N$. Thus, $g$
  is a well-defined canonical term graph. The canonical labelled
  quotient tree of $g$ is obviously $(P,l, \sim)$. Whenever there are
  two canonical term graphs with the same canonical labelled quotient
  tree $(P,l, \sim)$, they are isomorphic due to
  Lemma~\ref{lem:isomOcc} and, therefore, have to be identical by
  Proposition~\ref{prop:canon}.
\end{proof}

Labelled quotient trees provide a valuable tool for constructing
canonical term graphs as we shall see. Nevertheless, the original
graph representation remains convenient for practical purposes as it
allows a straightforward formalisation of term graph rewriting and
provides a finite representation of finite cyclic term graphs, which
induce an infinite labelled quotient tree.

\subsubsection{Terms, Term Trees \& Unravelling}
\label{sec:terms-term-trees}

Before we continue, it is instructive to make the correspondence
between terms and term graphs clear. First, note that, for each term
tree $t$, the equivalence $\sim_t$ is the identity relation
$\idrel{\pos{t}}$ on $\pos{t}$, i.e.\ $\pi_1 \sim_t \pi_2$ iff $\pi_1
= \pi_2$. Consequently, we have the following one-to-one
correspondence between canonical term \emph{trees} and terms: each
term $t \in \iterms$ induces the canonical term tree given by the
labelled quotient tree $(\pos{t},t(\cdot),\idrel{\pos{t}})$. For
example, the term tree depicted in Figure~\ref{fig:exTermTree}
corresponds to the term $f(a,h(a,b))$. We thus consider the set of
terms $\iterms$ as the subset of canonical term trees of $\ictgraphs$.

With this correspondence in mind, we can define the \emph{unravelling}
of a term graph $g$ as the unique term $t$ such that there is a
homomorphism $\phi\fcolon t \homto g$. The unravelling of cyclic term
graphs yields infinite terms, e.g.\ in Figure~\ref{fig:gredEx} on
page~\pageref{fig:gredEx}, the term $h_\omega$ is the unravelling of
the term graph $g_2$. We use the notation $\unrav{g}$ for the
unravelling of $g$.

\section{A Rigid Partial Order on Term Graphs}
\label{sec:partial-order-lebot1}

In this section, we shall establish a partial order suitable for
formalising convergence of sequences of canonical term graphs
similarly to $\prs$-convergence on terms.

Recall that $\prs$-convergence in term rewriting systems is based on a
partial order $\lebot$ on the set $\ipterms$ of \emph{partial}
terms. The partial order $\lebot$ instantiates occurrences of $\bot$
from left to right, i.e.\ $s \lebot t$ iff $t$ is obtained by
replacing occurrences of $\bot$ in $s$ by arbitrary terms in
$\ipterms$.

Since we are considering term graph rewriting as a generalisation of
term rewriting, our aim is to generalise the partial order $\lebot$ on
terms to term graphs. That is, the partial order we are looking for
should coincide with $\lebot$ if restricted to term trees. Moreover,
we also want to maintain the characteristic properties of the partial
order $\lebot$ when generalising to term graphs. The most important
characteristic we are striving for is a complete semilattice structure
in order to define $\prs$-convergence in terms of the limit
inferior. Apart from that, we also want to maintain the intuition of
the partial order $\lebot$, viz.\ the intuition of information
preservation, which $\lebot$ captures on terms as we illustrated in
Section~\ref{sec:preliminaries}. We will make this last guiding
principle clearer as we go along.

Analogously to partial terms, we consider the class of \emph{partial
  term graphs} simply as term graphs over the signature $\Sigma_\bot =
\Sigma \uplus \set{\bot}$. In order to generalise the partial order
$\lebot$ to term graphs, we need to formalise the instantiation of
occurrences of $\bot$ in term graphs. $\Delta$-homomorphisms, for
$\Delta = \set{\bot}$ -- or $\bot$-homomorphisms for short -- provide
the right starting point for that. A homomorphism $\phi\colon g \homto
h$ maps each node in $g$ to a node in $h$ while preserving the
\emph{local} structure of each node, viz.\ its labelling and its
successors. In the case of a $\bot$-homomorphisms $\phi\colon g
\homto_\bot h$, the preservation of the labelling is suspended for
nodes labelled $\bot$ thus allowing $\phi$ to instantiate each
$\bot$-node in $g$ with an arbitrary node in $h$.

Therefore, we shall use $\bot$-homomorphisms as the basis for
generalising $\lebot$ to canonical partial term graphs. This approach
is based on the observation that $\bot$-homomorphisms characterise the
partial order $\lebot$ on terms. Considering terms as canonical term
trees, we obtain the following equivalence:
\[
s \lebot t \iff \text{ there is a $\bot$-homomorphism } \phi\fcolon s
\homto_\bot t.
\]
Thus, $\bot$-homomorphisms constitute the ideal tool for defining a
partial order on canonical partial term graphs that generalises
$\lebot$. In the following subsection, we shall explore different
partial orders on canonical partial term graphs based on
$\bot$-homomorphisms.

\subsection{Partial Orders on Term Graphs}
\label{sec:partial-orders-term}

Consider the \emph{simple partial order} $\lebots$ defined on term
graphs as follows: $g \lebots h$ iff there is a $\bot$-homomorphism
$\phi\fcolon g \homto_\bot h$. This is a straightforward
generalisation of the partial order $\lebot$ to term graphs. In fact,
this partial order forms a complete semilattice on $\ipctgraphs$
\cite{bahr12rta}.

As we have explained in Section~\ref{sec:preliminaries},
$\prs$-convergence on terms is based on the ability of the partial
order $\lebot$ to capture \emph{information preservation} between
terms -- $s \lebot t$ means that $t$ contains at least the same
information as $s$ does. The limit inferior -- and thus
$\prs$-convergence -- comprises the accumulated information that
eventually remains stable. Following the approach on terms, a partial
order suitable as a basis for convergence for term graph rewriting,
has to capture an appropriate notion of information preservation as
well.

One has to keep in mind, however, that term graphs encode an
additional dimension of information through \emph{sharing} of nodes,
i.e.\ the fact that nodes may have multiple positions. Since $\lebots$
specialises to $\lebot$ on terms, it does preserve the information on
the tree structure in the same way as $\lebot$ does. The difficult
part is to determine the right approach to the role of sharing.

Indeed, $\bot$-homomorphisms instantiate occurrences of $\bot$ and are
thereby able to introduce new information. But while
$\bot$-homomorphisms preserve the \emph{local} structure of each node,
they may change the \emph{global} structure of a term graph by
introducing sharing: for the term graphs $g_0$ and $g_1$ in
Figure~\ref{fig:convWeird}, we have an obvious $\bot$-homomorphism --
in fact a homomorphism -- $\phi\fcolon g_0 \homto_\bot g_1$ and thus
$g_0 \lebots g_1$.

\begin{figure}
  \centering
  \begin{tikzpicture}[node distance=15mm]%
    \node (r1) {$f$} %
    child{ node (n1) {$c$} }%
    child{ node (n2) {$c$} };%
    \node[right=of r1] (r2) {$f$}%
    child {%
      node (n2) {$c$}%
      edge from parent[transparent] %
    };%
    \draw[->] (r2)%
    edge [bend right=25] (n2)%
    edge [bend left=25] (n2);%
    
    \draw[single step,shorten=5mm] (r1) -- (r2);%
    
    \node[right=of r2] (r3) {$f$} %
    child{ node (n1) {$c$} }%
    child{ node (n2) {$c$} };%
    
    \draw[single step,shorten=5mm] (r2) -- (r3);%
    
    \node[right=of r3] (r4) {$f$}%
    child {%
      node (n2) {$c$}%
      edge from parent[transparent] %
    };%
    \draw[->] (r4)%
    edge [bend right=25] (n2)%
    edge [bend left=25] (n2);%
    \draw[single step,shorten=5mm] (r3) -- (r4);%

    \node[node distance=25mm,right=of r4] (r5) {$f$} %
    child{ node (n1) {$c$} }%
    child{ node (n2) {$c$} };%
    
    \draw[dotted,thick,shorten=10mm] (r4) -- (r5);%
    \begin{scope}[node distance=1cm]
      \node[below=of r1] {$(g_0)$};
      \node[below=of r2] {$(g_1)$};
      \node[below=of r3] {$(g_2)$};
      \node[below=of r4] {$(g_4)$};
      \node[below=of r5] {$(g_\omega)$};
    \end{scope}
  \end{tikzpicture}
  \caption{Limit inferior w.r.t.\ $\lebots$ in the presence of acyclic
    sharing.}
  \label{fig:convWeird}
\end{figure}
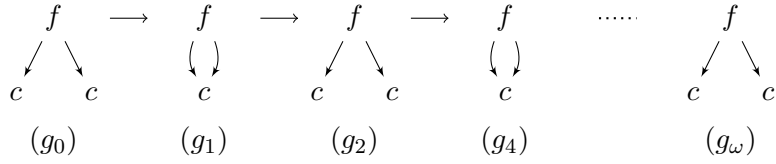

There are at least two different ways to interpret the differences in
$g_0$ and $g_1$. The first one dismisses $\lebots$ as a partial order
suitable for our purposes: the term graphs $g_0$ and $g_1$ contain
contradicting information. While in $g_0$ the two children of the
$f$-node are distinct, they are identical in $g_1$. We will indeed
follow this view in this paper and introduce a rigid partial order
$\lebotr$ that addresses this concern. There is, however, also a
second view that does not see $g_0$ and $g_1$ in contradiction: both
term graphs show the $f$-node with two successors, both of which are
labelled with $c$. The term graph $g_1$ merely contains the additional
piece of information that the two successor nodes of the $f$-node are
identical. The simple partial order $\lebots$, which follows this
view, is studied further in \cite{bahr12rta}.

One consequence of the above behaviour of $\lebots$ is that total term
graphs are not necessarily maximal w.r.t.\ $\lebots$, e.g.\ $g_0$ is
total but not maximal. The second -- more severe -- consequence is
that there can be no metric on total term graphs such that the limit
w.r.t.\ that metric coincides with the limit inferior on total term
graph. To see this consider the sequence $(g_i)_{i<\omega}$ of term
graphs illustrated in Figure~\ref{fig:convWeird}. Its limit inferior
w.r.t.\ $\lebots$ is the total term graph $g_\omega$. On the other
hand, there is no metric w.r.t.\ which $(g_i)_{i<\omega}$ converges
since the sequence alternates between two distinct term graphs. That
is, the correspondence between metric and partial order convergence
that we know from term rewriting, cf.\ Theorem~\ref{thr:strongExt}, is
impossible.

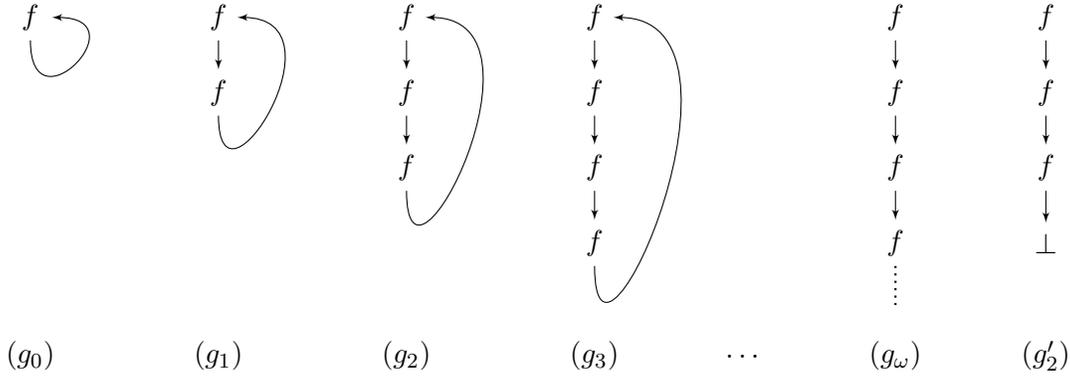
\begin{figure}
  \centering
  \begin{tikzpicture}[node distance=2.5cm]
    \node (r1) {$f$};%
    \draw (r1) edge[out=-90,in=0,loop,->](r1);%

    \node[right of=r1] (r2) {$f$}
    child {%
      node (b2) {$f$}%
    };%
    \draw (b2) edge[out=-90,in=0,->,min distance=15mm](r2);%

    \node[right of=r2] (r3) {$f$}
    child {%
      node {$f$}%
      child {%
        node (b3) {$f$}%
      }
    };%
    \draw (b3) edge[out=-90,in=0,->,min distance=18mm](r3);%

    \node[right of=r3] (r4) {$f$}
    child {%
      node {$f$}%
      child {%
        node {$f$}%
        child {%
          node (b4) {$f$}%
        }
      }
    };%
    \draw (b4) edge[out=-90,in=0,->,min distance=21mm](r4);%

    \node[node distance=4cm,right of=r4] (r5) {$f$}
    child {%
      node {$f$}%
      child {%
        node {$f$}%
        child {%
          node {$f$}%
          child[etc] {%
            node {}
          }
        }
      }
    };%

    \node[node distance=2cm,right of=r5] (r6) {$f$}
    child {%
      node {$f$}%
      child {%
        node {$f$}%
        child {%
          node {$\bot$}%
        }
      }
    };%

    \foreach \i/\l in {1/g_0,2/g_1,3/g_2,4/g_3,5/g_\omega,6/g'_2} {%
      \node[node distance=4.5cm,below of=r\i] (g\i) {$(\l)$};%
      }%
      \node[node distance=2cm,right of=g4] {$\dots$};
  \end{tikzpicture}
  \caption{Varying acyclic sharing.}
  \label{fig:acySharing}
\end{figure}

In order to avoid the introduction of sharing, we need to consider
$\bot$-homomorphisms that preserve the structure of term graphs more
rigidly, i.e.\ not only locally. Recall that by
Lemma~\ref{lem:occrep}, the structure of a term graph is essentially
given by the positions of nodes and their labelling. Labellings are
already taken into consideration by $\bot$-homomorphisms. Thus, we can
define a partial order $\lebotp$ that preserves the structure of term
graphs as follows: $g \lebotp h$ iff there is a $\bot$-homomorphism
$\phi\fcolon g \homto_\bot h$ with $\nodePos{h}{\phi(n)} =
\nodePos{g}{n}$ for all $n\in N^g$ with $\glab^g(n) \neq \bot$. While
this would again yield a complete semilattice, it is unfortunately too
restrictive. For example, consider the sequence of term graphs
$(g_i)_{i<\omega}$ depicted in Figure~\ref{fig:acySharing}. Due to the
cycle, we have for each term graph $g_i$ that $\bot$ is the only term
graph strictly smaller than $g_i$ w.r.t.\ $\lebotp$. The reason for
this is the fact that the only way to maintain the positions of the
root node of the term graph $g_i$ is to keep all nodes of the cycle in
$g_i$. Hence, in order to obtain a term graph $h$ with $h\lebotp g_i$,
we have to either keep the whole term graph $g_i$ or collapse it
completely, yielding $\bot$. For example, we neither have $g'_2
\lebotp g_2$ nor $g'_2 \lebotp g_3$ for the term graph $g'_2$
illustrated in Figure~\ref{fig:acySharing}. As a consequence, the
limit inferior of the sequence $(g_i)_{i<\omega}$ is $\bot$ and not
the expected term graph $g_\omega$.

The fact that the root nodes $g_2$ and $g'_2$ have different sets of
positions is solely caused by the edge to the root node of $g_2$ that
comes from \emph{below} and thus closes a \emph{cycle}. Even though
the edge occurs below the root node, it affects its positions. Cutting
off that edge, like in $g'_2$, changes the sharing. As a consequence,
in the complete semilattice $(\ipctgraphs,\lebotp)$, we do not obtain
the intuitively expected convergence behaviour depicted in
Figure~\ref{fig:gtransRed} on page~\pageref{fig:gtransRed}.

This observation suggests that we should only consider the
\emph{upward structure} of each node, ignoring the sharing that is
caused by edges occurring \emph{below} a node. We will see that by
restricting our attention to \emph{acyclic positions}, we indeed
obtain the desired properties for a partial order on term graphs.

Recall that a position $\pi$ in a term graph $g$ is called cyclic iff
there are positions $\pi_1, \pi_2$ with $\pi_1 < \pi_2 \le \pi$ such
that $\nodeAtPos{g}{\pi_1} = \nodeAtPos{g}{\pi_2}$, i.e.\ $\pi$ passes
a node twice. Otherwise it is called acyclic. We will use the notation
$\posAcy{g}$ for the set of all acyclic positions in $g$, and
$\nodePosAcy{g}{n}$ for the set of all acyclic positions of a node $n$
in $g$. That is, $\posAcy{g}$ is the set of positions in $g$ that pass
each node in $g$ at most once. Clearly, every node has at least one
acyclic position, i.e.\ $\nodePosAcy{g}{n}$ is a non-empty set.

\begin{definition}[rigidity]
  \label{def:strongD-hom}
  Let $\Sigma$ be a signature, $\Delta \subseteq \Sigma^{(0)}$ and
  $g,h \in \itgraphs$ such that $\phi\fcolon g \homto_\Delta h$.
  \begin{enumerate}[(i)]
  \item Given $n \in N^g$, $\phi$ is said to be \emph{rigid} in $n$ if
    it satisfies the equation
    \[
    \nodePosAcy{g}{n} = \nodePosAcy{h}{\phi(n)}
    \tag{rigid}
    \]
  \item $\phi$ is called a \emph{rigid} $\Delta$-homomorphism if it is
    rigid in all $n \in N^g$ with $\glab^g(n) \nin \Delta$.
  \end{enumerate}
\end{definition}
\begin{proposition}[category of rigid $\Delta$-homomorphisms]
  \label{prop:catStrongD-hom}
  The rigid $\Delta$-homomorphisms on $\itgraphs$ form a subcategory
  of the category of $\Delta$-homomorphisms on $\itgraphs$.
\end{proposition}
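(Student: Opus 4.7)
The plan is to verify the two closure conditions needed to make the rigid $\Delta$-homomorphisms a subcategory of the category of $\Delta$-homomorphisms: (1) every identity morphism is rigid, and (2) the composition of two rigid $\Delta$-homomorphisms is again a rigid $\Delta$-homomorphism. Since Proposition~\ref{prop:catgraph} already establishes the category structure of $\Delta$-homomorphisms, no associativity or well-definedness of composition has to be re-checked.

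For the identity $\id\fcolon g \homto_\Delta g$, the rigidity condition $\nodePosAcy{g}{n} = \nodePosAcy{g}{\id(n)}$ holds trivially for every $n \in N^g$, so $\id$ is rigid (regardless of the label of $n$).

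For composition, I would take rigid $\Delta$-homomorphisms $\phi\fcolon g \homto_\Delta h$ and $\psi\fcolon h \homto_\Delta k$ and show that $\psi\circ\phi$ is rigid in every $n\in N^g$ with $\glab^g(n) \nin \Delta$. The key observation is a labelling transfer: because $\glab^g(n)\nin\Delta$, the $\Delta$-homomorphism $\phi$ is homomorphic in $n$, hence $\glab^h(\phi(n)) = \glab^g(n) \nin \Delta$. This makes $\psi$ rigid in $\phi(n)$. Chaining the two rigidity equations then gives
\[
\nodePosAcy{g}{n} = \nodePosAcy{h}{\phi(n)} = \nodePosAcy{k}{\psi(\phi(n))},
\]
which is precisely rigidity of $\psi\circ\phi$ in $n$.

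There is no real obstacle here; the only point that requires care is the labelling transfer that lets us apply the rigidity of $\psi$ at the node $\phi(n)$, which hinges on the fact that $\Delta$-homomorphisms preserve labels \emph{outside} $\Delta$. Once that is spelled out, the two equations simply compose.
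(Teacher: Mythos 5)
Your proof is correct and is exactly the argument the paper leaves implicit behind its one-word proof (``Straightforward''): identities are trivially rigid, and for composition the labelling condition of $\Delta$-homomorphisms transfers $\glab^g(n)\nin\Delta$ to $\glab^h(\phi(n))\nin\Delta$, so the two rigidity equations chain. Nothing further is needed.
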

\begin{proof}
  Straightforward.
\end{proof}
Note that, for each node $n$ in a term graph $g$, the positions in
$\nodePosAcy{g}{n}$ are minimal positions of $n$ w.r.t.\ the prefix
order. Rigid $\bot$-homomorphisms thus preserve the upward structure
of each non-$\bot$-node and, therefore, provide the desired structure
for a partial order that captures information preservation on term
graphs:
\begin{definition}[rigid partial order $\lebotr$] For every $g,h \in
  \iptgraphs$, define $g \lebotr h$ iff there is a rigid
  $\bot$-homomorphism $\phi\fcolon g \homto_\bot h$.
\end{definition}

\begin{proposition}[partial order $\lebotr$]
  \label{prop:tgraphPo}
  The relation $\lebotr$ is a partial order on $\ipctgraphs$.
\end{proposition}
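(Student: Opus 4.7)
The plan is to verify reflexivity, transitivity, and antisymmetry separately; the first two are immediate from what we already have, so the real work sits in antisymmetry.

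For \emph{reflexivity}, the identity function $\id_{N^g}$ is trivially a $\bot$-homomorphism from $g$ to $g$, and it is rigid since $\nodePosAcy{g}{n} = \nodePosAcy{g}{\id(n)}$ for every $n \in N^g$. For \emph{transitivity}, if $g \lebotr h$ and $h \lebotr k$ are witnessed by rigid $\bot$-homomorphisms $\phi$ and $\psi$ respectively, then Proposition~\ref{prop:catStrongD-hom} gives that $\psi \circ \phi$ is again a rigid $\bot$-homomorphism, so $g \lebotr k$.

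For \emph{antisymmetry}, assume $g \lebotr h$ and $h \lebotr g$ with $g,h \in \ipctgraphs$, witnessed by rigid $\bot$-homomorphisms $\phi\fcolon g \homto_\bot h$ and $\psi\fcolon h \homto_\bot g$. By Proposition~\ref{prop:catStrongD-hom}, both $\psi \circ \phi$ and $\phi \circ \psi$ are rigid $\bot$-homomorphisms (from $g$ to $g$ and from $h$ to $h$, respectively); but so are the identity maps $\id_{N^g}$ and $\id_{N^h}$. Since by Proposition~\ref{prop:catgraph} there is at most one $\bot$-homomorphism between any two term graphs, we conclude $\psi \circ \phi = \id_{N^g}$ and $\phi \circ \psi = \id_{N^h}$, so $\phi$ is a bijection.

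The last step I expect to be the main delicate point: upgrading $\phi$ from a $\bot$-homomorphism to an honest homomorphism (and hence, by Lemma~\ref{lem:isomBij}, to an isomorphism). Fix $n \in N^g$; I must show $\glab^g(n) = \glab^h(\phi(n))$ even when $\glab^g(n) = \bot$. If $\glab^g(n) \neq \bot$ the labelling condition of $\phi$ gives the equality directly. If $\glab^g(n) = \bot$, suppose for contradiction that $\glab^h(\phi(n)) \neq \bot$; then the labelling condition of $\psi$ at $\phi(n)$ yields $\glab^g(\psi(\phi(n))) = \glab^h(\phi(n)) \neq \bot$, but $\psi(\phi(n)) = n$ has label $\bot$, a contradiction. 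Hence $\phi$ preserves all labels; the successor condition at $\bot$-nodes is vacuous (they are nullary), so $\phi$ is a genuine homomorphism. By Lemma~\ref{lem:isomBij} it is an isomorphism, so $g \isom h$, and then Proposition~\ref{prop:canon} (canonicity) yields $g = h$.
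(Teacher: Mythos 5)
Your proof is correct and follows essentially the same route as the paper: reflexivity and transitivity from the category of rigid $\bot$-homomorphisms, and antisymmetry by using the preorder property (Proposition~\ref{prop:catgraph}) to obtain mutually inverse $\bot$-homomorphisms, then canonicity (Proposition~\ref{prop:canon}). The only difference is that your final step --- showing a $\bot$-isomorphism must preserve labels at $\bot$-nodes and is hence an isomorphism via Lemma~\ref{lem:isomBij} --- is exactly the content of Corollary~\ref{cor:sig-isom-isom}, which the paper cites instead of re-deriving.
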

\begin{proof}
  Reflexivity and transitivity of $\lebotr$ follow immediately from
  Proposition~\ref{prop:catStrongD-hom}. For antisymmetry, assume $g
  \lebotr h$ and $h \lebotr g$. By Proposition~\ref{prop:catgraph},
  this implies $g \isom_\bot h$. Corollary~\ref{cor:sig-isom-isom} then
  yields that $g \isom h$. Hence, according to
  Proposition~\ref{prop:canon}, $g = h$.
\end{proof}

\begin{example}
  \label{ex:liminf'}
  Figure~\ref{fig:gtransRed} on page \pageref{fig:gtransRed} shows a
  sequence $(h_\iota)_{\iota<\omega}$ of term graphs and its limit
  inferior $h_\omega$ in $(\ipctgraphs,\lebotr)$: a cyclic list
  structure is repeatedly rewritten by inserting an element $b$ in
  front of the $a$. We can see that in each step the newly inserted
  $b$ (including the additional $\cons$-node) remains unchanged
  afterwards. In terms of positions, however, each of the nodes
  changes in each step since the length of the cycle in the term graph
  grows with each step. Since this affects only cyclic positions, we
  still get the following sequence $(\Glb_{\beta\le\iota<\omega}
  h_\iota)_{\beta<\omega}$ of canonical term trees:
  \[
  \seq{\bot\cons\bot,b\cons\bot\cons\bot,b\cons b\cons
    \bot\cons\bot,\dots}
  \]
  The least upper bound of this sequence $(\Glb_{\beta\le\iota<\omega}
  h_\iota)_{\beta<\omega}$ and thus the limit inferior of
  $(h_\iota)_{\iota<\omega}$ is the infinite canonical term tree
  $h_\omega = b\cons b\cons b\cons \dots$. Since the cycle changes in
  each step and is thus cut through in each element of
  $(\Glb_{\beta\le\iota<\omega} h_\iota)_{\beta<\omega}$, the limit
  inferior has no cycles at all.

  Note that we do not have this intuitively expected convergence
  behaviour for the partial order $\lebotp$ based on positions: since
  the length of the cycle grows along the sequence
  $(h_\iota)_{\iota<\omega}$, we have that the set of positions of the
  root nodes changes constantly. Hence, the limit inferior of
  $(h_\iota)_{\iota<\omega}$ in $(\ipctgraphs,\lebotp)$ is $\bot$.
\end{example}

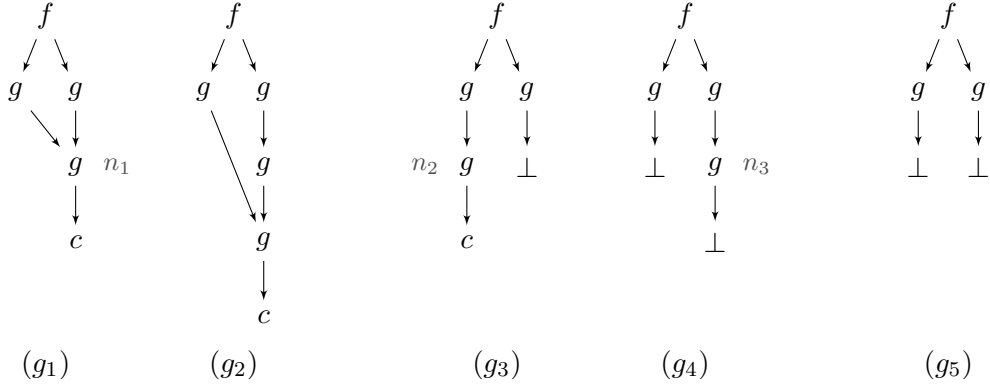
\begin{figure}
  \centering
  \begin{tikzpicture}[->,node distance=2cm,sibling distance=8mm]
    \node (r1) {$f$}
    child{
      node (n1) {$g$}
    }
    child {
      node {$g$}
      child {
        node[node name=0:n_1] (n2) {$g$}
        child { node {$c$}}
      }
    };
    \draw (n1) edge (n2);
    
    \node[right=of r1] (r2) {$f$}
    child{
      node (n1) {$g$}
    }
    child {
      node {$g$}
      child {
        node {$g$}
        child {
          node (n2) {$g$}
          child { node {$c$}}
        }
      }
    };
    \draw (n1) edge (n2);

    \node[node distance=3cm, right=of r2] (r5) {$f$}
    child{
      node {$g$}
      child {
        node [node name=180:n_2] {$g$}
        child {
          node {$c$}
        }
      }
    } child {
      node {$g$}
      child {
        node {$\bot$}
      }
    };
    
    \node[right=of r5] (r6) {$f$}
    child{
      node {$g$}
      child {
        node {$\bot$}
      }
    } child {
      node {$g$}
      child { 
        node [node name=0:n_3] {$g$}
        child {
          node {$\bot$}
        }
      }
    };

    \node[node distance=3cm,right=of r6] (r7) {$f$}%
    child{%
      node {$g$}%
      child {%
        node {$\bot$}%
      }%
    } child {%
      node {$g$}%
      child {%
        node {$\bot$}%
      }%
    };%

    \begin{scope}[node distance=4cm]
      \node[below=of r1] {$(g_1)$};
      \node[below=of r2] {$(g_2)$};
      \node[below=of r5] {$(g_3)$};
      \node[below=of r6] {$(g_4)$};
      \node[below=of r7] {$(g_5)$};
    \end{scope}
  \end{tikzpicture}
  \caption{Term graphs $g_1,g_2$ with maximal lower bounds $g_3,g_4$
    w.r.t.\ $\leboti$.}
  \label{fig:lebotCounter}
\end{figure}

The partial order $\lebotr$ based on rigid $\bot$-homomorphisms is
defined in a rather non-local fashion as the definition of rigidity
uses the set of \emph{all} acyclic positions. This poses the question
whether there is a more natural definition of a suitable partial
order. One such candidate is the partial order $\leboti$, which uses
injectivity in order to restrict the introduction of sharing: $g
\leboti h$ iff there is a $\bot$-homomorphism $\phi\fcolon
g\homto_\bot h$ that is injective on non-$\bot$-nodes, i.e.\ $\phi(n)
= \phi(m)$ and $\glab^g(n),\glab^g(m) \neq \bot$ implies $n =
m$. While this yields indeed a cpo on $\ipctgraphs$, we do not get a
complete semilattice. To see this, consider
Figure~\ref{fig:lebotCounter}. The two term graphs $g_3,g_4$ are two
distinct maximal lower bounds of the two term graphs $g_1,g_2$ w.r.t.\
the partial order $\leboti$. Hence, the set $\set{g_1,g_2}$ does not
have a greatest lower bound in $(\ipctgraphs,\leboti)$, which is
therefore not a complete semilattice. The same phenomenon occurs if we
consider a partial order derived from $\bot$-homomorphisms that are
injective on all nodes.

The rigid partial order $\lebotr$ resolves the issue of $\leboti$
illustrated in Figure~\ref{fig:lebotCounter}: $g_3$ and $g_4$ are not
lower bounds of $g_1$ and $g_2$ w.r.t.\ $\lebotr$. The (unique)
$\bot$-homomorphism from $g_3$ to $g_1$ is not rigid as it maps the
node $n_2$ to $n_1$ and $\nodePosAcy{g_3}{n_2} = \set{\seq{0,0}}$
whereas $\nodePosAcy{g_1}{n_1} = \set{\seq{0,0},\seq{1,0}}$. Hence,
$g_3 \not\lebotr g_1$. Likewise, $g_4 \not\lebotr g_1$ as the (unique)
$\bot$-homomorphism from $g_4$ to $g_1$ maps $n_3$ to $n_1$, which
again have different acyclic positions. We do find, however, a
greatest lower bound of $g_1$ and $g_2$ w.r.t.\ $\lebotr$, viz.\
$g_5$.

\subsection{The Rigid Partial Order}
\label{sec:rigid-partial-order}
In the remainder of this section, we will study the rigid partial
order $\lebotr$. In particular, we shall give a characterisation of
rigidity in terms of labelled quotient trees analogous to
Lemma~\ref{lem:occrephom}, show that $(\ipctgraphs,\lebotr)$ forms a
complete semilattice, illustrate the resulting mode of convergence,
and give a characterisation of term graphs that are maximal w.r.t.\
$\lebotr$.

The partial order $\leboti$, derived from injective
$\bot$-homomorphisms, failed to form a complete semilattice, which is
why we abandoned that approach. The following lemma shows that
rigidity is, in fact, a stronger property than injectivity on
non-$\Delta$-nodes. Hence, $\lebotr$ is a restriction of $\leboti$.
\begin{lemma}[rigid $\Delta$-homomorphisms are injective for
  non-$\Delta$-nodes]
  \label{lem:strongD-homInj}
  Let $g, h \in \itgraphs$ and $\phi\fcolon g \homto_\Delta h$ rigid. Then
  $\phi$ is injective for all non-$\Delta$-nodes in $g$. That is, for
  two nodes $n, m \in N^g$ with $\glab^g(n), \glab^g(m) \nin \Delta$
  we have that $\phi(n) = \phi(m)$ implies $n = m$.
\end{lemma}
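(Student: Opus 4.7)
The plan is to exploit the defining equation of rigidity, $\nodePosAcy{g}{n} = \nodePosAcy{h}{\phi(n)}$, together with the basic fact that distinct nodes in a term graph have disjoint sets of positions (since the map $\pi \mapsto \nodeAtPos{g}{\pi}$ is by definition a function).

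Concretely, suppose $n,m \in N^g$ with $\glab^g(n),\glab^g(m) \nin \Delta$ and $\phi(n) = \phi(m)$. Since $\phi$ is rigid, it is rigid in both $n$ and $m$, so
\[
\nodePosAcy{g}{n} \;=\; \nodePosAcy{h}{\phi(n)} \;=\; \nodePosAcy{h}{\phi(m)} \;=\; \nodePosAcy{g}{m}.
\]
I would then pick any acyclic position $\pi \in \nodePosAcy{g}{n}$, which exists because every node in a term graph has at least one acyclic position (noted just before Definition~\ref{def:strongD-hom}). By the equality above, $\pi \in \nodePosAcy{g}{m}$ as well, so $\pi \in \nodePos{g}{n} \cap \nodePos{g}{m}$. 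But for any position $\pi \in \pos{g}$ there is a \emph{unique} node $\nodeAtPos{g}{\pi}$ of which $\pi$ is a position (Definition~\ref{def:tgraphOcc}(iii)), hence $n = m$.

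The argument is essentially one line once rigidity is unfolded; the only ingredient beyond the definition is non-emptiness of $\nodePosAcy{g}{n}$, so there is no real obstacle. I do not expect this step to require any induction on depth or any appeal to the root condition, because rigidity already packages the necessary global information about positions.
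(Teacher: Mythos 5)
Your proposal is correct and essentially coincides with the paper's proof: both unfold rigidity at $n$ and $m$, use that $\phi(n)=\phi(m)$ forces a common (acyclic, hence existing) position in $\nodePos{g}{n}\cap\nodePos{g}{m}$, and conclude $n=m$ since a position determines a unique node. The only cosmetic difference is that you take the witnessing position from $\nodePosAcy{g}{n}$ whereas the paper takes it from the equal set $\nodePosAcy{h}{\phi(n)}$.
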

\begin{proof}
  Let $n, m \in N^g$ with $\glab^g(n), \glab^g(m) \nin \Delta$ and
  $\phi(n) = \phi(m)$. Since $\phi$ is rigid, it is rigid in $n$ and
  $m$. That is, in particular we have $\nodePosAcy{h}{\phi(n)}
  \subseteq \nodePos{g}{n}$ and $\nodePosAcy{h}{\phi(m)} \subseteq
  \nodePos{g}{m}$. Moreover, because $\nodePosAcy{h}{\phi(n)} =
  \nodePosAcy{h}{\phi(m)} \neq \emptyset$, we can conclude that
  $\nodePos{g}{n} \cap \nodePos{g}{m} \neq \emptyset$ and, therefore,
  $m = n$.
\end{proof}

\subsubsection{Characterising Rigidity}
\label{sec:char-rigid}

The goal of this subsection is to give a characterisation of rigidity
in terms of labelled quotient trees. We will then combine this
characterisation with Lemma~\ref{lem:occrephom} to obtain a
characterisation of the partial order $\lebotr$.

The following lemma provides a characterisation of rigid
$\Delta$-homomorphisms that reduces the proof obligations necessary to
show that a $\Delta$-homomorphism is rigid.
\begin{lemma}[rigidity]
  \label{lem:presShar}
  Let $g,h \in \itgraphs$, $\phi\fcolon g \homto_\Delta h$.  Then $\phi$ is
  rigid iff $\nodePosAcy{h}{\phi(n)} \subseteq \nodePos{g}{n}$ for all $n
  \in N^g$ with $\glab^g(n) \nin \Delta$.
\end{lemma}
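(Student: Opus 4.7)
The forward direction is immediate from the definitions: rigidity of $\phi$ at a non-$\Delta$-node $n$ asserts the equality $\nodePosAcy{g}{n} = \nodePosAcy{h}{\phi(n)}$, and since every acyclic position is a position, this gives $\nodePosAcy{h}{\phi(n)} = \nodePosAcy{g}{n} \subseteq \nodePos{g}{n}$.

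For the reverse direction, my plan is first to extract from the hypothesis an auxiliary injectivity statement in the spirit of Lemma~\ref{lem:strongD-homInj}: if $n,m \in N^g$ are both non-$\Delta$-labelled with $\phi(n) = \phi(m)$, pick any $\pi \in \nodePosAcy{h}{\phi(n)}$ (which is non-empty, since every node admits at least one acyclic position); the hypothesis then places $\pi$ simultaneously in $\nodePos{g}{n}$ and $\nodePos{g}{m}$, forcing $n = m$. With this in hand, I fix a non-$\Delta$-node $n \in N^g$ and establish the two inclusions that together amount to rigidity at $n$.

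The inclusion $\nodePosAcy{h}{\phi(n)} \subseteq \nodePosAcy{g}{n}$ is the easy one. Given $\pi \in \nodePosAcy{h}{\phi(n)}$, the hypothesis places $\pi$ in $\nodePos{g}{n}$; if $\pi$ were cyclic in $g$, witnessed by $\pi_1 < \pi_2 \le \pi$ with $\nodeAtPos{g}{\pi_1} = \nodeAtPos{g}{\pi_2}$, then Lemma~\ref{lem:canhom} would give $\nodeAtPos{h}{\pi_i} = \phi(\nodeAtPos{g}{\pi_i})$ for $i = 1,2$ and hence $\nodeAtPos{h}{\pi_1} = \nodeAtPos{h}{\pi_2}$, contradicting acyclicity of $\pi$ in $h$.

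The genuinely delicate direction, and the main obstacle, is $\nodePosAcy{g}{n} \subseteq \nodePosAcy{h}{\phi(n)}$. Starting from $\pi \in \nodePosAcy{g}{n}$, Lemma~\ref{lem:canhom} again supplies $\pi \in \nodePos{h}{\phi(n)}$, so only the acyclicity of $\pi$ in $h$ is at stake. Arguing by contradiction, suppose there are $\pi_1 < \pi_2 \le \pi$ with $\nodeAtPos{h}{\pi_1} = \nodeAtPos{h}{\pi_2}$, and set $m_i = \nodeAtPos{g}{\pi_i}$, so that $\phi(m_1) = \phi(m_2)$ while acyclicity of $\pi$ in $g$ forces $m_1 \neq m_2$. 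The crux is to rule out $\phi$ collapsing these nodes. I would argue that neither $m_i$ can be $\Delta$-labelled: since $\Delta \subseteq \Sigma^{(0)}$, a $\Delta$-label at $m_1$ makes it nullary, so it admits no proper extension of the position $\pi_1$, contradicting $\pi_1 < \pi_2$; a $\Delta$-label at $m_2$ likewise forces $\pi_2 = \pi$ and hence $m_2 = n$, contradicting $\glab^g(n) \nin \Delta$. Both $m_1, m_2$ being non-$\Delta$, the injectivity established above yields $m_1 = m_2$, the desired contradiction, and the proof is complete.
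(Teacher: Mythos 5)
Your proof is correct and follows essentially the same route as the paper: the easy inclusion via Lemma~\ref{lem:canhom}, and the delicate inclusion by contradiction using the nullary-symbol argument to exclude $\Delta$-labels at $m_1,m_2$ and then injectivity on non-$\Delta$-nodes. The only (cosmetic) difference is that you re-derive that injectivity directly from the hypothesis, whereas the paper cites Lemma~\ref{lem:strongD-homInj}; since that lemma's proof only uses the very inclusion assumed here, your explicit re-derivation is if anything slightly more careful.
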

\begin{proof}
  The ``only if'' direction is trivial. For the ``if'' direction,
  suppose that $\phi$ satisfies $\nodePosAcy{h}{\phi(n)} \subseteq
  \nodePos{g}{n}$ for all $n \in N^g$ with $\glab^g(n) \nin
  \Delta$. In order to prove that $\phi$ is rigid, we will show that
  $\nodePosAcy{h}{\phi(n)} = \nodePosAcy{g}{n}$ holds for each $n \in
  N^g$ with $\glab^g(n) \nin \Delta$.

  We first show the inclusion $\nodePosAcy{h}{\phi(n)} \subseteq
  \nodePosAcy{g}{n}$. For this purpose, let $\pi \in
  \nodePosAcy{h}{\phi(n)}$. Due to the hypothesis, this implies that
  $\pi \in \nodePos{g}{n}$. Now suppose that $\pi$ is cyclic in $g$,
  i.e.\ there are two positions $\pi_1, \pi_2$ of a node $m \in N^g$
  with $\pi_1 < \pi_2 \le \pi$. By Lemma~\ref{lem:canhom}, we can
  conclude that $\pi_1, \pi_2 \in \nodePos{h}{\phi(m)}$. This is a
  contradiction to the assumption that $\pi$ is acyclic in $h$. Hence,
  $\pi \in \nodePosAcy{g}{n}$.

  For the other inclusion, assume some $\pi \in
  \nodePosAcy{g}{n}$. Using Lemma~\ref{lem:canhom} we obtain that $\pi
  \in \nodePos{h}{\phi(n)}$. It remains to be shown that $\pi$ is
  acyclic in $h$. Suppose that this is not true, i.e.\ there are two
  positions $\pi_1, \pi_2$ of a node $m \in N^h$ with $\pi_1 < \pi_2
  \le \pi$. Note that since $\pi \in \pos{g}$, also $\pi_1, \pi_2 \in
  \pos{g}$. Let $m_i = \nodeAtPos{g}{\pi_i}$, $i=1,2$. According to
  Lemma~\ref{lem:canhom}, we have that $\phi(m_1) = m =
  \phi(m_2)$. Moreover, observe that $g(\pi_1), g(\pi_2) \nin \Delta$:
  $g(\pi_1)$ cannot be a nullary symbol because $\pi_1 < \pi \in
  \pos{g}$. The same argument applies for the case that $\pi_2 <
  \pi$. If this is not the case, then $\pi_2 = \pi$ and $g(\pi) \nin
  \Delta$ follows from the assumption that $\glab^g(n) \nin
  \Delta$. Thus, we can apply Lemma~\ref{lem:strongD-homInj} to
  conclude that $m_1 = m_2$. Consequently, $\pi$ is cyclic in $g$, which
  contradicts the assumption. Hence, $\pi \in \nodePosAcy{h}{\phi(n)}$.
\end{proof}

From the above lemma we learn that $\Delta$-isomorphisms are also
rigid $\Delta$-homomorphisms.
\begin{corollary}[$\Delta$-isomorphisms are rigid]
  \label{cor:isomStrong}
  Let $g, h \in \itgraphs$. If $\phi\fcolon g \isoto_\Delta h$, then $\phi$
  is a rigid $\Delta$-homomorphism.
\end{corollary}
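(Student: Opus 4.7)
The plan is to reduce to the characterisation of rigidity given in Lemma~\ref{lem:presShar}: it suffices to show that $\nodePosAcy{h}{\phi(n)} \subseteq \nodePos{g}{n}$ holds for every $n \in N^g$ with $\glab^g(n) \nin \Delta$. In fact, I expect to prove the stronger equality $\nodePos{h}{\phi(n)} = \nodePos{g}{n}$ for \emph{all} $n\in N^g$, from which the required acyclic inclusion is immediate.

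First I would unpack what it means to be a $\Delta$-isomorphism: from the discussion following Proposition~\ref{prop:catgraph}, $\phi\fcolon g \isoto_\Delta h$ means that $\phi$ and its inverse $\phi^{-1}\fcolon h \homto_\Delta g$ are both $\Delta$-homomorphisms. Next, I would apply Lemma~\ref{lem:canhom}(a) to each of the two $\Delta$-homomorphisms: applied to $\phi$ it yields $\nodePos{g}{n} \subseteq \nodePos{h}{\phi(n)}$ for every $n \in N^g$, and applied to $\phi^{-1}$ (with $m = \phi(n)$) it yields $\nodePos{h}{\phi(n)} \subseteq \nodePos{g}{\phi^{-1}(\phi(n))} = \nodePos{g}{n}$. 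Combining these two inclusions gives $\nodePos{h}{\phi(n)} = \nodePos{g}{n}$, hence in particular $\nodePosAcy{h}{\phi(n)} \subseteq \nodePos{h}{\phi(n)} = \nodePos{g}{n}$.

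Finally, I would invoke Lemma~\ref{lem:presShar} on the $\Delta$-homomorphism $\phi$ to conclude that $\phi$ is rigid. There is no real obstacle here — the only subtlety is to be careful that Lemma~\ref{lem:canhom}(a) is applied to the inverse $\Delta$-homomorphism $\phi^{-1}$ in the correct direction, which is what converts the one-sided inclusion coming from $\phi$ alone into an equality of position sets. Everything else is bookkeeping.
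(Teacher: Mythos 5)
Your proposal is correct and matches the paper's proof in essence: the paper simply cites Lemma~\ref{lem:isomCan} (whose clause on position sets is exactly the equality $\nodePos{h}{\phi(n)} = \nodePos{g}{n}$ that you re-derive by applying Lemma~\ref{lem:canhom}(a) to both $\phi$ and $\phi^{-1}$) and then concludes rigidity via Lemma~\ref{lem:presShar}, just as you do. The only difference is that you inline the short argument behind Lemma~\ref{lem:isomCan} rather than quoting it, which changes nothing of substance.
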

\begin{proof}
  This follows from Lemma~\ref{lem:isomCan} and
  Lemma~\ref{lem:presShar}.
\end{proof}

Similarly to Lemma~\ref{lem:occrephom}, we provide a characterisation
of rigid $\Delta$-homomorphisms in terms of labelled quotient trees:
\begin{lemma}[characterisation of rigid $\Delta$-homomorphisms]
  \label{lem:chaStrongD-hom}
  Given $g, h \in \itgraphs$, a $\Delta$-homomorphism $\phi\fcolon g
  \homto_\Delta h$ is rigid iff
  \[
  \pi \sim_h \pi' \quad \implies \quad  \pi \sim_g \pi' \quad \text{ for all
  } \pi \in \pos{g} \text{ with } g(\pi) \nin \Delta \text{ and } \pi' \in \posAcy{h}.
  \]
\end{lemma}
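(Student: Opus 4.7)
The plan is to reduce the statement to the existing characterisations in Lemma~\ref{lem:presShar} and Lemma~\ref{lem:canhom}, and then to translate between the node-based language and the position-based language. Concretely, Lemma~\ref{lem:presShar} reduces rigidity of $\phi$ to the inclusion $\nodePosAcy{h}{\phi(n)} \subseteq \nodePos{g}{n}$ for every non-$\Delta$-node $n \in N^g$. Using Lemma~\ref{lem:canhom}, this inclusion can be rephrased in terms of the aliasing relations $\sim_g$ and $\sim_h$, which is exactly what the statement asks.

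For the forward direction, I would assume $\phi$ is rigid and take $\pi \in \pos{g}$ with $g(\pi) \nin \Delta$ and $\pi' \in \posAcy{h}$ such that $\pi \sim_h \pi'$. Setting $n = \nodeAtPos{g}{\pi}$, I have $\glab^g(n) \nin \Delta$; Lemma~\ref{lem:canhom} gives $\pi \in \nodePos{h}{\phi(n)}$, and then $\pi \sim_h \pi'$ yields $\pi' \in \nodePos{h}{\phi(n)}$. Since $\pi' \in \posAcy{h}$, in fact $\pi' \in \nodePosAcy{h}{\phi(n)}$, so rigidity via Lemma~\ref{lem:presShar} gives $\pi' \in \nodePos{g}{n}$. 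Together with $\pi \in \nodePos{g}{n}$, this means $\pi \sim_g \pi'$.

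For the backward direction, assume the aliasing implication and fix $n \in N^g$ with $\glab^g(n) \nin \Delta$ and $\pi' \in \nodePosAcy{h}{\phi(n)}$; by Lemma~\ref{lem:presShar} it suffices to show $\pi' \in \nodePos{g}{n}$. Pick any $\pi \in \nodePos{g}{n}$, which exists and satisfies $g(\pi) = \glab^g(n) \nin \Delta$. Lemma~\ref{lem:canhom} yields $\pi \in \nodePos{h}{\phi(n)}$, so together with $\pi' \in \nodePos{h}{\phi(n)}$ we have $\pi \sim_h \pi'$. The hypothesis (applied to $\pi$ and $\pi' \in \posAcy{h}$) then produces $\pi \sim_g \pi'$, forcing $\pi' \in \pos{g}$ with $\nodeAtPos{g}{\pi'} = n$, i.e.\ $\pi' \in \nodePos{g}{n}$, as required.

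The proof is essentially bookkeeping; the only subtlety is to notice that in the backward direction the hypothesis is vacuously restrictive when $\pi' \notin \pos{g}$ (the implication to $\pi \sim_g \pi'$ implicitly demands $\pi' \in \pos{g}$), so one must be careful that our conclusion $\pi' \in \nodePos{g}{n}$ actually uses this implicit content. I do not expect any serious obstacle beyond keeping these position/node translations straight.
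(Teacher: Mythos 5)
Your proposal is correct and follows essentially the same route as the paper's own proof: both directions translate between nodes and positions via Lemma~\ref{lem:canhom}, and the converse direction reduces rigidity to the inclusion $\nodePosAcy{h}{\phi(n)} \subseteq \nodePos{g}{n}$ using Lemma~\ref{lem:presShar}, exactly as in the paper. Your closing remark about $\pi \sim_g \pi'$ implicitly forcing $\pi' \in \pos{g}$ is a sound reading of the aliasing relation and is the same implicit step the paper takes.
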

\begin{proof}
  For the ``only if'' direction, assume that $\phi$ is
  rigid. Moreover, let $\pi \in \pos{g}$ with $g(\pi) \nin \Delta$ and
  $\pi' \in \posAcy{h}$ such that $\pi \sim_h \pi'$, and let $n =
  \nodeAtPos{g}{\pi}$. By applying Lemma~\ref{lem:canhom}, we get that
  $\pi \in \nodePos{h}{\phi(n)}$. Because of $\pi \sim_h \pi'$, also
  $\pi' \in \nodePos{h}{\phi(n)}$. Since, according to the assumption,
  $\pi'$ is acyclic in $h$, we know in particular that $\pi' \in
  \nodePosAcy{h}{\phi(n)}$. Since $\phi$ is rigid and $\glab^g(n) \nin
  \Delta$, we know that $\phi$ is rigid in $n$ which yields that $\pi'
  \in \nodePos{g}{n}$. Hence, $\pi \sim_g \pi'$.

  For the converse direction, let $n \in N^g$ with $\glab^g(n) \nin
  \Delta$. We need to show that $\phi$ is rigid in $n$. Due to
  Lemma~\ref{lem:presShar}, it suffices to show that
  $\nodePosAcy{h}{\phi(n)} \subseteq \nodePos{g}{n}$. Since
  $\nodePos{g}{n} \neq \emptyset$, we can choose some $\pi^*\in
  \nodePos{g}{n}$. Then, according to Lemma~\ref{lem:canhom}, also
  $\pi^*\in\nodePos{h}{\phi(n)}$. Let $\pi \in
  \nodePosAcy{h}{\phi(n)}$. Then $\pi^* \sim_h \pi$ holds. Since $\pi$
  is acyclic in $h$ and $g(\pi^*)\nin \Delta$, we can use the
  hypothesis to obtain that $\pi^* \sim_g \pi$ holds which shows that
  $\pi \in \nodePos{g}{n}$.
\end{proof}

Note that the above characterisation of rigidity is independent of the
$\Delta$-homomorphism at hand. This is expected since
$\Delta$-homomorphisms between a given pair of term graphs are unique.

By combining the above characterisation of rigidity with the
corresponding characterisation of $\Delta$-homomorphisms, we obtain
the following compact characterisation of $\lebotr$:
\begin{corollary}[characterisation of $\lebotr$]
  \label{cor:chaTgraphPo}
  Let $g,h \in \iptgraphs$. Then $g \lebotr h$ iff the following
  conditions are met:
  \begin{enumerate}[\em(a)]
  \item $\pi \sim_g \pi' \quad \implies \quad  \pi \sim_h \pi'$ \quad for all
    $\pi,\pi' \in \pos{g}$
    \label{item:chaTGraphPo1}
  \item $\pi \sim_h \pi' \quad \implies \quad \pi \sim_g \pi'$ \quad
    for all $\pi \in \pos{g}$ with $g(\pi) \in \Sigma$ and $\pi' \in
    \posAcy{h}$
    \label{item:chaTGraphPo2}
 \item $g(\pi) = h(\pi)$ \quad for all $\pi\in \pos{g}$ with $g(\pi) \in
   \Sigma$.
    \label{item:chaTGraphPo3}
  \end{enumerate}
\end{corollary}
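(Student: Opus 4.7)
The plan is to observe that the statement is essentially a direct packaging of two previously established characterisations, with the specialisation $\Delta = \{\bot\}$ (so ``not in $\Delta$'' becomes ``in $\Sigma$''). Unfolding the definition, $g \lebotr h$ means that there exists a rigid $\bot$-homomorphism $\phi\fcolon g \homto_\bot h$; by Proposition~\ref{prop:catgraph} such a $\phi$, if it exists, is unique, so the conditions (a)--(c) need only characterise its existence.

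For the ``only if'' direction I would take an arbitrary rigid $\bot$-homomorphism $\phi\fcolon g \homto_\bot h$ and apply Lemma~\ref{lem:occrephom} to $\phi$ viewed as a $\bot$-homomorphism: clause (\ref{item:occrephom1}) of that lemma gives condition (a) here, while clause (\ref{item:occrephom2}) with $\Delta=\{\bot\}$ gives precisely condition (c). Condition (b) then follows from Lemma~\ref{lem:chaStrongD-hom} applied to $\phi$, again with $\Delta=\{\bot\}$.

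For the ``if'' direction I assume (a), (b) and (c) and invoke the same two lemmas in reverse. Conditions (a) and (c) are exactly the hypotheses of Lemma~\ref{lem:occrephom} for $\Delta=\{\bot\}$, yielding a $\bot$-homomorphism $\phi\fcolon g \homto_\bot h$. Condition (b) is then exactly the hypothesis of Lemma~\ref{lem:chaStrongD-hom}, which promotes $\phi$ to a rigid $\bot$-homomorphism. Hence $g \lebotr h$ by definition.

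Since both underlying lemmas have already been proved, there is no real obstacle here; the entire argument is a bookkeeping exercise of specialising $\Delta$ to $\{\bot\}$ and noting that $\glab^g(n) \nin \{\bot\}$ translates to $g(\pi) \in \Sigma$ at the level of labelled quotient trees. Accordingly the proof can be written in a single sentence, as in the paper's style, namely as an ``immediate consequence of Lemma~\ref{lem:occrephom} and Lemma~\ref{lem:chaStrongD-hom}''.
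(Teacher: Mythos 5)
Your proposal is correct and matches the paper's argument exactly: the paper also proves this corollary as an immediate consequence of Lemma~\ref{lem:occrephom} and Lemma~\ref{lem:chaStrongD-hom}, specialised to $\Delta=\set{\bot}$, with uniqueness of $\Delta$-homomorphisms (Proposition~\ref{prop:catgraph}) implicitly guaranteeing the two lemmas speak about the same map. Your unfolding of both directions is exactly the bookkeeping the paper leaves to the reader.
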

\begin{proof}
  This follows immediately from Lemma~\ref{lem:occrephom} and
  Lemma~\ref{lem:chaStrongD-hom}.
\end{proof}%
\def\lebota{(\ref{item:chaTGraphPo1})}%
\def\lebotb{(\ref{item:chaTGraphPo2})}%
\def\lebotc{(\ref{item:chaTGraphPo3})}%
Note that for term trees (\ref{item:chaTGraphPo2}) is always true and
(\ref{item:chaTGraphPo1}) follows from
(\ref{item:chaTGraphPo3}). Hence, on term trees, $\lebotr$ is
characterised by (\ref{item:chaTGraphPo3}) alone. This observation
shows that $\lebotr$ is indeed a generalisation of $\lebot$.
\begin{corollary}
  \label{cor:lebotrGeneralise}
  For all $s,t \in \ipterms$, we have that $s \lebotr t$ iff $s \lebot
  t$.
\end{corollary}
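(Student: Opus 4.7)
My plan is to reduce the statement to Corollary~\ref{cor:chaTgraphPo} applied to the canonical term trees representing $s$ and $t$. Terms are identified with canonical term trees in $\ictgraphs$, i.e.\ canonical term graphs whose aliasing equivalence $\sim$ is the identity on positions. Under this identification the three clauses of the characterisation of $\lebotr$ collapse to clause \lebotc{} alone, which is exactly the classical positional description of $\lebot$ on $\ipterms$.

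Concretely, for $s,t \in \ipterms$, clause \lebotb{} is automatic: if $\pi \sim_t \pi'$ with $\pi' \in \posAcy{t}$, then the fact that $t$ is a tree forces $\pi = \pi'$ and hence $\pi \sim_s \pi'$. Clause \lebota{} amounts to the inclusion $\pos{s} \subseteq \pos{t}$, which I would derive from \lebotc{} by a short induction on the length of $\pi \in \pos{s}$: the base case $\pi = \emptyseq$ is immediate, and for a non-empty $\pi = \pi' \concat \seq{i}$ the parent position $\pi'$ must satisfy $s(\pi') \in \Sigma$ because $\bot$ is nullary, so \lebotc{} applied at $\pi'$ yields $s(\pi') = t(\pi')$ and therefore $\pi \in \pos{t}$. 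Clause \lebotc{} itself is exactly the standard characterisation of $s \lebot t$: that $s$ is obtained from $t$ by replacing some subterm occurrences by $\bot$ iff at every position $\pi$ of $s$ carrying a non-$\bot$ symbol the same symbol appears in $t$. Combining these observations with Corollary~\ref{cor:chaTgraphPo} yields both directions of the equivalence.

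The only non-trivial point is the verification that clause \lebotc{} forces $\pos{s} \subseteq \pos{t}$ on term trees and therefore implies clause \lebota{}; once this is settled, everything else is mere unfolding of definitions together with the uniqueness of canonical representatives provided by Proposition~\ref{prop:canon}.
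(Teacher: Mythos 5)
Your proposal is correct and follows essentially the same route as the paper: the paper derives the corollary directly from Corollary~\ref{cor:chaTgraphPo}, after observing that on term trees clause~(\ref{item:chaTGraphPo2}) holds vacuously and clause~(\ref{item:chaTGraphPo1}) follows from clause~(\ref{item:chaTGraphPo3}), so that $\lebotr$ reduces to the positional characterisation of $\lebot$. You merely spell out the details (the induction showing $\pos{s} \subseteq \pos{t}$ and the identification of terms with canonical term trees) that the paper leaves implicit.
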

\begin{proof}
  Follows from Corollary~\ref{cor:chaTgraphPo}.
\end{proof}

\subsubsection{Convergence}
\label{sec:convergence}

In the following, we shall show that $\lebotr$ indeed forms a complete
semilattice on $\ipctgraphs$. We begin by showing that it constitutes
a complete partial order.
\begin{theorem}[$\lebotr$ is a cpo]
  \label{thm:graphCpo}
  The pair $(\ipctgraphs,\lebotr)$ forms a cpo. In particular, it has
  the least element $\bot$, and the least upper bound of a directed
  set $G$ is given by the following labelled quotient tree
  $(P,l,\sim)$:
  \begin{gather*}
    P = \bigcup\limits_{g\in G} \pos{g} \hspace{30pt}%
    \sim\ = \bigcup\limits_{g\in G} \sim_g \hspace{30pt}%
    l(\pi) =
    \begin{cases}
      f & \text{ if } f \in \Sigma \text{ and } \exists g \in G. \;
      g(\pi) = f \\
      \bot &\text{ otherwise }
    \end{cases}
  \end{gather*}
\end{theorem}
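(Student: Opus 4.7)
The plan is first to observe that $\bot$ (the canonical term graph with a single $\bot$-node) is the least element: taking $g=\bot$ in Corollary~\ref{cor:chaTgraphPo}, the premises of clauses \lebota{}--\lebotc{} are all vacuous, so $\bot\lebotr h$ for every $h\in\ipctgraphs$. The main task is then, for a directed $G\subseteq\ipctgraphs$, to verify that the triple $(P,l,\sim)$ given in the statement is indeed a labelled quotient tree in the sense of Definition~\ref{def:occRep}, so that Lemma~\ref{lem:occrep} produces a canonical term graph $\hat g\in\ipctgraphs$, and then to show that $\hat g$ is the least upper bound of $G$ using Corollary~\ref{cor:chaTgraphPo}.

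Validity of $(P,l,\sim)$ is a routine but somewhat tedious application of directedness. Reachability and the rank condition are immediate: each $\pos{g}$ is closed under prefixes, and if $\pi\concat\seq{i}\in P$ is witnessed by $g\in G$, then $g(\pi)$ cannot be nullary, so $l(\pi)=g(\pi)$ has arity greater than $i$. Well-definedness of $l$, transitivity of $\sim$, and the two congruence conditions for $\sim$ all follow the same schema: every premise mentions only finitely many positions and at most two members of $G$, so directedness provides a single $g''\in G$ above both; the desired conclusion is then obtained from the appropriate clause of Corollary~\ref{cor:chaTgraphPo} applied to $g\lebotr g''$, combined with the fact that $\sim_{g''}$ is itself an equivalence satisfying the required congruence. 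For example, if $g_1(\pi),g_2(\pi)\in\Sigma$, choosing $g''$ above $g_1$ and $g_2$ and applying clause \lebotc{} twice yields $g_1(\pi)=g''(\pi)=g_2(\pi)$, so $l(\pi)$ is unambiguously defined.

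To show $g\lebotr\hat g$ for $g\in G$, clauses \lebota{} and \lebotc{} are immediate from the definitions of $\sim$ and $l$. The subtle point is clause \lebotb{}: given $\pi\in\pos{g}$ with $g(\pi)\in\Sigma$, $\pi'\in\posAcy{\hat g}$ and $\pi\sim\pi'$, pick $g_0\in G$ witnessing $\pi\sim_{g_0}\pi'$ and $g''\in G$ above $g,g_0$, so that $\pi\sim_{g''}\pi'$. The crucial observation is that $\pi'\in\posAcy{g''}$: if $\pi_1<\pi_2\le\pi'$ witnessed cyclicity of $\pi'$ in $g''$, then $\pi_1\sim_{g''}\pi_2$ would lift via $\sim_{g''}\subseteq\,\sim$ to $\pi_1\sim\pi_2$, making $\pi'$ cyclic in $\hat g$, a contradiction. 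Clause \lebotb{} of $g\lebotr g''$ then yields $\pi\sim_g\pi'$, as required.

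For the least-upper-bound property, let $h$ be an arbitrary upper bound of $G$. All three clauses for $\hat g\lebotr h$ are discharged in the same fashion: use directedness to find a single $g''\in G$ that simultaneously witnesses the relevant premise (for clauses \lebotb{} and \lebotc{} one typically chooses $g''$ above an element of $G$ realising $\pi$ in its position set and one realising a non-$\bot$ label at $\pi$), and then combine $\pi\in\pos{g''}$ with $g''\lebotr h$ via the corresponding clause of Corollary~\ref{cor:chaTgraphPo}; for instance, if $l(\pi)\in\Sigma$ then $l(\pi)=g''(\pi)=h(\pi)$ by \lebotc{}. The main obstacle throughout is clause \lebotb{} in the upper-bound step, which is precisely where rigidity enters non-trivially; it is resolved by the inheritance of acyclicity from $\hat g$ to every sufficiently large $g''\in G$, exploited as above.
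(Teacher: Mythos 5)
Your proposal is correct and follows essentially the same route as the paper's proof: verify that the stated triple is a labelled quotient tree via directedness and Corollary~\ref{cor:chaTgraphPo}, then establish the upper-bound and leastness properties clause by clause, with the key point being exactly the one you isolate — that acyclicity of a position in the constructed term graph is inherited by every sufficiently large member of $G$, which unlocks clause \lebotb{} when showing $g \lebotr \hat g$. The remaining details you gloss (well-definedness of $l$, the equivalence and congruence conditions, and the leastness step) are handled in the paper in precisely the schematic way you describe.
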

\begin{proof}
  The least element of $\lebotr$ is obviously $\bot$. Hence, it
  remains to be shown that each directed subset $G$ of $\ipctgraphs$
  has a least upper bound w.r.t.\ $\lebotr$. To this end, we show that
  the canonical term graph $\ol g$ given by the labelled quotient tree
  $(P,l,\sim)$ described above is indeed the lub of $G$. We will make
  extensive use of Corollary~\ref{cor:chaTgraphPo} to do
  so. Therefore, we write \lebota, \lebotb, \lebotc\ to refer to
  corresponding conditions of Corollary~\ref{cor:chaTgraphPo}.

  At first we need to show that $l$ is indeed well-defined. For this
  purpose, let $g_1,g_2 \in G$ and $\pi\in \pos{g_1}\cap\pos{g_2}$
  with $g_1(\pi), g_2(\pi) \in\Sigma$. Since $G$ is directed, there is
  some $g \in G$ such that $g_1, g_2 \lebotr g$. By \lebotc{}, we can
  conclude $g_1(\pi) = g(\pi) = g_2(\pi)$.
  
  Next we show that $(P,l,\sim)$ is indeed a labelled quotient
  tree. Recall that $\sim$ needs to be an equivalence relation. For
  the reflexivity, assume that $\pi \in P$. Then there is some $g \in
  G$ with $\pi \in \pos{g}$. Since $\sim_g$ is an equivalence
  relation, $\pi \sim_g \pi$ must hold and, therefore, $\pi \sim
  \pi$. For the symmetry, assume that $\pi_1 \sim \pi_2$. Then there
  is some $g\in G$ such that $\pi_1 \sim_g \pi_2$. Hence, we get
  $\pi_2 \sim_g \pi_1$ and, consequently, $\pi_2 \sim \pi_1$. In order
  to show transitivity, assume that $\pi_1 \sim \pi_2, \pi_2 \sim
  \pi_3$. That is, there are $g_1, g_2\in G$ with $\pi_1 \sim_{g_1}
  \pi_2$ and $\pi_2 \sim_{g_2} \pi_3$. Since $G$ is directed, we find
  some $g\in G$ such that $g_1,g_2\lebotr g$. By \lebota{}, this
  implies that also $\pi_1 \sim_{g} \pi_2$ and $\pi_2 \sim_{g}
  \pi_3$. Hence, $\pi_1 \sim_{g} \pi_3$ and, therefore, $\pi_1 \sim
  \pi_3$.

  For the reachability condition, let $\pi\concat \seq i \in P$. That is,
  there is a $g \in G$ with $\pi \concat \seq i \in \pos{g}$. Hence, $\pi
  \in \pos{g}$, which in turn implies $\pi \in P$. Moreover, $\pi
  \concat \seq i \in \pos{g}$ implies that $i < \srank{g(\pi)}$. Since
  $g(\pi)$ cannot be a nullary symbol and in particular not $\bot$, we
  obtain that $l(\pi) = g(\pi)$. Hence, $i < \srank{l(\pi)}$.

  For the congruence condition, assume that $\pi_1 \sim \pi_2$ and that
  $l(\pi_1) = f$. If $f \in \Sigma$, then there are $g_1,g_2 \in G$
  with $\pi_1 \sim_{g_1} \pi_2$ and $g_2(\pi_1) = f$. Since $G$ is
  directed, there is some $g\in G$ such that $g_1, g_2 \lebotr
  g$. Hence, by \lebota{} respectively \lebotc{}, we have $\pi_1 \sim_{g}
  \pi_2$ and $g(\pi_1) = f$. Using Lemma~\ref{lem:occrep} we can
  conclude that $g(\pi_2) = g(\pi_1) = f$ and that $\pi_1 \concat i
  \sim_g \pi_2 \concat i$ for all $i < \srank{g(\pi_1)}$. Because $g \in
  G$, it holds that $l(\pi_2) = f$ and that $\pi_1 \concat i \sim \pi
  \concat i$ for all $i < \srank{l(\pi_1)}$. If $f = \bot$, then also
  $l(\pi_2) = \bot$, for if $l(\pi_2) = f'$ for some $f' \in\Sigma$,
  then, by the symmetry of $\sim$ and the above argument (for the case
  $f\in\Sigma$), we would obtain $f = f'$ and, therefore, a
  contradiction. Since $\bot$ is a nullary symbol, the remainder of
  the condition is vacuously satisfied.

  This shows that $(P,l,\sim)$ is a labelled quotient tree which, by
  Lemma~\ref{lem:occrep}, uniquely defines a canonical term
  graph. Next we show that the thus obtained term graph $\overline{g}$
  is an upper bound for $G$. To this end, let $g\in G$. We will show
  that $g \lebotr \overline{g}$ by establishing \lebota,\lebotb\ and
  \lebotc. \lebota\ and \lebotc\ are an immediate consequence of the
  construction. For \lebotb{}, assume that $\pi_1 \in \pos{g}$,
  $g(\pi_1) \in \Sigma$, $\pi_2 \in \posAcy{\ol g}$ and $\pi_1 \sim
  \pi_2$. We will show that then also $\pi_1 \sim_g \pi_2$
  holds. Since $\pi_1 \sim \pi_2$, there is some $g' \in G$ with
  $\pi_1 \sim_{g'} \pi_2$. Because $G$ is directed, there is some $g^*
  \in G$ with $g,g' \lebotr g^*$. Using \lebota{}, we then get that
  $\pi_1 \sim_{g^*} \pi_2$. Note that since $\pi_2$ is acyclic in $\ol
  g$, it is also acyclic in $g^*$: Suppose that this is not the case,
  i.e.\ there are positions $\pi_3, \pi_4$ with $\pi_3 < \pi_4 \le
  \pi_2$ and $\pi_3 \sim_{g^*} \pi_4$. But then we also have $\pi_3
  \sim \pi_4$, which contradicts the assumption that $\pi_2$ is acyclic
  in $\ol g$. With this knowledge we are able to apply \lebotb\ to
  $\pi_1 \sim_{g^*} \pi_2$ in order to obtain $\pi_1 \sim_g \pi_2$.

  In the final part of this proof, we will show that $\ol g$ is the
  least upper bound of $G$. For this purpose, let $\oh{g}$ be an
  upper bound of $G$, i.e.\ $g \lebotr \oh{g}$ for all $g\in G$. We
  will show that $\overline{g} \lebotr \oh{g}$ by establishing
  \lebota, \lebotb\ and \lebotc. For \lebota{}, assume that $\pi_1
  \sim \pi_2$. Hence, there is some $g \in G$ with $\pi_1 \sim_g
  \pi_2$. Since, by assumption, $g \lebotr \oh g$, we can conclude
  $\pi_1 \sim_{\oh g} \pi_2$ using \lebota. For \lebotb{}, assume
  $\pi_1 \in P$, $l(\pi_1) \in \Sigma$, $\pi_2 \in \posAcy{\oh g}$ and
  $\pi_1 \sim_{\oh g} \pi_2$. That is, there is some $g \in G$ with
  $g(\pi_1) \in \Sigma$. Together with $g \lebotr \oh g$ this implies
  $\pi_1 \sim_g \pi_2$ by \lebotb. $\pi_1 \sim \pi_2$ follows
  immediately. For \lebotc{}, assume $\pi \in P$ and $l(\pi) = f
  \in\Sigma$. Then there is some $g\in G$ with $g(\pi) = f$. Applying
  \lebotc\ then yields $\oh g(\pi) = f$ since $g \lebotr \oh g$.
\end{proof}

\begin{rem}
  \label{rem:lebot}
  Following Remark~\ref{rem:catTgraphs}, we define an order $\lebotr$
  on $\quotient{\iptgraphs}{\isom}$ which is isomorphic to the order
  $\lebotr$ on $\ipctgraphs$. Define $\eqc{g}\isom \lebotr
  \eqc{h}\isom$ iff there is a rigid $\bot$-homomorphism $\phi\fcolon
  g \homto_\bot h$.

  The extension of $\lebotr$ to equivalence classes is easily seen to
  be well-defined: assume some rigid $\bot$-homomorphism $\phi\fcolon
  g \homto_\bot h$ and two isomorphisms $g' \isom g$ and $h' \isom
  h$. Since, by Corollary~\ref{cor:isomStrong}, isomorphisms are also
  rigid ($\bot$-)homomorphisms, we have two rigid $\bot$-homomorphisms
  $\phi_1\fcolon g' \homto_\bot g$ and $\phi_2\fcolon h \homto_\bot
  h'$. Hence, by Proposition~\ref{prop:catStrongD-hom}, $\phi_2 \circ
  \phi \circ \phi_1$ is a rigid $\bot$-homomorphism from $g'$ to $h'$.

  The isomorphism illustrated above allows us switch between the two
  partially ordered sets $(\ipctgraphs,\lebotr)$ and
  $(\quotient{\iptgraphs}{\isom},\lebotr)$ in order to use the
  structure that is more convenient for the given setting. In
  particular, the proof of Lemma~\ref{lem:graphCompLub} below is based
  on this isomorphism.
\end{rem}

By Proposition~\ref{prop:bcpoCompSemi}, a cpo is a complete
semilattice iff each two compatible elements have a least upper
bound. Recall that compatible elements in a partially ordered set are
elements that have a common upper bound. We make use of this
proposition in order to show that $(\ipctgraphs,\lebotr)$ is a
complete semilattice. However, showing that each two term graphs $g,h
\in \ipctgraphs$ with a common upper bound also have a least upper
bound is not easy. The issue that makes the construction of the lub of
compatible term graphs a bit more complicated than in the case of
directed sets is illustrated in Figure~\ref{fig:lubCompGraph}. Note
that the lub $g \lub h$ of the term graphs $g$ and $h$ has an
additional cycle. The fact that in $g \lub h$ the second successor of
$r$ has to be $r$ itself is enforced by $g$ saying that the first
successor of $r_1$ is $r_1$ itself and by $h$ saying that the first
and the second successor of $r_2$ must be identical. Because of the
additional cycle in $g \lub h$, we have that the set of positions in
$g \lub h$ is a proper superset of the union of the sets of positions
in $g$ and $h$. This makes the construction of $g \lub h$ using a
labelled quotient tree quite intricate.

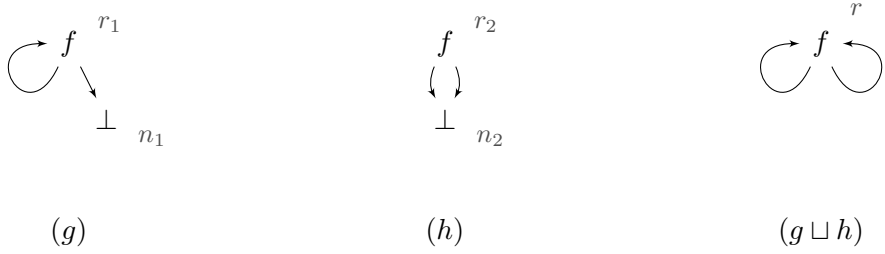
\begin{figure}
  \centering
  \begin{tikzpicture}
    \node (r1) [node name=5:r_1] at (0,0) {$f$} 
    edge [out=245, in=180,loop] ()
    child [missing]
    child{ node (n1) [node name=-5:n_1] {$\bot$} };
    \node at (0,-2.5) {($g$)};

    \node (r2) [node name=5:r_2]  at (5,0) {$f$}
    child {
      node (n2) [node name=-5:n_2] {$\bot$}
      edge from parent[transparent] 
    };
    \draw[->] (r2)
    edge [bend right=25] (n2)
    edge [bend left=25] (n2);
    \node at (5,-2.5) {($h$)};

    \node (r) [node name=45:r]  at (10,0) {$f$}
    edge [out=245, in=180,loop] ()
    edge [out=295, in=0,loop] ();
    \node at (10,-2.5) {($g\lub h$)};
  \end{tikzpicture}
  \caption{Least upper bound $g\lub h$ of compatible term graphs
    $g$ and $h$.}
\label{fig:lubCompGraph}
\end{figure}

Our strategy to construct the lub is to form the disjoint union of the
two term graphs in question and then identify nodes that have a common
position w.r.t.\ the term graph they originate from. In our example,
we have four nodes $r_1, n_1, r_2$ and $n_2$. At first $r_1$ and $r_2$
have to be identified as both have the position $\emptyseq$. Next,
$r_1$ and $n_2$ are identified as they share the position $\seq
0$. And eventually, also $n_2$ and $n_1$ are identified since they
share the position $\seq 1$. Hence, all four nodes have to be
identified. The result is, therefore, a term graph with a single node
$r$. The following lemma and its proof, given in
Appendix~\ref{sec:proof-lemma-refl-1}, show that, for any two
compatible term graphs, this construction always yields their lub.
\begin{lemma}[compatible elements have lub]
  \label{lem:graphCompLub}
  Each pair $g_1,g_2$ of compatible term graphs in
  $(\ipctgraphs,\lebotr)$ has a least upper bound.
\end{lemma}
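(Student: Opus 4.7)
The plan is to realise the candidate lub as the quotient of the disjoint union $N := N^{g_1} \uplus N^{g_2}$ by the least equivalence relation $\equiv$ satisfying two closure conditions: (i) whenever $n \in N^{g_1}$ and $m \in N^{g_2}$ share a position in their respective term graphs, i.e.\ $\nodePos{g_1}{n} \cap \nodePos{g_2}{m} \neq \emptyset$, then $n \equiv m$; and (ii) whenever $n \equiv n'$ and both $n, n'$ carry a symbol from $\Sigma$, then $\gsuc_i(n) \equiv \gsuc_i(n')$ for each admissible $i$. The candidate $g := g_1 \lub g_2$ then has $N^g := N/{\equiv}$, root $[r^{g_1}]_\equiv$ (which equals $[r^{g_2}]_\equiv$ since both realise $\emptyseq$), each class $[n]_\equiv$ labelled by the unique $\Sigma$-symbol it contains if any and $\bot$ otherwise, with successors inherited from any $\Sigma$-labelled representative.

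Well-definedness is the first hurdle, and this is precisely where compatibility enters. Fixing a common upper bound $\hat{g}$ with rigid $\bot$-homomorphisms $\phi_i\fcolon g_i \homto_\bot \hat{g}$, I would prove by induction on the build-up of $\equiv$ that $n \equiv m$ forces the images under the combined map $\phi\fcolon N \to N^{\hat g}$ to coincide. Two things follow: any two $\Sigma$-labelled nodes in a class bear the same symbol (and hence the same arity) by the labelling condition of $\lebotr$, so the candidate labelling and successor function are unambiguous; and $\phi$ factors through $\equiv$, yielding a well-defined function $\psi\fcolon N^g \to N^{\hat g}$ that will drive the minimality step.

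Next I would verify that $g$ is an upper bound by showing that the canonical injections $\iota_i\fcolon n \mapsto [n]_\equiv$ are rigid $\bot$-homomorphisms. The $\bot$-homomorphism conditions follow directly from the construction of $\equiv$: the root condition is automatic, the successor condition is enabled by clause (ii), and the label of $[n]_\equiv$ agrees with $\glab^{g_i}(n)$ whenever the latter lies in $\Sigma$. Rigidity of $\iota_i$ at a non-$\bot$ node $n$ I would establish via Lemma~\ref{lem:presShar} by showing that every acyclic position of $[n]_\equiv$ in $g$ is already a position of $n$ in $g_i$. Minimality of $g$ then falls out of $\psi$: it is already a function, and the same analysis used for $\iota_i$ shows it to be a rigid $\bot$-homomorphism from $g$ to $\hat g$, i.e.\ $g \lebotr \hat g$.

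The hard part will be the rigidity of $\iota_i$. As Figure~\ref{fig:lubCompGraph} illustrates, clause (ii) can create entirely new positions for a class $[n]_\equiv$ that do not come from any single $g_i$. To conclude that every such new position is necessarily cyclic (and therefore does not affect $\nodePosAcy{g}{[n]_\equiv}$), I would argue by induction on the number of applications of clauses (i) and (ii) needed to generate the witnessing derivation, showing at each stage that any newly created position traversing $[n]_\equiv$ passes through this class twice. At the induction step the argument will rely on condition~(\ref{item:chaTGraphPo2}) of Corollary~\ref{cor:chaTgraphPo} applied to the two rigid $\bot$-homomorphisms into $\hat g$, which constrains the sharing that may occur above a $\Sigma$-labelled node in either $g_i$ and hence bounds how far rule (ii) can propagate.
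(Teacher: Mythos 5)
Your construction is essentially the paper's own: the lub is built as the quotient of $N^{g_1}\uplus N^{g_2}$ by the equivalence generated by position-sharing (the paper's relation $\sim^+$ needs no successor-propagation clause, since under compatibility it is already closed under your clause (ii) — the rigid $\bot$-homomorphisms into the common upper bound force $\Sigma$-labelled equivalent nodes to share a position directly, whence their successors do too), the injections $g_i \homto_\bot g_1\lub g_2$ are shown rigid via exactly the claim you isolate (every acyclic position of a $\Sigma$-labelled class is an acyclic position of its members, which the paper proves by induction on the length of the position rather than on a derivation of the equivalence), and minimality comes from the induced map into the upper bound, just as you propose. So this is the same route; the only differences are cosmetic (building clause (ii) into the definition, and the choice of induction parameter for the rigidity claim).
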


\begin{theorem}
  \label{thm:graphBcpo}
  The pair $(\ipctgraphs,\lebotr)$ forms a complete semilattice.
\end{theorem}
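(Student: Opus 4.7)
The plan is to assemble this directly from the three results already established earlier in the section. The theorem asserts that $(\ipctgraphs, \lebotr)$ is a complete semilattice, and the characterisation in Proposition~\ref{prop:bcpoCompSemi} reduces this to verifying that we have a cpo in which every pair of compatible elements admits a least upper bound.

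First, I would invoke Theorem~\ref{thm:graphCpo}, which already supplies the cpo structure: the least element is $\bot$, and the least upper bound of any directed set is given explicitly as a labelled quotient tree built from the union of the positions, the union of the aliasing relations, and the symbol-preferring labelling. Hence the first defining property of a complete semilattice is in hand at no extra cost.

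Second, I would invoke Lemma~\ref{lem:graphCompLub}, which provides exactly the missing ingredient, namely that any two compatible elements $g_1, g_2 \in \ipctgraphs$ have a least upper bound in $(\ipctgraphs, \lebotr)$. Combining these two facts with the equivalence of clauses (i) and (iii) in Proposition~\ref{prop:bcpoCompSemi} yields the conclusion immediately.

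Since the substantive work has already been carried out, the proof itself reduces to a single sentence citation chain; I anticipate no obstacles at this final step. The genuine difficulty was already absorbed into Lemma~\ref{lem:graphCompLub}, whose proof must handle the subtlety illustrated in Figure~\ref{fig:lubCompGraph}, where the lub of two compatible term graphs can have strictly more positions than either summand because identifications forced on successor nodes can propagate and create new cycles. That construction via disjoint union followed by a position-driven quotient is what makes the present theorem non-trivial, but at the level of this statement it is entirely encapsulated by the lemma.
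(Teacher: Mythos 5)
Your proposal is correct and matches the paper's proof exactly: the paper likewise obtains the result by combining Theorem~\ref{thm:graphCpo} and Lemma~\ref{lem:graphCompLub} via Proposition~\ref{prop:bcpoCompSemi}. Nothing is missing; as you note, all substantive work lives in the cited lemma.
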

\begin{proof}
  This is, by Proposition~\ref{prop:bcpoCompSemi}, a consequence of
  Theorem~\ref{thm:graphCpo} and Lemma~\ref{lem:graphCompLub}.
\end{proof}

In particular, this means that the limit inferior is defined for every
sequence of term graphs.
\begin{corollary}[limit inferior of $\lebotr$]
  \label{cor:lebotrLiminf}
  Each sequence in $(\ipctgraphs,\lebotr)$ has a limit inferior.
\end{corollary}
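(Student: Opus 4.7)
The plan is to derive this corollary directly from Theorem~\ref{thm:graphBcpo}, which has just established that $(\ipctgraphs,\lebotr)$ is a complete semilattice. Recall from the preliminaries (Section~\ref{sec:partial-orders}) that the definition of a complete semilattice was chosen precisely so that the limit inferior $\liminf_{\iota \limto \alpha} a_\iota = \Lub_{\beta<\alpha}\left(\Glb_{\beta\le\iota<\alpha} a_\iota\right)$ of any sequence exists. So the body of the proof is essentially a one-line invocation, and the only question is whether to say anything at all beyond ``Immediate from Theorem~\ref{thm:graphBcpo}.''

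If I want to make the corollary self-contained, I would expand the one-liner by spelling out the two existence claims that are hidden behind the $\liminf$ expression. First, for each $\beta < \alpha$, the set $\setcom{g_\iota}{\beta \le \iota < \alpha}$ is non-empty, so by the complete semilattice property its greatest lower bound $h_\beta = \Glb_{\beta \le \iota < \alpha} g_\iota$ exists in $\ipctgraphs$. Second, the family $(h_\beta)_{\beta<\alpha}$ is monotonically increasing with respect to $\lebotr$ (since enlarging $\beta$ shrinks the set over which the glb is taken) and is therefore directed; its least upper bound $\Lub_{\beta<\alpha} h_\beta$ exists by the cpo part of Theorem~\ref{thm:graphCpo}. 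These two steps together yield the limit inferior as required.

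There is no real obstacle here; the corollary is a bookkeeping statement whose entire mathematical content has already been done in Theorem~\ref{thm:graphBcpo} (via Theorem~\ref{thm:graphCpo} and Lemma~\ref{lem:graphCompLub}) and in the abstract characterisation of complete semilattices recorded in Proposition~\ref{prop:bcpoCompSemi}. I would therefore keep the proof to a single sentence and simply cite Theorem~\ref{thm:graphBcpo} together with the relevant definition from Section~\ref{sec:partial-orders}.
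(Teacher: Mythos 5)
Your proof is correct and follows exactly the route the paper takes: the paper offers no separate argument for this corollary beyond observing that it is immediate from Theorem~\ref{thm:graphBcpo} together with the definition of complete semilattices in Section~\ref{sec:partial-orders}, which is precisely your one-line invocation. Your optional expansion (glbs of the non-empty tail sets exist by the semilattice property, and their $\lebotr$-increasing, hence directed, family has a lub by the cpo part of Theorem~\ref{thm:graphCpo}) is just an unpacking of that same argument, so there is no substantive difference.
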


Recall that the intuition of the limit inferior on terms is that it
contains the accumulated information that eventually remains stable in
the sequence. This interpretation is, of course, based on the partial
order $\lebot$ on terms, which embodies the underlying notion of
``information encoded in a term''.

The same interpretation can be given for the limit inferior based on
the rigid partial order $\lebotr$ on term graphs. Given a sequence
$(g_\iota)_{\iota<\alpha}$ of term graphs, its limit inferior
$\liminf_{\iota\limto\alpha} g_\iota$ is the term graph that contains
the accumulation of all pieces of information that from some point
onwards remain unchanged in $(g_\iota)_{\iota<\alpha}$.

\begin{example}
  \label{ex:liminf}
  \ref{fig:doubleTransRed} and \ref{fig:doubleTransRedDiv} on page
  \pageref{fig:doubleTransRed} each show a sequence of term graphs and
  its limit inferior in $(\ipctgraphs,\lebotr)$.

  \begin{enumerate}[(i)]
  \item%
    \label{item:liminf2}%
    Figure~\ref{fig:doubleTransRed} shows a simple example of how
    acyclic sharing is preserved by the limit inferior. The
    corresponding sequence $(\Glb_{\beta\le\iota<\omega}
    g_\iota)_{\beta<\omega}$ of greatest lower bounds is given as
    follows:
    \begin{center}
      \begin{tikzpicture}
            
        \node (g1) {$\bot$};%

        \node () at ($(g1) + (0,-3.8)$)%
        {$(\Glb_{0\le\iota<\omega} g_\iota)$};
        
        \node [node distance=2.5cm,right=of g1] (g2) {$f$};%
        \node [node distance=10mm,below of=g2] (b) {$\bot$};%

        \draw[->] (g2)%
        edge [bend right=25] (b)%
        edge [bend left=25] (b);%

        \node () at ($(g2) + (0,-3.8)$)%
        {$(\Glb_{1\le\iota<\omega} g_\iota)$};
        
        \node [node distance=2.5cm,right=of g2] (g3) {$f$};%
        \node [node distance=10mm,below of=g3] (b) {$f$};%
        \node [node distance=10mm,below of=b] (b2) {$\bot$};%
        \draw[->] (g3)%
        edge [bend right=25] (b)%
        edge [bend left=25] (b);%
        \draw[->] (b)%
        edge [bend right=25] (b2)%
        edge [bend left=25] (b2);%

        \node () at ($(g3) + (0,-3.8)$)%
        {$(\Glb_{2\le\iota<\omega} g_\iota)$};
        
        \node [node distance=2.5cm,right=of g3] (g4) {$f$};%
        \node [node distance=10mm,below of=g4] (b) {$f$};%
        \node [node distance=10mm,below of=b] (b2) {$f$};%
        \node [node distance=10mm,below of=b2] (b3) {$\bot$};%
        \draw[->] (g4)%
        edge [bend right=25] (b)%
        edge [bend left=25] (b);%
        \draw[->] (b)%
        edge [bend right=25] (b2)%
        edge [bend left=25] (b2);%
        \draw[->] (b2)%
        edge [bend right=25] (b3)%
        edge [bend left=25] (b3);%

        \node () at ($(g4) + (0,-3.8)$)%
        {$(\Glb_{3\le\iota<\omega} g_\iota)$};

        \coordinate [node distance=2.5cm,right=of g4] (g5);%
        \node () at ($(g5) + (0,-3.8)$)%
        {$\dots$};
      \end{tikzpicture}
    \end{center}
    The least upper bound of this sequence of term graphs and thus the
    limit inferior of $(g_\iota)_{\iota<\omega}$ is the term graph
    $g_\omega$ depicted in Figure~\ref{fig:doubleTransRed}.

  \item%
    \label{item:liminf3}%
    The situation is slightly different in the sequence
    $(g_\iota)_{\iota<\omega}$ from
    Figure~\ref{fig:doubleTransRedDiv}. Here we also have acyclic
    sharing, viz.\ in the $c$-node. However, unlike in the previous
    example from Figure~\ref{fig:doubleTransRed}, the acyclic sharing
    changes in each step. Hence, a lower bound of two distinct term
    graphs in $(g_\iota)_{\iota<\omega}$ cannot contain a $c$-node
    because a rigid $\bot$-homomorphism must map such a $c$-node to a
    $c$-node with the same acyclic sharing, i.e.\ the same acyclic
    positions. Consequently, the sequence of greatest lower bounds
    $(\Glb_{\beta\le\iota<\omega} g_\iota)_{\beta<\omega}$ looks as
    follows:
    \begin{center}
      \begin{tikzpicture}
        
        \node (g1) {$\bot$};%

        \node () at ($(g1) + (0,-3.8)$)%
        {$(\Glb_{0\le\iota<\omega} g_\iota)$};
        
        \node [node distance=2cm,right=of g1] (g2) {$f$}%
        child {%
          node (h1) {$\bot$}%
        } child {%
          node (h2) {$h$}%
          child {%
            node (c) {$\bot$}%
          } };%

        \node () at ($(g2) + (0,-3.8)$)%
        {$(\Glb_{1\le\iota<\omega} g_\iota)$};
         
        \node [node distance=2.5cm,right=of g2] (g3) {$f$}%
        child {%
          node (f) {$f$}%
          child {%
            node (h1) {$\bot$}%
          } child {%
            node (h2) {$h$}%
          } } child {%
          node (h3) {$h$}%
          child [missing]%
          child {%
            node (c) {$\bot$}%
          }%
        };%

        \draw[->]%
        (h2) edge[bend right=45] (c);%

        \node () at ($(g3) + (0,-3.8)$)%
        {$(\Glb_{2\le\iota<\omega} g_\iota)$};
         
        \node [node distance=3cm,right=of g3] (go) {$f$}%
        child {%
          node (f1) {$f$}%
          child {%
            node (f2) {$f$}%
            child {%
              node {$\bot$}%
            } child {%
              node (h1) {$h$}%
            }%
          } child {%
            node (h2) {$h$}%
          } } child {%
          node (h3) {$h$}%
          child [missing]%
          child {%
            node (c) {$\bot$}%
          }%
        };%

        \draw[->]%
        (h1) edge[bend right=45] (c)%
        (h2) edge[bend right=45] (c);%

        \node () at ($(go) + (0,-3.8)$)%
        {$(\Glb_{3\le\iota<\omega} g_\iota)$};

        \coordinate [node distance=2.5cm,right=of g4] (g5);%
        \node () at ($(g5) + (0,-3.8)$)%
        {$\dots$};
      \end{tikzpicture}
    \end{center}
    We thus get the term graph $g_\omega$, depicted in
    Figure~\ref{fig:doubleTransRedDiv}, as the limit inferior of
    $(g_\iota)_{\iota<\omega}$. The $\bot$ labelling is necessary
    because of the change in acyclic sharing throughout the sequence.
  \end{enumerate}
\end{example}

\noindent
While we have confirmed in Corollary~\ref{cor:lebotrGeneralise} that
the partial order $\lebotr$ generalises the partial order $\lebot$ on
terms, we still have to show that this also carries over to the limit
inferior. We can derive this property from the following simple
lemma:
\begin{lemma}
  \label{lem:lebotrTrees}
  If $g \in \ipctgraphs$ and $t \in \ipterms$ with $g \lebotr t$, then
  $g \in \ipterms$.
\end{lemma}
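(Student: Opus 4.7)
The plan is to apply the characterisation of $\lebotr$ from Corollary~\ref{cor:chaTgraphPo} and exploit the fact that a term $t \in \ipterms$, viewed as a canonical term tree in $\ipctgraphs$, has $\sim_t$ equal to the identity relation on $\pos{t}$.

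Specifically, from $g \lebotr t$, condition \lebota{} of Corollary~\ref{cor:chaTgraphPo} yields
\[
\pi \sim_g \pi' \implies \pi \sim_t \pi' \qquad \text{for all } \pi,\pi' \in \pos{g}.
\]
Since $t$ is a term tree, Section~\ref{sec:terms-term-trees} tells us that $\sim_t$ is the identity on $\pos{t}$, so the implication above forces $\pi = \pi'$ whenever $\pi \sim_g \pi'$. Hence $\sim_g$ is the identity on $\pos{g}$, which means that each node of $g$ has exactly one position; i.e.\ $g$ is a term tree in the sense of Definition~\ref{def:tgraphOcc}. Because $g$ is by assumption canonical, it is a canonical term tree and therefore, under the identification of Section~\ref{sec:terms-term-trees}, lies in $\ipterms$.

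The only subtlety is the implicit claim that $\pos{g} \subseteq \pos{t}$, which is required for the conclusion $\pi \sim_t \pi'$ to make sense. This is immediate: $g \lebotr t$ gives us a (rigid) $\bot$-homomorphism $\phi\fcolon g \homto_\bot t$, and Lemma~\ref{lem:canhom}\,\itema{} then yields $\nodePos{g}{n} \subseteq \nodePos{t}{\phi(n)} \subseteq \pos{t}$ for every $n \in N^g$, hence $\pos{g} \subseteq \pos{t}$. I do not expect any genuine obstacle in this proof; it is essentially a direct unpacking of the characterisation of $\lebotr$ together with the fact that term trees have trivial sharing.
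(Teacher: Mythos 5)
Your proof is correct and follows essentially the same route as the paper: apply Corollary~\ref{cor:chaTgraphPo} to get ${\sim_g} \subseteq {\sim_t}$, note that $\sim_t$ is the identity since $t$ is a term tree, and conclude that $\sim_g$ is the identity, so $g$ is a (canonical) term tree. Your extra remark that $\pos{g} \subseteq \pos{t}$ via Lemma~\ref{lem:canhom} is a harmless clarification the paper leaves implicit.
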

\begin{proof}
  Since $t$ is a term tree, $\sim_t$ is an identity
  relation. According to Corollary~\ref{cor:chaTgraphPo}, $g \lebotr
  t$ implies that ${\sim_g} \subseteq {\sim_t}$. Hence, also $\sim_g$
  is an identity relation, which means that $g$ is a term tree as
  well.
\end{proof}

\begin{proposition}
  \label{prop:liminfGeneralise}
  Given a sequence $(t_\iota)_{\iota<\alpha}$ over $\ipterms$, the
  limit inferior of $(t_\iota)_{\iota<\alpha}$ in $(\ipterms,\lebot)$
  coincides with the limit inferior of $(t_\iota)_{\iota<\alpha}$ in
  $(\ipctgraphs,\lebotr)$.
\end{proposition}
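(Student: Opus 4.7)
The plan is to reduce the claim, via Corollary~\ref{cor:lebotrGeneralise}, to verifying that the intermediate objects used to compute the limit inferior stay inside $\ipterms$ when the input sequence is over $\ipterms$. Since the limit inferior in both structures is defined as $\Lub_{\beta<\alpha}\Glb_{\beta\le\iota<\alpha} t_\iota$, and both $(\ipterms,\lebot)$ and $(\ipctgraphs,\lebotr)$ are complete semilattices (the former is classical, the latter is Theorem~\ref{thm:graphBcpo}), it suffices to show separately that (i) each inner glb, computed in $\ipctgraphs$, lies in $\ipterms$ and agrees with the glb in $\ipterms$, and (ii) the outer lub of the resulting monotonic sequence, computed in $\ipctgraphs$, again lies in $\ipterms$ and agrees with the lub in $\ipterms$.

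For the glb step, fix $\beta<\alpha$ and let $s_\beta = \Glb_{\beta\le\iota<\alpha} t_\iota$ in $(\ipctgraphs,\lebotr)$. Then $s_\beta \lebotr t_\beta$, and since $t_\beta \in \ipterms$, Lemma~\ref{lem:lebotrTrees} yields $s_\beta \in \ipterms$. By Corollary~\ref{cor:lebotrGeneralise}, $\lebotr$ restricted to $\ipterms$ coincides with $\lebot$, so $s_\beta$ is a lower bound of $(t_\iota)_{\beta\le\iota<\alpha}$ in $(\ipterms,\lebot)$. Conversely, any $\lebot$-lower bound $u\in\ipterms$ is a $\lebotr$-lower bound (again by Corollary~\ref{cor:lebotrGeneralise}), hence $u \lebotr s_\beta$ and therefore $u \lebot s_\beta$. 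Thus $s_\beta$ is the glb in $(\ipterms,\lebot)$ as well.

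For the lub step, the sequence $(s_\beta)_{\beta<\alpha}$ is monotonically increasing in $\lebotr$ and, by the previous paragraph, lies entirely in $\ipterms$. Let $s$ be its lub in $(\ipctgraphs,\lebotr)$; by Theorem~\ref{thm:graphCpo} it is represented by the labelled quotient tree $(P,l,\sim)$ with $P = \bigcup_\beta \pos{s_\beta}$ and $\sim = \bigcup_\beta \sim_{s_\beta}$. Since each $s_\beta$ is a term tree, $\sim_{s_\beta}$ is the identity on $\pos{s_\beta}$, hence $\sim$ is the identity on $P$; so $s$ is a term tree, i.e.\ $s \in \ipterms$. The same argument used in the glb step, but with the inequalities reversed, shows that $s$ is also the lub of $(s_\beta)_{\beta<\alpha}$ in $(\ipterms,\lebot)$.

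Combining the two steps yields that the limit inferior of $(t_\iota)_{\iota<\alpha}$ computed in $(\ipctgraphs,\lebotr)$ coincides with the one computed in $(\ipterms,\lebot)$. The only step that requires a genuine argument (as opposed to invoking the earlier corollary) is the preservation of ``being a term tree'' under the two constructions; this is where Lemma~\ref{lem:lebotrTrees} and the explicit labelled-quotient-tree formula for lubs in Theorem~\ref{thm:graphCpo} do the work, and neither presents a real obstacle.
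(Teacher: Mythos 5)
Your proof is correct and follows essentially the same route as the paper: both decompose the limit inferior into the glbs $s_\beta = \Glb_{\beta\le\iota<\alpha} t_\iota$ and their lub, use Lemma~\ref{lem:lebotrTrees} together with Corollary~\ref{cor:lebotrGeneralise} to show each glb is a term tree agreeing in both orders, and then transfer the lub back to $(\ipterms,\lebot)$. The only (harmless) divergence is in the lub step, where you read off term-tree-ness of $\Lub_{\beta<\alpha} s_\beta$ from the explicit labelled-quotient-tree construction of Theorem~\ref{thm:graphCpo}, whereas the paper obtains it more cheaply by observing that the graph-side lub lies $\lebotr$-below the term-side lub and applying Lemma~\ref{lem:lebotrTrees} once more.
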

\begin{proof}
  Since both structures are complete semilattices, both limit
  inferiors exist. For each $\beta < \alpha$, let $s_\beta$ be the glb
  of $T_\beta = \setcom{t_\iota}{\beta \le \iota < \alpha}$ in
  $(\ipterms,\lebot)$ and $g_\beta$ the glb of $T_\beta$ in
  $(\ipctgraphs,\lebotr)$. Since then $g_\beta \lebotr t_\beta$, we
  know by Lemma~\ref{lem:lebotrTrees} that $g_\beta$ is a term
  tree. By Corollary~\ref{cor:lebotrGeneralise}, this implies that
  $g_\beta$ is the glb of $T_\beta$ in $(\ipterms,\lebot)$ as well,
  which means that $g_\beta = s_\beta$.

  Let $t$ and $g$ be the limit inferior of $(t_\iota)_{\iota<\alpha}$
  in $(\ipterms,\lebot)$ and $(\ipctgraphs,\lebotr)$, respectively. By
  the above argument, we know that $t$ and $g$ are the lub of the set
  $S = \setcom{s_\beta}{\beta<\alpha}$ in $(\ipterms,\lebot)$
  respectively $(\ipctgraphs,\lebotr)$. By
  Corollary~\ref{cor:lebotrGeneralise}, $t$ is an upper bound of $S$
  in $(\ipctgraphs,\lebotr)$. Since $g$ is the least such upper bound,
  we know that $g \lebotr t$. According to
  Lemma~\ref{lem:lebotrTrees}, this implies that $g$ is a term
  tree. Hence, by Corollary~\ref{cor:lebotrGeneralise}, $g$ is an
  upper bound of $S$ in $(\ipterms,\lebot)$ and $g \lebot t$. Since
  $t$ is the least upper bound of $S$ in $(\ipterms,\lebot)$, we can
  conclude that $t = g$.
\end{proof}

\subsubsection{Maximal Term Graphs}
\label{sec:maximality}

Intuitively, partial term graphs represent partial results of
computations where $\bot$-nodes act as placeholders denoting the
uncertainty or ignorance of the actual ``value'' at that position. On
the other hand, total term graphs do contain all the information of a
result of a computation -- they have the maximally possible
information content. In other words, they are the maximal elements
w.r.t.\ $\lebotr$. The following proposition confirms this intuition.

\begin{proposition}[total term graphs are maximal]
  \label{prop:nonPartMax}
  Let $\Sigma$ be a non-empty signature. Then $\ictgraphs$ is the set
  of maximal elements in $(\ipctgraphs,\lebotr)$.
\end{proposition}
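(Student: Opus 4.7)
The proposition asserts two inclusions: every total canonical term graph is maximal with respect to $\lebotr$, and conversely every maximal element is total. I would attack these separately, using the characterisations developed earlier in the section.

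For the \emph{total $\Rightarrow$ maximal} direction, suppose $g \in \ictgraphs$ and $g \lebotr h$ for some $h \in \ipctgraphs$. Then there is a rigid $\bot$\nobreakdash-homomorphism $\phi\fcolon g \homto_\bot h$. Since $g$ is total, no node of $g$ is labelled $\bot$, so the labelling and successor conditions hold at \emph{every} node of $g$; hence $\phi$ is in fact an (ordinary) homomorphism $g \homto h$. By Lemma~\ref{lem:homSurj} it is surjective, and label preservation then forces $h$ to be total as well. Rigidity, combined with Lemma~\ref{lem:strongD-homInj}, yields that $\phi$ is injective on all non\nobreakdash-$\bot$\nobreakdash-nodes, i.e.\ on the whole of $N^g$. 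Lemma~\ref{lem:isomBij} converts bijectivity into being an isomorphism, and Proposition~\ref{prop:canon} finally gives $g = h$.

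For the \emph{maximal $\Rightarrow$ total} direction I would argue by contraposition: given $g \in \ipctgraphs$ that is not total, I construct some $h \in \ipctgraphs$ with $g \lebotr h$ and $g \neq h$. Pick $\pi^* \in \pos{g}$ with $g(\pi^*) = \bot$ and, using that $\Sigma$ is non-empty, fix some $f \in \Sigma^{(k)}$. The idea is to ``refine'' the $\bot$-node at $\pi^*$ into an $f$-node with $k$ fresh $\bot$-children, uniformly across the whole $\sim_g$-class of $\pi^*$. Concretely, I would specify $h$ via the labelled quotient tree $(P,l,\sim)$ with
\begin{align*}
  P &= \pos{g} \,\cup\, \setcom{\pi\concat\seq{i}}{\pi \sim_g \pi^*,\; 0 \le i < k},\\
  l(\pi) &= \begin{cases} g(\pi) & \text{if } \pi \in \pos{g} \text{ and } \pi \not\sim_g \pi^*,\\ f & \text{if } \pi \sim_g \pi^*,\\ \bot & \text{if } \pi \in P\setminus\pos{g},\end{cases}
\end{align*}
and $\sim$ equal to $\sim_g$ on $\pos{g}$, augmented by identifying $\pi\concat\seq{i}$ with $\pi'\concat\seq{i}$ whenever $\pi,\pi' \sim_g \pi^*$ and $0 \le i < k$. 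A routine check (reachability; congruence, using that $g(\pi^*) = \bot$ so $\pi^*$ has no $\gsuc^g$-children to collide with) shows that this is indeed a labelled quotient tree, so by Lemma~\ref{lem:occrep} it determines a unique $h \in \ipctgraphs$.

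It remains to verify $g \lebotr h$ and $g \neq h$ via Corollary~\ref{cor:chaTgraphPo}. Clause~(a) is immediate from $\sim_g\, \subseteq\, \sim$, and clause~(c) holds because $l$ agrees with $g$ on positions with non-$\bot$ label (these are necessarily outside $[\pi^*]_{\sim_g}$, since otherwise congruence of $\sim_g$ would force $g(\pi) = g(\pi^*) = \bot$). The only delicate step is clause~(b): given $\pi \in \pos{g}$ with $g(\pi) \in \Sigma$, $\pi' \in \posAcy{h}$ and $\pi \sim_h \pi'$, I need $\pi \sim_g \pi'$. Because $g(\pi) \in \Sigma$ one has $\pi \not\sim_g \pi^*$, and by construction the $\sim$-class of such a $\pi$ coincides with its $\sim_g$-class and lies entirely inside $\pos{g}$; hence $\pi' \in \pos{g}$ and $\pi \sim_g \pi'$. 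Distinctness $g \neq h$ is clear from $g(\pi^*) = \bot \neq f = h(\pi^*)$. I expect the main obstacle to be the case analysis for the congruence condition in the construction of $(P,l,\sim)$, particularly making sure that enlarging the class of $\pi^*$ with fresh children does not accidentally interfere with the old congruence inherited from $\sim_g$.
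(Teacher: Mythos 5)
Your proof is correct, and its first half (total $\Rightarrow$ maximal) is exactly the paper's argument: a rigid $\bot$-homomorphism out of a total term graph is a rigid homomorphism, hence injective by Lemma~\ref{lem:strongD-homInj}, hence an isomorphism by Lemma~\ref{lem:isomBij}, and Proposition~\ref{prop:canon} gives equality. For the second half the underlying idea is also the paper's -- refine a $\bot$-node into an $f$-node whose successors are fresh $\bot$-nodes, obtaining a strictly $\lebotr$-larger term graph -- but your implementation differs in two ways. The paper splits into cases: if $\Sigma$ has a nullary symbol it simply relabels the $\bot$-node, and only in the case $\Sigma^{(0)}=\emptyset$ does it build the refined graph explicitly (with a \emph{single} shared fresh $\bot$-node as all $k$ successors), exhibiting the inclusion map directly as a rigid $\bot$-homomorphism and choosing the fresh node so that the result is canonical. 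You instead give one uniform construction (which degenerates to relabelling when $k=0$) at the level of labelled quotient trees, with one fresh $\bot$-child per argument index shared across the $\sim_g$-class of $\pi^*$, and verify $g \lebotr h$ through Corollary~\ref{cor:chaTgraphPo} rather than by naming the homomorphism. This buys you freedom from the case analysis and from canonicity bookkeeping (Lemma~\ref{lem:occrep} hands you a canonical $h$ for free), at the price of the quotient-tree verification you flag: one should note explicitly that no position $\pi\concat\seq{i}$ with $\pi \sim_g \pi^*$ can already lie in $\pos{g}$ (since $g(\pi)=\bot$ is nullary), so the augmented relation really is an equivalence extending $\sim_g$ without identifying any old positions -- which is precisely what makes your checks of clauses (a)--(c) of Corollary~\ref{cor:chaTgraphPo} go through. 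Both routes are sound; yours is marginally more abstract, the paper's more concrete and operational.
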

\begin{proof}
  At first we need to show that each element in $\ictgraphs$ is
  maximal. For this purpose, let $g \in \ictgraphs$ and $h \in
  \ipctgraphs$ such that $g \lebotr h$. We have to show that then $g =
  h$. Since $g \lebotr h$, there is a rigid $\bot$-homomorphism
  $\phi\fcolon g \homto_\bot h$. As $g$ does not contain any
  $\bot$-node, $\phi$ is even a rigid homomorphism. By
  Lemma~\ref{lem:strongD-homInj}, $\phi$ is injective and, therefore,
  according to Lemma~\ref{lem:isomBij}, an isomorphism. Hence, we
  obtain that $g \isom h$ and, consequently, using
  Proposition~\ref{prop:canon}, that $g = h$.

  Secondly, we need to show that $\ipctgraphs$ does not contain any
  other maximal elements besides those in $\ictgraphs$. Suppose there
  is a term graph $g \in \ipctgraphs \setminus \ictgraphs$ which is
  maximal in $\ipctgraphs$. Hence, there is a node $n^* \in N^g$ with
  $\glab^g(n^*) = \bot$. If $\Sigma$ contains a nullary symbol $c$,
  construct a term graph $h$ from $g$ by relabelling the node $n^*$
  from $\bot$ to $c$. However, then $g \lbotr h$, which contradicts
  the assumption that $g$ is maximal w.r.t.\ $\lebotr$.  Otherwise, if
  $\Sigma^{(0)} = \emptyset$, let $\ol n$ be a fresh node (i.e.\ $\ol
  n\nin N^g$) and $f$ some $k$-ary symbol in $\Sigma$. Define the term
  graph $h$ by
  \begin{align*}
    N^h &= N^g \uplus \set{\ol n} & r^h &= r^g \\
    \glab^h(n) &=
    \begin{cases}
      f & \text{if } n = n^* \\
      \bot &\text{if } n= \ol n \\
      \glab^g(n) &\text{otherwise}
    \end{cases}
     &
     \gsuc^h(n) &= 
    \begin{cases}
      \seq{\ol n, \dots , \ol n} & \text{if } n = n^* \\
      \epsilon &\text{if } n= \ol n \\
      \gsuc^g(n) &\text{otherwise}
    \end{cases}
  \end{align*}
  That is, $h$ is obtained from $g$ by relabelling $n^*$ with $f$
  and setting the $\bot$-labelled node $\ol n$ as the target of all
  outgoing edges of $n^*$.  We assume that $\ol n$ was chosen such
  that $h$ is canonical (i.e.\ $\ol n = \nodePos{h}{\ol
    n}$). Obviously, $g$ and $h$ are distinct. Define $\phi\fcolon N^g
  \funto N^h$ by $n \mapsto n$ for all $n\in N^g$. Clearly, $\phi$
  defines a rigid $\bot$-homomorphism from $g$ to $h$. Hence, $g
  \lebotr h$. This contradicts the assumption of $g$ being
  maximal. Consequently, no element in $\ipctgraphs \setminus
  \ictgraphs$ is maximal.
\end{proof}

Note that this property does not hold for the simple partial order
$\lebots$ that we have considered briefly in the beginning of this
section. Figure~\ref{fig:convWeird} shows the total term graph $g_0$,
which is strictly smaller than $g_1$ w.r.t.\ $\lebots$.

\section{A Rigid Metric on Term Graphs}
\label{sec:alternative-metric}

In this section, we pursue the metric approach to convergence in
rewriting systems. To this end, we shall define a metric space on
canonical term graphs. We base our approach to defining a metric
distance on the definition of the metric distance $\dd$ on terms. In
particular, we shall define a \emph{truncation} operation on term
graphs, which cuts off certain nodes depending on their depth in the
term graph. Subsequently, we study the interplay of the truncation
with $\Delta$-homomorphisms and the depth of nodes within a term
graph. Finally, we use the truncation operation to derive a metric on
term graphs.

\subsection{Truncating Term Graphs}
\label{sec:trunc-term-graphs}

Originally, Arnold and Nivat~\cite{arnold80fi} used a truncation of
terms to define the metric on terms. The truncation of a term $t$ at
depth $d\le\omega$, denoted $\trunc{t}{d}$, replaces all subterms at
depth $d$ by $\bot$:
\begin{align*}
  \trunc{t}{0} = \bot, \quad \trunc{f(t_1,\dots,t_k)}{d+1} =
  f(\trunc{t_1}{d},\dots,\trunc{t_k}{d}),\quad \trunc{t}{\omega} = t
\end{align*}

Recall that the metric distance $\dd$ on terms is defined by $\dd(s,t)
= 2^{-\similar{s}{t}}$. The underlying notion of similarity
$\similar{\cdot}{\cdot}$ can be characterised via truncations as
follows:
\[
\similar{s}{t} = \max\setcom{d\le\omega}{\trunc{s}{d} = \trunc{t}{d}}
\]

We adopt this approach for term graphs as well. To this end, we shall
define a rigid truncation on term graphs. In
Section~\ref{sec:deriving-metric-term} we will then show that this
truncation indeed yields a complete metric space.

\begin{definition} [rigid truncation of term graphs]
  \label{def:truncGraph}
  Let $g \in \iptgraphs$ and $d < \omega$.
  \begin{enumerate}[(i)]
  \item Given $n,m\in N^g$, $m$ is an \emph{acyclic predecessor} of
    $n$ in $g$ if there is an acyclic position $\pi \concat \seq i \in
    \nodePosAcy{g}{n}$ with $\pi \in \nodePos{g}{m}$. The set of
    acyclic predecessors of $n$ in $g$ is denoted $\predAcy{g}{n}$.
  \item The set of \emph{retained nodes} of $g$ at $d$, denoted
    $\tNodes{g}{d}$, is the least subset $M$ of $N^g$ satisfying the
    following two conditions for all $n\in N^g$:
    \begin{center}
      \begin{inparaenum}[(T1)]
        \def\theenumi{T}
      \item $\depth{g}{n} < d  \implies  n \in
        M$  \label{eq:truncNodes1} \qquad
      \item $n \in M  \implies  \predAcy{g}{n} \subseteq M$
        \label{eq:truncNodes2}
      \end{inparaenum}
    \end{center}
  \item For each $n\in N^g$ and $i\in \nats$, we use $n^i$ to denote a
    fresh node, i.e.\ $\setcom{n^i}{n \in N^g, i\in \nats}$ is a set
    of pairwise distinct nodes not occurring in $N^g$.  The set of
    \emph{fringe nodes} of $g$ at $d$, denoted $\fNodes{g}{d}$, is
    defined as the singleton set $\set{r^g}$ if $d = 0$, and otherwise
    as the set
    \begin{gather*}
      \setcom{n^i}{
        \begin{aligned}
          n \in \tNodes{g}{d}, 0\le i < \rank{g}{n} \text{ with} \quad
          &\gsuc^g_i(n)\nin \tNodes{g}{d}\\\text{or }\quad
          &\depth{g}{n} \ge d - 1, n\nin \predAcy{g}{\gsuc^g_i(n)}
        \end{aligned}
}
    \end{gather*}
  \item The \emph{rigid truncation} of $g$ at $d$, denoted $\truncr{g}{d}$,
    is the term graph defined by
    \begin{align*}
      N^{\truncr{g}{d}} &= \tNodes{g}{d} \uplus \fNodes{g}{d}
      & r^{\truncr{g}{d}} &= r^g
      \\
      \glab^{\truncr{g}{d}}(n) &= 
      \begin{cases}
        \glab^g(n) &\text{if } n \in \tNodes{g}{d} \\
        \bot &\text{if } n \in \fNodes{g}{d}
      \end{cases} &
      \gsuc_i^{\truncr{g}{d}}(n) &= 
      \begin{cases}
        \gsuc_i^g(n) &\text{if } n^i \nin \fNodes{g}{d} \\
        n^i &\text{if } n^i \in \fNodes{g}{d}
      \end{cases}
    \end{align*}
    Additionally, we define $\truncr{g}{\omega}$ to be the term graph
    $g$ itself.
  \end{enumerate}
\end{definition}
\def\trna{(\ref{eq:truncNodes1})}
\def\trnb{(\ref{eq:truncNodes2})}

\begin{figure}
  \centering
  \subfloat[Cyclic vs.\ acyclic sharing.]{
    \begin{tikzpicture}[node distance=1.5cm, allow upside down]
      \node[node name=175:r, alias=r1] (f1) {$h$}
      child {
        node[node name=185:n] (f2) {$h$}
      };
      \draw (f2) edge[min distance=1cm,out=-45,in=0,->] (f1);
      
      \node[node name=5:r,alias=r2,right=of r1] {$h$}
      child {
        node[node name=5:n] {$h$}
        child {
          node[node name=5:n^0] {$\bot$}
        }
      };
    \node[alias=r3,node distance=1.8cm,right=of r2] (g) {$f$}
    child {
      node (f1) {$h$}
      child {
        node {$\vdots$}
        child {
          node (f2) {$h$}
          child {
            node (a) {$a$}
          }
        }
      }
    };
    \draw[->] (g) edge[bend left=35] (a);
    \draw[decorate,decoration=brace] (f2.south west) -- (f1.north
    west) node[midway,above,sloped] {$n$ times};
      \begin{scope}[node distance=4cm]
        \node[below=of r1] {$(g)$};
        \node[below=of r2] {$(\truncr{g}{2})$};
      \node[below=of r3] {$(g_n = \truncr{g_n}{2})$};
      \end{scope}
    \end{tikzpicture}
  \label{fig:loopTrunc}}
  \hspace{1cm}
  \subfloat[Comparison with simple truncation.]{
  \begin{tikzpicture}[->,node distance=2cm]
    \node[alias=r1] (f) {$f$}
    child [missing]
    child {
      node (g1) {$h$}
      child {
        node (g2) {$h$}
        child {
          node (g) {$h$}
          child {
            node {$a$}
          }
        }
      }
    };
    \draw (f) edge[out=-115,in=115] (g);
    \begin{pgfonlayer}{background}
      \fill[lightgray,rounded corners=.5cm] ($(f.north)+(0,.6)$) --
      ($(g1.east)+(.3,0)$) -- ($(g2.west)-(.3,0)$) -- ($(g.south
      east)+(.3,0)$) --($(g.south west)+(-.2,-.4)$) -- ($(f)+(-1,-2)$) --
      cycle;

      \draw[dashed ,rounded corners=.5cm] ($(f.north)+(0,.6)$) --
      ($(g1.east)+(.3,0)$)  -- ($(g.south
      east)+(.3,0)$) --($(g.south west)+(-.2,-.4)$) -- ($(f)+(-1,-2)$) --
      cycle;
    \end{pgfonlayer}

    \node[alias=r2,right=of f] (f) {$f$}
    child [missing]
    child {
      node (g1) {$h$}
      child {
        node (g2) {$\bot$}
        child[edge from parent/.style={}] {
          node (g) {$h$}
          child[edge from parent/.style={draw}] {
            node {$\bot$}
          }
        }
      }
    };
    \draw (f) edge[out=-115,in=115] (g);
    \begin{pgfonlayer}{background}
      \fill[lightgray,rounded corners=.5cm] ($(f.north)+(0,.6)$) --
      ($(g1.east)+(.3,0)$) -- ($(g2.west)-(.3,0)$) -- ($(g.south
      east)+(.3,0)$) --($(g.south west)+(-.2,-.4)$) -- ($(f)+(-1,-2)$) --
      cycle;
    \end{pgfonlayer}

    \node[alias=r3,right=of f] (f) {$f$}
    child [missing]
    child {
      node (g1) {$h$}
      child {
        node (g2) {$h$}
        child {
          node (g) {$h$}
          child {
            node {$\bot$}
          }
        }
      }
    };
    \draw (f) edge[out=-115,in=115] (g);
    \begin{pgfonlayer}{background}
      \draw[dashed ,rounded corners=.5cm] ($(f.north)+(0,.6)$) --
      ($(g1.east)+(.3,0)$)  -- ($(g.south
      east)+(.3,0)$) --($(g.south west)+(-.2,-.4)$) -- ($(f)+(-1,-2)$) --
      cycle;
    \end{pgfonlayer}
    \begin{scope}[node distance=4cm]
      \node[below=of r1] {($g$)};
      \node[below=of r2] {($\truncs{g}{2}$)};
      \node[below=of r3] {($\truncr{g}{2}$)};
    \end{scope}
  \end{tikzpicture}
  \label{fig:exTrunc}
}
  \caption{Examples of truncations.}
\end{figure}
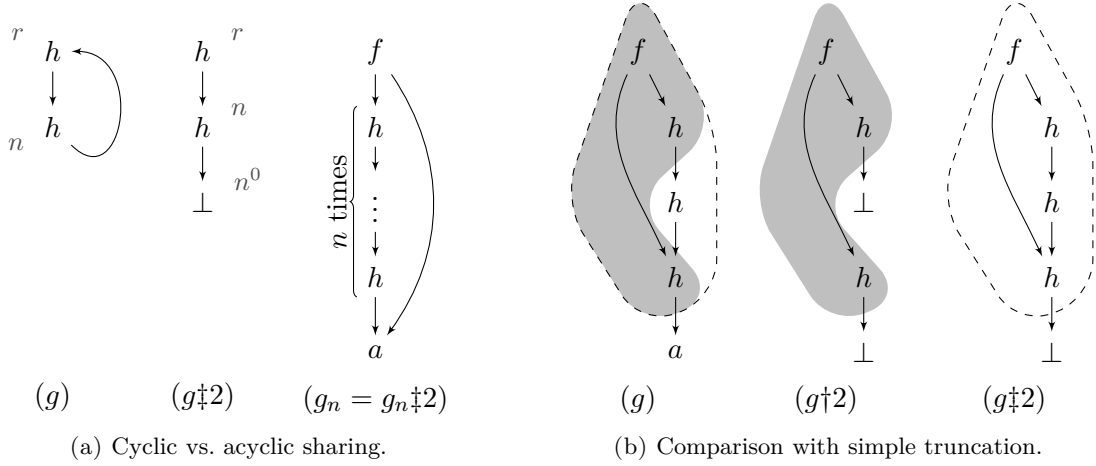

\noindent
Before discussing the intuition behind this definition of rigid
truncation, let us have a look at the r\^ole of retained and fringe
nodes: the set of retained nodes $\tNodes{g}{d}$ contains the nodes
that are \emph{preserved} by the rigid truncation. All other nodes in
$N^g\setminus \tNodes{g}{d}$ are cut off. The ``holes'' that are thus
created are filled by the fringe nodes in $\fNodes{g}{d}$. This is
expressed in the condition
$\gsuc^g_i(n)\nin \tNodes{g}{d}$
which, if satisfied, yields a fringe node $n^i$. That is, a fresh
fringe node is inserted for each successor of a retained node that is
not a retained node itself. As fringe nodes function as a replacement
for cut-off sub-term graphs, they are labelled with $\bot$ and have no
successors.

But there is another circumstance that can give rise to a fringe node:
if $\depth{g}{n} \ge d - 1$ and $n\nin \predAcy{g}{\gsuc^g_i(n)}$, we
also get a fringe node $n^i$. This condition is satisfied whenever an
outgoing edge from a retained node closes a cycle. The lower bound for
the depth is chosen such that a successor node of $n$ is not
necessarily a retained node. An example is depicted in
Figure~\ref{fig:loopTrunc}. For depth $d = 2$, the node $n$ in the
term graph $g$ is just above the fringe, i.e.\ satisfies $\depth{g}{n}
\ge d - 1$. Moreover, it has an edge to the node $r$ that closes a
cycle. Hence, the rigid truncation $\truncr{g}{2}$ contains the fringe
node $n^0$ which is now the $0$-th successor of $n$.

We chose this admittedly complicated notion of truncation in order to
make it compatible with the partial order $\lebotr$: first of all, the
rigid truncation of a term graph is supposed to yield a smaller term
graph w.r.t.\ the rigid partial order $\lebotr$, i.e.\ $\truncr{g}{d}
\lebotr g$. Hence, whenever a node is kept as a retained node, also
its acyclic positions have to be kept in order to preserve its upward
structure. To achieve this, with each node also its \emph{acyclic
  ancestors} have to be retained. The closure condition \trnb{} is
enforced exactly for this purpose.

To see what this means, consider Figure~\ref{fig:exTrunc}. It shows a
term graph $g$ and its truncation at depth $2$, once without the
closure condition \trnb{}, denoted $\truncs{g}{2}$, and once including
\trnb{}, denoted $\truncr{g}{2}$. The grey area highlights the nodes
that are at depth smaller than $2$, i.e.\ the nodes contained in
$\tNodes{g}{2}$ due to \trna{}. The nodes within the area surrounded
by a dashed line are all the nodes in $\tNodes{g}{2}$. One can observe
that with the \emph{simple truncation} $\truncs{g}{d}$ without
\trnb{}, we do not have $\truncs{g}{2} \lebotr g$.  The reason in this
particular example is the bottommost $h$-node whose acyclic sharing in
$g$ differs from that in the simple truncation $\truncs{g}{2}$ as one
of its predecessors was removed due to the truncation. This effect is
avoided in our definition of rigid truncation, which always includes
all acyclic predecessors of a node.

Nevertheless, the simple truncation $\truncs{g}{d}$ has its
benefits. It is much easier to work with and provides a natural
counterpart for the simple partial order $\lebots$ \cite{bahr12rta}.

The following lemma confirms that we were indeed successful in making
the truncation of term graphs compatible with the rigid partial order
$\lebotr$:
\begin{lemma}[rigid truncation is smaller]
  \label{lem:truncLebot}
  Given $g \in \iptgraphs$ and $d\le\omega$, we have that $\truncr{g}{d} \lebotr g$.
\end{lemma}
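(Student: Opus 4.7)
The plan is to construct an explicit rigid $\bot$-homomorphism $\phi\fcolon \truncr{g}{d} \homto_\bot g$. Two boundary cases are immediate: for $d = \omega$ we have $\truncr{g}{\omega} = g$, so the identity suffices; for $d = 0$ the truncation consists of the single $\bot$-labelled node $r^g$, and the map $r^g \mapsto r^g$ is vacuously a rigid $\bot$-homomorphism since the labelling, successor, and rigidity conditions are all suspended at $\bot$-nodes.

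For $0 < d < \omega$, I would define
\[
\phi(n) = n \text{ for } n \in \tNodes{g}{d}, \qquad \phi(n^i) = \gsuc_i^g(n) \text{ for } n^i \in \fNodes{g}{d}.
\]
The root condition is immediate since $\depth{g}{r^g} = 0 < d$ places $r^g$ in $\tNodes{g}{d}$. The labelling and successor conditions need only be checked at retained nodes $n$ with $\glab^g(n) \neq \bot$; the labelling condition is then trivial, and the successor condition splits into two subcases according to whether $n^i \in \fNodes{g}{d}$ (in which case $\gsuc_i^{\truncr{g}{d}}(n) = n^i$ and $\phi(n^i) = \gsuc_i^g(n)$ by definition) or not (in which case $\gsuc_i^g(n) \in \tNodes{g}{d}$ follows from the defining conditions of fringe nodes, so $\gsuc_i^{\truncr{g}{d}}(n) = \gsuc_i^g(n) = \phi(\gsuc_i^g(n))$).

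The main obstacle is rigidity. By Lemma~\ref{lem:presShar}, it suffices to establish $\nodePosAcy{g}{n} \subseteq \nodePos{\truncr{g}{d}}{n}$ for every retained node $n$ with $\glab^g(n) \neq \bot$. Given such an acyclic position $\pi = \seq{i_0, \dots, i_{k-1}}$ of $n$ in $g$, set $n_j = \nodeAtPos{g}{\prefix{\pi}{j}}$, so that $n_0 = r^g$ and $n_k = n$. I argue by downward induction on $j$ that each $n_j$ lies in $\tNodes{g}{d}$: since $\pi$ is acyclic, $\prefix{\pi}{j+1}$ is an acyclic position of $n_{j+1}$, whence $n_j \in \predAcy{g}{n_{j+1}}$; applying the closure condition of Definition~\ref{def:truncGraph} to $n_{j+1} \in \tNodes{g}{d}$ then yields $n_j \in \tNodes{g}{d}$.

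To conclude, I need the edges along this path to survive the truncation, i.e.\ $n_j^{i_j} \nin \fNodes{g}{d}$ for each $0 \le j < k$. The first disqualifying clause fails because $\gsuc_{i_j}^g(n_j) = n_{j+1} \in \tNodes{g}{d}$, and the second fails because $n_j \in \predAcy{g}{n_{j+1}}$, as just established. Consequently $\gsuc_{i_j}^{\truncr{g}{d}}(n_j) = n_{j+1}$ along the entire path, which shows $\pi \in \nodePos{\truncr{g}{d}}{n}$ and completes the verification of rigidity.
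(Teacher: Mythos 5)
Your proposal is correct and follows essentially the same route as the paper: it uses the identical map (identity on retained nodes, $n^i \mapsto \gsuc^g_i(n)$ on fringe nodes), verifies the $\bot$-homomorphism conditions the same way, and establishes rigidity via Lemma~\ref{lem:presShar}. The only cosmetic difference is that you unroll the rigidity argument along a fixed acyclic position (showing via the closure condition (T\ref{eq:truncNodes2}) that every node on the path is retained and every edge survives), whereas the paper packages the same reasoning as an induction on the length of the position.
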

\begin{proof}
  The cases $d = \omega$ and $d = 0$ are trivial. Assume $0 <d <
  \omega$ and define the function $\phi$ as follows:
  \begin{align*}
    \phi\fcolon N^{\truncr{g}{d}} &\funto N^g \\
    \tNodes{g}{d} \ni n &\mapsto n \\
    \fNodes{g}{d} \ni n^i &\mapsto \gsuc^g_i(n)
  \end{align*}
  We will show that $\phi$ is a rigid $\bot$-homomorphism from
  $\truncr{g}{d}$ to $g$ and, thereby, $\truncr{g}{d} \lebotr g$.

  Since $r^{\truncr{g}{d}} = r^g$ and $r^{\truncr{g}{d}} \in
  \tNodes{g}{d}$, we have $\phi(r^{\truncr{g}{d}}) = r^g$ and,
  therefore, the root condition. Note that all nodes in
  $\fNodes{g}{d}$ are labelled with $\bot$ in $\truncr{g}{d}$, i.e.\
  all non-$\bot$-nodes are in $\tNodes{g}{d}$. Thus, the labelling
  condition is trivially satisfied as for all $n \in \tNodes{g}{d}$ we
  have
  \[
  \glab^{\truncr{g}{d}}(n) = \glab^g(n) = \glab^g(\phi(n)).
  \]
  For the successor condition, let $n \in \tNodes{g}{d}$. If $n^i \in
  \fNodes{g}{d}$, then $\gsuc^{\truncr{g}{d}}_i(n) = n^i$. Hence, we have
  \[
  \phi(\gsuc^{\truncr{g}{d}}_i(n)) = \phi(n^i) = \gsuc^g_i(n) =
  \gsuc^g_i(\phi(n)).
  \]
  If, on the other hand, $n^i \nin \fNodes{g}{d}$, then
  $\gsuc^{\truncr{g}{d}}_i(n) = \gsuc^g_i(n) \in \tNodes{g}{d}$. Hence,
  we have
  \[
  \phi(\gsuc^{\truncr{g}{d}}_i(n)) = \phi(\gsuc^g_i(n)) = \gsuc^g_i(n)
  = \gsuc^g_i(\phi(n)).
  \]

  This shows that $\phi$ is a $\bot$-homomorphism. In order to prove
  that $\phi$ is rigid, we will show that $\nodePosAcy{g}{\phi(n)}
  \subseteq \nodePos{\truncr{g}{d}}{n}$ for all $n \in \tNodes{g}{d}$,
  which is sufficient according to Lemma~\ref{lem:presShar}. Note that
  we can replace $\phi(n)$ by $n$ since $n \in
  \tNodes{g}{d}$. Therefore, we can show this statement by proving
  \[
  \forall \pi \in \nats^* \forall n \in \tNodes{g}{d}.\; (\pi
  \in \nodePosAcy{g}{n} \implies \pi \in \nodePos{\truncr{g}{d}}{n})
  \]
  by induction on the length of $\pi$. If $\pi = \emptyseq$, then $n =
  r^g$ and, therefore, $\pi \in \nodePos{\truncr{g}{d}}{n}$. If $\pi =
  \pi'\concat \seq i$, let $m = \nodeAtPos{g}{\pi'}$. Then we have $m
  \in \predAcy{g}{n}$ and, therefore, $m \in \tNodes{g}{d}$ by the
  closure property \trnb. And since $\pi' \in \nodePosAcy{g}{m}$, we
  can apply the induction hypothesis to obtain that $\pi' \in
  \nodePos{\truncr{g}{d}}{m}$. Moreover, because $\gsuc^g_i(m) = n$,
  this implies that $m^i \nin \fNodes{g}{d}$. Thus,
  $\gsuc^{\truncr{g}{d}}_i(m) = n$ and, therefore, $\pi' \concat \seq i \in
  \nodePos{\truncr{g}{d}}{n}$.
\end{proof}

Also note that the rigid truncation on term graphs generalises Arnold
and Nivat's~\cite{arnold80fi} truncation on terms.
\begin{proposition}
  \label{prop:truncTruncr}
  For each $t\in \ipterms$ and $d\le\omega$, we have that
  $\truncr{t}{d} \isom \trunc{t}{d}$.
\end{proposition}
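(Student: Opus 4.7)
The plan is to handle the trivial cases $d = 0$ and $d = \omega$ directly, and then exploit the tree structure of $t$ to reduce $\truncr{t}{d}$ to Arnold and Nivat's truncation for $0 < d < \omega$.

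For $d = \omega$, both $\truncr{t}{\omega}$ and $\trunc{t}{\omega}$ are defined to equal $t$. For $d = 0$, the set $\tNodes{t}{0}$ is empty (no node has depth strictly less than $0$) and $\fNodes{t}{0} = \set{r^t}$, so $\truncr{t}{0}$ consists of a single $\bot$-node, which matches $\trunc{t}{0} = \bot$.

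For the main case $0 < d < \omega$, I would proceed by induction on $d$, exploiting two properties of the term tree $t$: first, since $\sim_t$ is the identity on $\pos{t}$, every position in $t$ is acyclic, and each non-root node at position $\pi\concat\seq{i}$ has a single acyclic predecessor, namely the node at position $\pi$; second, for every $n \in N^t$ and every $i < \rank{t}{n}$, we have $n \in \predAcy{t}{\gsuc^t_i(n)}$, because a tree has no cycles. From these observations I would derive a simplified description of $\truncr{t}{d}$: the retained set $\tNodes{t}{d}$ automatically satisfies closure condition (T\ref{eq:truncNodes2}) and coincides with $\setcom{n \in N^t}{\depth{t}{n} < d}$; and the second disjunct in the definition of $\fNodes{t}{d}$ never triggers, so fringe nodes correspond exactly to pairs $(n,i)$ with $n$ retained but $\gsuc^t_i(n)$ not retained, i.e.\ to positions of $t$ of length exactly $d$.

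With this structural description in hand, the inductive step is the heart of the argument: writing $t = f(t_1,\dots,t_k)$, I would show that, up to $\isom$, $\truncr{t}{d+1}$ is the term graph with root labelled $f$ whose $i$-th subgraph is $\truncr{t_i}{d}$. This is where the tree structure really matters: because every node below the root is reachable by a unique path, the retained and fringe nodes of $\truncr{t}{d+1}$ partition along the children of the root, and restricting them to the descendants of the $i$-th child yields precisely the retained and fringe nodes of $\truncr{t_i}{d}$. The induction hypothesis then gives $\truncr{t_i}{d} \isom \trunc{t_i}{d}$ for each $i$, and combining this with $\trunc{f(t_1,\dots,t_k)}{d+1} = f(\trunc{t_1}{d},\dots,\trunc{t_k}{d})$ yields $\truncr{t}{d+1} \isom \trunc{t}{d+1}$. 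I expect the main bookkeeping obstacle to be the shift in depth when moving from $t$ to the subtree $t_i$: a node $n$ that is retained in $t$ at depth $k \ge 1$ must correspond to a node retained in $t_i$ at depth $k-1$, and the renaming of fringe nodes $n^j$ must be tracked accordingly. This is straightforward but pedantic, and can be packaged by defining the isomorphism explicitly on positions using Lemma~\ref{lem:occrep}.
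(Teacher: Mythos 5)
Your proposal is correct and follows essentially the same route as the paper: the decisive observation in both is that in a term tree the closure condition (T\ref{eq:truncNodes2}) is vacuous and the cyclic-edge clause for fringe nodes never fires, so $\tNodes{t}{d}$ is exactly the set of nodes of depth less than $d$ and the fringe nodes correspond to the positions of length $d$. Once this structural description is in place the paper concludes directly that $\truncr{t}{d}$ is $t$ with every depth-$d$ node replaced by a fresh $\bot$-node, so your additional induction on $d$ decomposing $t = f(t_1,\dots,t_k)$ is sound but unnecessary extra bookkeeping.
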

\begin{proof}
  For the case that $d \in \set{0,\omega}$, the equation
  $\truncr{t}{d} = \trunc{t}{d}$ holds trivially. For the other cases,
  we can easily see that $\trunc{t}{d}$ is obtained from $t$ by
  replacing all subterms at depth $d$ by $\bot$. On the other hand,
  since in a term tree each node has at most one (acyclic)
  predecessor, which has a strictly smaller depth, we know that the
  set of retained nodes $\tNodes{t}{d}$ is the set of nodes of depth
  smaller than $d$ and the set of fringe nodes $\fNodes{t}{d}$ is the
  set $\setcom{n^i}{n \in N^t, \depth{t}{\gsuc^t_i(n)}=d}$. Hence,
  $\truncr{t}{d}$ is obtained from $t$ by replacing each node at depth
  $d$ with a fresh node labelled $\bot$. We can thus conclude that
  $\truncr{t}{d} \isom \trunc{t}{d}$.
\end{proof}
Consequently, if we use the rigid truncation to define a metric on
term graphs analogously to Arnold and Nivat, we obtain a metric on
term graphs that generalises the metric $\dd$ on terms.

\subsection{The Effect of Truncation}
\label{sec:effect-truncation}

In order to characterise the effect of a truncation to a term graph,
we need to associate an appropriate notion of depth to a whole term
graph:

\begin{definition}[symbol/graph depth]
  Let $g \in \itgraphs$ and $\Delta\subseteq \Sigma$.
  \begin{enumerate}[(i)]
  \item The \emph{depth} of $g$, denoted $\gdepth{g}$, is the least
    upper bound of the depth of nodes in $g$, i.e.\
    \[
    \gdepth{g} = \Lub\setcom{\depth{g}{n}}{n \in N^g}.
    \]
  \item The \emph{$\Delta$-depth} of $g$, denoted
    $\sdepth{g}{\Delta}$, is the minimum depth of nodes in $g$
    labelled in $\Delta$, i.e.\
    \[
    \sdepth{g}\Delta = \min\setcom{\depth{g}{n}}{n \in
      N^g,\glab^g(n)\in\Delta}\cup\set\omega.
    \]
    If $\Delta$ is a singleton set $\set{\sigma}$, we also write
    $\sdepth{g}\sigma$ instead of $\sdepth{g}{\set\sigma}$.
  \end{enumerate}
\end{definition}

\noindent
Notice the difference between depth and $\Delta$-depth. The former is
the least upper bound of the depth of nodes in a term graph whereas
the latter is the \emph{minimum} depth of nodes labelled by a symbol
in $\Delta$. Thus, we have that $\gdepth{g} = \omega$ iff $g$ is
infinite; and $\sdepth{g}{\Delta} = \omega$ iff $g$ does not contain a
$\Delta$-node.

In the following, we will prove a number of lemmas that show how
$\Delta$-homomorphisms preserve the depth of nodes in term
graphs. Understanding how $\Delta$-homomorphisms affect the depth of
nodes will become important for relating the rigid truncation to the
rigid partial order $\lebotr$.

\begin{lemma}[reverse depth preservation of $\Delta$-homomorphisms]
  \label{lem:homDepthRev}
  Let $g,h \in \itgraphs$ and $\phi\fcolon g \homto_\Delta h$. Then,
  for all $n \in N^h$ with $\depth{h}{n} \le \sdepth{g}{\Delta}$,
  there is a node $m \in \phi^{-1}(n)$ with $\depth{g}{m} \le
  \depth{h}{n}$.
\end{lemma}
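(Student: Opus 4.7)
The plan is to prove this by induction on $\depth{h}{n}$.

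For the base case, when $\depth{h}{n} = 0$, we have $n = r^h$. By the root condition for $\Delta$-homomorphisms, $\phi(r^g) = r^h = n$, so $m = r^g$ is a preimage with $\depth{g}{m} = 0 \le \depth{h}{n}$.

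For the inductive step, suppose the claim holds for all nodes of depth at most $d$ in $h$, and consider $n \in N^h$ with $\depth{h}{n} = d+1 \le \sdepth{g}{\Delta}$. Then $n$ has a position $\pi \concat \seq{i}$ in $h$ of length $d+1$, so letting $n' = \nodeAtPos{h}{\pi}$, we have $\depth{h}{n'} \le d$ and $\gsuc^h_i(n') = n$. Since $\depth{h}{n'} \le d < d+1 \le \sdepth{g}{\Delta}$, the induction hypothesis yields some $m' \in \phi^{-1}(n')$ with $\depth{g}{m'} \le \depth{h}{n'} \le d$. The key observation is that $\depth{g}{m'} \le d < \sdepth{g}{\Delta}$, so by definition of $\sdepth{g}{\Delta}$, we must have $\glab^g(m') \nin \Delta$. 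Hence $\phi$ is homomorphic in $m'$, and the successor condition gives $\phi(\gsuc^g_i(m')) = \gsuc^h_i(\phi(m')) = \gsuc^h_i(n') = n$. Taking $m = \gsuc^g_i(m')$, we obtain $m \in \phi^{-1}(n)$ and $\depth{g}{m} \le \depth{g}{m'} + 1 \le d+1 = \depth{h}{n}$, as desired.

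I do not expect any serious obstacle. The only subtle point is where the hypothesis $\depth{h}{n} \le \sdepth{g}{\Delta}$ is used: it guarantees in the inductive step that the parent node $m'$ found by the induction hypothesis lies \emph{strictly above} the $\Delta$-fringe of $g$, so $\phi$ is genuinely homomorphic at $m'$ and the successor condition can be applied. Without this hypothesis, $m'$ could be a $\Delta$-node, which might have no successors at all (if $\Delta \subseteq \Sigma^{(0)}$) or could have successors unrelated to those of $\phi(m')$, breaking the argument.
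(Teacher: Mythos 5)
Your proof is correct and follows essentially the same route as the paper's: induction on $\depth{h}{n}$, the root condition for the base case, and in the inductive step the observation that $\depth{g}{m'} < \sdepth{g}{\Delta}$ forces $\glab^g(m') \nin \Delta$, so the successor condition applies and yields the desired preimage. Your remark on where the hypothesis $\depth{h}{n} \le \sdepth{g}{\Delta}$ is used matches the paper's argument exactly.
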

\proof
  We prove the statement by induction on $\depth{h}{n}$. If
  $\depth{h}{n} = 0$, then $n = r^h$. With $m = r^g$, we have $\phi(m)
  = n$ and $\depth{g}{m} = 0$. If $\depth{h}{n} > 0$, then there is
  some $n' \in N^h$ with $\gsuc^h_i(n') = n$ and $\depth{h}{n'} <
  \depth{h}{n}$. Hence, we can employ the induction hypothesis to
  obtain some $m' \in N^g$ with $\depth{g}{m'} \le \depth{h}{n'}$ and
  $\phi(m') = n'$. Since $\depth{g}{m'} \le \depth{h}{n'} <
  \depth{h}{n} \le \sdepth{g}{\Delta}$, we have $\glab^g(m') \nin
  \Delta$. Hence, $\phi$ is homomorphic in $m'$. For $m =
  \gsuc^g_i(m')$, we can then reason as follows:
  \begin{align*}
    \phi(m) &= \phi(\gsuc^g_i(m')) = \gsuc^h_i(\phi(m')) =
    \gsuc^h_i(n') = n,\quad \text{and} \\
    \depth{g}{m} &\le \depth{g}{m'} + 1 \le \depth{h}{n}.
    \rlap{\hbox to 167 pt{\hfill\qEd}}
  \end{align*}

\begin{lemma}[$\Delta$-depth preservation of $\Delta$-homomorphisms]
  \label{lem:DhomDeltaDepth}
  Let $g,h \in \itgraphs$ and $\phi\fcolon g \homto_\Delta h$, then
  $\sdepth{g}{\Delta} \le \sdepth{h}{\Delta}$.
\end{lemma}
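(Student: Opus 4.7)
The plan is to derive the inequality from the previous lemma (reverse depth preservation) by contradiction. Write $d = \sdepth{h}{\Delta}$. If $d = \omega$, the claim $\sdepth{g}{\Delta} \le \omega$ is immediate, so I will focus on the case $d < \omega$, picking some $n \in N^h$ with $\glab^h(n) \in \Delta$ and $\depth{h}{n} = d$.

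First I would assume, for contradiction, that $\sdepth{g}{\Delta} > d$, i.e.\ every $\Delta$-labelled node of $g$ sits at depth strictly greater than $d$. In particular $\depth{h}{n} = d < \sdepth{g}{\Delta}$, so the hypothesis of Lemma~\ref{lem:homDepthRev} is satisfied for $n$. Applying that lemma yields a node $m \in \phi^{-1}(n)$ with $\depth{g}{m} \le \depth{h}{n} = d$.

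Now comes the key observation: since $\depth{g}{m} \le d < \sdepth{g}{\Delta}$, the node $m$ cannot be $\Delta$-labelled, i.e.\ $\glab^g(m) \nin \Delta$. Hence $\phi$ is homomorphic in $m$, so the labelling condition gives $\glab^h(n) = \glab^h(\phi(m)) = \glab^g(m) \nin \Delta$, contradicting the choice of $n$ as a $\Delta$-node. Therefore $\sdepth{g}{\Delta} \le d = \sdepth{h}{\Delta}$, as required.

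There is no real obstacle here; the argument is essentially a one-line application of Lemma~\ref{lem:homDepthRev} combined with the labelling clause of the definition of $\Delta$-homomorphism. The only subtlety is to verify the hypothesis $\depth{h}{n} \le \sdepth{g}{\Delta}$ needed to invoke the previous lemma, which is precisely what the contradiction assumption $\sdepth{g}{\Delta} > d$ secures.
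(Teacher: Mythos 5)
Your proof is correct and is essentially the paper's own argument: both apply Lemma~\ref{lem:homDepthRev} to pull a shallow $\Delta$-node of $h$ back to a node of $g$ of at most the same depth and then use the labelling condition to derive a contradiction (the paper merely phrases it contrapositively, showing every node of $h$ at depth below $\sdepth{g}{\Delta}$ is not $\Delta$-labelled, instead of assuming $\sdepth{g}{\Delta} > \sdepth{h}{\Delta}$). The details, including the verification of the hypothesis of Lemma~\ref{lem:homDepthRev}, are handled correctly.
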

\begin{proof}
  Let $n \in N^h$ with $\depth{h}{n} < \sdepth{g}{\Delta}$. To prove
  the lemma, we have to show that $\glab^h(n) \nin \Delta$. According
  to Lemma~\ref{lem:homDepthRev}, we find a node $m \in N^g$ with
  $\depth{g}{m} \le \depth{h}{n} < \sdepth{g}{\Delta}$ and $\phi(m) =
  n$. Since then $\glab^g(m) \nin \Delta$, we also have $\glab^h(n)
  \nin \Delta$ by the labelling condition for $\phi$.
\end{proof}

For rigid $\Delta$-homomorphisms, we even have a stronger form of
depth preservation.
\begin{lemma}[depth preservation of rigid $\Delta$-homomorphisms]
  \label{lem:strongDepthPres}
  Let $g, h \in \itgraphs$ and $\phi\fcolon g \homto_\Delta h$ a rigid
  $\Delta$-homomorphism. Then $\depth{g}{n} = \depth{h}{\phi(n)}$ for
  all $n \in N^g$ with $\glab^g(n) \nin \Delta$.
\end{lemma}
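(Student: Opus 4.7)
The plan is to reduce the depth of a node to the minimum length of its \emph{acyclic} positions, and then invoke rigidity to match these sets between $g$ and $h$. The auxiliary claim I would establish first is: for every term graph $k$ and every $m \in N^k$,
\[
\depth{k}{m} = \min\setcom{\len{\pi}}{\pi \in \nodePosAcy{k}{m}}.
\]
Since $\nodePosAcy{k}{m} \subseteq \nodePos{k}{m}$, the inequality $\le$ is immediate. For the other direction, I would show that any position of $m$ of minimum length must already be acyclic: if $\pi \in \nodePos{k}{m}$ were cyclic, then by definition there are $\pi_1 < \pi_2 \le \pi$ with $\nodeAtPos{k}{\pi_1} = \nodeAtPos{k}{\pi_2}$, so writing $\pi_2 = \pi_1 \concat \pi'$ with $\pi' \neq \emptyseq$ and $\pi = \pi_2 \concat \pi''$, the path $\pi_1 \concat \pi''$ is again a position of $m$ in $k$ but strictly shorter than $\pi$, contradicting minimality.

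With this reduction in hand, the lemma follows in one step. Let $n \in N^g$ with $\glab^g(n) \nin \Delta$. Since $\phi$ is rigid and homomorphic in $n$, rigidity gives $\nodePosAcy{g}{n} = \nodePosAcy{h}{\phi(n)}$. Combining this with the auxiliary claim applied to both $g$ and $h$ yields
\[
\depth{g}{n} = \min\setcom{\len{\pi}}{\pi \in \nodePosAcy{g}{n}} = \min\setcom{\len{\pi}}{\pi \in \nodePosAcy{h}{\phi(n)}} = \depth{h}{\phi(n)}.
\]

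The only mild obstacle is the auxiliary claim, and even this is essentially a one-line ``cut out the loop'' argument; the labelling of intermediate nodes plays no role, and in particular we do not need $\glab^g(n) \nin \Delta$ for the claim itself — rigidity enters only when we transfer the equality $\nodePosAcy{g}{n} = \nodePosAcy{h}{\phi(n)}$ between the two term graphs. Everything else is bookkeeping.
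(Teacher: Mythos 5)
Your proof is correct and follows essentially the same route as the paper: rigidity yields $\nodePosAcy{g}{n} = \nodePosAcy{h}{\phi(n)}$, and depth transfers because a shortest position of a node must be acyclic. The only difference is that you spell out the ``cut out the loop'' argument for that last fact, which the paper simply asserts; your elaboration is valid.
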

\begin{proof}
  If $\glab^g(n) \nin \Delta$, then $\nodePosAcy{g}{n} =
  \nodePosAcy{h}{\phi(n)}$. Hence, $\depth{g}{n} = \depth{h}{\phi(n)}$
  follows since a shortest position of a node must be acyclic.
\end{proof}

The gaps that are caused by a truncation due to the removal of nodes
are filled by fresh $\bot$-nodes. The following lemma provides a lower
bound for the depth of the introduced $\bot$-nodes.

\begin{lemma}[$\bot$-depth in rigid truncations]
  \label{lem:truncDepth}
  For all $g \in \itgraphs$ and $d < \omega$, we have that
  \begin{enumerate}[\em(i)]
  \item $\sdepth{\truncr{g}{d}}{\bot} \ge d$, and
    \label{item:truncDepth1}
  \item if $d > \gdepth{g}+1$, then $\truncr{g}{d} = g$, i.e.\
    $\sdepth{\truncr{g}{d}}{\bot} = \omega$.
    \label{item:truncDepth2}
  \end{enumerate}
\end{lemma}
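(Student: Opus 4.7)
My plan is to prove (i) and (ii) separately, both relying heavily on the rigid $\bot$-homomorphism $\phi\fcolon \truncr{g}{d} \homto_\bot g$ constructed in the proof of Lemma~\ref{lem:truncLebot}, which acts as the identity on $\tNodes{g}{d}$ and sends each fringe node $n^i$ to $\gsuc^g_i(n)$. The key observation that makes both parts work is that $g \in \itgraphs$, so $g$ contains no $\bot$-nodes; consequently, the only $\bot$-nodes in $\truncr{g}{d}$ are precisely the fringe nodes in $\fNodes{g}{d}$.

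For (i), I would show that every fringe node $n^i \in \fNodes{g}{d}$ satisfies $\depth{\truncr{g}{d}}{n^i} \ge d$. Since $n^i$ is fresh and, by construction of $\gsuc^{\truncr{g}{d}}$, appears only as the $i$-th successor of $n$, its positions in $\truncr{g}{d}$ are exactly the sequences $\pi\concat\seq i$ for $\pi \in \nodePos{\truncr{g}{d}}{n}$, giving $\depth{\truncr{g}{d}}{n^i} = \depth{\truncr{g}{d}}{n} + 1$. Because $n \in \tNodes{g}{d}$ is not a $\bot$-node (as $g$ has none) and $\phi$ is rigid, Lemma~\ref{lem:strongDepthPres} yields $\depth{\truncr{g}{d}}{n} = \depth{g}{\phi(n)} = \depth{g}{n}$. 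It thus remains to show $\depth{g}{n} \ge d - 1$, which I would obtain by case analysis on the clause of Definition~\ref{def:truncGraph} that justifies $n^i \in \fNodes{g}{d}$: in the first case $\gsuc^g_i(n) \nin \tNodes{g}{d}$ forces $\depth{g}{\gsuc^g_i(n)} \ge d$ via \trna, and since $n$ is a direct predecessor of $\gsuc^g_i(n)$ we get $\depth{g}{n} \ge \depth{g}{\gsuc^g_i(n)} - 1 \ge d - 1$; in the second case $\depth{g}{n} \ge d - 1$ is part of the clause itself. The case $d = 0$ is handled separately: $\fNodes{g}{0} = \set{r^g}$, so the trivially true $\depth{\truncr{g}{0}}{r^g} \ge 0$ suffices.

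For (ii), assume $d > \gdepth{g} + 1$. This forces $\gdepth{g} < \omega$ (otherwise the hypothesis is impossible for $d < \omega$), so every node $n \in N^g$ satisfies $\depth{g}{n} \le \gdepth{g} < d - 1 < d$, which by clause \trna\ places every node of $g$ into $\tNodes{g}{d}$, i.e.\ $\tNodes{g}{d} = N^g$. Consequently, both clauses defining fringe nodes fail for every $n \in N^g$ and every $i < \rank{g}{n}$: the first because $\gsuc^g_i(n) \in N^g = \tNodes{g}{d}$, and the second because $\depth{g}{n} < d - 1$. Hence $\fNodes{g}{d} = \emptyset$, and the defining clauses of $\truncr{g}{d}$ immediately give $N^{\truncr{g}{d}} = N^g$ with labels and successors identical to those of $g$, so $\truncr{g}{d} = g$. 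Since $g$ contains no $\bot$-nodes, $\sdepth{\truncr{g}{d}}{\bot} = \sdepth{g}{\bot} = \omega$.

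I do not expect any single step to be a serious obstacle; the only subtlety is keeping straight the interaction between the two fringe-node generation clauses in Definition~\ref{def:truncGraph} when bounding $\depth{g}{n}$ from below in the proof of (i), and ensuring that the depth preservation argument via Lemma~\ref{lem:strongDepthPres} applies only to non-$\bot$ retained nodes, which is automatic here since $g$ itself is $\bot$-free.
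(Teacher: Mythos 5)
Your proof is correct and follows essentially the same route as the paper: part (i) uses the rigid $\bot$-homomorphism $\phi\fcolon \truncr{g}{d} \homto_\bot g$ from Lemma~\ref{lem:truncLebot} together with Lemma~\ref{lem:strongDepthPres} and the fact that each fringe node $n^i$ is reached only through its unique predecessor $n$ with $\depth{g}{n}\ge d-1$, and part (ii) observes $\tNodes{g}{d}=N^g$ and $\fNodes{g}{d}=\emptyset$. You merely spell out details the paper leaves implicit (the case analysis on the two fringe-node clauses and the trivial case $d=0$).
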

\begin{proof}
  \def\itema{(\ref{item:truncDepth1})}%
  \def\itemb{(\ref{item:truncDepth2})}%
  \itema{} From the proof of Lemma~\ref{lem:truncLebot}, we obtain a
  rigid $\bot$-homomorphism $\phi\fcolon \truncr{g}{d} \homto_\bot
  g$. Note that the only $\bot$-nodes in $\truncr{g}{d}$ are those in
  $\fNodes{g}{d}$. Each of these nodes has only a single predecessor,
  a node $n \in \tNodes{g}{d}$ with $\depth{g}{n} \ge d - 1$. By
  Lemma~\ref{lem:strongDepthPres}, we also have
  $\depth{\truncr{g}{d}}{n} \ge d - 1$ for these nodes since $\phi$ is
  rigid, $n$ is not labelled with $\bot$ and $\phi(n) = n$. Hence, we
  have $\depth{\truncr{g}{d}}{m} \ge d$ for each node $m \in
  \fNodes{g}{d}$. Consequently, it holds that
  $\sdepth{\truncr{g}{d}}{\bot} \ge d$.

  \itemb{} Note that if $d > \gdepth{g}+1$, then $\tNodes{g}{d} = N^g$
  and $\fNodes{g}{d} = \emptyset$. Hence, $\truncr{g}{d} = g$.
\end{proof}

\begin{rem}
  Note that the precondition for the statement of
  clause~(\ref{item:truncDepth2}) in the lemma above reads $d >
  \gdepth{g}+1$ rather than $d > \gdepth{g}$ as one might expect. The
  reason for this is that a truncation might cut off an edge that
  emanates from a node at depth $d-1$ and closes a cycle. For an
  example of this phenomenon, take a look at
  Figure~\ref{fig:loopTrunc}. It shows a term graph $g$ of depth $1$
  and its rigid truncation at depth $2$. Even though there is no node
  at depth $2$ the truncation introduces a $\bot$-node.

  On the other hand, although a term graph has depth greater than $d$,
  the truncation at depth $d$ might still preserve the whole term
  graph. An example for this behaviour is the family of term graphs
  $(g_n)_{n<\omega}$ depicted in Figure~\ref{fig:loopTrunc}. Each of
  the term graphs $g_n$ has depth $n + 1$. Yet, the truncation at
  depth $2$ preserves the whole term graph $g_n$ for each $n >
  0$. Even though there might be $h$-nodes which are at depth $\ge 2$
  these nodes are directly or indirectly acyclic predecessors of the
  $a$-node and are, thus, included in $\tNodes{g_n}{2}$.
\end{rem}

Intuitively, the following lemma states that a rigid
$\bot$-homomorphism has the properties of an isomorphism up to the
depth of the shallowest $\bot$-node:
\begin{lemma}[$\lebotr$ and rigid truncation]
  \label{lem:lebotTrunc}
  Given $g, h \in \iptgraphs$ and $d < \omega$ with $g \lebotr h$ and
  $\sdepth{g}{\bot} \ge d$, we have that $\truncr{g}{d} \isom
  \truncr{h}{d}$.
\end{lemma}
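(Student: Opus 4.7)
The plan is to lift the rigid $\bot$-homomorphism $\phi\colon g \homto_\bot h$ witnessing $g \lebotr h$ to an isomorphism $\psi\colon \truncr{g}{d} \isoto \truncr{h}{d}$. Concretely, I would set $\psi(n) = \phi(n)$ for $n \in \tNodes{g}{d}$ and $\psi(n^i) = \phi(n)^i$ for $n^i \in \fNodes{g}{d}$, and then show that $\psi$ is a bijective homomorphism, which, by Lemma~\ref{lem:isomBij}, is enough to conclude. The case $d = 0$ is trivial since both truncations then consist of a single $\bot$-labelled root.

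A crucial preliminary observation is that $\tNodes{g}{d}$ contains no $\bot$-nodes: any $n$ with $\glab^g(n) = \bot$ has $\depth{g}{n} \ge \sdepth{g}\bot \ge d$, so it is excluded from the base case \trna{}, and because $\bot$ is nullary no node has a $\bot$-node as acyclic predecessor, so \trnb{} cannot add it either. Hence $\phi$ is rigid at every $n \in \tNodes{g}{d}$, giving $\nodePosAcy{h}{\phi(n)} = \nodePosAcy{g}{n}$ and, via Lemma~\ref{lem:strongDepthPres}, depth preservation $\depth{h}{\phi(n)} = \depth{g}{n}$. A short argument based on Lemma~\ref{lem:canhom} together with the uniqueness of the node occupying a given position then yields the key identity $\phi(\predAcy{g}{n}) = \predAcy{h}{\phi(n)}$ for each such $n$.

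With these facts in hand, I would prove that $\phi$ restricts to a bijection $\tNodes{g}{d} \to \tNodes{h}{d}$. For the inclusion $\phi(\tNodes{g}{d}) \subseteq \tNodes{h}{d}$ I would induct on the construction of $\tNodes{g}{d}$: the base case \trna{} uses depth preservation, and \trnb{} uses the predecessor identity above. For the converse inclusion I would again induct on the construction of $\tNodes{h}{d}$: depth-$<d$ nodes on the $h$-side are hit by $\phi$ via Lemma~\ref{lem:homDepthRev}, whose hypothesis $\depth{h}{m} \le \sdepth{g}\bot$ is guaranteed precisely by the assumption $d \le \sdepth{g}\bot$; the closure case is handled by the predecessor identity. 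Injectivity on $\tNodes{g}{d}$ is free from Lemma~\ref{lem:strongD-homInj} since all nodes involved are non-$\bot$.

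Finally I would extend this to a bijection $\psi$ on the full node sets of the truncations and verify the homomorphism conditions. The essential step is checking that $n^i \in \fNodes{g}{d}$ iff $\phi(n)^i \in \fNodes{h}{d}$, which decomposes into the two disjuncts of the fringe-node definition: the first ($\gsuc^g_i(n) \notin \tNodes{g}{d}$) transfers via the retained-node bijection together with the successor condition of $\phi$, and the second (cycle-closing edge from depth $\ge d-1$) transfers via depth preservation and the predecessor identity. Labels then match trivially—non-$\bot$ labels are preserved by $\phi$ and every fringe node carries $\bot$ on both sides—and the successor condition for $\psi$ reduces to that of $\phi$ on retained nodes and to construction on fringe nodes. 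The main obstacle I expect is this last case analysis on fringe nodes: the second disjunct in the fringe definition, involving cycle-closing edges emanating from depth $d-1$ retained nodes, is the subtle place where rigidity and the $\bot$-depth hypothesis must be combined most delicately.
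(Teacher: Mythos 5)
Your overall architecture is the one the paper uses: the same map $\psi$ (identity-like on $\tNodes{g}{d}$ via $\phi$, and $n^i \mapsto \phi(n)^i$ on fringe nodes), the observation that $\tNodes{g}{d}$ contains no $\bot$-nodes (the paper's Lemma~\ref{lem:depthLabelling}), the equality $\phi(\tNodes{g}{d}) = \tNodes{h}{d}$ (the paper's Lemma~\ref{lem:presTruncNodes}; your induction on the inductive generation of the least set satisfying the two closure conditions of Definition~\ref{def:truncGraph} is just the minimality argument in different clothes, and your use of Lemma~\ref{lem:homDepthRev}, Lemma~\ref{lem:strongDepthPres} and the predecessor identity $\phi(\predAcy{g}{n}) = \predAcy{h}{\phi(n)}$ for retained, hence non-$\bot$, nodes $n$ is sound), and the conclusion via injectivity on non-$\bot$-nodes (Lemma~\ref{lem:strongD-homInj}) and Lemma~\ref{lem:isomBij}. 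Up to the fringe nodes, this matches the paper's proof step for step.

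The gap is in the fringe-node equivalence $n^i \in \fNodes{g}{d} \iff \phi(n)^i \in \fNodes{h}{d}$ (the paper's Lemma~\ref{lem:presFNodes}). You claim the two disjuncts of the fringe-node definition transfer separately: the first ``via the retained-node bijection together with the successor condition'', the second ``via depth preservation and the predecessor identity''. This is exactly the step that fails. The retained-node correspondence does \emph{not} preserve the status of the $i$-th successor, because $\phi$ is unconstrained (neither injective nor rigid) on $\bot$-nodes: a non-retained $\bot$-labelled successor in $g$ may be mapped onto a \emph{retained} node of $h$. Concretely, let $g$ be the term graph $h(\bot)$ with root $r$, let $h$ be the single $h$-labelled node with a loop, and let $\phi$ send both nodes of $g$ to that node; $\phi$ is a rigid $\bot$-homomorphism (rigidity only constrains $r$, whose sole acyclic position is $\emptyseq$ on both sides), and $\sdepth{g}{\bot} = 1 \ge d$ for $d = 1$. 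Here $\gsuc^g_0(r) \nin \tNodes{g}{1}$, so $r^0$ is a fringe node by the first disjunct, but $\gsuc^h_0(\phi(r)) = \phi(r) \in \tNodes{h}{1}$, so $\phi(r)^0$ is a fringe node only by the second disjunct ($\depth{h}{\phi(r)} = 0 \ge d-1$ and $\phi(r) \nin \predAcy{h}{\phi(r)} = \emptyset$). Likewise your predecessor identity is unavailable where you want to apply it for the second disjunct, namely at $\gsuc^g_i(n)$, which may be $\bot$-labelled. The correct argument must let the disjuncts cross over: if $\gsuc^g_i(n) \nin \tNodes{g}{d}$ but $\gsuc^h_i(\phi(n))$ is retained, one picks $m \in \tNodes{g}{d}$ with $\phi(m) = \gsuc^h_i(\phi(n))$, notes $m \neq \gsuc^g_i(n)$, and uses rigidity at both $n$ and $m$ to conclude $\phi(n) \nin \predAcy{h}{\gsuc^h_i(\phi(n))}$; and when the second disjunct holds in $g$ but not in $h$, one shows $\gsuc^g_i(n)$ must be $\bot$-labelled (via rigidity and Lemma~\ref{lem:depthLabelling}) and falls back on the first case. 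This cross-over analysis is the actual content of the paper's Lemma~\ref{lem:presFNodes} and the reason the second disjunct appears in the definition of fringe nodes at all; without it your step ``first transfers to first, second to second'' is simply false, so as written the proof does not go through.
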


The proof of the above lemma is based on a generalisation of
Lemma~\ref{lem:strongDepthPres}, which states that rigid
$\bot$-homomorphisms map non-$\bot$-nodes to nodes of the same
depth. However, since the rigid truncation of a term graph does not
only depend on the depth of nodes but also the acyclic sharing in the
term graph, we cannot rely on this statement on the depth of nodes
alone. The two key components of the proof of
Lemma~\ref{lem:lebotTrunc} are
\begin{inparaenum}
\item the property of rigid $\bot$-homomorphisms to map retained nodes
  of the source term graph exactly to the retained nodes of the target
  term graph and
\item that in the same way fringe nodes are exactly mapped to fringe
  nodes.
\end{inparaenum}
Showing the isomorphism between $\truncr{g}{d}$ and $\truncr{h}{d}$
can thus be reduced to the injectivity on retained nodes in
$\truncr{g}{d}$ which is obtained from the rigid $\bot$-homomorphism
from $g$ to $h$ by applying Lemma~\ref{lem:strongD-homInj}. The full
proof of Lemma~\ref{lem:lebotTrunc} is given in
Appendix~\ref{sec:proof-lemma-refl}.

We can use the above findings in order to obtain the following
properties of truncations that one would intuitively expect from a
truncation operation:
\begin{lemma}[smaller truncations]
  \label{lem:smallerTrunc}
  For all $g, h \in \itgraphs$ and $e \le d \le \omega$, the following
  holds:
  \begin{center}
    \begin{inparaenum}[\em(i)]
    \item $\truncr{g}{e} \isom \truncr{(\truncr{g}{d})}{e}$\,, \quad and\qquad
      \label{item:smallerTrunc1}
    \item $\truncr{g}{d} \isom \truncr{h}{d} \quad \implies \quad
      \truncr{g}{e} \isom \truncr{h}{e}$.
      \label{item:smallerTrunc2}
    \end{inparaenum}
  \end{center}
\end{lemma}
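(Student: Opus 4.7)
My plan is to derive both parts from Lemma~\ref{lem:lebotTrunc}, combined with the depth bound supplied by Lemma~\ref{lem:truncDepth}(i). First I would dispose of the edge cases where $d = \omega$ (which forces $e = \omega$ or leaves $e < \omega$ as a boundary case). Here $\truncr{g}{d} = g$ by definition, so part~(i) collapses to $\truncr{g}{e} \isom \truncr{g}{e}$, and part~(ii) reduces to the fact that rigid truncation respects isomorphism. This last fact I would prove directly from the definition: given $\psi\fcolon g \isoto h$, Lemma~\ref{lem:isomCan} and Lemma~\ref{lem:strongDepthPres} show that $\psi$ preserves labels, positions, depth, cyclicity, and acyclic predecessors, so it restricts to a bijection $\tNodes{g}{e} \to \tNodes{h}{e}$ and extends via $n^i \mapsto \psi(n)^i$ to an isomorphism $\truncr{g}{e} \isoto \truncr{h}{e}$.

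For the main range $e \le d < \omega$, part~(i) is a one-step application of Lemma~\ref{lem:lebotTrunc} to the pair $(\truncr{g}{d},\, g)$ at depth $e$: Lemma~\ref{lem:truncLebot} supplies $\truncr{g}{d} \lebotr g$, and Lemma~\ref{lem:truncDepth}(i) supplies the depth hypothesis $\sdepth{\truncr{g}{d}}{\bot} \ge d \ge e$. The lemma then delivers $\truncr{(\truncr{g}{d})}{e} \isom \truncr{g}{e}$ directly.

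For part~(ii) in the same range, I would apply Lemma~\ref{lem:lebotTrunc} a second time, horizontally between the two $d$-truncations. By Corollary~\ref{cor:isomStrong} the assumed isomorphism $\truncr{g}{d} \isom \truncr{h}{d}$ is in particular a rigid $\bot$-homomorphism, so $\truncr{g}{d} \lebotr \truncr{h}{d}$; combined with $\sdepth{\truncr{g}{d}}{\bot} \ge d \ge e$ from Lemma~\ref{lem:truncDepth}(i), Lemma~\ref{lem:lebotTrunc} yields $\truncr{(\truncr{g}{d})}{e} \isom \truncr{(\truncr{h}{d})}{e}$. Chaining this with two instances of part~(i) applied to $g$ and $h$ respectively completes the argument: $\truncr{g}{e} \isom \truncr{(\truncr{g}{d})}{e} \isom \truncr{(\truncr{h}{d})}{e} \isom \truncr{h}{e}$. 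The only point of care anywhere is ensuring that the depth preconditions of Lemma~\ref{lem:lebotTrunc} are met, but because Lemma~\ref{lem:truncDepth}(i) gives exactly the bound $\sdepth{\truncr{g}{d}}{\bot} \ge d$ and we always truncate further at depth $e \le d$, this alignment is immediate.
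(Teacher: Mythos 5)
Your proposal is correct and follows essentially the same route as the paper: part (i) is exactly the paper's argument (Lemma~\ref{lem:truncLebot} plus Lemma~\ref{lem:truncDepth} feeding into Lemma~\ref{lem:lebotTrunc}), and part (ii) uses the same chain $\truncr{g}{e} \isom \truncr{(\truncr{g}{d})}{e} \isom \truncr{(\truncr{h}{d})}{e} \isom \truncr{h}{e}$. The only (harmless) difference is the middle link: the paper justifies $\truncr{(\truncr{g}{d})}{e} \isom \truncr{(\truncr{h}{d})}{e}$ by noting that the truncation construction depends only on the structure of the term graph, whereas you re-derive it via Corollary~\ref{cor:isomStrong} and a second application of Lemma~\ref{lem:lebotTrunc} (and a direct structural argument in the $d=\omega$ case), which is equally valid.
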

\proof
  \def\itema{(\ref{item:smallerTrunc1})}%
  \def\itemb{(\ref{item:smallerTrunc2})}%
  \itema{} For $d = \omega$, this is trivial. Suppose $d <
  \omega$. From Lemma~\ref{lem:truncLebot}, we obtain $\truncr{g}{d}
  \lebotr g$. Moreover, by Lemma~\ref{lem:truncDepth}, we have
  $\sdepth{\truncr{g}{d}}{\bot} \ge d$ and, a fortiori,
  $\sdepth{\truncr{g}{d}}{\bot} \ge e$. Hence, we can employ
  Lemma~\ref{lem:lebotTrunc} to get $\truncr{(\truncr{g}{d})}{e} \isom
  \truncr{g}{e}$.

  \itemb{} Since $\truncr{g}{d} \isom \truncr{h}{d}$, we also have
  $\truncr{(\truncr{g}{d})}{e} \isom \truncr{(\truncr{h}{d})}{e}$, as the
  construction of the truncation only depends on the structure of the
  term graphs. Hence, using \itema{} we can conclude
  \[
  \truncr{g}{e} \isom \truncr{(\truncr{g}{d})}{e} \isom
  \truncr{(\truncr{h}{d})}{e} \isom \truncr{h}{e}.\eqno{\qEd}
  \]

\subsection{Deriving a Metric on Term Graphs}
\label{sec:deriving-metric-term}

We may now define a rigid distance measure on canonical term graphs in
the style of Arnold and Nivat:
\begin{definition}[rigid distance]
  The \emph{rigid similarity} of two term graphs $g,h \in \ictgraphs$,
  written $\similarr{g}{h}$, is the maximum depth at which the rigid
  truncation of both term graphs coincide:
  \[
  \similarr{g}{h} = \max\setcom{d\le\omega}{\truncr{g}{d} \isom
    \truncr{h}{d}}.
  \]
  The \emph{rigid distance} between two term graphs $g,h \in
  \ictgraphs$, written $\ddr(g,h)$ is defined as
  \[
  \ddr(g,h) = 2^{-\similarr{g}{h}}, \text{ where we interpret
    $2^{-\omega}$ as $0$}.
  \]
\end{definition}
\smallskip\noindent
Indeed, the resulting distance forms an ultrametric on the set of
canonical term graphs:
\begin{proposition}[rigid ultrametric]
  The pair $(\ictgraphs,\ddr)$ forms an ultrametric space.
\end{proposition}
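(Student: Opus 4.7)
The plan is to verify the three axioms of an ultrametric — identity, symmetry, and the strong triangle inequality — for $\ddr$. Since $\ddr(g,h) = 2^{-\similarr{g}{h}}$, each axiom translates into a statement about $\similarr{\cdot}{\cdot}$, and I would establish those after first justifying that $\similarr{g}{h}$ is well-defined as a maximum.

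For well-definedness, I would examine the set $S_{g,h} = \setcom{d \le \omega}{\truncr{g}{d} \isom \truncr{h}{d}}$. It is non-empty (it contains $0$) and, by Lemma~\ref{lem:smallerTrunc}(ii), downward-closed in $\omega + 1$. Thus $S_{g,h}$ is either a finite initial segment (which has a maximum) or contains $\omega$ (giving maximum $\omega$). The only problematic case is when $S_{g,h}$ equals the set of all finite ordinals; I would rule this out via the labelled quotient tree representation (Lemma~\ref{lem:occrep}): if $g \ne h$, then by canonicity $g \nisom h$, so the canonical labelled quotient trees of $g$ and $h$ must differ either in labelling at some finite position or in some finite-length aliasing, and such a discrepancy is witnessed by a suitable finite truncation.

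Identity and symmetry are then routine. Symmetry follows because $\isom$ is symmetric, so $S_{g,h} = S_{h,g}$. For identity, $g = h$ immediately gives $\omega \in S_{g,h}$ and hence $\ddr(g,h) = 0$; conversely, $\similarr{g}{h} = \omega$ means $\truncr{g}{\omega} \isom \truncr{h}{\omega}$, that is $g \isom h$, and hence $g = h$ by Proposition~\ref{prop:canon}.

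For the strong triangle inequality, let $g, h, k \in \ictgraphs$ and set $d = \min\set{\similarr{g}{k}, \similarr{k}{h}}$. Applying Lemma~\ref{lem:smallerTrunc}(ii) twice yields $\truncr{g}{d} \isom \truncr{k}{d}$ and $\truncr{k}{d} \isom \truncr{h}{d}$; composing these isomorphisms gives $\truncr{g}{d} \isom \truncr{h}{d}$, so $\similarr{g}{h} \ge d$, which is equivalent to $\ddr(g,h) \le \max\set{\ddr(g,k), \ddr(k,h)}$. The main obstacle is the well-definedness argument — specifically, showing that distinct canonical term graphs disagree at some finite truncation — whereas the remaining steps reduce to composing isomorphisms and invoking earlier results.
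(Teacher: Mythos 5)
Your handling of the three axioms is essentially the paper's own proof: identity via Proposition~\ref{prop:canon}, symmetry directly from the definition, and the strong triangle inequality by lowering the deeper truncation isomorphism to the common depth $d$ using Lemma~\ref{lem:smallerTrunc}(ii) and composing, so that $\similarr{g}{h}\ge d$. The extra well-definedness discussion goes beyond the paper (which takes the maximum for granted); your sketched route does work, since for total term graphs $\truncr{g}{d}$ determines the labelled quotient tree of $g$ on positions of length $<d$ (fringe $\bot$-nodes sit at depth $\ge d$ by Lemma~\ref{lem:truncDepth}), so two distinct canonical term graphs already have non-isomorphic rigid truncations at some finite depth.
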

\begin{proof}
  The identity condition is derived as follows:
  \[
  \ddr(g,h) = 0 \iff \similarr{g}{h} = \omega \iff g \isom h
  \stackrel{\text{Prop.~\ref{prop:canon}}}\iff g = h
  \]
  The symmetry condition is satisfied by the following equational
  reasoning:
  \[
  \ddr(g,h) = 2^{-\similarr{g}{h}} = 2^{-\similarr{h}{g}} = \ddr(h,g)
  \]
  For the strong triangle condition, we have to show that
  \[
  \ddr(g_1,g_3) \le \max\set{\ddr(g_1,g_2), \ddr(g_2, g_3)},
  \]
  which is equivalent to
  \[
  \similarr{g_1}{g_3}  \ge \min\set{\similarr{g_1}{g_2}, \similarr{g_2}{g_3}}.
  \]
  Let $d = \similarr{g_1}{g_2}$ and $e = \similarr{g_2}{g_3}$. By
  symmetry, we can assume w.l.o.g.\ that $d \le e$, i.e.\ $d =
  \min\set{\similarr{g_1}{g_2}, \similarr{g_2}{g_3}}$. By definition
  of rigid similarity, we have both $\truncr{g_1}{d} \isom
  \truncr{g_2}{d}$ and $\truncr{g_2}{e} \isom \truncr{g_3}{e}$. From
  the latter we obtain, by Lemma~\ref{lem:smallerTrunc}, that
  $\truncr{g_2}{d} \isom \truncr{g_3}{d}$. That is, $\truncr{g_1}{d}
  \isom \truncr{g_2}{d} \isom \truncr{g_3}{d}$ which means that
  $\similarr{g_1}{g_3} \ge d$.
\end{proof}

\begin{example}
  \label{ex:convdiv}
  Figures~\ref{fig:gtransRed} and \ref{fig:doubleTransRed} on pages
  \pageref{fig:gtransRed} and \pageref{fig:doubleTransRed},
  respectively, show two sequences of term graphs that are converging
  in the metric space $(\ictgraphs,\ddr)$. In the sequence
  $(h_i)_{i<\omega}$ from Figure~\ref{fig:gtransRed}, we have that the
  rigid truncation at $0$ is trivially $\bot$ for all term graphs in
  the sequence. From $h_1$ onwards, the rigid truncation at $1$ is the
  term tree $\bot\cons\bot$; from $h_2$ onwards, the rigid truncation
  at $2$ is the term tree $b\cons\bot\cons\bot$; etc. Hence, for each
  $n<\omega$, the metric distance $\ddr(h_i,h_j)$ between two term
  graphs from $h_n$ onwards, i.e.\ with $n \le i, j < \omega$, is at
  most $2^{-n}$. That is, the sequence $(h_i)_{i<\omega}$ is
  Cauchy. Even more, for the term tree $h_\omega=b\cons b\cons b\cons
  \dots$ depicted in Figure~\ref{fig:gtransRed} we also have that
  $\truncr{h_\omega}{0} = \bot$, $\truncr{h_\omega}{1} =
  \bot\cons\bot$, $\truncr{h_\omega}{2} = b\cons\bot\cons\bot$,
  etc. Hence, for each $n<\omega$, the metric distance
  $\ddr(h_n,h_\omega)$ is at most $2^{-n}$. That is, the sequence
  $(h_i)_{i<\omega}$ converges to $h_\omega$. In a similar fashion,
  the sequence depicted in Figure~\ref{fig:doubleTransRed} converges
  as well.

  Figure~\ref{fig:doubleTransRedDiv} shows a sequence
  $(g_i)_{i<\omega}$ of term graphs that does not converge. In
  fact, it is not even Cauchy. To see this, notice that the $c$-node
  is at depth $1$ in $g_0$ and at depth $2$ from $g_1$ onwards. As in
  each term graph $g_i$ the $c$-node is reachable from any node in
  $g_i$ without forming a cycle, we have that each node is an
  acyclic ancestor of the $c$-node. That is, whenever the $c$-node is
  retained by a rigid truncation, so is any other node. Consequently,
  we have that $\truncr{g_i}{d} = g_i$ for each $i<\omega$
  and $d > 2$. Hence, the metric distance $\ddr(g_i,g_j)$
  between each pair of term graphs with $i\neq j$ is at least
  $2^{-2}$. That is, $(g_i)_{i<\omega}$ is not Cauchy.
\end{example}

Since we defined the metric on term graphs in the same style as Arnold
and Nivat~\cite{arnold80fi} defined the partial order $\dd$ on terms,
we can use the correspondence between the rigid truncation and the
truncation on terms in order to derive that the metric $\ddr$
generalises the metric $\dd$ on terms.
\begin{corollary}
  \label{cor:ddrGeneralise}
  For all $s,t\in \iterms$, we have that $\ddr(s,t) = \dd(s,t)$.
\end{corollary}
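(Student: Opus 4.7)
The plan is to reduce the claim to the already-established correspondence between the two truncation operations and the fact that isomorphism of canonical term trees coincides with equality. The definitions of $\dd$ and $\ddr$ are parallel: both are of the form $2^{-\mathrm{sim}}$, where the similarity measure is the largest depth up to which truncations agree. So it suffices to show $\similarr{s}{t} = \similar{s}{t}$ for all $s,t\in\iterms$ (viewed as canonical term trees in $\ictgraphs$).

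First I would invoke the characterisation of $\similar{s}{t}$ via truncations that the paper recalls immediately before Definition~\ref{def:truncGraph}, namely
\[
\similar{s}{t} = \max\setcom{d\le\omega}{\trunc{s}{d} = \trunc{t}{d}}.
\]
Next, by Proposition~\ref{prop:truncTruncr}, for every $d\le\omega$ we have $\truncr{s}{d} \isom \trunc{s}{d}$ and $\truncr{t}{d} \isom \trunc{t}{d}$. Since the simple truncation $\trunc{\cdot}{d}$ of a term is again a term (i.e.\ a canonical term tree), and since, by Proposition~\ref{prop:canon}, two canonical term graphs are isomorphic iff they are equal, we obtain
\[
\trunc{s}{d} = \trunc{t}{d} \;\iff\; \trunc{s}{d} \isom \trunc{t}{d} \;\iff\; \truncr{s}{d} \isom \truncr{t}{d}.
\]
The last step uses transitivity of $\isom$ together with the two isomorphisms furnished by Proposition~\ref{prop:truncTruncr}.

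Combining these equivalences gives
\[
\similarr{s}{t} = \max\setcom{d\le\omega}{\truncr{s}{d}\isom\truncr{t}{d}} = \max\setcom{d\le\omega}{\trunc{s}{d}=\trunc{t}{d}} = \similar{s}{t},
\]
whence $\ddr(s,t) = 2^{-\similarr{s}{t}} = 2^{-\similar{s}{t}} = \dd(s,t)$. There is no real obstacle here; the entire argument is an assembly of Proposition~\ref{prop:truncTruncr}, Proposition~\ref{prop:canon}, and the truncation-based reformulation of $\similar{\cdot}{\cdot}$. The only point requiring a moment's care is to make the identification of terms with canonical term trees explicit, so that applying $\trunc{\cdot}{d}$ in the term sense and in the canonical term tree sense yields the same object.
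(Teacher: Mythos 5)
Your proposal is correct and follows essentially the same route as the paper, whose proof simply cites Proposition~\ref{prop:truncTruncr}; you merely spell out the intermediate steps (the truncation-based characterisation of $\similar{\cdot}{\cdot}$ and the identification of isomorphism with equality for canonical term trees via Proposition~\ref{prop:canon}) that the paper leaves implicit.
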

\begin{proof}
  Follows from Proposition~\ref{prop:truncTruncr}.
\end{proof}

From the above observation, we obtain that convergence in the metric
space $(\ictgraphs,\ddr)$ is a conservative extension of convergence
in the metric space $(\iterms,\dd)$:
\begin{proposition}
  \label{prop:limGeneralise}
  Every non-empty sequence over $\iterms$ converges to $t$ in
  $(\ictgraphs,\ddr)$ iff it converges to $t$ in $(\iterms,\dd)$.
\end{proposition}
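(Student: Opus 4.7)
The plan is to leverage two ingredients already in place: Corollary~\ref{cor:ddrGeneralise}, which asserts $\ddr(s,t)=\dd(s,t)$ whenever $s,t\in\iterms$, and the known completeness of the ultrametric space $(\iterms,\dd)$ due to Arnold and Nivat~\cite{arnold80fi}. Together these should reduce the proof to a short metric argument, bypassing any direct structural analysis of the purported limit term graph.

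The implication from $(\iterms,\dd)$-convergence to $(\ictgraphs,\ddr)$-convergence is immediate: the witness $t$ is a term, every $t_\iota$ is a term, and Corollary~\ref{cor:ddrGeneralise} gives $\ddr(t,t_\iota)=\dd(t,t_\iota)$ for every index. Hence any ordinal $\beta<\alpha$ that satisfies the $\varepsilon$-condition in $(\iterms,\dd)$ also satisfies it in $(\ictgraphs,\ddr)$.

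For the converse, suppose $(t_\iota)_{\iota<\alpha}$ converges to some $t\in\ictgraphs$ in $(\ictgraphs,\ddr)$. The sequence is then Cauchy in $(\ictgraphs,\ddr)$, and because every $t_\iota$ lies in $\iterms$, Corollary~\ref{cor:ddrGeneralise} shows it is Cauchy in $(\iterms,\dd)$ with identical witnesses. Completeness of $(\iterms,\dd)$ then yields a limit $t'\in\iterms$ there, and the already-established easy direction lifts this to $(t_\iota)\to t'$ in $(\ictgraphs,\ddr)$ as well. Uniqueness of metric limits now forces $t=t'$, whence $t\in\iterms$ and $(t_\iota)\to t$ in $(\iterms,\dd)$, as required.

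The only real subtlety is confirming that the $\ictgraphs$-limit must actually be a term rather than a genuine term graph; routing the argument through Cauchy-ness and completeness of $(\iterms,\dd)$ absorbs that point and sidesteps a more tedious structural analysis—such as showing that any sharing or cycle present in $t$ would manifest in some rigid truncation $\truncr{t}{d}$ and hence obstruct the isomorphism with $\truncr{t_\iota}{d}$ that the metric forces for all sufficiently late $\iota$.
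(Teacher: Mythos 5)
Your proof is correct and follows essentially the same route as the paper's own argument: the easy direction via Corollary~\ref{cor:ddrGeneralise}, and the converse by passing through Cauchy-ness, completeness of $(\iterms,\dd)$, the easy direction, and uniqueness of metric limits. No gaps beyond what the paper itself leaves implicit.
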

\begin{proof}
  The ``if'' direction follows immediately from
  Corollary~\ref{cor:ddrGeneralise}. 

  For the ``iff'' direction we assume a sequence $S$ over $\iterms$
  that converges to $t$ in $(\ictgraphs,\ddr)$. Consequently, $S$ is
  also Cauchy in $(\ictgraphs,\ddr)$. Due to
  Corollary~\ref{cor:ddrGeneralise}, $S$ is then also Cauchy in
  $(\iterms,\dd)$. Since $(\iterms,\dd)$ is complete, $S$ converges to
  some term $t'$ in $(\iterms,\dd)$. Using the ``if'' direction of
  this proposition, we then obtain that $S$ converges to $t'$ in
  $(\ictgraphs,\ddr)$. Since limits are unique in metric spaces, we
  can conclude that $t = t'$.
\end{proof}

\section{Metric vs. Partial Order Convergence}
\label{sec:metric-vs.-partial}

In this section we study both the partially ordered set
$(\ipctgraphs,\lebotr)$ and the metric space $(\ictgraphs,\ddr)$. In
particular, we are interested in the notion of convergence that each
of the two structures provides.  We shall show that on total term
graphs -- i.e.\ in $\ictgraphs$ -- both structures yield the same
notion of convergence. That is, we obtain the same correspondence that
we already know from infinitary term rewriting as stated in
Theorem~\ref{thr:strongExt}. Moreover, as a side product, this finding
will also show the completeness of the metric space
$(\ictgraphs,\ddr)$.

The cornerstone of this comparison of the rigid metric $\ddr$ and the
rigid partial order $\lebotr$ is the following characterisation of the
rigid similarity $\similarr{\cdot}{\cdot}$ in terms of greatest lower
bounds in $(\ipctgraphs,\lebotr)$:
\begin{proposition}[characterisation of rigid similarity]
  \label{prop:similarGlb}
  Let $g,h \in \ictgraphs$ and $g \glb h$ the greatest lower bound of
  $g$ and $h$ in $(\ipctgraphs,\lebotr)$. Then $\similarr{g}{h} =
  \sdepth{g \glb h}{\bot}$.%
\end{proposition}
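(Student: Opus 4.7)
The plan is to prove the equality by establishing the two inequalities $\similarr{g}{h} \ge \sdepth{g \glb h}{\bot}$ and $\similarr{g}{h} \le \sdepth{g \glb h}{\bot}$ separately, relying on the interplay between $\lebotr$ and rigid truncation developed in Lemmas~\ref{lem:truncLebot}, \ref{lem:truncDepth}, \ref{lem:DhomDeltaDepth} and \ref{lem:lebotTrunc}.

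For $\similarr{g}{h} \ge \sdepth{g \glb h}{\bot}$, let $e = \sdepth{g \glb h}{\bot}$ and pick any $d < \omega$ with $d \le e$. Since $g \glb h \lebotr g$ and $\sdepth{g \glb h}{\bot} \ge d$, Lemma~\ref{lem:lebotTrunc} yields $\truncr{(g \glb h)}{d} \isom \truncr{g}{d}$, and symmetrically $\truncr{(g \glb h)}{d} \isom \truncr{h}{d}$. Hence $\truncr{g}{d} \isom \truncr{h}{d}$, so $\similarr{g}{h} \ge d$. Taking the supremum over all admissible $d$ gives $\similarr{g}{h} \ge e$ in both the finite and the $e = \omega$ case.

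For the converse inequality, set $d = \similarr{g}{h}$. If $d = \omega$, then $\truncr{g}{\omega} \isom \truncr{h}{\omega}$, i.e.\ $g \isom h$, and by Proposition~\ref{prop:canon} we have $g = h$; the glb $g \glb h$ equals $g$, which is total, so $\sdepth{g \glb h}{\bot} = \omega$ as required. Otherwise $d < \omega$ and $\truncr{g}{d} \isom \truncr{h}{d}$. By Lemma~\ref{lem:truncLebot}, $\truncr{g}{d} \lebotr g$, and composing the rigid $\bot$-homomorphism $\truncr{h}{d} \homto_\bot h$ with the isomorphism $\truncr{g}{d} \isoto_\bot \truncr{h}{d}$ (which is rigid by Corollary~\ref{cor:isomStrong} and the subcategory property of Proposition~\ref{prop:catStrongD-hom}) shows that the canonical representative of $\truncr{g}{d}$ is also $\lebotr h$. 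Hence it is a common lower bound of $g$ and $h$ in $(\ipctgraphs,\lebotr)$, and therefore $\truncr{g}{d} \lebotr g \glb h$. Applying Lemma~\ref{lem:DhomDeltaDepth} to the resulting $\bot$-homomorphism gives $\sdepth{\truncr{g}{d}}{\bot} \le \sdepth{g \glb h}{\bot}$, and Lemma~\ref{lem:truncDepth}(i) provides $\sdepth{\truncr{g}{d}}{\bot} \ge d$. Combined we obtain $\sdepth{g \glb h}{\bot} \ge d = \similarr{g}{h}$.

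The bookkeeping around canonical representatives is the only subtle point: the construction of $\truncr{g}{d}$ in Definition~\ref{def:truncGraph} does not automatically yield a canonical term graph, and $g \glb h$ is defined as an element of $\ipctgraphs$, so throughout the argument one silently passes to canonical representatives via $\calC(\cdot)$ (justified by Proposition~\ref{prop:canon}). Everything else is a direct assembly of the four cited lemmas, with the main conceptual observation being that the $\bot$-depth of $g \glb h$ exactly records how deep into $g$ and $h$ the two term graphs remain rigidly identifiable, which is precisely what rigid similarity measures.
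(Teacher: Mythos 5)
Your proposal is correct and takes essentially the same route as the paper's own proof: the same two inequalities, derived from Lemmas~\ref{lem:lebotTrunc}, \ref{lem:truncLebot}, \ref{lem:truncDepth} and \ref{lem:DhomDeltaDepth}, with the same canonical-representative bookkeeping. The only (cosmetic) difference is that the paper first case-splits on $g = h$ and uses Proposition~\ref{prop:nonPartMax} to ensure both quantities are finite when $g \neq h$, whereas you absorb the $\omega$ cases into the two inequality arguments (a supremum argument for one direction, and $g \isom h \Rightarrow g = h$ for the other).
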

\begin{proof}
  At first assume that $g = h$. Hence, $g \glb h = g$ and,
  consequently $\sdepth{g \glb h}{\bot} = \omega$ as $g$ does not
  contain any node labelled $\bot$. On the other hand, $g = h$ implies
  $\truncr{g}{\omega} \isom \truncr{h}{\omega}$, and, therefore,
  $\similarr{g}{h} = \omega$. If $g \neq h$, then $g \nisom h$ by
  Proposition~\ref{prop:canon}. Hence, $\similarr{g}{h} <
  \omega$. Moreover, since $g \nisom h$, we know that $g \glb h$ has
  to be strictly smaller than $g$ or $h$ w.r.t.\ $\lebotr$. Hence,
  according to Proposition~\ref{prop:nonPartMax}, $g \glb h$ has to
  contain some node labelled $\bot$, i.e.\ $\sdepth{g \glb h}{\bot} <
  \omega$ as well. We prove that $\sdepth{g \glb h}{\bot} =
  \similarr{g}{h}$ by showing that both $\sdepth{g \glb h}{\bot} \le
  \similarr{g}{h}$ and $\sdepth{g \glb h}{\bot} \ge \similarr{g}{h}$
  hold.

  In order to show the former, let $d = \sdepth{g \glb
    h}{\bot}$. Since $g \glb h \lebotr g, h$, we can apply
  Lemma~\ref{lem:lebotTrunc} twice in order to obtain $\truncr{g}{d}
  \isom \truncr{(g \glb h)}{d} \isom \truncr{h}{d}$. Hence,
  $\similarr{g}{h} \ge d$.

  To show the converse direction, let $d = \similarr{g}{h}$, i.e.\
  $\truncr{g}{d} \isom \truncr{h}{d}$. According to
  Lemma~\ref{lem:truncLebot}, we have both $\truncr{g}{d} \lebotr g$
  and $\truncr{h}{d} \lebotr h$. Note that, for the canonical
  representation, we then have $\canon{\truncr{g}{d}} =
  \canon{\truncr{h}{d}}$, $\canon{\truncr{g}{d}} \lebotr g$ and
  $\canon{\truncr{h}{d}} \lebotr h$ (cf.\ Proposition~\ref{prop:canon}
  respectively Remark~\ref{rem:lebot}). That is, $\canon{\truncr{g}{d}}$ is
  a lower bound of $g$ and $h$. Thus, $\canon{\truncr{g}{d}} \lebotr g
  \glb h$ and we can reason as follows:
  \begin{align*}
    d &\le \sdepth{\truncr{g}{d}}{\bot} \tag{Lem.~\ref{lem:truncDepth}} \\
    &= \sdepth{\canon{\truncr{g}{d}}}{\bot}
    \tag{Lem.~\ref{lem:strongDepthPres}, Cor.~\ref{cor:isomStrong}} \\
    &\le \sdepth{g \glb h}{\bot} \tag{$\canon{\truncr{g}{d}} \lebotr g
      \glb h$, Lem.~\ref{lem:DhomDeltaDepth}}
  \end{align*}
\end{proof}

\begin{remark}
  From now on, we are not dealing with the concrete construction of
  rigid truncations $\truncr{g}{d}$ anymore. Therefore, we will rather
  use the canonical representation $\canon{\truncr{g}{d}}$ of
  $\truncr{g}{d}$. In order to avoid the notational overhead, we write
  $\truncr{g}{d}$ instead of $\canon{\truncr{g}{d}}$.
\end{remark}

In the next step we show that the metric space $(\ictgraphs,\ddr)$ is
indeed complete. The following proposition states even more: the limit
of Cauchy sequences in the metric space equals its limit inferior in
the partially ordered set $(\ipctgraphs,\lebotr)$:
\begin{proposition}[metric limit = limit inferior]
  \label{prop:graphLimLimInf}
  Let $(g_\iota)_{\iota<\alpha}$ be a non-empty Cauchy sequence in the
  metric space $(\ictgraphs,\ddr)$ and $\liminf_{\iota \limto \alpha}
  g_\iota$ its limit inferior in $(\ipctgraphs,\lebotr)$. Then
  $\lim_{\iota \limto \alpha}g_\iota = \liminf_{\iota \limto \alpha}
  g_\iota$.
\end{proposition}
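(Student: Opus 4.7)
The plan is to use Proposition~\ref{prop:similarGlb} as the bridge between the metric and the partial-order structure: rigid similarity equals the $\bot$-depth of the greatest lower bound. Set $h_\beta := \Glb_{\beta \le \iota < \alpha} g_\iota$, so that $g := \liminf_{\iota \limto \alpha} g_\iota = \Lub_{\beta < \alpha} h_\beta$ in the complete semilattice $(\ipctgraphs,\lebotr)$. The key intermediate step is to show that $\sdepth{h_\beta}{\bot}$ grows to $\omega$ along $\beta$; precisely, for each $d < \omega$ there exists $\beta_d < \alpha$ such that $\sdepth{h_\beta}{\bot} \ge d$ for all $\beta_d \le \beta < \alpha$.

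To establish this claim, I fix $d < \omega$ and use the Cauchy property of $(g_\iota)$ to pick $\beta_d$ with $\similarr{g_\iota}{g_{\iota'}} \ge d$ for all $\beta_d < \iota, \iota' < \alpha$, i.e.\ $\truncr{g_\iota}{d} \isom \truncr{g_{\iota'}}{d}$. For any $\beta$ with $\beta_d < \beta < \alpha$ and $\iota \ge \beta$, we have $\truncr{g_\beta}{d} \isom \truncr{g_\iota}{d}$ (the case $\iota = \beta$ is trivial), and since $\truncr{g_\iota}{d} \lebotr g_\iota$ by Lemma~\ref{lem:truncLebot}, passing to the canonical representative gives $\truncr{g_\beta}{d} \lebotr g_\iota$. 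Thus $\truncr{g_\beta}{d}$ is a lower bound of the family $\{g_\iota : \beta \le \iota < \alpha\}$, so $\truncr{g_\beta}{d} \lebotr h_\beta$. Combining Lemma~\ref{lem:truncDepth}, which gives $\sdepth{\truncr{g_\beta}{d}}{\bot} \ge d$, with Lemma~\ref{lem:DhomDeltaDepth} yields $\sdepth{h_\beta}{\bot} \ge d$, as required.

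Armed with this, the convergence follows. For fixed $d < \omega$ and $\beta > \beta_d$, both $h_\beta \lebotr g$ and $h_\beta \lebotr g_\iota$ (for $\iota \ge \beta$) hold, with $\sdepth{h_\beta}{\bot} \ge d$. Two applications of Lemma~\ref{lem:lebotTrunc} give $\truncr{g}{d} \isom \truncr{h_\beta}{d} \isom \truncr{g_\iota}{d}$, hence $\similarr{g}{g_\iota} \ge d$ and $\ddr(g,g_\iota) \le 2^{-d}$. Since $d$ is arbitrary, $(g_\iota)$ converges to $g$ in $(\ictgraphs,\ddr)$, provided $g \in \ictgraphs$.

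To justify the latter, suppose $g$ contained a $\bot$-node at some depth $d_0 < \omega$. By clause (T1) of Definition~\ref{def:truncGraph}, that node would be retained, and hence remain $\bot$-labelled, in $\truncr{g}{d_0+1}$, giving $\sdepth{\truncr{g}{d_0+1}}{\bot} \le d_0$. But the previous paragraph provides $\truncr{g}{d_0+1} \isom \truncr{g_\iota}{d_0+1}$ for suitably large $\iota$, and since $g_\iota$ is total, Lemma~\ref{lem:truncDepth} forces $\sdepth{\truncr{g_\iota}{d_0+1}}{\bot} \ge d_0+1$, a contradiction. The main obstacle is the lower bound on $\sdepth{h_\beta}{\bot}$: while the upper bound via $h_\beta \lebotr g_\iota \glb g_{\iota'}$ is an immediate corollary of Proposition~\ref{prop:similarGlb}, the lower bound requires exhibiting the concrete witness $\truncr{g_\beta}{d}$ as a common lower bound of the entire tail $\{g_\iota : \iota \ge \beta\}$.
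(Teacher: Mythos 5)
Your proposal is correct and follows essentially the same route as the paper: the Cauchy property makes the truncations $\truncr{g_\beta}{d}$ lower bounds of the tails, which are then related to the limit inferior via Lemmas~\ref{lem:truncLebot}, \ref{lem:truncDepth}, \ref{lem:DhomDeltaDepth}, \ref{lem:smallerTrunc} and \ref{lem:lebotTrunc}; the paper pivots directly on $\truncr{g_\beta}{d} \lebotr \ol g$ whereas you pivot on the glb $h_\beta$ and its $\bot$-depth, and its totality argument uses Lemma~\ref{lem:DhomDeltaDepth} where you argue via the $\bot$-node surviving in $\truncr{g}{d_0+1}$ -- both cosmetic differences. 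The only spot worth tightening is your claim $\sdepth{\truncr{g}{d_0+1}}{\bot} \le d_0$, which besides (T1) needs the (easy) observation that a shortest position of the $\bot$-node passes only through retained nodes of depth $< d_0$ and is not redirected to fringe nodes, so that node still has depth at most $d_0$ after truncation.
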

\begin{proof}
  \def\claima{(\ref{eq:metricGraphCompla})} If $\alpha$ is a successor
  ordinal, this is trivial, as the limit and the limit inferior are
  then $g_{\alpha-1}$. Assume that $\alpha$ is a limit ordinal and let
  $\ol g$ be the limit inferior of $(g_\iota)_{\iota<\alpha}$. Since,
  according to Theorem~\ref{thm:graphBcpo}, $(\ipctgraphs,\lebotr)$ is
  a complete semilattice, $\ol g$ is well-defined. Since
  $(g_\iota)_{\iota < \alpha}$ is Cauchy, we obtain that, for each
  $\epsilon \in \realsp$, there is a $\beta < \alpha$ such that we
  have $\ddr(g_\iota,g_{\iota'}) < \epsilon$ for all $\beta \le
  \iota,\iota' < \alpha$. A fortiori, we get that, for each $\epsilon
  \in \realsp$, there is a $\beta < \alpha$ such that we have
  $\ddr(g_\beta,g_\iota) < \epsilon$ for all $\beta \le \iota <
  \alpha$. By definition of $\ddr$, this is equivalent to
  $2^{-\similarr{g_\beta}{g_\iota}} < \epsilon$. Consequently, we
  have, for each $d < \omega$, a $\beta < \alpha$ such that
  $\similarr{g_\beta}{g_\iota} > d$ for all $\beta \le \iota <
  \alpha$. Due to Lemma~\ref{lem:smallerTrunc},
  $\similarr{g_\beta}{g_\iota} > d$ implies $\truncr{g_\beta}{d} =
  \truncr{g_\iota}{d}$ which in turn implies $\truncr{g_\beta}{d}
  \lebotr g_\iota$, according to Lemma~\ref{lem:truncLebot}. Hence,
  $\truncr{g_\beta}{d}$ is a lower bound for $G_\beta =
  \setcom{g_\iota}{\beta \le \iota < \alpha}$, i.e.\
  $\truncr{g_\beta}{d} \lebotr \Glb G_\beta$. Moreover, by the
  definition of the limit inferior, it holds that $\Glb G_\beta
  \lebotr \ol g$. Consequently, $\truncr{g_\beta}{d} \lebotr \ol g$,
  i.e.\ we have
  \begin{gather*}
    \forall d < \omega \exists \beta < \alpha\fcolon \quad \truncr{g_\beta}{d} \lebotr
    \ol g
    \tag{1} \label{eq:metricGraphCompla}
  \end{gather*}
  Since, by Lemma~\ref{lem:truncDepth}, we have
  $\sdepth{\truncr{g_\beta}{d}}{\bot} \ge d$, we can apply
  Lemma~\ref{lem:lebotTrunc} to obtain $\truncr{(\truncr{g_\beta}{d})}{d}
  \isom \truncr{\ol g}{d}$. Hence, by
  Lemma~\ref{lem:smallerTrunc}, we have $\truncr{g_\beta}{d} \isom
  \truncr{\ol g}{d}$ and therefore $\similarr{\ol g}{g_\beta} \ge
  d$. That is, we have shown that
  \begin{gather*}
    \forall d < \omega \exists \beta < \alpha\fcolon \quad \similarr{\ol g}{g_\beta}
    \ge d 
  \end{gather*}
  Since, for each $\epsilon \in \realsp$, we find a $d < \omega$ with
  $2^{-d} < \epsilon$, this implies
  \begin{gather*}
    \forall \epsilon \in \realsp \exists \beta < \alpha\fcolon \quad
    \ddr(\ol g,g_\beta) < \epsilon
  \end{gather*}
  This shows that $(g_\iota)_{i<\alpha}$ converges to $\ol g$. Now it
  remains to be shown that $\ol g$ is indeed in $\ictgraphs$, i.e.\ it
  does not contain any node labelled $\bot$. Suppose that $\ol g$ does
  contain a node labelled with $\bot$. Then $\sdepth{\ol g}{\bot} <
  \omega$. Let $d = \sdepth{\ol g}{\bot} + 1$. By \claima{}, there is
  a $\beta$ with $\truncr{g_\beta}{d} \lebotr \ol g$. By applying
  Lemma~\ref{lem:truncDepth} and Lemma~\ref{lem:DhomDeltaDepth}, we
  then get
  \[
  \sdepth{\ol g}{\bot} + 1 = d \le \sdepth{\truncr{g_\beta}{d}}{\bot}
  \le \sdepth{\ol g}{\bot}.
  \]
  This is a contradiction. Hence, $\ol g$ is indeed in $\ictgraphs$. 
\end{proof}

This result has two obvious but important consequences: firstly, the
limit of a converging sequence in the rigid metric space is equal to
the limit inferior in the rigid complete semilattice. Secondly, the
rigid metric space $(\ictgraphs,\ddr)$ is complete:
\begin{theorem}[completeness of rigid metric]
  \label{thr:metricGraphCompl}
  The metric space $(\ictgraphs,\ddr)$ is complete.
\end{theorem}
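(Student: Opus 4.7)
The plan is that this theorem follows almost immediately from Proposition~\ref{prop:graphLimLimInf}, which has already done essentially all of the work. Completeness of a metric space, as defined in Section~\ref{sec:metric-spaces}, requires that every non-empty Cauchy sequence converges. So I would begin by fixing an arbitrary non-empty Cauchy sequence $(g_\iota)_{\iota<\alpha}$ in $(\ictgraphs,\ddr)$.

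Next, I would invoke Theorem~\ref{thm:graphBcpo} to observe that the ambient structure $(\ipctgraphs,\lebotr)$ is a complete semilattice, so the limit inferior $\liminf_{\iota\limto\alpha} g_\iota$ exists in $\ipctgraphs$. Then the content of Proposition~\ref{prop:graphLimLimInf} is exactly that this limit inferior lies in $\ictgraphs$ (not merely in $\ipctgraphs$) and serves as the metric limit of $(g_\iota)_{\iota<\alpha}$, i.e.\ $\lim_{\iota\limto\alpha} g_\iota = \liminf_{\iota\limto\alpha} g_\iota \in \ictgraphs$. Since this argument applies to any non-empty Cauchy sequence in $(\ictgraphs,\ddr)$, every such sequence converges in $(\ictgraphs,\ddr)$, establishing completeness.

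There is no real obstacle here; the intended proof is a one-line appeal to Proposition~\ref{prop:graphLimLimInf}. The only subtlety worth mentioning in the write-up is the distinction between $\ictgraphs$ and $\ipctgraphs$: the limit inferior is a priori only guaranteed to exist in the larger, partial, setting, but the proposition ensures it is total, which is what makes the Cauchy sequence converge \emph{within} the space $(\ictgraphs,\ddr)$ rather than in some extension of it.
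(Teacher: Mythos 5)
Your proposal is correct and matches the paper's own proof, which is exactly a one-line appeal to Proposition~\ref{prop:graphLimLimInf} (whose proof already establishes that the limit inferior is total and is the metric limit of the Cauchy sequence). The extra remark about the distinction between $\ipctgraphs$ and $\ictgraphs$ is accurate but adds nothing beyond what the proposition already guarantees.
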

\begin{proof}
  Immediate consequence of Proposition~\ref{prop:graphLimLimInf}.
\end{proof}

In the following proposition, we show the converse direction of the
relation between the limits of the rigid metric and the limit
inferiors of the rigid partial order:
\begin{proposition}[total limit inferior = limit]
  \label{prop:graphLimInfLim}
  Let $(g_\iota)_{\iota < \alpha}$ be a non-empty sequence in
  $(\ictgraphs,\ddr)$ and $\liminf_{\iota \limto \alpha} g_\iota$ its
  limit inferior in $(\ipctgraphs,\lebotr)$. If $\liminf_{\iota \limto
    \alpha} g_\iota \in \ictgraphs$, then $\liminf_{\iota \limto
    \alpha} g_\iota = \lim_{\iota \limto \alpha} g_\iota$.
\end{proposition}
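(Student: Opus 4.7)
The plan is as follows. Write $\ol g = \liminf_{\iota \limto \alpha} g_\iota$, which by hypothesis lies in $\ictgraphs$. The successor case is trivial, so assume $\alpha$ is a limit ordinal, and set $h_\beta = \Glb_{\beta \le \iota < \alpha} g_\iota$, so that $(h_\beta)_{\beta < \alpha}$ is $\lebotr$-monotonically increasing with $\Lub_{\beta<\alpha} h_\beta = \ol g$ (Theorem~\ref{thm:graphBcpo}). My aim is to prove that for every $d < \omega$ there is some $\beta < \alpha$ with $\sdepth{h_\beta}{\bot} \ge d$; once that is established, Lemma~\ref{lem:lebotTrunc} applied to the pair $h_\beta \lebotr g_\iota$ and to the pair $h_\beta \lebotr \ol g$ yields $\truncr{g_\iota}{d} \isom \truncr{h_\beta}{d} \isom \truncr{\ol g}{d}$ for all $\beta \le \iota < \alpha$, hence $\similarr{\ol g}{g_\iota} \ge d$ and $\ddr(\ol g, g_\iota) \le 2^{-d}$. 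Letting $d$ range then gives $\lim_{\iota\limto\alpha} g_\iota = \ol g$.

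For the main step I would exploit the explicit construction of the lub in Theorem~\ref{thm:graphCpo}. Totality of $\ol g$ means that every position $\pi \in \pos{\ol g}$ satisfies $\ol g(\pi) \in \Sigma$, so by the lub construction there is a $\beta_\pi < \alpha$ with $h_{\beta_\pi}(\pi) = \ol g(\pi)$; by $\lebotr$-monotonicity and Corollary~\ref{cor:chaTgraphPo}(c) this label then persists, i.e.\ $h_\beta(\pi) = \ol g(\pi) \in \Sigma$ for all $\beta \ge \beta_\pi$. The key observation, and what I expect to be the main obstacle, is that the set $P_d = \setcom{\pi \in \pos{\ol g}}{\len{\pi} < d}$ is finite. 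Although $\Sigma$ may admit symbols of unbounded arity and $\ol g$ may be infinite, each node at depth $k < d - 1$ has a fixed finite arity, so an easy induction on $k$ shows that the number of paths of length $k$ from the root is finite; hence $P_d$ is finite. Using that $\alpha$ is a limit ordinal, I can then pick a common $\beta^* < \alpha$ with $\beta^* \ge \beta_\pi$ for every $\pi \in P_d$.

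It remains to convert this into the required bound on $\sdepth{h_{\beta^*}}{\bot}$. Since $h_{\beta^*} \lebotr \ol g$, Lemma~\ref{lem:canhom} gives $\pos{h_{\beta^*}} \subseteq \pos{\ol g}$, so every position in $h_{\beta^*}$ of length less than $d$ lies in $P_d$ and therefore carries a non-$\bot$ label; equivalently, no node in $h_{\beta^*}$ at depth less than $d$ is labelled $\bot$, which is exactly $\sdepth{h_{\beta^*}}{\bot} \ge d$. Combined with the argument of the first paragraph this completes the proof. The conceptually subtle points are the finiteness of $P_d$ and the persistence-of-information argument along $(h_\beta)$; everything else is a mechanical combination of Lemmas~\ref{lem:canhom}, \ref{lem:lebotTrunc} and the characterisations in Corollary~\ref{cor:chaTgraphPo}.
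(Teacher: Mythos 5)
Your proposal is correct, and its core is the same as the paper's: both proofs hinge on the chain $h_\beta = \Glb_{\beta\le\iota<\alpha} g_\iota$, the observation (via the lub construction of Theorem~\ref{thm:graphCpo}, persistence of $\Sigma$-labels along $\lebotr$ from Corollary~\ref{cor:chaTgraphPo}, and the finiteness of the set of positions of bounded length, which indeed requires finite arities) that for every $d<\omega$ there is $\beta<\alpha$ with $\sdepth{h_\beta}{\bot}\ge d$. You differ only in the endgame: the paper converts this bound into Cauchyness of $(g_\iota)_{\iota<\alpha}$ via Proposition~\ref{prop:similarGlb} and Lemma~\ref{lem:DhomDeltaDepth} and then appeals to Proposition~\ref{prop:graphLimLimInf} to identify the metric limit with the limit inferior, whereas you apply Lemma~\ref{lem:lebotTrunc} twice ($h_\beta\lebotr g_\iota$ and $h_\beta\lebotr \liminf_{\iota\limto\alpha}g_\iota$) to bound $\ddr(\liminf_{\iota\limto\alpha}g_\iota,\,g_\iota)$ directly, which is slightly more self-contained but otherwise buys nothing different; both routes are sound.
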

\begin{proof}
  \def\claima{(\ref{eq:graphLimInfLim1})}
  \def\claimb{(\ref{eq:graphLimInfLim2})}
  \def\claimc{(\ref{eq:graphLimInfLim3})}
  If $\alpha$ is a successor ordinal, then both the limit and the
  limit inferior are equal to $g_{\alpha-1}$. Let $\alpha$ be a limit
  ordinal. According to Proposition~\ref{prop:graphLimLimInf}, in
  order to show that $(g_\iota)_{\iota < \alpha}$ converges and that
  its limit coincides with its limit inferior, it suffices to prove
  that $(g_\iota)_{\iota < \alpha}$ is Cauchy.

  Let $\ol g = \liminf_{\iota \limto \alpha} g_\iota$, and define
  $G_\beta=\setcom{g_\iota}{\beta\le\iota<\alpha}$ and $h_\beta = \Glb
  G_\beta$ for each $\beta< \alpha$. Note that $\ol g = \Lub_{\beta <
    \alpha} h_\beta$. Since $\ol g$ is total, i.e.\ no node in $\ol g$
  is labelled with $\bot$, we have, according to
  Theorem~\ref{thm:graphCpo}, that for each $\pi \in \pos{\ol g}$
  there is some $\beta_\pi < \alpha$ with $h_{\beta_\pi}(\pi) = \ol
  g(\pi)$.

  Note that $(h_\iota)_{\iota<\alpha}$ is monotonic w.r.t.\ $\lebotr$,
  i.e.\ $\iota \le \iota'$ implies $h_\iota \lebotr h_{\iota'}$. Since
  $h_\iota \lebotr h_{\iota'}$ together with $h_\iota(\pi) \in \Sigma$
  implies $h_{\iota'}(\pi)=h_\iota(\pi)$ by
  Corollary~\ref{cor:chaTgraphPo}, we have $h_\gamma(\pi) = \ol
  g(\pi)$ for all $\pi \in \pos{\ol g}$ and $\beta_\pi \le \gamma <
  \alpha$.

  Let $d < \omega$. Since there are only finitely many positions in
  $\pos{\ol g}$ of length smaller than $d$, we know that $\beta_d =
  \max\setcom{\beta_\pi}{\pi \in \pos{\ol g},\len{\pi} < d}$ is a
  well-defined ordinal smaller than $\alpha$. Hence, for all $\pi \in
  \pos{\ol g}$ with $\len{\pi} < d$ we have $h_{\beta_d}(\pi) = \ol
  g(\pi)$. Since $\ol g$ is total, we thus have that
  $\sdepth{h_{\beta_d}}{\bot} \ge d$.

  Since $g_\iota,g_{\iota'} \in G_{\beta_d}$ for each $\beta_d \le
  \iota,\iota' < \alpha$, we have $h_{\beta_d} \lebotr g_\iota,
  g_{\iota'}$ and thus $h_{\beta_d} \lebotr g_\iota \glb
  g_{\iota'}$. Consequently, by Lemma~\ref{lem:DhomDeltaDepth}, we
  have that $\sdepth{g_\iota \glb g_{\iota'}}{\bot} \ge
  \sdepth{h_{\beta_d}}{\bot}$. That is,
  \[
  \text{ $\similarr{g_\iota}{g_{\iota'}}
    \stackrel{Prop.~\ref{prop:similarGlb}}= \sdepth{g_\iota \glb
      g_{\iota'}}{\bot} \ge \sdepth{h_{\beta_d}}{\bot} \ge d$ \qquad
    for each $\beta_d < \iota,\iota' < \alpha$.}
  \]

  Now, let $\epsilon \in \realsp$. We then find some $d < \omega$ with
  $\epsilon > 2^{-d}$. Consequently, we have
  \[
  \text{ $\ddr(g_\iota,g_{\iota'}) =
    2^{-\similarr{g_\iota}{g_{\iota'}}} \le 2^{-d} < \epsilon$ \qquad
    for all $\beta_d \le \iota,\iota' < \alpha$.}
  \]
  Hence, $(g_\iota)_{\iota<\alpha}$ is Cauchy.
\end{proof}

Note that Proposition~\ref{prop:graphLimInfLim} depends on the
finiteness of the arity of the symbols in the signature. (This is used
in the proof above when observing that a term graph has only finitely
many positions of a bounded length.) This restriction also applies to
terms as the following example shows:

\begin{example}
  Let $\Sigma = \set{f/\omega,a/0,b/0}$ and $(t_i)_{i<\omega}$ a
  sequence with
  \begin{align*}
    t_0 &= f(a,a,a,a,a \dots), \\
    t_1 &= f(b,a,a,a,a \dots), \\
    t_2 &= f(b,b,a,a,a \dots), \\
    t_3 &= f(b,b,b,a,a \dots), \\
    &\phantom{= f(b,b,b,}\vdots
  \end{align*}
  $(t_i)_{i<\omega}$ has the limit inferior $f(b,b,b,b,b,\dots)$. On
  the other hand, the sequence is not even Cauchy since, for each
  $i\neq j$, we have $\similarr{t_i}{t_j} = 1$ and, therefore,
  $\ddr(t_i,t_j)=\frac{1}{2}$.
\end{example}

\section{Infinitary Term Graph Rewriting}
\label{sec:infin-term-graph}

In the previous sections, we have constructed and investigated the
necessary metric and partial order structures upon which the
infinitary calculus of term graph rewriting that we shall introduce in
this section is based. After describing the framework of term graph
rewriting that we consider, we will explore two different modes of
convergence on term graphs. In the same way that infinitary term
rewriting is based on the abstract notions of $\mrs$- and
$\prs$-convergence~\cite{bahr10rta}, infinitary term graph rewriting
is an instantiation of these abstract modes of convergence for term
graphs. However, as in the overview of infinitary term rewriting in
Section~\ref{sec:preliminaries}, we restrict ourselves to weak notions
of convergence.

\subsection{Term Graph Rewriting Systems}
\label{sec:graph-rewr-syst}

In this paper, we adopt the term graph rewriting framework of
Barendregt et al.~\cite{barendregt87parle}. In order to represent
placeholders in rewrite rules, this framework uses variables -- in a
manner much similar to term rewrite rules. To this end, we consider a
signature $\Sigma_\calV = \Sigma\uplus\calV$ that extends the
signature $\Sigma$ with a set $\calV$ of nullary variable symbols.
\begin{definition}[term graph rewriting systems]
  \quad
  \begin{enumerate}[(i)]
  \item Given a signature $\Sigma$, a \emph{term graph rule} $\rho$
    over $\Sigma$ is a triple $(g,l,r)$ where $g$ is a graph over
    $\Sigma_\calV$ and $l,r \in N^g$ such that all nodes in $g$ are
    reachable from $l$ or $r$. We write $\lhs\rho$ respectively
    $\rhs\rho$ to denote the left- respectively right-hand side of
    $\rho$, i.e.\ the term graph $\subgraph{g}{l}$ respectively
    $\subgraph{g}{r}$. Additionally, we require that, for each
    variable $v\in\calV$, there is at most one node $n$ in $g$
    labelled $v$ and that $n$ is different but still reachable from
    $l$.
  \item A \emph{term graph rewriting system (GRS)} $\calR$ is a pair
    $(\Sigma,R)$ with $\Sigma$ a signature and $R$ a set of term graph
    rules over $\Sigma$.
  \end{enumerate}
\end{definition}

\noindent
The requirement that the root $l$ of the left-hand side is not
labelled with a variable symbol is analogous to the requirement that
the left-hand side of a term rule is not a variable. Similarly, the
restriction that nodes labelled with variable symbols must be
reachable from the root of the left-hand side corresponds to the
restriction on term rules that every variable occurring on the
right-hand side of a rule must also occur on the left-hand side.

Term graphs can be used to compactly represent terms, which is
formalised by the unravelling operator $\unrav{\cdot}$. We extend this
operator to term graph rules. Figure~\ref{fig:grules} illustrates two
term graph rules that both represent the term rule $a \cons x \to b
\cons a \cons x$ from Example~\ref{ex:termRewr} to which they unravel.
\begin{definition}[unravelling of term graph rules]
  Let $\rho$ be a term graph rule with $\rho_l$ and $\rho_r$ its left-
  respectively right-hand side term graph. The \emph{unravelling} of
  $\rho$, denoted $\unrav{\rho}$ is the term rule $\unrav{\rho_l} \to
  \unrav{\rho_r}$.
\end{definition}

The application of a rewrite rule $\rho$ (with root nodes $l$ and $r$)
to a term graph $g$ is performed in three steps: at first a suitable
sub-term graph of $g$ rooted in some node $n$ of $g$ is \emph{matched}
against the left-hand side of $\rho$. This amounts to finding a
$\calV$-homomorphism $\phi\fcolon \lhs{\rho} \homto_\calV
\subgraph{g}{n}$ from the term graph rooted in $l$ to the sub-term
graph rooted in $n$, the \emph{redex}. The $\calV$-homomorphism $\phi$
allows us to instantiate variables in the rule with sub-term graphs of
the redex. In the second step, nodes and edges in $\rho$ that are not
reachable from $l$ are copied into $g$, such that each edge pointing
to a node $m$ in the term graph rooted in $l$ is redirected to
$\phi(m)$. In the last step, all edges pointing to $n$ are redirected
to (the copy of) $r$ and all nodes not reachable from the root of (the
now modified version of) $g$ are removed.

The formal definition of this construction is given below:
\begin{definition}[application of term graph rewrite rules,
  \cite{barendregt87parle}]
  \label{def:termGraphApp}
  Let $\rho = (N^\rho,\glab^\rho,\gsuc^\rho,l^\rho,r^\rho)$ be a term
  graph rewrite rule in a GRS $\calR = (\Sigma,R)$, $g \in \itgraphs$
  with $N^\rho \cap N^g = \emptyset$ and $n \in N^g$. $\rho$ is called
  \emph{applicable} to $g$ at $n$ if there is a $\calV$-homomorphism
  $\phi\fcolon\rho_l \homto_\calV \subgraph{g}{n}$. $\phi$ is called
  the \emph{matching $\calV$-homomorphism} of the rule application,
  and $\subgraph{g}{n}$ is called a \emph{$\rho$-redex}. Next, we
  define the \emph{result} of the application of the rule $\rho$ to
  $g$ at $n$ using the $\calV$-homomorphism $\phi$. This is done by
  constructing the intermediate graphs $g_1$ and $g_2$, and the final
  result $g_3$.
  \begin{enumerate}[(i)]
  \item The graph $g_1$ is obtained from $g$ by adding the part of
    $\rho$ that is not contained in its left-hand side:
    \begin{align*}
      N^{g_1} &= N^g \uplus (N^\rho \setminus N^{\rho_l})\\
      \glab^{g_1}(m) &= 
      \begin{cases}
        \glab^g(m) & \text{if } m \in N^g\\
        \glab^\rho(m) & \text{if } m \in N^\rho \setminus N^{\rho_l}
      \end{cases}\\
      \gsuc^{g_1}_i(m) &= 
      \begin{cases}
        \gsuc^g_i(m) & \text{if } m \in N^g\\
        \gsuc^\rho_i(m) & \text{if } m, \gsuc^\rho_i(m) \in N^\rho \setminus N^{\rho_l}\\
        \phi(\gsuc^\rho_i(m)) & \text{if } m \in N^\rho \setminus
        N^{\rho_l}, \gsuc^\rho_i(m) \in N^{\rho_l}
      \end{cases}
    \end{align*}
  \item Let $n' = \phi(r^\rho)$ if $r^\rho \in N^{\rho_l}$ and $n'=
    r^\rho$ otherwise. The graph $g_2$ is obtained from $g_1$ by
    redirecting edges ending in $n$ to $n'$:
    \begin{align*}
      N^{g_2} = N^{g_1} \qquad
      \glab^{g_2} = \glab^{g_1}\qquad
      \gsuc^{g_2}_i(m) = 
      \begin{cases}
        \gsuc_i^{g_1}(m) &\text{if } \gsuc^{g_1}_i(m) \neq n\\
        n' &\text{if } \gsuc^{g_1}_i(m) = n
      \end{cases}
    \end{align*}
  \item The term graph $g_3$ is obtained by setting the root node
    $r'$, which is $n'$ if $n = r^g$, and otherwise $r^g$. That is,
    $g_3 = \subgraph{g_2}{r'}$. This also means that all nodes not
    reachable from $r'$ are removed.
  \end{enumerate}
  
\noindent
  This induces a \emph{pre-reduction step} $\psi=(g,n,\rho,n',g_3)$
  from $g$ to $g_3$, written $\psi\fcolon g \preto[n,\rho,n'] g_3$. In
  order to indicate the underlying GRS $\calR$, we also write
  $\psi\fcolon g \preto[\calR] g_3$.
\end{definition}

Examples for term graph (pre-)reduction steps are shown in
Figure~\ref{fig:gredEx}. We revisit them in more detail in
Example~\ref{ex:gRed} below.

The definition of term graph rewriting in the form of pre-reduction
steps is very operational in style. The result of applying a rewrite
rule to a term graph is constructed in several steps by manipulating
nodes and edges explicitly. While this is beneficial for implementing
a rewriting system, it is problematic for reasoning on term graphs up
to isomorphisms, which is necessary for introducing notions of
convergence. In our case, however, this does not cause any harm since
the construction in Definition~\ref{def:termGraphApp} is invariant
under isomorphism:
\begin{proposition}[pre-reduction steps]
  Let $\phi\fcolon g \preto[n,\rho,m] h$ be a pre-reduction step in
  some GRS $\calR$ and $\psi_1\fcolon g' \isom g$. Then there is a
  pre-reduction step $\phi'\fcolon g' \preto[n',\rho,m'] h'$ with
  $\psi_2\fcolon h' \isom h$ such that $\psi_2(n') = n$ and
  $\psi_1(m') = m$.
\end{proposition}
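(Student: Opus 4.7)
The plan is to transport the three-step construction of Definition~\ref{def:termGraphApp} along the isomorphism $\psi_1$. Since $\psi_1$ preserves labels, successors and the root, every ingredient of the construction (matching, addition of the right-hand side, edge redirection, reachability pruning) commutes with applying $\psi_1$, so the entire rewrite step may be mirrored on $g'$. Before starting, I may replace $\rho$ by an isomorphic copy in order to ensure $N^\rho \cap N^{g'} = \emptyset$, which is harmless since Definition~\ref{def:termGraphApp} is already formulated up to the choice of fresh nodes for the copied part of $\rho$.

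Concretely, I would set $n' = \psi_1^{-1}(n)$ and define the matching $\calV$-homomorphism $\phi' = (\psi_1|_{\subgraph{g'}{n'}})^{-1} \circ \phi \fcolon \lhs{\rho} \homto_\calV \subgraph{g'}{n'}$. This is well-defined because $\psi_1$ restricts to an isomorphism $\subgraph{g'}{n'} \isom \subgraph{g}{n}$, whose inverse is again a $\calV$-homomorphism (Lemma~\ref{lem:isomBij}, Corollary~\ref{cor:isomStrong}), and composition with an isomorphism preserves the $\calV$-homomorphism property. Applying Definition~\ref{def:termGraphApp} to $g'$, $n'$, $\rho$ and $\phi'$ then yields intermediate graphs $g'_1, g'_2$ and a result $h' = g'_3$, together with the contractum root $m'$ obtained by the same case distinction ($m' = \phi'(r^\rho)$ if $r^\rho \in N^{\lhs\rho}$, and $m' = r^\rho$ otherwise).

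To exhibit $\psi_2 \fcolon h' \isom h$, I define it node-wise as $\psi_1$ on $N^{g'} \cap N^{h'}$ and as the identity on $(N^\rho \setminus N^{\lhs\rho}) \cap N^{h'}$. Since the constructions of $g_3$ and $g'_3$ follow exactly the same gluing pattern, $\psi_2$ is a bijection from $N^{h'}$ onto $N^h$; the labelling clause is immediate from the labelling rules in Definition~\ref{def:termGraphApp} together with $\psi_1$ being label-preserving; and the root and successor conditions reduce, by a case analysis on whether a node comes from $g'$ or from the added part of $\rho$ and on whether its $i$-th edge is ordinary, redirected (formerly ending in $n'$), or a boundary edge glued via $\phi'$, to the identities $\psi_1 \circ \phi' = \phi$ and $\psi_1(\gsuc^{g'}_i(k)) = \gsuc^g_i(\psi_1(k))$. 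By Lemma~\ref{lem:isomBij}, $\psi_2$ is therefore an isomorphism; the required identifications of the redex and contractum roots, $\psi_1(n') = n$ and $\psi_2(m') = m$, are immediate from the respective defining cases.

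The main obstacle, as far as there is one, is the successor-condition case analysis for $\psi_2$: one has to distinguish whether a node lies in the old or the added part of the result graph, whether its outgoing edge was originally redirected from the redex root to the contractum root, and whether the contractum root itself lies inside $\lhs\rho$ or in $N^\rho \setminus N^{\lhs\rho}$. Each subcase collapses to a defining equation of $\phi'$ or of the rewrite step, so this is really a bookkeeping exercise, but it must be carried out carefully to ensure no subcase is overlooked.
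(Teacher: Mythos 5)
Your proposal is correct and follows exactly the route the paper intends: the paper dismisses this proposition with "Immediate from the construction in Definition~\ref{def:termGraphApp}", and your argument simply spells out that immediacy by transporting the matching $\calV$-homomorphism, the gluing, the redirection and the pruning along $\psi_1$ and checking the resulting bijection is an isomorphism. The care you take with the disjointness assumption $N^\rho \cap N^{g'} = \emptyset$ and with reading the root identifications as $\psi_1(n') = n$, $\psi_2(m') = m$ (the sensible typing, which the paper's phrasing garbles) is appropriate and does not change the approach.
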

\begin{proof}
  Immediate from the construction in Definition~\ref{def:termGraphApp}.
\end{proof}

The above finding justifies the following definition of reduction
steps:
\begin{definition}[reduction steps]
  Let $\calR = (\Sigma,R)$ be a GRS, $\rho \in R$ and $g,h \in
  \ictgraphs$ with $n \in N^g$ and $m\in N^h$. A tuple $\phi =
  (g,n,\rho,m,h)$ is called a \emph{reduction step}, written
  $\phi\fcolon g \to[n,\rho,m] h$, if there is a pre-reduction step
  $\phi'\fcolon g' \preto[n',\rho,m'] h'$ with $\canon{g'} = g$,
  $\canon{h'} = h$, $n = \nodePos{g'}{n'}$, and $m =
  \nodePos{h'}{m'}$. Similarly to pre-reduction steps, we also write
  $\phi\fcolon g \to[\calR] h$ or simply $\phi\fcolon g \to h$ for
  short.
\end{definition}
In other words, a reduction step is a canonicalised pre-reduction
step.

Note that term graph rules do not provide a duplication
mechanism. Each variable is allowed to occur at most once. Duplication
must always be simulated by sharing. This means for example that
variables that should occur on the right-hand side must share the
occurrence of that variable on the left-hand side of the rule with its
right-hand side. This can be seen in the term graph rules in
Figure~\ref{fig:grules}. The sharing can be direct as in $\rho_1$ or
indirect as in $\rho_2$. For variables that are supposed to be
duplicated on the right-hand side, for example in the term rewrite
rule $h(x) \to f(h(x),h(x))$, we have to use sharing in order to
represent multiple occurrences of the same variable. This
representation can be seen in the corresponding term graph rules in
Figure~\ref{fig:doubleRules}.

\subsection{Convergence of Transfinite Reductions}
\label{sec:weak-convergence}

We now employ the partial order $\lebotr$ and the metric $\ddr$ for
the purpose of defining convergence of transfinite term graph
reductions. 

The notion of (transfinite) reductions carries over to GRSs
straightforwardly:
\begin{definition}[transfinite reductions]
  Let $\calR = (\Sigma,R)$ be a GRS. A \emph{(transfinite) reduction}
  in $\calR$ is a sequence $(g_\iota \to[\calR] g_{\iota+1})_{i <
    \alpha}$ of rewriting steps in $\calR$.%
\end{definition}
Analogously to reductions in TRSs, we need a notion of convergence in
order to define well-behaved reductions. The two modes of convergence
that we introduced for this very purpose in
Section~\ref{sec:partial-order-lebot1} and
Section~\ref{sec:alternative-metric} are only defined on canonical
term graphs. It is therefore crucial to work on reduction steps as
opposed to pre-reduction steps.

\begin{definition}[convergence of reductions]
  Let $\calR = (\Sigma,R)$ be a GRS.
  \begin{enumerate}[(i)]
  \item Let $S = (g_\iota \to_\calR g_{\iota+1})_{\iota < \alpha}$ be
    a reduction in $\calR$. $S$ is \emph{$\mrs$-continuous} in
    $\calR$, written $S\fcolon g_0 \wmacont[\calR]$, if the underlying
    sequence of term graphs $(g_\iota)_{\iota < \wsuc\alpha}$ is
    continuous in $\calR$, i.e.\ $\lim_{\iota\limto\lambda} g_\iota =
    g_\lambda$ for each limit ordinal $\lambda < \alpha$. $S$
    \emph{$\mrs$-converges} to $g \in \ictgraphs$ in $\calR$, written
    $S\fcolon g_0 \wmato[\calR] g$, if it is $\mrs$-continuous and
    $\lim_{\iota\limto\wsuc\alpha} g_\iota = g$.
  \item Let $\calR_\bot$ be the GRS $(\Sigma_\bot, R)$ over the
    extended signature $\Sigma_\bot$ and $S = (g_\iota \to[\calR_\bot]
    g_{\iota+1})_{\iota < \alpha}$ a reduction in $\calR_\bot$. $S$ is
    \emph{$\prs$-continuous} in $\calR$, written $S\fcolon g_0
    \wpacont[\calR]$, if $\liminf_{\iota\limto\lambda} g_i =
    g_\lambda$ for each limit ordinal $\lambda < \alpha$. $S$
    \emph{$\prs$-converges} to $g\in\ipctgraphs$ in $\calR$, written
    $S\fcolon g_0 \wpato[\calR] g$, if it is $\prs$-continuous and
    $\liminf_{\iota\limto\wsuc\alpha} g_i = g$.
  \item Let $S = (g_\iota \to[\calR_\bot] g_{\iota+1})_{\iota <
      \alpha}$ be a reduction in $\calR_\bot$. The reduction $S$ is
    called \emph{$\prs$-continuous in $\ictgraphs$}, if it is
    $\prs$-continuous and $g_\iota \in \ictgraphs$ for all $\iota <
    \wsuc\alpha$. The reduction $S$ is said to \emph{$\prs$-converge
      in $\ictgraphs$} to $g$, if it is $\prs$-continuous in
    $\ictgraphs$ and $\prs$-converges to $g \in \ictgraphs$.
  \end{enumerate}
\end{definition}

\noindent
Note that, analogously to $\prs$-convergence on terms, we extended the
signature of $\calR$ to $\Sigma_\bot$ for the definition of
$\prs$-convergence. Like for terms, this approach serves two
purposes. First, by considering the extended signature $\Sigma_\bot$,
we allow any partial term graph to appear in a reduction as opposed to
only total ones. Consequently, we have the whole complete semilattice
$(\ipctgraphs,\lebotr)$ at our disposal, which means that
$\prs$-continuity coincides with $\prs$-convergence:
\begin{proposition}
  In a GRS, every $\prs$-continuous reduction is $\prs$-convergent.
\end{proposition}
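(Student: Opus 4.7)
The plan is a direct appeal to the structural properties of $(\ipctgraphs, \lebotr)$ established earlier in the paper; no new construction is needed.

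First, I would fix an arbitrary $\prs$-continuous reduction $S = (g_\iota \to[\calR_\bot] g_{\iota+1})_{\iota < \alpha}$ and consider its underlying sequence of term graphs $(g_\iota)_{\iota < \wsuc\alpha}$. This sequence is non-empty, since $g_0$ is always present. To establish $\prs$-convergence of $S$ it suffices to exhibit the limit inferior $\liminf_{\iota \limto \wsuc\alpha} g_\iota$; once this limit inferior is known to exist, $S$ $\prs$-converges to precisely that term graph by definition, using the already-assumed $\prs$-continuity at intermediate limit ordinals.

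The crucial ingredient is Theorem~\ref{thm:graphBcpo}, which states that $(\ipctgraphs, \lebotr)$ is a complete semilattice. As recalled in Section~\ref{sec:partial-orders}, in any complete semilattice the limit inferior of every non-empty sequence exists: the inner glbs $\Glb_{\beta \le \iota < \wsuc\alpha} g_\iota$ exist because each of the sets $\setcom{g_\iota}{\beta \le \iota < \wsuc\alpha}$ is non-empty, these glbs form a $\lebotr$-monotonic family indexed by $\beta < \wsuc\alpha$, and hence their lub $\Lub_{\beta < \wsuc\alpha} \Glb_{\beta \le \iota < \wsuc\alpha} g_\iota$ exists as well.

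Setting $g = \liminf_{\iota \limto \wsuc\alpha} g_\iota$ then yields $S\fcolon g_0 \wpato[\calR] g$, completing the argument. There is no real obstacle here: all the substantive work has already been carried out in the proof of Theorem~\ref{thm:graphBcpo} (notably, Lemma~\ref{lem:graphCompLub} supplies the bounded completeness required for compatible elements to have a least upper bound, which, via Proposition~\ref{prop:bcpoCompSemi}, upgrades the cpo of Theorem~\ref{thm:graphCpo} to a complete semilattice). The proposition is essentially an immediate corollary of this complete-semilattice structure of $(\ipctgraphs, \lebotr)$, and stands in marked contrast to the metric setting, where continuous reductions may well fail to converge.
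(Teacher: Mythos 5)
Your proof is correct and follows essentially the same route as the paper: the paper's proof is a one-line appeal to Corollary~\ref{cor:lebotrLiminf} (every sequence in $(\ipctgraphs,\lebotr)$ has a limit inferior), which is itself the immediate consequence of the complete-semilattice structure from Theorem~\ref{thm:graphBcpo} that you spell out. You merely unpack the chain Theorem~\ref{thm:graphCpo} + Lemma~\ref{lem:graphCompLub} + Proposition~\ref{prop:bcpoCompSemi} explicitly, which matches the paper's development.
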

\begin{proof}
  Follows immediately from Corollary~\ref{cor:lebotrLiminf}.
\end{proof}

The second reasons for the extension to $\calR_\bot$ is that by not
presupposing that the system's signature $\Sigma$ already contains a
designated $\bot$-symbol, we rule out the possibility that this $\bot$
symbol occurs in one of the rules of the system. Consequently, any
$\bot$ symbol present in the final term graph of a reduction is either
due to the initial term graph or the convergence behaviour. This is
crucial for establishing a correspondence result between $\mrs$- and
$\prs$-convergence in the vein of Theorem~\ref{thr:strongExt}.

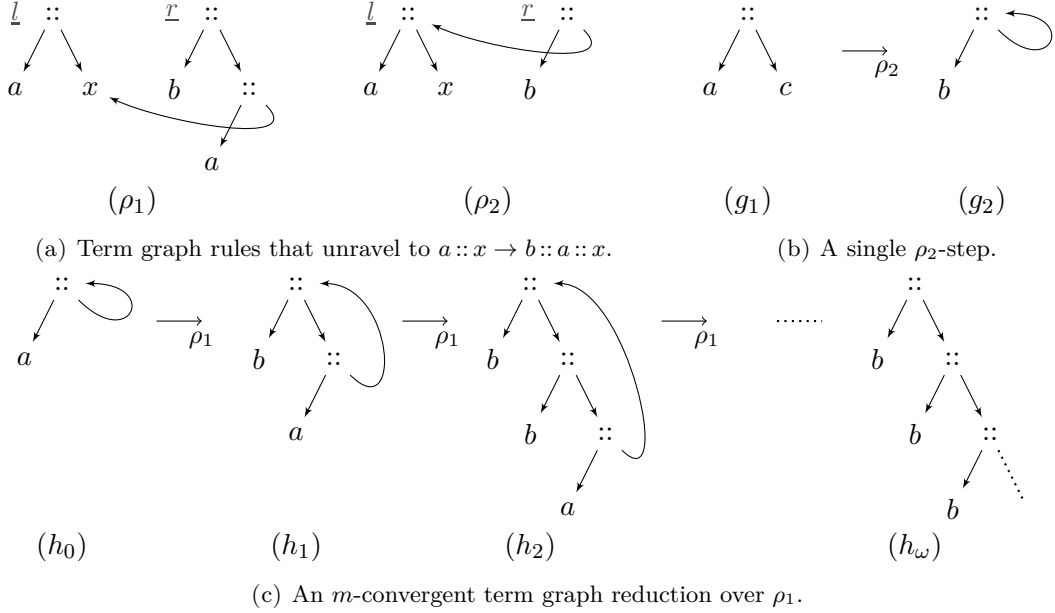
\begin{figure}
  \centering \subfloat[Term graph rules that unravel to $a \cons x
  \rightarrow b \cons a \cons x$.]{
    \hspace{2mm}
      \begin{tikzpicture}
      \node [node name=180:\underline{l}] (l) {$\cons$}%
      child {%
        node (a) {$a$}%
      } child {%
        node (x) {$x$}%
      };%

      \node[node name=180:\underline{r},node distance=1.5cm,right=of l] (r) {$\cons$}%
      child {%
        node (b) {$b$}%
      } child {%
        node (c) {$\cons$}%
        child {%
          node (a') {$a$}%
        } child [missing]
      };%
      \draw%
      (c) edge[->,out=-45,in=-25] (x);%
      
      \node () at ($(l)!.5!(r) + (0,-2.5)$) {$(\rho_1)$};
      \node [node name=180:\underline{l},node distance=2cm,right=of r] (l) {$\cons$}%
      child {%
        node (a) {$a$}%
      } child {%
        node (x) {$x$}%
      };%

      \node[node name=180:\underline{r},node distance=1.5cm,right=of l] (r) {$\cons$}%
      child {%
        node (b) {$b$}%
      } child [missing];%
      \draw%
      (r) edge[->,out=-45,in=-25] (l);%
      
      \node () at ($(l)!.5!(r) + (0,-2.5)$) {$(\rho_2)$};
    \end{tikzpicture}
    \label{fig:grules}
  \hspace{5mm}
  }
  \subfloat[A single $\rho_2$-step.]{
      \begin{tikzpicture}
        
      \node (g1) {$\cons$}%
      child {%
        node {$a$}%
      } child {%
        node {$c$}
        };%

      \node () at ($(g1) + (0,-2.5)$) {$(g_1)$};

      \node [node distance=2.5cm,right=of g1] (g2) {$\cons$}%
      child {%
        node {$b$}%
      } child [missing];%
      \draw (g2) edge[->, min distance=10mm,out=-45,in=0] (g2);%

      \node () at ($(g2) + (0,-2.5)$) {$(g_2)$};
      \node (s1) at ($(g1)!.5!(g2)-(0,.5)$) {};

      \draw[single step] ($(s1)-(.3,0)$) -- ($(s1)+(.3,0)$)
      node[pos=1,below] {{\small$\rho_2$}};
    \end{tikzpicture}
    \label{fig:gsingleStep}
  }
  \hspace{1cm}%
  \subfloat[An $\mrs$-convergent term graph reduction over $\rho_1$.]{%
      \begin{tikzpicture}
        
      \node (g1) {$\cons$}%
      child {%
        node (a) {$a$}%
      } child [missing];%
      \draw (g1) edge[->, min distance=10mm,out=-45,in=0] (g1);%

      \node () at ($(g1) + (0,-3.5)$) {$(h_0)$};
        
      \node [node distance=2.5cm,right=of g1] (g2) {$\cons$}%
      child {%
        node (b) {$b$}%
      } child {%
        node (c) {$\cons$}%
        child {%
          node (a) {$a$}%
        } child [missing]%
      };%
      \draw (c) edge[->, min distance=10mm,out=-45,in=0] (g2);%

      \node () at ($(g2) + (0,-3.5)$) {$(h_1)$};
        
      \node [node distance=2.5cm,right=of g2] (g3) {$\cons$}%
      child {%
        node (b) {$b$}%
      } child {%
        node (c) {$\cons$}%
        child {%
          node (b2) {$b$}%
        } child {%
          node (c2) {$\cons$}%
          child {%
            node (a) {$a$}%
          } child [missing]%
        }%
      };%
      \draw (c2) edge[->, min distance=10mm,out=-45,in=0] (g3);%

      \node () at ($(g3) + (0,-3.5)$) {$(h_2)$};
        
      \node [node distance=4.5cm,right=of g3] (go) {$\cons$}%
      child {%
        node (b) {$b$}%
      } child {%
        node (c) {$\cons$}%
        child {%
          node (b2) {$b$}%
        } child {%
          node (c2) {$\cons$}%
          child {%
            node (b3) {$b$}%
          } child[etc] {%
            node {} %
          }%
        }
      };%

      \node () at ($(go) + (0,-3.5)$) {$(h_\omega)$};

    \node (s1) at ($(g1)!.5!(g2)-(0,.5)$) {};
    \node (s2) at ($(g2)!.55!(g3)-(0,.5)$) {};
    \node (s3) at ($(g3)!.4!(go)-(0,.5)$) {};
    \node (s4) at ($(g3)!.7!(go)-(0,.5)$) {};

    \draw[single step] ($(s1)-(.3,0)$) -- ($(s1)+(.3,0)$)
    node[pos=1,below] {{\small$\rho_1$}};
    \draw[single step] ($(s2)-(.3,0)$) -- ($(s2)+(.3,0)$)
    node[pos=1,below] {{\small$\rho_1$}};
    \draw[single step] ($(s3)-(.3,0)$) -- ($(s3)+(.3,0)$)
    node[pos=1,below] {{\small$\rho_1$}};
    \draw[dotted,thick,-] ($(s4)-(.3,0)$) -- ($(s4)+(.3,0)$);
        
    \end{tikzpicture}
    \label{fig:gtransRed}
  }  
  \caption{Term graph rules and their reductions.}
  \label{fig:gredEx}
\end{figure}

\begin{example}
\label{ex:gRed}  
Consider the term graph rule $\rho_1$ in Figure~\ref{fig:grules},
which unravels to the term rule $a \cons x \to b \cons a \cons x$ from
Example~\ref{ex:termRewr}. Starting with the term tree $a \cons c$,
depicted as $g_1$ in Figure~\ref{fig:gsingleStep}, we obtain the same
transfinite reduction as in Example~\ref{ex:termRewr}:
\[
S\fcolon a \cons c \to[\rho_1] b \cons a \cons c \to[\rho_1] b \cons b \cons a \cons c
\to[\rho_1] \;\dots
\]
Since the modes of convergence of both the partial order $\lebotr$ and
the metric $\ddr$ coincide with the corresponding modes of convergence
on terms (cf.\ Proposition~\ref{prop:liminfGeneralise} respectively
Proposition~\ref{prop:limGeneralise}), we know that, for reductions
consisting only of term trees, both $\mrs$- and $\prs$-convergence in
GRSs coincide with the corresponding notions of convergence in
TRSs. This observation applies to the reduction $S$ above. Hence, also
in this setting of term graph rewriting, $S$ both $\mrs$- and
$\prs$-converges to the term tree $h_\omega$ shown in
Figure~\ref{fig:gtransRed}. Similarly, we can reproduce the
$\prs$-converging but not $\mrs$-converging reduction $T$ from
Example~\ref{ex:termRewr2}.

Notice that $h_\omega$ is a rational term tree as it can be obtained
by unravelling the finite term graph $g_2$ depicted in
Figure~\ref{fig:gsingleStep}. In fact, if we use the rule $\rho_2$,
which unravels to the term rule $a \cons x \to b \cons a \cons x$ as
well, we can immediately rewrite $g_1$ to $g_2$. In $\rho_2$, not only
the variable $x$ is shared but the whole left-hand side of the
rule. This causes each redex of $\rho_2$ to be \emph{captured} by the
right-hand side~\cite{farmer90ijfocs}.

Figure~\ref{fig:gtransRed} indicates a transfinite reduction starting
with a cyclic term graph $h_0$ that unravels to the rational term $t =
a \cons a \cons a\cons \dots$. This reduction both $\mrs$- and
$\prs$-converges to the rational term tree $h_\omega$ as well. Again,
by using $\rho_2$ instead of $\rho_1$, we can rewrite $h_0$ to the
cyclic term graph $g_2$ in one step.

For more detailed explanations of the underlying modes of partial
order and metric convergence for the reductions above, revisit
Example~\ref{ex:liminf} and Example~\ref{ex:convdiv}, respectively.
\end{example}

The following theorem shows that the total fragment of
$\prs$-converging reductions is in fact equivalent to
$\mrs$-converging reductions:
\begin{theorem}[$\prs$-convergence in $\ictgraphs$ =
  $\mrs$-convergence]
  \label{thr:graphTotalConv}
  For every reduction $S$ in a GRS the following equivalences hold:
  \\[.5em]
    \begin{inparaenum}[\em(i)]
    \begin{tabular}{@{\quad}r@{\;\,}l@{\;}l@{\qquad iff \qquad}l}
    \item\label{item:graphTotalConv1}& $S\fcolon g \wpato[\calR] h$ &in $\ictgraphs$ & $S\fcolon g \wmato[\calR] h$\\%
    \item\label{item:graphTotalConv2}& $S\fcolon g \wpacont[\calR]$ &in $\ictgraphs$ &
      $S\fcolon g \wmacont[\calR]$%
  \end{tabular}
    \end{inparaenum}
\end{theorem}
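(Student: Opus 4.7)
The plan is to derive both equivalences directly from the two correspondence results already established between the rigid metric and the rigid partial order, namely Proposition~\ref{prop:graphLimLimInf} (metric Cauchy sequences have their metric limit equal to the partial order liminf) and Proposition~\ref{prop:graphLimInfLim} (if the liminf of a sequence is total, then the sequence converges metrically to it). These two propositions essentially do all the work; the only task is to apply them at each relevant limit ordinal of the reduction.

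For (\ref{item:graphTotalConv1}), I would argue each direction separately. For the forward direction, assume $S = (g_\iota \to g_{\iota+1})_{\iota<\alpha}$ $\prs$-converges to $h$ in $\ictgraphs$. Then every $g_\iota$ and $h$ is total, and at every limit ordinal $\lambda < \alpha$ we have $\liminf_{\iota\limto\lambda} g_\iota = g_\lambda \in \ictgraphs$, as well as $\liminf_{\iota\limto\wsuc\alpha} g_\iota = h \in \ictgraphs$. Proposition~\ref{prop:graphLimInfLim} applied to each prefix $(g_\iota)_{\iota<\lambda}$ and to $(g_\iota)_{\iota<\wsuc\alpha}$ then yields $\lim_{\iota\limto\lambda} g_\iota = g_\lambda$ and $\lim_{\iota\limto\wsuc\alpha} g_\iota = h$ in $(\ictgraphs,\ddr)$, which gives $S\fcolon g \wmato h$. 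For the backward direction, assume $S\fcolon g \wmato h$. Since $(\ictgraphs,\ddr)$ only contains total term graphs and every step in $S$ is a reduction between such, the entire reduction, including $h$, lives in $\ictgraphs$. Every prefix of $S$ of length $\lambda$ is a convergent -- hence Cauchy -- sequence in $(\ictgraphs,\ddr)$, and likewise the full sequence $(g_\iota)_{\iota<\wsuc\alpha}$. Applying Proposition~\ref{prop:graphLimLimInf} to each such prefix and to the whole sequence gives $\liminf_{\iota\limto\lambda} g_\iota = \lim_{\iota\limto\lambda} g_\iota = g_\lambda$ and $\liminf_{\iota\limto\wsuc\alpha} g_\iota = \lim_{\iota\limto\wsuc\alpha} g_\iota = h$, so $S\fcolon g \wpato h$ in $\ictgraphs$.

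For (\ref{item:graphTotalConv2}), I would reduce continuity to convergence of every proper prefix, exactly as remarked on terms just after Example~\ref{ex:termRewr2}. A reduction $S$ is $\mrs$-continuous (resp.\ $\prs$-continuous in $\ictgraphs$) iff every proper prefix $\prefix{S}{\beta}$ $\mrs$-converges (resp.\ $\prs$-converges in $\ictgraphs$) to $g_\beta$. This characterisation is immediate from the definitions of continuity and convergence via metric limits and liminfs. Then (\ref{item:graphTotalConv2}) follows directly from (\ref{item:graphTotalConv1}) applied prefix-wise.

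There is no real obstacle here beyond being careful about the direction in which totality is obtained: in the $\mrs \Rightarrow \prs$-in-$\ictgraphs$ direction one must observe that $\mrs$-convergence presupposes that the underlying sequence lies in the metric space $(\ictgraphs,\ddr)$, so totality of all intermediate terms and of the limit comes for free, which is exactly what is needed to invoke Proposition~\ref{prop:graphLimLimInf} and to conclude $\prs$-convergence \emph{in $\ictgraphs$} rather than merely in $\ipctgraphs$. Conversely, in the $\prs$-in-$\ictgraphs$ $\Rightarrow$ $\mrs$ direction, totality of each $g_\lambda$ and of $h$ is precisely the hypothesis required by Proposition~\ref{prop:graphLimInfLim}.
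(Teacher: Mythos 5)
Your proposal is correct and follows essentially the same route as the paper: both directions of (\ref{item:graphTotalConv1}) are obtained by applying Proposition~\ref{prop:graphLimInfLim} (using totality of $g_\lambda$ and $h$) respectively Proposition~\ref{prop:graphLimLimInf} (using that convergent sequences are Cauchy) at each limit ordinal and at $\wsuc\alpha$. The paper likewise dispatches (\ref{item:graphTotalConv2}) as "follows similarly", which your prefix-wise reduction to (\ref{item:graphTotalConv1}) makes explicit.
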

\begin{proof}
  We only show (\ref{item:graphTotalConv1}) since
  (\ref{item:graphTotalConv2}) follows similarly.

  Let $S = (g_\iota \to[\calR_\bot] g_{\iota+1})_{\iota <
    \alpha}$. For the ``only if'' direction assume $S\fcolon g
  \wpato[\calR] h$ is $\prs$-converging in $\ictgraphs$. Since $S$
  $\prs$-converges in $\ictgraphs$, it is a reduction in $\calR$. The
  $\prs$-convergence of $S$ implies that $\liminf_{\iota\limto\lambda} g_i
  = g_\lambda$ for each limit ordinal $\lambda<\alpha$. Since each
  $g_\iota$ is total, we have, according to
  Proposition~\ref{prop:graphLimInfLim}, that $\lim_{\iota<\lambda}
  g_i = \liminf_{\iota\limto\lambda} g_i = g_\lambda$ for each limit
  ordinal $\lambda<\alpha$. Hence $(g_\iota)_{\iota < \wsuc\alpha}$ is
  continuous in the metric space. Likewise, we also have
  $\lim_{\iota<\wsuc\alpha} g_i = \liminf_{\iota\limto\wsuc\alpha} g_i =
  h$. That is, $S$ $\mrs$-converges to $h$. For the ``if'' direction
  assume $S\fcolon g \wmato[\calR] h$. Since $(g_\iota)_{\iota <
    \wsuc\alpha}$ is continuous, we have that $\lim_{\iota<\lambda}
  g_i = g_\lambda$ for each limit ordinal $\lambda <
  \alpha$. According to Proposition~\ref{prop:graphLimLimInf}, we then
  have that $\liminf_{\iota\limto\lambda} g_i = g_\lambda$ for each limit
  ordinal $\lambda < \alpha$. Likewise we also have
  $\liminf_{\iota\limto\wsuc\alpha} g_i = \lim_{\iota<\wsuc\alpha} g_i =
  h$. Hence, $S$ is $\prs$-converging to $h$. Since $S$ is
  $\mrs$-converging it is by definition also in $\ictgraphs$.
\end{proof}

\begin{figure}
  \centering \subfloat[Term graph rules that unravel to $h(x) \rightarrow f(h(x),h(x))$.]{
  \hspace{7mm}
      \begin{tikzpicture}
      \node [node name=180:\underline{l}] (l) {$h$}%
      child {%
        node (x) {$x$}%
      };%

      \node[node name=180:\underline{r},node distance=1.5cm,right=of l] (r) {$f$};%
      \draw%
      (r) edge[->,out=-45,in=-50] (l)%
      (r) edge[->,out=-135,in=-25] (l);%
      
      \node () at ($(l)!.5!(r) + (0,-2.5)$) {$(\rho_1)$};

      \node [node name=180:\underline{l},node distance=2cm,right=of r] (l) {$h$}%
      child {%
        node (x) {$x$}%
      };%

      \node[node name=180:\underline{r},node distance=1.5cm,right=of l] (r) {$f$}%
      child {%
        node (g) {$h$}%
        edge from parent[transparent] 
      };%
      \draw[->]%
      (g) edge[bend left=45] (x)%
      (r)%
      edge [bend right=25] (g)%
      edge [bend left=25] (g);%
      
      \node () at ($(l)!.5!(r) + (0,-2.5)$) {$(\rho_2)$};

      \node [node name=180:\underline{l},node distance=2cm,right=of r] (l) {$h$}%
      child {%
        node (x) {$x$}%
      };%

      \node[node name=180:\underline{r},node distance=1.5cm,right=of l] (r) {$f$}%
      child {%
        node (h1) {$h$}%
      } child {%
        node (h2) {$h$}%
      };%
      \draw[->]%
      (h1) edge[bend left=45] (x)%
      (h2) edge[bend left=70] (x);%
      
      \node () at ($(l)!.5!(r) + (0,-2.5)$) {$(\rho_3)$};
    \end{tikzpicture}
  \hspace{7mm}
    \label{fig:doubleRules}
  }%
  
  \subfloat[A single $\rho_1$-step.]{
      \begin{tikzpicture}
        
      \node (g1) {$h$}%
      child {%
        node {$c$}
        };%

      \node () at ($(g1) + (0,-2.5)$) {$(g_0)$};

      \node [node distance=2.5cm,right=of g1] (g2) {$f$};%
      \draw (g2) edge[->, min distance=10mm,out=-45,in=0] (g2);%
      \draw (g2) edge[->, min distance=10mm,out=-135,in=180] (g2);%

      \node () at ($(g2) + (0,-2.5)$) {$(g_1)$};
      \node (s1) at ($(g1)!.4!(g2)-(0,.5)$) {};

      \draw[single step] ($(s1)-(.3,0)$) -- ($(s1)+(.3,0)$)
      node[pos=1,below] {{\small$\rho_1$}};
    \end{tikzpicture}
    \label{fig:doubleStep}
  }
  \subfloat[The full binary tree of $f$-nodes.]{%
    \hspace{2cm}
    \begin{tikzpicture}[etc/.style={edge from
      parent/.style={-,dotted,thick,draw}}, level 1/.style={sibling
      distance=10mm}, level 2/.style={sibling distance=5mm}, level
    3/.style={sibling distance=2.5mm} ]
    \node {$f$}%
    child {%
      node {$f$}%
      child {%
        node {$f$}%
        child[etc]{ node {}}%
        child[etc]{ node {}}%
      }%
      child {%
        node {$f$}%
        child[etc]{ node {}}%
        child[etc]{ node {}}%
      }%
    }%
    child {%
      node {$f$}%
      child {%
        node {$f$}%
        child[etc]{ node {}}%
        child[etc]{ node {}}%
      }%
      child {%
        node {$f$}%
        child[etc]{ node {}}%
        child[etc]{ node {}}%
      }%
    };%
  \end{tikzpicture}
    \hspace{2cm}
    \label{fig:bintree}
  }
  
  \subfloat[An $\mrs$-convergent term graph reduction over $\rho_2$.]{%
      \begin{tikzpicture}
        
      \node (g1) {$h$}%
      child {%
        node {$c$}
        };%

      \node () at ($(g1) + (0,-3.5)$) {$(g_0)$};
        
      \node [node distance=2.5cm,right=of g1] (g2) {$f$};%
      \node [node distance=10mm,below of=g2] (b) {$h$}%
      child {%
        node {$c$}
      };%

      \draw[->] (g2)%
      edge [bend right=25] (b)%
      edge [bend left=25] (b);%

      \node () at ($(g2) + (0,-3.5)$) {$(g_1)$};
        
      \node [node distance=2.5cm,right=of g2] (g3) {$f$};%
      \node [node distance=10mm,below of=g3] (b) {$f$};%
      \node [node distance=10mm,below of=b] (b2) {$h$}%
      child {%
        node {$c$}
      };%
      \draw[->] (g3)%
      edge [bend right=25] (b)%
      edge [bend left=25] (b);%
      \draw[->] (b)%
      edge [bend right=25] (b2)%
      edge [bend left=25] (b2);%

      \node () at ($(g3) + (0,-3.5)$) {$(g_2)$};
        
      \node [node distance=4.5cm,right=of g3] (go) {$f$}%
      child {%
        node (b) {$f$}%
        edge from parent[transparent]%
        child {%
          node (b2) {$f$}%
          edge from parent[transparent]%
        }%
      };%
      \node at (b2) {$f$}%
      child [etc] {%
        node {}%
      };%
      \draw[->] (go)%
      edge [bend right=25] (b)%
      edge [bend left=25] (b);%
      \draw[->] (b)%
      edge [bend right=25] (b2)%
      edge [bend left=25] (b2);%

      \node () at ($(go) + (0,-3.5)$) {$(g_\omega)$};

    \node (s1) at ($(g1)!.5!(g2)-(0,.5)$) {};
    \node (s2) at ($(g2)!.55!(g3)-(0,.5)$) {};
    \node (s3) at ($(g3)!.4!(go)-(0,.5)$) {};
    \node (s4) at ($(g3)!.7!(go)-(0,.5)$) {};

    \draw[single step] ($(s1)-(.3,0)$) -- ($(s1)+(.3,0)$)
    node[pos=1,below] {{\small$\rho_2$}};
    \draw[single step] ($(s2)-(.3,0)$) -- ($(s2)+(.3,0)$)
    node[pos=1,below] {{\small$\rho_2$}};
    \draw[single step] ($(s3)-(.3,0)$) -- ($(s3)+(.3,0)$)
    node[pos=1,below] {{\small$\rho_2$}};
    \draw[dotted,thick,-] ($(s4)-(.3,0)$) -- ($(s4)+(.3,0)$);
        
    \end{tikzpicture}
    \label{fig:doubleTransRed}
  }

  \subfloat[A $\prs$-convergent term graph reduction over $\rho_3$.]{%
      \begin{tikzpicture}
        
      \node (g1) {$h$}%
      child {%
        node {$c$}
        };%

      \node () at ($(g1) + (0,-3.8)$) {$(g_0)$};
        
      \node [node distance=2.5cm,right=of g1] (g2) {$f$}%
      child {%
        node (h1) {$h$}%
        child [missing]%
        child {%
          node (c) {$c$}
        }
      } child {%
        node (h2) {$h$}%
      };%

      \draw[->] (h2) edge (c);%

      \node () at ($(g2) + (0,-3.8)$) {$(g_1)$};
         
      \node [node distance=2.5cm,right=of g2] (g3) {$f$}%
      child {%
        node (f) {$f$}%
        child {%
          node (h1) {$h$}%
        } child {%
          node (h2) {$h$}%
        }
      } child {%
        node (h3) {$h$}%
        child [missing]%
        child {%
          node (c) {$c$}%
        }%
      };%

      \draw[->]%
      (h1) edge[bend right=70] (c)%
      (h2) edge[bend right=45] (c);%

      \node () at ($(g3) + (0,-3.8)$) {$(g_2)$};
         
      \node [node distance=4.5cm,right=of g3] (go) {$f$}%
      child {%
        node (f1) {$f$}%
        child {%
          node (f2) {$f$}%
          child [etc] {%
            node {}%
          } child {%
            node (h1) {$h$}%
          }%
        } child {%
          node (h2) {$h$}%
        }
      } child {%
        node (h3) {$h$}%
        child [missing]%
        child {%
          node (c) {$\bot$}%
        }%
      };%

      \draw[->]%
      (h1) edge[bend right=45] (c)%
      (h2) edge[bend right=45] (c);%

      \node () at ($(go) + (0,-3.8)$) {$(g_\omega)$};

    \node (s1) at ($(g1)!.5!(g2)-(0,.5)$) {};
    \node (s2) at ($(g2)!.55!(g3)-(0,.5)$) {};
    \node (s3) at ($(g3)!.4!(go)-(0,.5)$) {};
    \node (s4) at ($(g3)!.7!(go)-(0,.5)$) {};

    \draw[single step] ($(s1)-(.3,0)$) -- ($(s1)+(.3,0)$)
    node[pos=1,below] {{\small$\rho_3$}};
    \draw[single step] ($(s2)-(.3,0)$) -- ($(s2)+(.3,0)$)
    node[pos=1,below] {{\small$\rho_3$}};
    \draw[single step] ($(s3)-(.3,0)$) -- ($(s3)+(.3,0)$)
    node[pos=1,below] {{\small$\rho_3$}};
    \draw[dotted,thick,-] ($(s4)-(.3,0)$) -- ($(s4)+(.3,0)$);
        
    \end{tikzpicture}
    \label{fig:doubleTransRedDiv}
  }
  \caption{Term graph rules for duplicating term rewrite rules.}
  \label{fig:doubleEx}
\end{figure}

\begin{example}
  \label{ex:duplication}
  In order to represent term rewrite rules that are not right-linear,
  i.e.\ which have multiple occurrences of the same variable on the
  right-hand side, we have to use sharing to represent the occurrences
  of a variable by a single node. Consider the term rewrite rule $h(x)
  \to f(h(x),h(x))$ that duplicates the variable $x$ on the right-hand
  side. Note that by repeatedly applying this term rewrite rule
  starting with term $h(c)$, we obtain a reduction that
  $\mrs$-converges to the full binary tree depicted in
  Figure~\ref{fig:bintree}.

  Figure~\ref{fig:doubleRules} shows three different ways of
  representing the term rewrite rule $h(x) \to f(h(x),h(x))$ as a term
  graph rule. The rule $\rho_3$ has the lowest degree of sharing since
  it shares the variable node directly; $\rho_1$ has the highest
  degree of sharing as it shares its complete left-hand side with its
  right-hand side; $\rho_2$ lies in between the two.

  We have observed in Figure~\ref{fig:grules} before that, by sharing
  the complete left-hand side with the right-hand side, the redex gets
  captured by the right-hand side upon applying the rule to a term
  graph. This can be seen again in Figure~\ref{fig:doubleStep}. By
  applying $\rho_1$ to the term tree $h(c)$ once, we immediately
  obtain the cyclic term graph $g_1$, which unravels to the full
  binary tree from Figure~\ref{fig:bintree}.

  With the rule $\rho_2$, we have to go through an $\mrs$-convergent
  reduction of length $\omega$, depicted in
  Figure~\ref{fig:doubleTransRed}, in order to obtain the desired term
  graph normal form that then unravels to the full binary tree as
  well.

  The same can also be achieved via the rule $\rho_3$: Starting from
  $h(c)$ we can construct a reduction that $\mrs$-converges directly
  to the full binary tree in Figure~\ref{fig:bintree}. However, we may
  also form the reduction shown in Figure~\ref{fig:doubleTransRedDiv}
  in which we always contract the leftmost redex. As we can see in the
  picture, this means that the $c$-node remains constantly at depth
  $2$ while still reachable from any other node. As we explained in
  Example~\ref{ex:convdiv}, this means that the reduction does not
  $\mrs$-converge. On the other hand, as described in
  Example~\ref{ex:liminf} the reduction $\prs$-converges to the
  partial term graph $g_\omega$. In fact, from this term graph
  $g_\omega$ we can then construct a reduction that $\prs$-converges
  to the full binary tree.
\end{example}

\section{Term Graph Rewriting vs.\ Term Rewriting}
\label{sec:soundn--compl}

In order to assess the value of the modes of convergence on term
graphs that we introduced in this paper, we need to compare them to
the well-established counterparts on terms. We have already observed
that, if restricted to term trees, both the partial order $\lebotr$
and the metric $\ddr$ on term graphs coincide with corresponding
structures $\lebot$ and $\dd$ on terms, cf.\
Corollary~\ref{cor:lebotrGeneralise} and
Corollary~\ref{cor:ddrGeneralise}, respectively. The same holds for
the modes of convergence derived from these structures, cf.\
Proposition~\ref{prop:liminfGeneralise} and
Proposition~\ref{prop:limGeneralise}.

\subsection{Soundness \& Completeness of Infinitary Term Graph Rewriting}
\label{sec:soundn--compl-1}

Ideally, we would like to see a strong connection between converging
reductions in a GRS $\calR$ and converging reductions in the TRS
$\unrav{\calR}$ that is its unravelling. For example, for
$\mrs$-convergence we want to see that $g \wmato[\calR] h$ implies
$\unrav g \wmato[\unrav{\calR}] \unrav h$ -- i.e.\ soundness -- and
vice versa that $\unrav g \wmato[\unrav{\calR}] \unrav h$ implies $g
\wmato[\calR] h$ -- i.e.\ completeness.

Completeness is already an issue for finitary
rewriting~\cite{barendregt87parle}: a single term graph redex may
corresponds to several term redexes due to sharing. Hence, contracting
a term graph redex may correspond to several term rewriting steps. For
example, given a rewrite rule $a \to b$, we can rewrite $f(a,a)$ to
$f(a,b)$, whereas we can rewrite
\begin{center}
  \begin{tikzpicture}
    \node (r) {$f$}%
    child{%
      node (a) {$a$}%
      edge from parent[transparent]%
    };%
    \draw[->] (r)%
    edge[bend left] (a)%
    edge[bend right] (a);%

    \node[node distance=4cm,right of=r] (s) {$f$}%
    child{%
      node (b) {$b$}%
      edge from parent[transparent]%
    };%
    \draw[->] (s)%
    edge[bend left] (b)%
    edge[bend right] (b);%
    \path ($(r)!.5!(a)$) -- ($(s)!.5!(b)$) node[pos=.5] {only to};
  \end{tikzpicture}
\end{center}
which corresponds to a term reduction $f(a,a) \to f(b,b)$. That is, in
the term graph we cannot choose which of the two term redexes to
contract as they are represented by the same term graph redex.

Note that there are techniques to circumvent this problem by
incorporating reduction steps that copy nodes in order to reduce the
sharing in a term graph~\cite{plump99hggcbgt}. In this paper, however,
we are only concerned with pure term graph rewriting steps derived
from rewrite rules.

In the context of weak convergence, also soundness becomes an
issue. The underlying reason for this issue is similar to the
phenomenon explained above: a single term graph rewrite step may
represent several term rewriting steps, i.e.\ $g \to[\calR] h$ implies
$\unrav g \fto+[\unrav\calR]\unrav h$.\footnote{If the term graph $g$
  is cyclic, the corresponding term reduction may even be infinite.}
When we have a converging term graph reduction $(g_\iota \to
g_{\iota+1})_{\iota<\alpha}$, we know that the underlying sequence of
term graphs $(g_\iota)$ converges. However, the corresponding term
reduction does not necessarily produce the sequence
$(\unrav{g_\iota})$ but may intersperse the sequence
$(\unrav{g_\iota})$ with additional intermediate terms, which might
change the convergence behaviour.

A similar phenomenon is know in infinitary lambda
calculus\cite{kennaway97tcs}: while one can simulate certain term
rewriting systems with lambda terms, this simulation may fail for
infinitary rewriting since a single term rewriting step may require
several $\beta$-reduction steps. The problem that arises in this
setting is that the intermediate terms that are introduced in the
lambda reduction may cause the convergence to break.

The same can, in principle, also occur when simulating a term graph
reduction by a term reduction. Since a single term graph rewrite step
may require several term rewrite steps, we may introduce intermediate
terms into the reduction that do not directly correspond to the term
graphs in the graph reduction.

\subsection{Preservation of Convergence under Unravelling}
\label{sec:pres-conv-under}

Due to the abovementioned difficulties, we restrict ourselves in this
paper to the soundness of the modes of convergence alone. By soundness
in this setting we mean that whenever we have a sequence
$(g_\iota)_{\iota<\alpha}$ of term graphs converging to $g$, then the
sequence $(\unrav{g_\iota})_{\iota<\alpha}$ converges to
$\unrav{g}$. That is, convergence is preserved under
unravelling. Since the metric $\ddr$ on term graphs generalises the
metric $\dd$ on terms, cf.\ Corollary~\ref{cor:ddrGeneralise}, it does
not matter whether we consider the convergence of
$(\unrav{g_\iota})_{\iota<\alpha}$ in the metric space
$(\ictgraphs,\ddr)$ or $(\iterms,\dd)$, according to
Proposition~\ref{prop:limGeneralise}. The same also holds for the
limit inferior in $(\ipctgraphs,\lebotr)$ and $(\ipterms,\lebot)$, due
to Corollary~\ref{cor:lebotrGeneralise} and
Proposition~\ref{prop:liminfGeneralise}.

The cornerstone of the investigation of the unravelling of term graphs
is the following simple characterisation of unravelling in terms of
labelled quotient trees:
\begin{proposition}
  \label{prop:unravTree}%
  The unravelling $\unrav{g}$ of a term graph $g \in \itgraphs$ is
  given by the labelled quotient tree $(\pos{g},g(\cdot),\idrel{\pos{g}})$.
\end{proposition}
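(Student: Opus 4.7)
The plan is to verify directly that the triple $(\pos{g}, g(\cdot), \idrel{\pos{g}})$ satisfies the conditions of Definition~\ref{def:occRep}, and then to show that the canonical term tree it induces admits a homomorphism into $g$. By the definition of unravelling given at the end of Section~\ref{sec:terms-term-trees} (the unique term $t$ for which there is a homomorphism $\phi\fcolon t \homto g$), this will pin down the labelled quotient tree of $\unrav{g}$ up to isomorphism.

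First I would check that $(\pos{g}, g(\cdot), \idrel{\pos{g}})$ is a labelled quotient tree. The reachability condition is immediate from the definition of $\pos{g}$ as a set of paths in $g$ closed under taking prefixes, together with the observation that $\pi \concat \seq{i} \in \pos{g}$ forces $i < \srank{g(\pi)}$ by the arity condition of $\gsuc$. The congruence condition is vacuous because $\sim$ is the identity relation: $\pi \sim \pi'$ forces $\pi = \pi'$, so labels and successors trivially match. Hence by Lemma~\ref{lem:occrep} there is a unique canonical term graph $t$ with $\pos{t} = \pos{g}$, $t(\pi) = g(\pi)$, and $\sim_t\ =\ \idrel{\pos{g}}$. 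Since $\sim_t$ is the identity, each node of $t$ has exactly one position, i.e.\ $t$ is a term tree, so $t \in \iterms$.

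Next I would produce a homomorphism $\phi\fcolon t \homto g$. By Lemma~\ref{lem:occrephom} (with $\Delta = \emptyset$) it suffices to verify, for all $\pi,\pi' \in \pos{t}$, that $\pi \sim_t \pi'$ implies $\pi \sim_g \pi'$, and that $t(\pi) = g(\pi)$. The first holds because $\sim_t$ is the identity and $\sim_g$ is reflexive; the second is the defining equation of the labelling of $t$. Hence $\phi$ exists, which means $t$ satisfies the universal property defining $\unrav{g}$.

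Finally, uniqueness of the unravelling (as a term) yields $t = \unrav{g}$, so the labelled quotient tree of $\unrav{g}$ is indeed $(\pos{g}, g(\cdot), \idrel{\pos{g}})$. I do not anticipate a real obstacle: the only mildly delicate point is making sure one invokes the right characterisation (Lemma~\ref{lem:occrephom} at $\Delta = \emptyset$) and that the canonical term tree constructed from a labelled quotient tree with identity equivalence is literally a term in $\iterms$ under the identification described at the start of Section~\ref{sec:terms-term-trees}.
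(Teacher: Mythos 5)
Your proposal is correct and follows essentially the same route as the paper: check that $(\pos{g},g(\cdot),\idrel{\pos{g}})$ is a labelled quotient tree (the paper does this by noting the identity is a subrelation of $\sim_g$, you verify the conditions directly), obtain the induced term tree $t$ via Lemma~\ref{lem:occrep}, and use Lemma~\ref{lem:occrephom} to get a homomorphism $t \homto g$, whence $t = \unrav{g}$. No gaps.
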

\begin{proof}
  Since $\idrel{\pos{g}}$ is a subrelation of $\sim_g$, we know that
  $(\pos{g},g(\cdot),\idrel{\pos{g}})$ is a labelled quotient tree and
  thus uniquely determines a term tree $t$. By
  Lemma~\ref{lem:occrephom}, there is a homomorphism from $t$ to
  $g$. Hence, $\unrav{g} = t$.
\end{proof}

Employing the above characterisation, we can easily see that the
relation $\lebotr$ is preserved under unravelling:
\begin{proposition}
  \label{prop:poUnrav}
  Given $g,h \in \ipctgraphs$, we have that $g \lebotr h$ implies
  $\unrav{g} \lebotr \unrav h$.
\end{proposition}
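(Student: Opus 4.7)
My plan is to prove the preservation result by pulling back the characterisation of $\lebotr$ via labelled quotient trees (Corollary~\ref{cor:chaTgraphPo}) and using the fact that unravelling collapses the sharing information to the trivial equivalence.

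First, I would use Proposition~\ref{prop:unravTree} to replace both $\unrav g$ and $\unrav h$ by their labelled quotient trees, namely $(\pos g, g(\cdot), \idrel{\pos g})$ and $(\pos h, h(\cdot), \idrel{\pos h})$. Thus $\pos{\unrav g} = \pos g$, $\unrav g(\pi) = g(\pi)$, and $\sim_{\unrav g}$ is the identity on $\pos g$, and analogously for $h$. From the hypothesis $g \lebotr h$, Corollary~\ref{cor:chaTgraphPo} provides three facts: the sharing of $g$ is preserved in $h$, the acyclic sharing of $h$ on non-$\bot$ nodes of $g$ is reflected in $g$, and the non-$\bot$ labels of $g$ match those of $h$. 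In particular, by applying the first condition with $\pi' = \pi$ and using reflexivity of $\sim_g$, I obtain $\pos g \subseteq \pos h$, hence $\pos{\unrav g} \subseteq \pos{\unrav h}$.

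Next, I verify the three conditions of Corollary~\ref{cor:chaTgraphPo} for $\unrav g \lebotr \unrav h$. Condition (a) reads: $\pi \sim_{\unrav g} \pi' \Rightarrow \pi \sim_{\unrav h} \pi'$. Since $\sim_{\unrav g}$ is the identity, $\pi \sim_{\unrav g} \pi'$ forces $\pi = \pi' \in \pos g \subseteq \pos h = \pos{\unrav h}$, so $\pi \sim_{\unrav h} \pi'$ holds trivially. Condition (b) is analogous: if $\pi \sim_{\unrav h} \pi'$ then $\pi = \pi'$, so $\pi \sim_{\unrav g} \pi'$ is immediate. Condition (c) requires $\unrav g(\pi) = \unrav h(\pi)$ whenever $\unrav g(\pi) \in \Sigma$. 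Given such a $\pi \in \pos g$ with $g(\pi) \in \Sigma$, the third condition of $g \lebotr h$ gives $g(\pi) = h(\pi)$, and via Proposition~\ref{prop:unravTree} this translates to $\unrav g(\pi) = \unrav h(\pi)$.

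There is no serious obstacle here: because unravelling discards all sharing, the two sharing-related clauses become vacuous on the tree side, and the only substantive content, the preservation of labels, carries over directly from the corresponding clause for $g \lebotr h$. The whole argument is a routine translation through Proposition~\ref{prop:unravTree} and Corollary~\ref{cor:chaTgraphPo}.
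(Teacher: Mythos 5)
Your proposal is correct and follows exactly the route the paper takes: its proof is stated as an immediate consequence of Corollary~\ref{cor:chaTgraphPo} and Proposition~\ref{prop:unravTree}, and you have simply spelled out the details of that argument. The key observations — that $\pos{g} \subseteq \pos{h}$ follows from condition (a), that the sharing clauses become vacuous for term trees, and that the labelling clause transfers directly — are precisely what makes the paper's one-line proof work.
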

\begin{proof}
  Immediate consequence of Corollary~\ref{cor:chaTgraphPo} and
  Proposition~\ref{prop:unravTree}.
\end{proof}

Likewise, also least upper bounds of $\lebotr$ are preserved:
\begin{proposition}[preservation of lubs under unravelling]
  \label{prop:lubUnrav}
  Given a directed set $G$ in $(\ipctgraphs,\lebotr)$, also
  $\setcom{\unrav{g}}{g\in G}$ is directed and $\unrav{\Lub_{g\in G}
    g} = \Lub_{g\in G} \unrav{g}$.
\end{proposition}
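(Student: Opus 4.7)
The plan is to reduce both claims to a labelled quotient tree calculation, using Theorem~\ref{thm:graphCpo} to unfold the lub on the left and Proposition~\ref{prop:unravTree} to unfold the unravelling on the right.

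First, for directedness of $\setcom{\unrav{g}}{g\in G}$: given $g_1,g_2 \in G$, by directedness of $G$ there is some $g \in G$ with $g_1,g_2 \lebotr g$. Proposition~\ref{prop:poUnrav} then yields $\unrav{g_1},\unrav{g_2} \lebotr \unrav{g}$, and Corollary~\ref{cor:lebotrGeneralise} lets me read this as $\lebot$ between terms if desired. Either way, $\unrav{g}$ is a common upper bound in the image set.

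For the equality of lubs, I would compute the canonical labelled quotient tree on each side and invoke Lemma~\ref{lem:occrep}. On the left, Theorem~\ref{thm:graphCpo} gives $\Lub_{g\in G} g$ as the canonical term graph associated with the labelled quotient tree $(P,l,\sim)$ where $P = \bigcup_{g\in G}\pos{g}$, ${\sim} = \bigcup_{g\in G} {\sim_g}$, and $l(\pi)$ is the unique symbol in $\Sigma$ appearing as $g(\pi)$ for some $g \in G$ (or $\bot$ if no such $g$ exists). Applying Proposition~\ref{prop:unravTree} on top of this, $\unrav{\Lub_{g\in G} g}$ has labelled quotient tree $(P,l,\idrel{P})$. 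On the right, using the directedness established above, I apply Theorem~\ref{thm:graphCpo} again to the directed set $\setcom{\unrav{g}}{g\in G}$; the resulting labelled quotient tree has position set $\bigcup_{g\in G}\pos{\unrav{g}} = \bigcup_{g\in G}\pos{g} = P$, equivalence $\bigcup_{g\in G}\idrel{\pos{g}} = \idrel{P}$, and labelling that agrees with $l$ because $\unrav{g}(\pi) = g(\pi)$ for all $\pi \in \pos{g}$. Both sides thus share the labelled quotient tree $(P,l,\idrel{P})$ and by Lemma~\ref{lem:occrep} are equal as canonical term graphs.

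The one subtlety is making sure the lub on the right is taken in the right structure. Since $(P,l,\idrel{P})$ describes a term tree, the computed object lies in $\ipterms$, and Corollary~\ref{cor:lebotrGeneralise} guarantees that the lub of $\setcom{\unrav{g}}{g\in G}$ in $(\ipctgraphs,\lebotr)$ coincides with its lub in $(\ipterms,\lebot)$, so the statement holds regardless of which partially ordered set one reads the right-hand lub in. The main obstacle, to the extent there is one, is simply bookkeeping: verifying that the equivalence component of the lub really collapses to $\idrel{P}$, which relies on the fact that every $\sim_{\unrav{g}}$ is an identity relation on $\pos{g}$ and the union of identities on a directed family of subsets of $\nats^*$ is again the identity on the union.
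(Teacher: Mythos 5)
Your proposal is correct and follows essentially the same route as the paper: directedness via Proposition~\ref{prop:poUnrav}, and the equality by comparing the labelled quotient trees obtained from Theorem~\ref{thm:graphCpo} and Proposition~\ref{prop:unravTree} (your version just spells out the bookkeeping the paper leaves implicit). The remark about reading the right-hand lub in $(\ipterms,\lebot)$ via Corollary~\ref{cor:lebotrGeneralise} is a harmless extra observation, not needed for the statement itself.
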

\begin{proof}
  The fact that $\setcom{\unrav{g}}{g\in G}$ is directed in
  $(\ipctgraphs,\lebotr)$ follows from
  Proposition~\ref{prop:poUnrav}. The equality follows from the
  characterisation of the lub in Theorem~\ref{thm:graphCpo} and from
  Proposition~\ref{prop:unravTree}.
\end{proof}

For greatest lower bounds of $\lebotr$, the situation is more
complicated as we have to consider arbitrary non-empty sets of term
graphs instead of only directed sets.

We start with the characterisation of glbs in the partially ordered
set $(\ipterms,\lebot)$ of terms. Since this partially ordered set
forms a complete semi-lattice, we know that it admits glbs of
arbitrary non-empty sets. The following lemma characterises these
glbs:
\begin{lemma}[glb on terms]
  \label{lem:glbTerms}
  The glb $\Glb T$ of a non-empty set $T$ in $(\ipterms,\lebot)$ is
  given by the labelled quotient tree $(P,l,\idrel{P})$ where
  \begin{align*}
    P &= \setcom{\pi \in {\bigcap}_{t \in T} \pos{t}}{\forall \pi' <
      \pi\exists f \in \Sigma_\bot\forall t \in T: t(\pi') = f}\\
    l(\pi) &=
    \begin{cases}
      f &\text{if } \forall t \in T: f = t(\pi)\\
      \bot &\text{otherwise}
    \end{cases}
  \end{align*}
\end{lemma}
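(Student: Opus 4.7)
The plan is to verify that $(P, l, \idrel{P})$ is indeed a labelled quotient tree, that the resulting term tree $s$ is a lower bound of $T$, and that any lower bound of $T$ is bounded above by $s$. Since we are working over term trees, $\sim_s$ is always the identity on $\pos{s}$, so all order-theoretic conditions from Corollary~\ref{cor:chaTgraphPo} collapse to the labelling condition (c).

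First I would check that $(P, l, \idrel{P})$ satisfies the conditions of Definition~\ref{def:occRep}. Non-emptiness of $P$ follows because $\emptyseq \in \pos{t}$ for every $t \in T$ and the universal quantification over $\pi' < \emptyseq$ is vacuous. For reachability, suppose $\pi \concat \seq{i} \in P$. Then $\pi \in \bigcap_{t\in T} \pos{t}$ and, since $\pi < \pi\concat\seq{i}$, all $t \in T$ agree on some common symbol $f$ at $\pi$; hence for any $\pi' < \pi$ we also have all $t \in T$ agreeing at $\pi'$, so $\pi \in P$. Moreover $l(\pi) = f$, and $i < \srank{f} = \srank{l(\pi)}$. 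The congruence condition is trivial since $\sim$ is the identity.

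Next I would show that the induced canonical term tree $s$ given by this labelled quotient tree is a lower bound of $T$ in $(\ipterms,\lebot)$. Fix $t \in T$. By Corollary~\ref{cor:chaTgraphPo} (or directly by Lemma~\ref{lem:occrephom}) it suffices to verify that $\pos{s} \subseteq \pos{t}$ and that $s(\pi) \in \Sigma$ implies $s(\pi) = t(\pi)$. The inclusion of positions is immediate from $P \subseteq \pos{t}$. For the labelling, if $l(\pi) = f \in \Sigma$ then by construction all elements of $T$ agree on the label $f$ at $\pi$, so $t(\pi) = f = s(\pi)$.

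Finally, I would prove that $s$ is the greatest lower bound. Let $s' \in \ipterms$ be any lower bound of $T$, i.e.\ $s' \lebot t$ for all $t \in T$. I need to derive $s' \lebot s$. Again by Lemma~\ref{lem:occrephom}, this amounts to showing, for every $\pi \in \pos{s'}$ with $s'(\pi) \neq \bot$, that $\pi \in \pos{s} = P$ and $s(\pi) = s'(\pi)$. Fix such a $\pi$. Since $s' \lebot t$ for each $t \in T$, the labelling condition yields $\pi \in \pos{t}$ and $t(\pi) = s'(\pi)$, so $\pi \in \bigcap_{t\in T} \pos{t}$ and all $t \in T$ agree at $\pi$. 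The subtlety is to verify the prefix condition in the definition of $P$: for any $\pi' < \pi$ we have $\pi' \in \pos{s'}$, and because $s'$ is a term tree in which $\pi'$ has at least one proper extension, $s'(\pi')$ cannot be a nullary symbol and in particular is not $\bot$; so $s'(\pi') \in \Sigma$, and by the lower-bound property $t(\pi') = s'(\pi')$ for all $t \in T$. Hence all $t \in T$ share the symbol $s'(\pi')$ at $\pi'$, so $\pi \in P$ and $l(\pi) = s'(\pi)$, as required.

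The main obstacle is the verification of the greatest-lower-bound property: one has to recognise that the seemingly awkward prefix clause in the definition of $P$ is exactly what is forced by the fact that non-$\bot$ labels in an arbitrary lower bound propagate to all strict prefixes (because any position with a successor is labelled by a symbol of positive arity, hence non-$\bot$). Once this observation is made, the argument is a routine application of Lemma~\ref{lem:occrephom}.
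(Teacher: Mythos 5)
Your proof is correct, but it takes a genuinely different route from the paper: the paper offers no in-text argument at all for Lemma~\ref{lem:glbTerms} -- its proof is a one-line citation, noting that the statement is a special case of Proposition~5.9 of \cite{bahr12rta}, where the corresponding glb construction is carried out in the more general setting of the simple partial order $\lebots$ on term graphs and then specialises to terms. Your direct verification via Definition~\ref{def:occRep} and the characterisation of $\bot$-homomorphisms (Lemma~\ref{lem:occrephom}) is self-contained, and it correctly isolates the crux, namely that the prefix clause in the definition of $P$ is exactly what is forced by the fact that a non-$\bot$ label in an arbitrary lower bound propagates to all strict prefixes; the paper's citation buys brevity and a more general statement at the cost of depending on an external result. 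One step you wave through deserves a sentence: the congruence condition for $\idrel{P}$ is not entirely vacuous, because $\pi \sim \pi$ forces $\pi\concat\seq{i} \sim \pi\concat\seq{i}$, i.e.\ $\pi\concat\seq{i} \in P$, for every $i < \srank{l(\pi)}$. This closure of $P$ under children of positions with $l(\pi) \in \Sigma$ does hold -- all $t \in T$ carry the same symbol at $\pi$, hence all have the child position $\pi\concat\seq{i}$, and the prefix clause for $\pi\concat\seq{i}$ follows from that for $\pi$ together with the agreement at $\pi$ -- and the same observation is what implicitly licenses your reduction, in the maximality argument, of the requirement $\pos{s'} \subseteq P$ to the non-$\bot$ positions of $s'$ (a $\bot$-labelled position of $s'$ has a non-$\bot$-labelled parent, which lands in $P$ with the same symbol, and child-closure then puts the position itself in $P$). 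With that line added, your argument is complete.
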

\begin{proof}
  Special case of Proposition~5.9 in \cite{bahr12rta}.
\end{proof}

By combining the above characterisation with the characterisation of
unravelled term graphs, we obtain the following:
\begin{corollary}
  \label{cor:glbUnrav}
  Given a non-empty set $G$ in $(\ipctgraphs,\lebotr)$, the glb
  $\Glb_{g\in G} \unrav{g}$ is given by the labelled quotient tree
  $(P,l,\idrel{P})$ where
  \begin{align*}
    P &= \setcom{\pi \in {\bigcap}_{g \in G} \pos{g}}{\forall \pi' <
      \pi\exists f \in \Sigma_\bot\forall g \in G: g(\pi') = f}\\
    l(\pi) &=
    \begin{cases}
      f &\text{if } \forall g \in G: f = g(\pi)\\
      \bot &\text{otherwise}
    \end{cases}
  \end{align*}
\end{corollary}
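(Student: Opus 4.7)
The plan is straightforward: this is a direct combination of Lemma~\ref{lem:glbTerms} with Proposition~\ref{prop:unravTree}. First I would apply Lemma~\ref{lem:glbTerms} to the non-empty set of terms $T = \setcom{\unrav{g}}{g \in G} \subseteq \ipterms$, which yields a labelled quotient tree $(P',l',\idrel{P'})$ for $\Glb T$ where $P'$ is defined using $\pos{\unrav{g}}$ and the labels $\unrav{g}(\pi')$, and $l'$ is defined using $\unrav{g}(\pi)$.

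The next step is to translate positions and labellings of the unravelled term trees back to those of the original term graphs. By Proposition~\ref{prop:unravTree}, the term tree $\unrav{g}$ has labelled quotient tree $(\pos{g}, g(\cdot), \idrel{\pos{g}})$, so $\pos{\unrav{g}} = \pos{g}$ and $\unrav{g}(\pi) = g(\pi)$ for every $\pi \in \pos{g}$. Substituting these identities into the description of $P'$ and $l'$ produced by Lemma~\ref{lem:glbTerms} yields exactly the set $P$ and labelling $l$ stated in the corollary.

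Finally, since the greatest lower bound of a non-empty set in a complete semilattice is unique and determined by its labelled quotient tree (Lemma~\ref{lem:occrep}), the labelled quotient tree $(P,l,\idrel{P})$ obtained this way indeed describes $\Glb_{g \in G} \unrav{g}$. There is no real obstacle here; the only thing to verify is that the substitution of $\pos{g}$ for $\pos{\unrav{g}}$ and of $g(\pi)$ for $\unrav{g}(\pi)$ is applied uniformly in both the defining condition of $P$ (which quantifies over proper prefixes $\pi' < \pi$) and in $l$, which is immediate from Proposition~\ref{prop:unravTree}.
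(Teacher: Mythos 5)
Your proposal is correct and matches the paper's own proof, which likewise derives the corollary directly by instantiating Lemma~\ref{lem:glbTerms} with the set of unravellings and then rewriting positions and labels via Proposition~\ref{prop:unravTree}. The only difference is that you spell out the substitution step explicitly, which the paper leaves implicit.
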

\begin{proof}
  Follows immediately from Lemma~\ref{lem:glbTerms} and
  Proposition~\ref{prop:unravTree}.
\end{proof}

Before we deal with the preservation of glbs under unravelling, we
need the following property that relates the unravelling of a glb to
the original term graphs:
\begin{lemma}[unravelling of a glb]
  \label{lem:unravGlb}
  For each non-empty set $G$ in $(\ipctgraphs,\lebotr)$, the term $t =
  \unrav{\Glb G}$ satisfies the following for all $g\in G$ and $\pi
  \in \pos{t}$:
  \begin{center}
    \begin{inparaenum}[(i)]
    \item $\pi \in \pos{g}$ \hspace{2cm}
      \label{item:unravGlb1}
    \item $t(\pi) \in \Sigma \quad \implies \quad t(\pi) = g(\pi)$
      \label{item:unravGlb2}
    \end{inparaenum}
  \end{center}
\end{lemma}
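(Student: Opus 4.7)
The plan is to reduce the statement to the explicit description of $\Glb_{g\in G}\unrav g$ provided by Corollary~\ref{cor:glbUnrav}. The key intermediate fact is that $t = \unrav{\Glb G}$ is not only below each individual term $\unrav g$ for $g\in G$, but in fact below their glb in $(\ipterms,\lebot)$, and this glb already carries all the information we need to conclude.

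First I would observe that $\Glb G \lebotr g$ for every $g\in G$ by definition of the glb. Proposition~\ref{prop:poUnrav} then yields $\unrav{\Glb G} \lebotr \unrav g$, and, since both term graphs here are term trees, Corollary~\ref{cor:lebotrGeneralise} turns this into $\unrav{\Glb G} \lebot \unrav g$. Thus $t$ is a lower bound of the set $T = \setcom{\unrav g}{g\in G}$ in the complete semilattice $(\ipterms,\lebot)$, and hence $t \lebot s$, where $s = \Glb T$ denotes the glb of $T$ in $(\ipterms,\lebot)$. By Corollary~\ref{cor:glbUnrav}, the term $s$ is given by the labelled quotient tree $(P,l,\idrel{P})$, where $P \subseteq \bigcap_{g\in G}\pos{g}$ and $l(\pi) = g(\pi)$ for every $g\in G$ whenever $l(\pi)\in\Sigma$.

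It remains to extract the two conclusions from $t \lebot s$. Since $t$ and $s$ are term trees, their equivalence relations $\sim_t$ and $\sim_s$ are identities, so Corollary~\ref{cor:chaTgraphPo} specialises (taking $\pi = \pi'$ in clause~(a)) to $\pos{t} \subseteq \pos{s}$ and to $t(\pi) = s(\pi)$ for every $\pi\in\pos{t}$ with $t(\pi)\in\Sigma$. Claim~(i) is then immediate: any $\pi\in\pos{t}$ lies in $\pos{s} = P \subseteq \pos{g}$. For claim~(ii), if $t(\pi)\in\Sigma$, then $l(\pi) = s(\pi) = t(\pi)\in\Sigma$, and the characterisation of $l$ gives $g(\pi) = l(\pi) = t(\pi)$ for every $g\in G$. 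No serious obstacle is expected: the proof is essentially a chain of invocations, with all the real work already discharged by the monotonicity of unravelling (Proposition~\ref{prop:poUnrav}) and the explicit glb description of Corollary~\ref{cor:glbUnrav}.
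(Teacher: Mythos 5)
Your proof is correct, but it takes a genuinely different route from the paper's. The paper argues directly on the term graph $h = \Glb G$: by Proposition~\ref{prop:unravTree}, $t = \unrav{h}$ has exactly the positions and labels of $h$, and then $h \lebotr g$ together with Corollary~\ref{cor:chaTgraphPo} immediately yields both claims -- a two-line argument that never mentions the glb of the unravelled terms. You instead descend to the term level: from $\Glb G \lebotr g$, monotonicity of unravelling (Proposition~\ref{prop:poUnrav}) and Corollary~\ref{cor:lebotrGeneralise} give that $t$ is a $\lebot$-lower bound of $\setcom{\unrav g}{g \in G}$, hence $t \lebot s$ for $s = \Glb_{g\in G}\unrav{g}$ in $(\ipterms,\lebot)$, and you then read off (i) and (ii) from the explicit labelled quotient tree of $s$ in Corollary~\ref{cor:glbUnrav}. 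This is sound and non-circular, since every result you invoke precedes the lemma in the paper; in effect you prove the inequality $\unrav{\Glb G} \lebot \Glb_{g\in G}\unrav g$ of Proposition~\ref{prop:glbUnravWeak} as an intermediate step (by a shorter argument than the paper's, which derives it \emph{from} this lemma), showing that the two statements could be organised in the opposite order. The price is that your route is more roundabout for the lemma itself -- it needs the full description of the term-level glb from Corollary~\ref{cor:glbUnrav}, where the paper only needs that $t$ inherits the structure of $\Glb G$ and that $\Glb G \lebotr g$.
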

\begin{proof}
  Let $g\in G$, $\pi \in \pos{t}$, and $h = \Glb G$. Then $\pi \in
  \pos{h}$ and $h(\pi) = t(\pi)$ according to
  Proposition~\ref{prop:unravTree}. Since $h \lebotr g$, we may apply
  Corollary~\ref{cor:chaTgraphPo} to obtain (\ref{item:unravGlb1})
  that $\pi \in \pos{g}$ and (\ref{item:unravGlb2}) that $t(\pi) =
  g(\pi)$ whenever $t(\pi) \in \Sigma$.
\end{proof}

\begin{proposition}[weak preservation of glbs under unravelling]
  \label{prop:glbUnravWeak}
  If $G$ is a non-empty set in $(\ipctgraphs,\lebotr)$, then
  $\unrav{\Glb_{g\in G} g} \lebotr \Glb_{g\in G} \unrav{g}$.
\end{proposition}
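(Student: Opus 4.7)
My plan is to reduce the statement to the characterisation of $\lebotr$ on term trees and then exploit the existing descriptions of both sides as labelled quotient trees. Write $h = \Glb_{g\in G} g$ and set $t = \unrav{h}$ and $t' = \Glb_{g\in G} \unrav{g}$. Both $t$ and $t'$ are term trees, so by Corollary~\ref{cor:chaTgraphPo} conditions~(\ref{item:chaTGraphPo1}) and (\ref{item:chaTGraphPo2}) are automatic (every equivalence is the identity and every position is acyclic). Hence $t \lebotr t'$ boils down to two purely position-based obligations:
\begin{enumerate}[(1)]
\item $\pos{t} \subseteq \pos{t'}$, and
\item $t(\pi) = t'(\pi)$ for every $\pi \in \pos{t}$ with $t(\pi) \in \Sigma$.
\end{enumerate}

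To discharge both, I would use the explicit presentation of $t'$ from Corollary~\ref{cor:glbUnrav} together with Lemma~\ref{lem:unravGlb}. Fix $\pi \in \pos{t}$. For~(1), I need $\pi \in \bigcap_{g\in G}\pos{g}$ and that for every $\pi' < \pi$ there is a common label $f\in\Sigma_\bot$ with $g(\pi') = f$ for all $g\in G$. The first of these is immediate from Lemma~\ref{lem:unravGlb}(\ref{item:unravGlb1}). For the second, note that since $\pi' < \pi$ and $t$ is a tree, $\pi' \in \pos{t}$ and $t(\pi')$ is of positive arity, hence $t(\pi') \in \Sigma$; so Lemma~\ref{lem:unravGlb}(\ref{item:unravGlb2}) yields $t(\pi') = g(\pi')$ for all $g\in G$, giving the required common label $f = t(\pi')$. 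For~(2), if $t(\pi) \in \Sigma$, the same clause of Lemma~\ref{lem:unravGlb} gives $g(\pi) = t(\pi)$ for all $g\in G$, and then the defining clause of $l$ in Corollary~\ref{cor:glbUnrav} forces $t'(\pi) = t(\pi)$.

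That is essentially the whole argument. The only thing that requires a moment of thought is why I may assume $t(\pi') \in \Sigma$ for proper prefixes $\pi'$ of a position $\pi \in \pos{t}$; this is where the tree structure of $t$ combined with the fact that $\bot$ is nullary is doing real work. Everything else is plugging Lemma~\ref{lem:unravGlb} and Corollary~\ref{cor:glbUnrav} into Corollary~\ref{cor:chaTgraphPo}. The asymmetry that prevents upgrading the statement to an equality is already visible in the characterisation: $t' = \Glb_{g\in G}\unrav{g}$ fills a position $\pi$ with $\bot$ whenever the $\unrav{g}$ merely disagree there, whereas $\unrav{\Glb G}$ may simply omit $\pi$ from its domain if $\Glb G$ has no corresponding node; this is exactly the slack recorded by $\lebotr$.
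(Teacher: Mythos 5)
Your proof is correct and follows essentially the same route as the paper: reduce $\lebotr$ on the two term trees to the position/label conditions, then combine Lemma~\ref{lem:unravGlb} (positions and $\Sigma$-labels of $\unrav{\Glb G}$ agree with every $g\in G$) with the description of $\Glb_{g\in G}\unrav{g}$ in Corollary~\ref{cor:glbUnrav}, using that proper prefixes cannot carry the nullary symbol $\bot$. The paper's proof phrases the reduction directly via the characterisation of $\lebot$ on terms rather than spelling out Corollary~\ref{cor:chaTgraphPo}, but the substance is identical.
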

\begin{proof}
  Let $s = \unrav{\Glb_{g\in G} g}$ and $t = \Glb_{g\in G}
  \unrav{g}$. Since both $s$ and $t$ are terms, we can use the
  characterisation of $\lebot$ instead of $\lebotr$. That is, we will
  show that for each $\pi \in \pos{s}$, we have that $\pi \in \pos{t}$
  and that $t(\pi) = s(\pi)$ whenever $s(\pi) \in \Sigma$.
  
  If $\pi \in \pos{s}$, then $\pi' \in \pos{s}$ for all $\pi' <
  \pi$. Since $s(\pi')$ cannot be a nullary symbol if $\pi' < \pi$, we
  know that $s(\pi') \neq \bot$. Hence, we can apply
  Lemma~\ref{lem:unravGlb} in order to obtain for all $g\in G$ that
  $\pi \in \pos{g}$ and that $s(\pi') = g(\pi')$ for all $\pi' <
  \pi$. According to Corollary~\ref{cor:glbUnrav}, this means that
  $\pi \in \pos{t}$. In order to show the second part, assume that
  $s(\pi) \in \Sigma$. Then, by Lemma~\ref{lem:unravGlb}, $g(\pi) =
  s(\pi)$ for all $g \in G$, which, according to
  Corollary~\ref{cor:glbUnrav}, implies that $t(\pi) = s(\pi)$.
\end{proof}

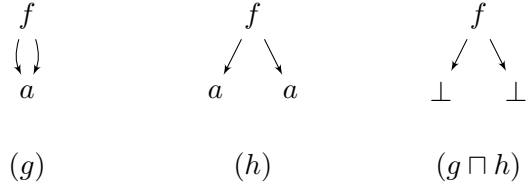
\begin{figure}
  \centering
  \begin{tikzpicture}[node distance=3cm]
    \node (g) {$f$}%
    child {%
      node (a) {$a$}%
      edge from parent[draw=none]%
    };%
    \draw[->]%
    (g) edge[bend right=20] (a)
    edge[bend left=20] (a);%

    \node[right of=g] (h) {$f$}%
    child {%
      node  {$a$}%
    } child {%
      node {$a$}%
    };%

    \node[right of=h] (b) {$f$}%
    child {%
      node  {$\bot$}%
    } child {%
      node {$\bot$}%
    };%
    
    \begin{scope}[node distance=20mm]
      \node[below of=g] {$(g)$};%
      \node[below of=h] {$(h)$};%
      \node[below of=b] {$(g \glb h)$};%
    \end{scope}
  \end{tikzpicture}
  \caption{Failure of preservation of glbs under unravelling.}
  \label{fig:counterGlbUnrav}
\end{figure}

In general, glbs are not fully preserved under unravelling as the
following example shows:
\begin{example}
  \label{ex:counterGlbUnrav}
  Consider the term graphs $g$ and $h$ in
  Figure~\ref{fig:counterGlbUnrav}. The only difference between the
  two term graphs is the sharing of the arguments of the root
  node. Due to this difference in sharing, the glb $g \glb h$ of the
  two term graphs is a proper partial term graph as depicted in
  Figure~\ref{fig:counterGlbUnrav}. On the other hand, since the
  unravelling of the two term graphs coincides, viz.\ $\unrav g =
  \unrav h = h$, we have that $\unrav g \glb \unrav h = h$. In
  particular, we have the strict inequality $\unrav{g \glb h} \lbotr
  \unrav g \glb \unrav h$.
\end{example}

Unfortunately, this also means that the limit inferior is only weakly
preserved under unravelling as well:
\begin{theorem}
  \label{thr:liminfUnrav}
  For every sequence $(g_\iota)_{\iota<\alpha}$ in
  $(\ipctgraphs,\lebotr)$, we have that
  \[
  \unrav{\liminf_{\iota \limto \alpha} g_\iota} \lebotr \liminf_{\iota
    \limto \alpha} \unrav{g_\iota}.
  \]
\end{theorem}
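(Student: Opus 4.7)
The plan is to unfold the limit inferior as $\liminf_{\iota\limto\alpha} g_\iota = \Lub_{\beta<\alpha} \Glb_{\beta\le\iota<\alpha} g_\iota$ and then commute the unravelling operator past the lub exactly while commuting it past each glb only up to $\lebotr$. The two workhorse lemmas are Proposition~\ref{prop:lubUnrav} (exact preservation of directed lubs) and Proposition~\ref{prop:glbUnravWeak} (weak preservation of glbs), and the remaining ingredient is monotonicity of lubs under $\lebotr$.

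First, for each $\beta < \alpha$ write $h_\beta = \Glb_{\beta\le\iota<\alpha} g_\iota$ and $h'_\beta = \Glb_{\beta\le\iota<\alpha} \unrav{g_\iota}$, so that the goal becomes $\unrav{\Lub_{\beta<\alpha} h_\beta} \lebotr \Lub_{\beta<\alpha} h'_\beta$. Observe that the family $(h_\beta)_{\beta<\alpha}$ is $\lebotr$-monotonic, because the set over which the glb is taken shrinks as $\beta$ grows; hence $\setcom{h_\beta}{\beta<\alpha}$ is a chain and in particular a directed subset of $(\ipctgraphs,\lebotr)$. By Proposition~\ref{prop:lubUnrav} we therefore obtain the exact equality
\[
\unrav{\Lub_{\beta<\alpha} h_\beta} \;=\; \Lub_{\beta<\alpha} \unrav{h_\beta}.
\]

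Next, for every fixed $\beta<\alpha$, Proposition~\ref{prop:glbUnravWeak} applied to the non-empty set $\setcom{g_\iota}{\beta\le\iota<\alpha}$ yields $\unrav{h_\beta} \lebotr h'_\beta$. Combining this family of inequalities with the monotonicity of the lub --- any upper bound of the $h'_\beta$ is \emph{a fortiori} an upper bound of the $\unrav{h_\beta}$, so $\Lub_\beta \unrav{h_\beta} \lebotr \Lub_\beta h'_\beta$ --- and chaining with the identity from the previous paragraph gives
\[
\unrav{\Lub_{\beta<\alpha} h_\beta} \;=\; \Lub_{\beta<\alpha} \unrav{h_\beta} \;\lebotr\; \Lub_{\beta<\alpha} h'_\beta,
\]
which is precisely the claimed inequality. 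Both $\Lub$'s above exist because $(\ipctgraphs,\lebotr)$ is a complete semilattice by Theorem~\ref{thm:graphBcpo}, and the glbs $h_\beta, h'_\beta$ exist for the same reason (noting that the extension of the result to $\unrav{g_\iota}$ sits inside $(\ipctgraphs,\lebotr)$ by Proposition~\ref{prop:liminfGeneralise}).

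The main subtlety I anticipate is not any single step but the bookkeeping around directedness: one must verify that the family $(h_\beta)$ is a chain (so that Proposition~\ref{prop:lubUnrav}, which is stated only for directed sets, applies) and that the monotonicity step really goes through in a complete semilattice even when the inequality $\unrav{h_\beta} \lebotr h'_\beta$ is strict, as exhibited in Example~\ref{ex:counterGlbUnrav}. The degenerate case $\alpha = 0$ is harmless: both sides collapse to $\bot$, whose unravelling is again $\bot$. No stronger form of preservation of glbs is available --- indeed, Example~\ref{ex:counterGlbUnrav} already rules out equality in the theorem's conclusion --- so the weak inequality is the best one can hope for with these tools.
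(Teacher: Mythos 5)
Your proposal is correct and takes essentially the same route as the paper: the paper's own proof is just the one-line observation that the theorem follows from Proposition~\ref{prop:glbUnravWeak} and Proposition~\ref{prop:lubUnrav}, and your argument is precisely the detailed unfolding of that, decomposing the limit inferior into a lub of a monotone family of glbs, applying exact lub-preservation to that directed family and weak glb-preservation pointwise, and finishing with monotonicity of least upper bounds.
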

\begin{proof}
  This follows from Proposition~\ref{prop:glbUnravWeak} and
  Proposition~\ref{prop:lubUnrav}.
\end{proof}

Again, we can construct a counterexample that shows that the converse
inequality does not hold in general:
\begin{example}
  \label{ex:counterLiminfUnrav}
  Let $(g_\iota)_{\iota<\omega}$ be the sequence alternating between
  $g$ and $h$ from Figure~\ref{fig:counterGlbUnrav}, i.e.\ $g_{2\iota}
  = g$ and $g_{2\iota+1} = h$ for all $\iota<\omega$. Then
  $\Glb_{\alpha \le \iota<\omega} g_\iota = g \glb h$ for each $\alpha
  < \omega$ and, consequently, $\liminf_{\iota\limto\omega}g_\iota = g
  \glb h$. As we have seen in Example~\ref{ex:counterGlbUnrav}, $g
  \glb h$ is the proper partial term graph depicted in
  Figure~\ref{fig:counterGlbUnrav}. On the other hand, since $\unrav g
  = \unrav h = h$, we have that
  $\liminf_{\iota\limto\omega}\unrav{g_\iota} = h$. In particular, we
  have the strict inequality
  $\unrav{\liminf_{\iota\limto\omega}g_\iota} \lbotr
  \liminf_{\iota\limto\omega}\unrav{g_\iota}$.
\end{example}

Moreover, we cannot expect that any other partial order with
properties comparable to those of $\lebotr$ fully preserves the limit
inferior under unravelling.

The example above shows that any partial order $\le$ on partial term
graphs whose limit inferior is preserved under unravelling must also
satisfy either $g \le h$ or $h \le g$ for the term graphs in
Figure~\ref{fig:counterGlbUnrav}. That is, such a partial order has to
give up the property that total term graphs are maximal, cf.\
Proposition~\ref{prop:nonPartMax}. This observation is independent of
whether this partial order specialises to $\lebot$ on terms.

The sacrifice for full preservation under unravelling goes even
further. If a partial order $\le$ on partial term graphs satisfies
preservation of its limit inferior under unravelling, the limit
inferior $\liminf_{\iota\limto\omega}g_\iota$ of the sequence
$(g_\iota)_{\iota<\omega}$ from Example~\ref{ex:counterLiminfUnrav}
has to unravel to $h$, a total term. That is,
$\liminf_{\iota\limto\omega}g_\iota$ has to be a total term graph. On
the other hand, there is no metric -- or any Hausdorff topology for
that matter -- for which $(g_\iota)_{\iota<\omega}$ converges at all
because $(g_\iota)_{\iota<\omega}$ alternates between two distinct
term graphs. In other words, the correspondence between $\mrs$- and
$\prs$-convergence, which we have for $\lebotr$ as stated in
Theorem~\ref{thr:graphTotalConv}, cannot be satisfied for such a
partial order $\le$, regardless of the metric on term graphs.

The simple partial order $\lebots$, which we briefly discussed in
comparison to the rigid partial order $\lebotr$ in
Section~\ref{sec:partial-order-lebot1}, takes the other side of the
trade-off illustrated above: it satisfies $g \lebots h$ and the
preservation of the limit inferior under unravelling but sacrifices
the correspondence between total term graphs and maximality as well as
the correspondence between $\mrs$- and
$\prs$-convergence~\cite{bahr12rta}.

Using the correspondence between the limit inferior in
$(\ipctgraphs,\lebotr)$ and the limit in $(\ictgraphs,\ddr)$, we can
derive full preservation of limits under unravelling:
\begin{theorem}
  \label{thr:limUnrav}
  For every convergent sequence $(g_\iota)_{\iota<\alpha}$ in
  $(\ictgraphs,\ddr)$, also $(\unrav{g}_\iota)_{\iota<\alpha}$ is
  convergent and
  \[
  \unrav{\lim_{\iota \limto \alpha} g_\iota} = \lim_{\iota \limto
    \alpha} \unrav{g_\iota}.
  \]
\end{theorem}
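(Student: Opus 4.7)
The plan is to combine three earlier results: the equivalence of metric limits and partial-order limit inferiors (Proposition~\ref{prop:graphLimLimInf} and Proposition~\ref{prop:graphLimInfLim}), the weak preservation of the limit inferior under unravelling (Theorem~\ref{thr:liminfUnrav}), and the maximality of total term graphs (Proposition~\ref{prop:nonPartMax}). Unravelling preserves the metric limit only weakly in the partial order, but we can close the gap because total term graphs are maximal, so any element above a total term graph must coincide with it.

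First I would set $g = \lim_{\iota \limto \alpha} g_\iota$ in $(\ictgraphs,\ddr)$. Since a convergent sequence is Cauchy, Proposition~\ref{prop:graphLimLimInf} gives $g = \liminf_{\iota \limto \alpha} g_\iota$ in $(\ipctgraphs,\lebotr)$. Applying Theorem~\ref{thr:liminfUnrav} then yields
\[
\unrav{g} \;=\; \unrav{\liminf_{\iota \limto \alpha} g_\iota} \;\lebotr\; \liminf_{\iota \limto \alpha} \unrav{g_\iota}.
\]
Because $g$ is total, so is $\unrav g$ (it is a term), hence by Proposition~\ref{prop:nonPartMax} it is maximal in $(\ipctgraphs,\lebotr)$. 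Therefore the inequality above must be an equality, giving $\unrav g = \liminf_{\iota \limto \alpha} \unrav{g_\iota}$.

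In particular, $\liminf_{\iota \limto \alpha} \unrav{g_\iota}$ is a total term graph. Since $(\unrav{g_\iota})_{\iota<\alpha}$ is a sequence in $\ictgraphs$, Proposition~\ref{prop:graphLimInfLim} is applicable and yields that $(\unrav{g_\iota})_{\iota<\alpha}$ converges in $(\ictgraphs,\ddr)$ with
\[
\lim_{\iota \limto \alpha} \unrav{g_\iota} \;=\; \liminf_{\iota \limto \alpha} \unrav{g_\iota} \;=\; \unrav g \;=\; \unrav{\lim_{\iota \limto \alpha} g_\iota},
\]
which is the claim.

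There is no substantial obstacle: every ingredient has been prepared in the preceding sections. The only subtle point — and the reason the analogous statement fails for the limit inferior (cf.\ Example~\ref{ex:counterLiminfUnrav}) — is the upgrade from the weak inequality of Theorem~\ref{thr:liminfUnrav} to an equality, which works here precisely because the hypothesis forces $\unrav g$ to be total and thus maximal w.r.t.\ $\lebotr$.
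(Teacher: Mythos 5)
Your proposal is correct and follows essentially the same route as the paper's own proof: establish $\lim = \liminf$ for the original sequence via Proposition~\ref{prop:graphLimLimInf}, upgrade the weak inequality of Theorem~\ref{thr:liminfUnrav} to an equality using maximality of total term graphs (Proposition~\ref{prop:nonPartMax}), and then convert back to a metric limit with Proposition~\ref{prop:graphLimInfLim}. No gaps; the key subtlety you single out (totality forcing the equality) is exactly the point the paper's proof relies on as well.
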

\proof
  We prove the equality as follows:
  \[
  \unrav{\lim_{\iota\limto\alpha} g_\iota} \stackrel{(1)}=
  \unrav{\liminf_{\iota\limto\alpha} g_\iota} \stackrel{(2)}=
  \liminf_{\iota\limto\alpha}\unrav{g_\iota}
  \stackrel{(3)}=\lim_{\iota\limto\alpha}\unrav{g_\iota}
  \]
  \begin{enumerate}[(1)]
  \item Since $(g_\iota)_{\iota<\alpha}$ is convergent, and thus
    Cauchy, we can apply Proposition~\ref{prop:graphLimLimInf} to
    obtain that $\lim_{\iota\limto\alpha} g_\iota =
    \liminf_{\iota\limto\alpha} g_\iota$.
  \item Since $\liminf_{\iota\limto\alpha} g_\iota$ is total, so is
    $\unrav{\liminf_{\iota\limto\alpha} g_\iota}$. By
    Proposition~\ref{prop:nonPartMax}, this means that
    $\unrav{\liminf_{\iota\limto\alpha} g_\iota}$ is maximal w.r.t.\
    $\lebotr$. Consequently, the inequality $\unrav{\liminf_{\iota
        \limto \alpha} g_\iota} \lebotr \liminf_{\iota \limto \alpha}
    \unrav{g_\iota}$ due to Theorem~\ref{thr:liminfUnrav} yields (2).
  \item This equality follows from
    Proposition~\ref{prop:graphLimInfLim} and the totality of 
    $\liminf_{\iota \limto \alpha} \unrav{g_\iota}$.\qed
  \end{enumerate}

\subsection{Finite Representations of Transfinite Term Reductions}
\label{sec:rational-terms-}

One of the motivations for considering modes of convergence on term
graphs in the first place is the study of finite representation of
transfinite term reductions as finite term graph reductions. Since
both the metric $\ddr$ and the partial order $\lebotr$ specialise to
the corresponding structures on terms, we can use both the metric
space $(\ictgraphs,\ddr)$ and the partially ordered set
$(\ipctgraphs,\lebotr)$ to move seamlessly from terms to term graphs
and vice versa.

For instance, Example~\ref{ex:duplication} illustrates reductions that
perform essentially the same computations, however, at different
levels of sharing / parallelism. This includes the complete lack of
sharing as well, i.e.\ term rewriting. For each of the cases we can
use the partially ordered set $(\ipctgraphs,\lebotr)$ and the metric
space $(\ictgraphs,\ddr)$ as a unifying framework to determine the
convergence behaviour.

In order to use the partial order $\lebotr$ and the metric $\ddr$ as a
tool to study finite representability of infinite term reductions as
finite term graph reductions there still is some work to be done,
though.

First and foremost, we need a unifying framework for performing both
term and term graph rewriting. A straightforward approach to achieve
this, is to include copying steps in term graph reductions that allow
us to revert the sharing produced by applying term graph
rules~\cite{plump99hggcbgt}. For example while the rule $\rho_3$ from
Figure~\ref{fig:doubleRules} is the term graph rule with the least
sharing that unravels to $\rho\fcolon h(x) \rightarrow f(h(x),h(x))$,
it still has some sharing in order to represent the duplication of the
variable $x$ on the right-hand side.

The result of this sharing is seen in
Figure~\ref{fig:doubleTransRedDiv}, which shows that even if we start
with a term tree, the rule $\rho_3$ turns it into a proper term
graph. Consequently it is slightly different from the corresponding
term reduction
\begin{center}
  \begin{tikzpicture}
        
    \node (g1) {$h$}%
    child {%
      node {$c$} };%

    \node [node distance=2.5cm,right=of g1] (g2) {$f$}%
    child {%
      node (h1) {$h$}%
      child {%
        node (c) {$c$}%
      }%
    } child {%
      node (h2) {$h$}%
      child {%
        node (c) {$c$}%
      }%
    };%
         
    \node [node distance=2.5cm,right=of g2] (g3) {$f$}%
    child {%
      node (f) {$f$}%
      child {%
        node (h1) {$h$}%
        child {%
          node (c2) {$c$}%
        }%
      } child {%
        node (h2) {$h$}%
        child {%
          node (c2) {$c$}%
        }%
      }%
    } child {%
      node (h3) {$h$}%
      child [missing]%
      child {%
        node {$c$}%
      }%
    };%

    \node [node distance=3.5cm,right=of g3] (go) {$f$}%
    child {%
      node (f1) {$f$}%
      child {%
        node (f2) {$f$}%
        child [etc] {%
          node {}%
        } child {%
          node (h1) {$h$}%
          child [missing]%
          child {%
            node {$c$}%
          }%
        }%
      } child {%
        node (h2) {$h$}%
        child [missing]%
        child {%
          node {$c$}%
        }%
      } } child {%
      node (h3) {$h$}%
      child [missing]%
      child {%
        node {$c$}%
      }%
    };%

    \node (S) at ($(g1)-(1.5,.5)$) {$S\fcolon$};%
    \node (s1) at ($(g1)!.4!(g2)-(0,.5)$) {};%
    \node (s2) at ($(g2)!.45!(g3)-(0,.5)$) {};%
    \node (s3) at ($(g3)!.35!(go)-(0,.5)$) {};%
    \node (s4) at ($(g3)!.6!(go)-(0,.5)$) {};%

    \draw[single step] ($(s1)-(.3,0)$) -- ($(s1)+(.3,0)$)%
    node[pos=1,below] {{\small$\rho$}};%
    \draw[single step] ($(s2)-(.3,0)$) -- ($(s2)+(.3,0)$)%
    node[pos=1,below] {{\small$\rho$}};%
    \draw[single step] ($(s3)-(.3,0)$) -- ($(s3)+(.3,0)$)%
    node[pos=1,below] {{\small$\rho$}};%
    \draw[dotted,thick,-] ($(s4)-(.3,0)$) -- ($(s4)+(.3,0)$);%
        
  \end{tikzpicture}
\end{center}
In fact, while the above term reduction $S$ is $\mrs$-convergent, the
term graph reduction via $\rho_3$, depicted in
Figure~\ref{fig:doubleTransRedDiv}, is not.

However, by interspersing the term graph reduction with reduction
steps that copy nodes -- and in general sub-term graphs -- we instead
obtain the following term graph reduction:
\begin{center}
  \begin{tikzpicture}
        
    \node (g1) {$h$}%
    child {%
      node {$c$} };%
        
    \node [node distance=2.5cm,right=of g1] (g2) {$f$}%
    child {%
      node (h1) {$h$}%
      child [missing]%
      child {%
        node (c) {$c$} } } child {%
      node (h2) {$h$}%
    };%

    \draw[->] (h2) edge (c);%

    \node [node distance=2.5cm,right=of g2] (g2c) {$f$}%
    child {%
      node (h1) {$h$}%
      child {%
        node (c) {$c$}%
      }%
    } child {%
      node (h2) {$h$}%
      child {%
        node (c) {$c$}%
      }%
    };%
         
    \node [node distance=2.5cm,right=of g2c] (g3) {$f$}%
    child {%
      node (f) {$f$}%
      child {%
        node (h1) {$h$}%
        child [missing]%
        child {%
          node (c2) {$c$}%
        }%
      } child {%
        node (h2) {$h$}%
      }%
    } child {%
      node (h3) {$h$}%
      child [missing]%
      child {%
        node (c) {$c$}%
      }%
    };%

    \draw[->]%
    (h1) edge (c2)%
    (h2) edge (c2);%

    \node [node distance=3.5cm,right=of g3] (go) {$f$}%
    child {%
      node (f1) {$f$}%
      child {%
        node (f2) {$f$}%
        child [etc] {%
          node {}%
        } child {%
          node (h1) {$h$}%
          child [missing]%
          child {%
            node {$c$}%
          }%
        }%
      } child {%
        node (h2) {$h$}%
        child [missing]%
        child {%
          node {$c$}%
        }%
      } } child {%
      node (h3) {$h$}%
      child [missing]%
      child {%
        node {$c$}%
      }%
    };%

    \node (s1) at ($(g1)!.4!(g2)-(0,.5)$) {};%
    \node (s2) at ($(g2)!.45!(g2c)-(0,.5)$) {};%
    \node (s3) at ($(g2c)!.45!(g3)-(0,.5)$) {};%
    \node (s4) at ($(g3)!.35!(go)-(0,.5)$) {};%
    \node (s5) at ($(g3)!.6!(go)-(0,.5)$) {};%

    \draw[single step] ($(s1)-(.3,0)$) -- ($(s1)+(.3,0)$)%
    node[pos=1,below] {{\small$\rho_3$}};%
    \draw[single step] ($(s2)-(.3,0)$) -- ($(s2)+(.3,0)$)%
    node[pos=1,below] {{\small copy}};%
    \draw[single step] ($(s3)-(.3,0)$) -- ($(s3)+(.3,0)$)%
    node[pos=1,below] {{\small$\rho_3$}};%
    \draw[single step] ($(s4)-(.3,0)$) -- ($(s4)+(.3,0)$)%
    node[pos=1,below] {{\small copy}};%
    \draw[dotted,thick,-] ($(s5)-(.3,0)$) -- ($(s5)+(.3,0)$);%
        
  \end{tikzpicture}
\end{center}
This reduction simulates the corresponding term reduction $S$ more
closely and in fact both reductions $\mrs$-converge to the same
term. Nevertheless, this approach creates the same issue that we have
already noted for soundness: the additional term graphs that get
interspersed with the original term reduction may affect the
convergence behaviour.

The second ingredient that we need in order to study the finite
representability of transfinite term reductions is a compression
property~\cite{kennaway95ic} for transfinite term graph reductions
that allows us to compress a transfinite term graph reduction that
ends in a finite term graph to a term graph reduction of finite
length.

Unfortunately, experience from infinitary term rewriting already shows
us that a general compression property -- allowing any reduction to be
compressed to length at most $\omega$ -- is not possible for weak
convergence~\cite{kennaway03book}. However, the more restrictive
version of the compression property that we need, viz.\ that
reductions ending in a finite term graph may be compressed to finite
length, does hold for weakly $\mrs$-converging term
reductions~\cite{lucas01rta} and there is hope that this carries over
to the term graph rewriting setting.

\section{Conclusions \& Future Work}
\label{sec:conclusions}

With the goal of generalising infinitary term rewriting to term
graphs, we have presented two different modes of convergence for an
infinitary calculus of term graph rewriting. The success of this
generalisation effort was demonstrated by a number of results. Many of
the properties of the modes of convergence on terms have been
maintained in this transition to term graphs. First and foremost, this
includes the intrinsic completeness properties of the underlying
structures, i.e.\ the metric space is still complete and the partially
ordered set still forms a complete semilattice. Moreover, we were also
able to maintain the correspondence between $\prs$- and
$\mrs$-convergence as well as the intuition of the partial order to
capture a notion of information preservation.

An important check for the appropriateness of the modes of convergence
on term graphs is their relation to the corresponding modes of
convergence on terms. For both the partial order and the metric
approach, we have that convergence on term graphs is a conservative
extension of the convergence on terms. Conversely, convergence on term
graphs carries over to convergence on terms via the unravelling
mapping. Unfortunately, this preservation of convergence under
unravelling is only weak in the case of the partial order setting;
cf.\ Theorem~\ref{thr:liminfUnrav}. However, as we have explained in
Section~\ref{sec:pres-conv-under}, this phenomenon is an unavoidable
side effect of the generalisation to term graphs unless other
important properties are sacrificed. Fortunately, this phenomenon
vanishes in the metric setting and we in fact obtain full preservation
of limits under unravelling; cf.\ Theorem~\ref{thr:limUnrav}.

As a result, we have obtained two modes of convergence, which allow us
to combine both infinitary term rewriting and term graph rewriting
within one theoretical framework. Our motivation for this effort is
derived from studying \emph{lazy evaluation} and the correspondence
between infinitary term rewriting and finitary term graph
rewriting. For both applications, we still require more understanding
of the matter, though: for the former, we still lack at least a
treatment of higher-order rewriting whereas we are much closer to the
latter. We have discussed issues concerning the correspondence between
infinitary term rewriting and finitary term graph rewriting in detail
in Section~\ref{sec:rational-terms-}: while the unified modes of
convergence are already helpful for studying infinitary rewriting with
a varying degree of sharing, we identified two shortcomings that have
to be addressed, viz.\ the lack of a unifying notion of rewriting for
terms and term graphs and a compression property for transfinite term
graph reductions.

Apart from the abovementioned issues, future work should also be
concerned with establishing a stronger correspondence between
infinitary term rewriting and infinitary term graph rewriting beyond
the preservation of limits under unravellings, which we showed in this
paper. Despite the difficulties that we encountered in
Section~\ref{sec:soundn--compl-1}, we think that obtaining such
results is possible. However, a more promising way of approaching this
issue is to restrict the notion of convergence to strong convergence
as we know it from infinitary term rewriting~\cite{kennaway95ic}. Such
a stricter notion of convergence takes the location of a reduction
step into consideration and, thus, provides a closer correspondence
between term graph reductions and their term rewriting
counterparts. Indeed, this technique has been applied successfully to
convergence on term graphs based on the simple partial order
$\lebots$, which we briefly discussed in comparison to the rigid
partial order $\lebotr$ in Section~\ref{sec:partial-order-lebot1}, and
a corresponding metric~\cite{bahr12rta}.

\section*{Acknowledgement}
The author would like to thank the anonymous referees of RTA 2011 as
well as the referees for the special issue of LMCS whose comments
greatly helped to improve the presentation of the material.

\bibliographystyle{plain}
\bibliography{itgr}

\appendix

\makeatletter
\providecommand\@dotsep{5}
\makeatother
\listoftodos\relax

\section{Proof of Lemma~\ref{lem:graphCompLub}}
\label{sec:proof-lemma-refl-1}

\theoremstyle{plain}\newtheorem*{lemGraphCompLub}{Lemma~\ref{lem:graphCompLub}}
\begin{lemGraphCompLub}[compatible elements have lub]
  Each pair $g_1,g_2$ of compatible term graphs in
  $(\ipctgraphs,\lebotr)$ has a least upper bound.
\end{lemGraphCompLub}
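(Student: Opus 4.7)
The plan is to realise the construction outlined in the discussion preceding the lemma and illustrated in Figure~\ref{fig:lubCompGraph}: form the disjoint union $g_1 \uplus g_2$ and identify nodes that are forced to coincide in any upper bound. Concretely, I would define an equivalence relation $\equiv$ on $N^{g_1} \uplus N^{g_2}$ as the least equivalence satisfying (i) $r^{g_1} \equiv r^{g_2}$, and (ii) if $n \equiv m$ with $\glab(n),\glab(m) \in \Sigma$, then $\gsuc_i(n) \equiv \gsuc_i(m)$ for every $i < \srank{\glab(n)}$. The candidate lub $\ol g$ then has $N^{\ol g} = (N^{g_1} \uplus N^{g_2})/\equiv$, root $\eqc{r^{g_1}}{\equiv}$, with each class labelled by the unique non-$\bot$ label among its representatives (or $\bot$ if no such representative exists) and with successors induced from the factors via any non-$\bot$ representative; passing to the canonical representative $\canon{\ol g}$ yields the desired element of $\ipctgraphs$.

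The key preliminary step is to exploit the compatibility hypothesis: fix any upper bound $h$ of $\set{g_1,g_2}$ together with rigid $\bot$-homomorphisms $\phi_j\fcolon g_j \homto_\bot h$. A straightforward induction on the generation of $\equiv$ establishes that $n \equiv m$ with $n \in N^{g_j}$ and $m \in N^{g_{j'}}$ implies $\phi_j(n) = \phi_{j'}(m)$. This single refinement suffices to set up the static part of the argument at once: (a) the labelling and successor functions of $\ol g$ are well-defined, because equivalent non-$\bot$-nodes necessarily share a common label; (b) the projections $q_j\fcolon g_j \to \ol g$, $n \mapsto \eqc{n}{\equiv}$, are $\bot$-homomorphisms, and once shown to be rigid they witness $g_j \lebotr \ol g$; and (c) the induced map $\psi\fcolon \ol g \to h$, $\eqc{n}{\equiv} \mapsto \phi_j(n)$ for $n \in N^{g_j}$, is a well-defined $\bot$-homomorphism that, once shown to be rigid, witnesses $\ol g \lebotr h$ and thereby delivers the lub property.

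The main obstacle I foresee is establishing rigidity of $q_j$ and $\psi$. The conceptual difficulty is that the quotient construction may introduce new positions, and in particular new cycles, absent from either factor, as illustrated in Figure~\ref{fig:lubCompGraph} where $g \lub h$ acquires an extra self-loop on the root; consequently, the acyclic positions of $\ol g$ are not in any obvious bijection with those of $g_1$ or $g_2$. By Lemma~\ref{lem:presShar}, rigidity of $q_j$ at a non-$\bot$-node $n \in N^{g_j}$ reduces to verifying $\nodePosAcy{\ol g}{\eqc{n}{\equiv}} \subseteq \nodePos{g_j}{n}$. I plan to attack this by induction on the length of an acyclic position $\pi$, tracing $\pi$ stepwise through $\ol g$ and leveraging Lemma~\ref{lem:strongD-homInj} together with the refinement established above: compatibility ensures that each equivalence class contains at most one non-$\bot$-node from each $g_j$, which controls how the new positions arising from the quotient can attach to the image of a non-$\bot$-node. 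Rigidity of $\psi$ then follows by a similar analysis combined with rigidity of the $\phi_j$'s.
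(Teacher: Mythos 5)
Your construction is, in substance, the one the paper uses in Appendix~\ref{sec:proof-lemma-refl-1}: there the quotient of $N^{g_1}\uplus N^{g_2}$ is taken by the equivalence generated by ``the two nodes share a position'', which under compatibility coincides with your congruence $\equiv$ generated from $r^{g_1}\equiv r^{g_2}$ (your variant even makes well-definedness of the successor map immediate); the refinement $n\equiv m\implies\phi_j(n)=\phi_{j'}(m)$ is the paper's first auxiliary claim; and leastness is obtained exactly as you propose, by noting that the construction does not depend on the chosen upper bound $h$ and exhibiting a rigid $\bot$-homomorphism $\psi\fcolon\ol g\homto_\bot h$. So the route is the same; the problem lies in the tools you allot to the step you yourself identify as the main obstacle.

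For rigidity of the projections $q_j$ you plan to use only Lemma~\ref{lem:strongD-homInj} and its consequence that each class contains at most one non-$\bot$-node from each $g_j$, deferring any use of the \emph{rigidity} of $\phi_1,\phi_2$ to the map $\psi$. That is not enough, and the induction you sketch stalls. In the step for an acyclic position $\pi=\pi'\concat\seq i$ of $\eqc{n}{\equiv}$ with $n\in N^{g_1}$ non-$\bot$, the only non-$\bot$ representative of the class at $\pi'$ may live in $g_2$; following its $i$-th successor you learn that some node $m$ of $g_2$ (or possibly a $\bot$-labelled node of $g_1$ distinct from $n$) lies in $\eqc{n}{\equiv}$ and has $\pi$ among its positions, and injectivity gives you no means to transfer $\pi$ to $n$. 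The missing ingredient is the transfer principle the paper isolates as its claim (4): any two non-$\bot$-nodes identified by $\equiv$ have \emph{identical acyclic position sets} in their respective graphs, which follows from the rigidity, not the injectivity, of the $\phi_j$, via $\nodePosAcy{g_1}{n}=\nodePosAcy{h}{\phi_1(n)}=\nodePosAcy{h}{\phi_2(m)}=\nodePosAcy{g_2}{m}$. This has to be combined with two further facts proved by induction on position length, namely that $\nodePos{g_j}{n}\subseteq\nodePos{\ol g}{\eqc{n}{\equiv}}$ (used to show the quotient does not create cycles below a retained position) and that every acyclic position of a class with non-$\bot$ label is realised as an acyclic position of one of its non-$\bot$ representatives. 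With these, rigidity of both $q_j$ and $\psi$ goes through as in the appendix; as stated, your plan lacks the key lemma that makes the induction close.
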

\begin{proof}[Proof of Lemma~\ref{lem:graphCompLub}]
  Since $\set{g_1,g_2}$ is not necessarily directed, its lub might
  have positions that are neither in $g_1$ nor in $g_2$. Therefore, it
  is easier to employ a different construction here: Following
  Remark~\ref{rem:lebot}, we will use the structure
  $(\quotient{\iptgraphs}{\isom},\lebotr)$ which is isomorphic to
  $(\ipctgraphs,\lebotr)$. To this end, we will construct a term graph
  $\ol g$ such that $\eqc{\ol g}{\isom}$ is the lub of
  $\set{\eqc{g_1}{\isom},\eqc{g_2}{\isom}}$. Since we assume that
  $\set{\eqc{g_1}{\isom},\eqc{g_2}{\isom}}$ has an upper bound, say
  $\eqc{\oh g}{\isom}$, there are two rigid $\bot$-homomorphisms
  $\phi_i\fcolon g_i \homto_\bot \oh g$.

  Let $g_j = (N^j,\gsuc^j,\glab^j,r^j)$, $j = 1,2$. Since we are
  dealing with isomorphism classes, we can assume w.l.o.g.\ that the
  nodes in $g_j$ are of the form $n^j$ for $j = 1,2$. Let $\ol M = N^1
  \uplus N^2$ and define the relation $\sim$ on $\ol M$ as follows:
  \[
  n^j \sim m^k \quad \text{iff} \quad \nodePos{g_j}{n^j} \cap
  \nodePos{g_k}{m^k} \neq \emptyset
  \]
  $\sim$ is clearly reflexive and symmetric. Hence, its transitive
  closure $\sim^+$ is an equivalence relation on $\ol M$.  Now define
  the term graph $\ol g = (\ol N, \ol \glab, \ol \gsuc, \ol r)$ as
  follows:
  \begin{align*}
    \ol N &= \quotient{\ol M}{\sim^+} &%
    \ol\glab(N) &=
    \begin{cases}
      f & \text{if } f \in\Sigma, \exists n^j \in N.\;\; \glab^j(n^j)
      = f \\
      \bot & \text{otherwise}
    \end{cases}\\%
    \ol r &= \eqc{r^1}{\sim^+} &%
    \ol \gsuc_i(N) &= N' \quad \text{iff} \quad \exists n^j \in N.\;\;
    \gsuc^j_i(n^j)\in N'
  \end{align*}
  Note that since $\emptyseq \in \nodePos{g_1}{r^1} \cap
  \nodePos{g_2}{r^2}$, we also have $\ol r = \eqc{r^2}{\sim^+}$.

  Before we argue about the well-definedness of $\ol g$, we need to
  establish some auxiliary claims:
  \begin{align*}
    n^j \sim^+ m^k \quad &\implies \quad \phi_j(n^j) = \phi_k(m^k)
    \qquad &&\text{for all } n^j,m^k \in \ol M
    \tag{1} \label{eq:graphCompLuba} \\
    \phi_j(n^j) = \phi_k(m^k) \quad &\implies \quad n^j \sim m^k
    \qquad
    &&\begin{aligned}
      &\text{for all } n^j,m^k \in \ol M\\[-5pt]
      &\text{with }
      \glab^j(n^j),\glab^k(m^k)\in\Sigma
    \end{aligned}
    \tag{1'} \label{eq:graphCompLubap}
  \end{align*}
  \def\claima{(\ref{eq:graphCompLuba})}
  \def\claimap{(\ref{eq:graphCompLubap})}
  
\noindent
  We show \claima{} by proving that $n^j \sim^p m^k$ implies
  $\phi_j(n^j) = \phi_k(m^k)$ by induction on $p > 0$. If $p = 1$,
  then $n^j \sim m^k$. Hence, $\nodePos{g_j}{n^j} \cap
  \nodePos{g_k}{m^k} \neq \emptyset$. Additionally, from
  Lemma~\ref{lem:canhom} we obtain both $\nodePos{g_j}{n^j} \subseteq
  \nodePos{\oh g}{\phi_j(n^j)}$ and $\nodePos{g_k}{m^k} \subseteq
  \nodePos{\oh g}{\phi_k(m^k)}$. Consequently, we also have that
  $\nodePos{\oh g}{\phi_j(n^j)} \cap \nodePos{\oh g}{\phi_k(m^k)} \neq
  \emptyset$, i.e.\ $\phi_j(n^j) = \phi_k(m^k)$. If $p = q+1 > 1$, then
  there is some $o^l \in \ol M$ with $n^j \sim o^l$ and $o^l \sim^q
  m^k$. Applying the induction hypothesis immediately yields
  $\phi_j(n^j) = \phi_l(o^l) = \phi_k(m^k)$.

  For \claimap{}, let $n^j,m^k \in \ol M$ with
  $\glab^j(n^j),\glab^k(m^k)\in\Sigma$ and $\phi_j(n^j) =
  \phi_k(m^k)$. Since $\phi_j$ and $\phi_k$ are rigid
  $\bot$-homomorphisms, we have the following equations:
  \[
  \nodePosAcy{g_j}{n^j} =
  \nodePosAcy{\oh g}{\phi_j(n^j)} = \nodePosAcy{\oh g}{\phi_k(m^k)} =
  \nodePosAcy{g_k}{m^k}.
  \]
  Hence, $\nodePos{g_j}{n^j} \cap \nodePos{g_k}{m^k} \neq \emptyset$
  and, therefore, $n^j \sim m^k$.

  Next we show that $\ol \glab$ is well-defined. To this end, let $N
  \in \ol N$ and $n^j,m^k \in N$ such that $\glab^j(n^j) = f_1
  \in\Sigma$ and $\glab^k(m^k) = f_2 \in\Sigma$. We need to show that
  $f_1 = f_2$. By \claima{}, we have that $\phi_j(n^j) =
  \phi_k(m^k)$. Since $f_1,f_2 \in\Sigma$, we can employ the labelling
  condition for $\phi_j$ and $\phi_k$ in order to obtain that
  \[
  f_1 = \glab^j(n^j) = \oh\glab(\phi_j(n^j)) = \oh\glab(\phi_k(m^k)) =
  \glab^k(m^k) = f_2.
  \]

  To argue that $\ol\gsuc$ is well-defined, we first have to show for
  all $N \in \ol N$ that $\ol\gsuc_i(N)$ is defined iff $i <
  \srank{\ol\glab(N)}$. Suppose that $\ol\gsuc_i(N)$ is defined. Then
  there is some $n^j\in N$ such that $\gsuc^j_i(n^j)$ is
  defined. Hence, $i < \srank{\glab^j(n^j)}$. Since then also
  $\glab^j(n^j) \in\Sigma$, we have $\ol\glab(N) =
  \glab^j(n^j)$. Therefore, $i < \srank{\ol\glab(N)}$. If, conversely,
  there is some $i\in\nats$ with $i < \srank{\ol \glab(N)}$, then we know that
  $\ol\glab(N) = f \in\Sigma$. Hence, there is some $n^j \in N$ with
  $\glab^j(n^j) = f$. Hence, $i < \srank{\glab^j(n^j)}$ and,
  therefore, $\gsuc^j_i(n^j)$ is defined. Hence, $\ol\gsuc_i(N)$ is
  defined.

  To finish the argument showing that $\ol \gsuc$ is well-defined, we
  have to show that, for all $N,N_1,N_2 \in \ol N$ and $n^j, m^k \in N$
  such that $\gsuc^j_i(n^j) \in N_1$ and $\gsuc^k_i(m^k) \in N_2$, we
  indeed have $N_1 = N_2$. As $n^j,m^k \in N$, we have $n^j \sim^+ m^k$
  and, therefore, $\phi_j(n^j) = \phi_k(n^k)$ according to
  \claima. Since both $\gsuc^j_i(n^j)$ and $\gsuc^k_i(m^k)$ are
  defined, we have $\glab^j(n^j),\glab^k(m^k) \in\Sigma$.  By \claimap\
  we then have $n^j \sim m^k$, i.e.\ there is some $\pi \in
  \nodePos{g_j}{n^j} \cap \nodePos{g_k}{m^k}$. Consequently,
  $\pi\concat\seq i \in \nodePos{g_j}{\gsuc^j_i(n^j)} \cap
  \nodePos{g_k}{\gsuc^k_i(m^k)}$. Hence, $\gsuc^j_i(n^j) \sim
  \gsuc^k_i(m^k)$ and, therefore, $N_1 = N_2$.

  Before we begin the main argument we need establish the following
  auxiliary claims:
  \begin{align*}
    \nodePos{g_j}{n^j} \subseteq \nodePos{\ol g}{\eqc{n^j}{\sim^+}}
    \qquad &\text{for all } n^j \in \ol M
    \tag{2} \label{eq:graphCompLubb}\\
    \forall \pi \in \nodePosAcy{\ol g}{N} \; \exists n^j \in N.\;\;
    \glab^j(n^j) \in\Sigma, \pi \in \nodePosAcy{g_j}{n^j} \qquad
    &\text{for all } N \in \ol N \text{ with } \ol\glab(N) \in\Sigma \tag{3} \label{eq:graphCompLubc}\\
    n^j \sim^+ m^k \quad \implies \quad \nodePosAcy{g_j}{n^j} =
    \nodePosAcy{g_k}{m^k} \quad &
    \begin{aligned}
      &\text{for all } n^j,m^j \in \ol M \\[-5pt]
      &\text{with } \glab^j(n^j),
      \glab^k(m^k) \in\Sigma
    \end{aligned}
    \tag{4} \label{eq:graphCompLubd}
  \end{align*}
  \def\claimb{(\ref{eq:graphCompLubb})}
  \def\claimc{(\ref{eq:graphCompLubc})}
  \def\claimd{(\ref{eq:graphCompLubd})}
  
  For \claimb{}, we will show that $\pi \in \nodePos{g_j}{n^j}$
  implies $\pi \in \nodePos{\ol g}{\eqc{n^j}{\sim^+}}$ by induction on
  the length of $\pi$. The case $\pi = \emptyseq$ is trivial. If $\pi
  = \pi' \concat \seq i$, then $\pi' \concat \seq i \in
  \nodePos{g_j}{n^j}$, i.e.\ for $m^j = \nodeAtPos{g_j}{\pi'}$, we
  have $\gsuc^j_i(m^j) = n^j$. Employing the induction hypothesis, we
  obtain $\pi' \in \nodePos{\ol g}{\eqc{m^j}{\sim^+}}.$ Additionally,
  according to the construction of $\ol g$, we have
  $\ol\gsuc_i(\eqc{m^j}{\sim^+}) = \eqc{n^j}{\sim^+}$. Consequently,
  $\pi'\concat \seq i \in \nodePos{\ol g}{\eqc{n^j}{\sim^+}}$ holds.

  Similarly, we also show \claimc{} by induction on the length of
  $\pi$. If $\pi = \emptyseq$, then we have $\emptyseq \in
  \nodePosAcy{\ol g}{N}$, i.e.\ $N = \ol r$. Since, by assumption,
  $\ol \glab(\ol r) \in\Sigma$ holds, there is some $j \in \set{1,2}$
  such that $\glab^j(r^j) \in\Sigma$. Moreover, we clearly have
  $\emptyseq \in \nodePosAcy{g_j}{r^j}$. If $\pi = \pi' \concat \seq
  i$, then we have $\pi' \concat \seq i \in \nodePosAcy{\ol
    g}{N}$. Let $N' = \nodeAtPos{\ol g}{\pi'}$. Since $\pi' \concat
  \seq i$ is acyclic in $\ol g$, so is $\pi'$, i.e.\ $\pi' \in
  \nodePosAcy{\ol g}{N'}$. Moreover, we have that $\ol \gsuc_i(N')$ is
  defined, i.e.\ $\ol \glab(N')$ is not nullary and in particular not
  $\bot$. Thus, we can apply the induction hypothesis to obtain some
  $n^j \in N'$ with $\glab^j(n^j) \in\Sigma$ and $\pi' \in
  \nodePosAcy{g_j}{n^j}$. Hence, according to the construction of $\ol
  g$, we have $\glab^j(n^j) = \ol\glab(N')$, i.e.\ $\gsuc^j_i(n^j) =
  m^j$ is defined. Furthermore, we then get $m^j \in N$. Since $\pi'
  \concat\seq i \in \nodePos{g_j}{m^j}$, it remains to be shown that
  $\pi' \concat\seq i$ is acyclic in $g_j$. Suppose that $\pi'
  \concat\seq i$ is cyclic in $g_j$. As $\pi'$ is acyclic in $g_j$,
  this means that there is some position $\pi^* \in
  \nodePos{g_j}{m^j}$ with $\pi^* < \pi' \concat\seq i$. Using
  \claimb{}, we obtain that $\pi^* \in \nodePos{\ol g}{N}$. This
  contradicts the assumption of $\pi' \concat\seq i$ being acyclic in
  $\ol g$. Hence, $\pi' \concat\seq i$ is acyclic.

  For \claimd{}, suppose that $n^j \sim^+ m^k$ holds with
  $\glab^j(n^j), \glab^k(m^k) \in\Sigma$. From \claima{}, we obtain
  $\phi_j(n^j) = \phi_k(n^k)$. Moreover, since both $n^j$ and $m^k$
  are not labelled with $\bot$, we know that $\phi_j$ and $\phi_k$ are
  rigid in $n^j$ and $m^k$, respectively, which yields the equations
  \[
  \nodePosAcy{g_j}{n^j} = \nodePosAcy{\oh g}{\phi_j(n^j)} =
  \nodePosAcy{\oh g}{\phi_k(m^k)} = \nodePosAcy{g_k}{m^k}.
  \]

  Next we show that $\eqc{g_1}{\isom}, \eqc{g_1}{\isom} \lebotr
  \eqc{\ol g}{\isom}$ holds by giving two rigid $\bot$-homomorphisms
  $\psi_j\fcolon g_j \homto_\bot \ol g$, $j=1,2$. Define
  $\psi_j\fcolon N^j \funto \ol N$ by $n^j \mapsto
  \eqc{n^j}{\sim^+}$. From \claimb{} and the fact that, according to
  the construction of $\ol g$, $\glab^j(n^j) \in\Sigma$ implies
  $\glab^j(n^j) = \ol\glab(\eqc{n^j}{\sim^+})$, we immediately get
  that $\psi_j$ is a $\bot$-homomorphism by applying
  Lemma~\ref{lem:canhom}. In order to argue that $\psi_j$ is rigid,
  assume that $n^j \in N^j$ with $\glab^j(n^j) \in\Sigma$. According
  to Lemma~\ref{lem:presShar}, it suffices to show that
  $\nodePosAcy{\ol g}{\psi_j(n^j)} \subseteq
  \nodePos{g_j}{n^j}$. Suppose that $\pi \in \nodePosAcy{\ol
    g}{\psi_j(n^j)}$.  Note that, by the construction of $\ol g$,
  $\psi_j(n^j)$ is not labelled with $\bot$ either. Hence, we can
  apply \claimc\ to obtain some $m^k \in \psi_j(n^j)$ with
  $\glab^k(m^k)\in\Sigma$ and $\pi \in \nodePosAcy{g_k}{m^k}$. By
  definition, $m^k \in \psi_j(n^j)$ is equivalent to $n^j \sim^+
  m^k$. Therefore, we can employ \claimd{}, which yields
  $\nodePosAcy{g_k}{m^k} = \nodePosAcy{g_j}{n^j}$. Hence, $\pi \in
  \nodePosAcy{g_j}{n^j}$.

  Note that the construction of $\ol g$ did not depend on $\oh g$,
  viz.\ for any other upper bound $\eqc{\oh h}{\isom}$ of
  $\eqc{g_1}{\isom}, \eqc{g_2}{\isom}$, we get the same term graph
  $\ol g$. Hence, it is still just an arbitrary upper bound which
  means that in order to show that $\eqc{\ol g}{\isom}$ is the least
  upper bound, it suffices to show $\eqc{\ol g}{\isom} \lebotr
  \eqc{\oh g}{\isom}$. For this purpose, we will devise a rigid
  $\bot$-homomorphism $\psi\fcolon \ol g \homto_\bot \oh g$. Define
  $\psi\fcolon \ol N \funto \oh N$ by $\eqc{n^j}{\sim^+} \mapsto
  \phi_j(n^j)$. \claima\ shows that $\psi$ is well-defined. The root
  condition for $\psi$ follows from the root condition for $\phi_1$:
  \[
  \psi(\ol r) = \psi(\eqc{r^1}{\sim^+}) = \phi_1(r^1) = \oh r.
  \]

  For the labelling condition, assume that $\ol\glab(N) = f \in\Sigma$
  for some $N \in \ol N$. Then there is some $n^j \in N$ with
  $\glab^j(n^j) = f$. Therefore, the labelling condition for $\phi_j$
  yields 
  \[
  \oh \glab(\psi(N)) = \oh\glab(\phi_j(n^j)) = \glab^j(n^j) = f
  \]

  For the successor condition, let $\ol \gsuc_i(N) = N'$ for some $N,N'
  \in \ol N$. Then there is some $n^j \in N$ with $\gsuc^j_i(n^j) \in
  N'$. Therefore, the successor condition for $\psi$ follows from the
  successor condition for $\phi_j$ as follows:
  \begin{align*}
    \psi(\ol \gsuc_i(N)) &= \psi(N') = \psi (
    \eqc{\gsuc^j_i(n^j)}{\sim^+}) = \phi_j(\gsuc^j_i(n^j))\\
    &= \oh \gsuc_i(\phi_j(n^j)) = \oh
    \gsuc_i(\psi(\eqc{n^j}{\sim^+})) = \oh \gsuc_i(\psi(N))
  \end{align*}

  Finally, we show that $\psi$ is rigid. To this end, let $N \in \ol
  N$ with $\ol \glab(N) \in\Sigma$. That is, there is some $n^j \in N$
  with $\glab^j(n^j) \in\Sigma$. Recall, that we have shown that
  $\psi_j\fcolon g^j \homto_\bot \ol g$ is rigid. That is, we have 
  \[
  \nodePosAcy{g_j}{n^j} = \nodePosAcy{\ol g}{\psi_j(n^j)} =
  \nodePosAcy{\ol g}{\eqc{n^j}{\sim^+}}.
  \]
  Analogously, we have $\nodePosAcy{\oh g}{\phi_j(n^j)} =
  \nodePosAcy{g_j}{n^j}$ as $\phi_j$ is rigid, too. Using this, we can
  obtain the following equations:
  \begin{gather*}
    \nodePosAcy{\oh g}{\psi(N)} = \nodePosAcy{\oh
      g}{\psi(\eqc{n^j}{\sim^+})} = \nodePosAcy{\oh g}{\phi_j(n^j)} =
    \nodePosAcy{g_j}{n^j} = \nodePosAcy{\ol g}{\eqc{n^j}{\sim^+}} =
    \nodePosAcy{\ol g}{N}
  \end{gather*}
  Hence, $\psi$ is a rigid $\bot$-homomorphism from $\ol g$ to $\oh
  g$.
\end{proof}

\section{Proof of Lemma~\ref{lem:lebotTrunc}}
\label{sec:proof-lemma-refl}

In this appendix, we will give the full proof of
Lemma~\ref{lem:lebotTrunc}. Before we can do this we have to establish
a number of technical auxiliary lemmas.

The lemma below will serve as a tool for the two lemmas that are to
follow afterwards. We know that the set of retained nodes
$\tNodes{g}{d}$ contains at least all nodes at depth smaller than $d$
due to the closure condition \trna{}. However, due to the closure
condition \trnb{} also nodes at a larger depth may be included in
$\tNodes{g}{d}$. The following lemma shows that this is not possible
for nodes labelled with nullary symbols:
\begin{lemma}[labelling]
  \label{lem:depthLabelling}
  Let $g \in \itgraphs$, $\Delta \subseteq \Sigma^{(0)}$ and $d <
  \omega$. If $\sdepth{g}{\Delta} \ge d$, then $\glab^g(n) \nin
  \Delta$ for all $n \in \tNodes{g}{d}$.
\end{lemma}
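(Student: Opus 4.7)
The plan is to exploit the inductive (least fixed-point) definition of $\tNodes{g}{d}$ together with the fact that $\Delta$ consists exclusively of nullary symbols. Concretely, I would set
\[
M = \setcom{n \in N^g}{\glab^g(n) \nin \Delta}
\]
and show that, under the hypothesis $\sdepth{g}{\Delta} \ge d$, the set $M$ satisfies both closure conditions \trna{} and \trnb{} of Definition~\ref{def:truncGraph}. Since $\tNodes{g}{d}$ is the \emph{least} such set, this yields $\tNodes{g}{d} \subseteq M$, which is exactly the claim.

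For condition \trna{}, any $n \in N^g$ with $\depth{g}{n} < d$ satisfies $\depth{g}{n} < d \le \sdepth{g}{\Delta}$, and by the very definition of the $\Delta$-depth as the minimum depth of a $\Delta$-labelled node, this forces $\glab^g(n) \nin \Delta$, so $n \in M$.

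For condition \trnb{}, suppose $n \in M$ and let $m \in \predAcy{g}{n}$. By Definition~\ref{def:truncGraph}, there exist positions $\pi \in \nodePos{g}{m}$ and $i$ with $\pi\concat\seq{i} \in \nodePosAcy{g}{n}$, which means in particular that $\gsuc^g_i(m)$ is defined. Consequently $\rank{g}{m} \ge i+1 > 0$, i.e.\ $\glab^g(m)$ has positive arity. Since $\Delta \subseteq \Sigma^{(0)}$ contains only nullary symbols, this immediately gives $\glab^g(m) \nin \Delta$, so $m \in M$.

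There is no genuine obstacle here: the key insight is simply that an acyclic predecessor, by definition, must possess an outgoing edge and therefore cannot be nullary-labelled. Combined with the base case, which is a direct unfolding of the definition of $\Delta$-depth, this completes the proof by invoking the minimality of $\tNodes{g}{d}$.
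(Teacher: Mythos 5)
Your proof is correct and follows exactly the same route as the paper's: both take the set of non-$\Delta$-labelled nodes, verify closure conditions (T1) and (T2), and conclude by minimality of $\tNodes{g}{d}$. The two key observations (that $\Delta$-depth $\ge d$ handles (T1), and that an acyclic predecessor must have positive arity and hence cannot carry a nullary $\Delta$-label for (T2)) are precisely those used in the paper.
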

\begin{proof}
  We will show that $N_\nabla = \setcom{n \in N^g}{\glab^g(n) \nin
    \Delta}$ satisfies the properties \trna{} and \trnb{} of
  Definition~\ref{def:truncGraph} for the term graph $g$ and depth
  $d$. Since $\tNodes{g}{d}$ is the least such set, we then obtain
  $\tNodes{g}{d} \subseteq N_\nabla$ and, thereby, the claimed
  statement.
  
  For \trna{}, let $n \in N^g$ with $\depth{g}{n} < d$. Since
  $\sdepth{g}{\Delta} \ge d$, we have $\glab^g(n) \nin \Delta$ and,
  therefore, $n\in N_\nabla$. For \trnb{}, let $n \in N_\nabla$ and $m
  \in \predAcy{g}{n}$. Then $m$ cannot be labelled with a nullary
  symbol, a fortiori $\glab^g(m) \nin \Delta$. Hence, we have $m \in
  N_\nabla$.
\end{proof}

The following two lemmas are rather technical. They state that rigid
$\Delta$-homomorphisms preserve retained nodes and in a stricter sense
also fringe nodes.

\begin{lemma}[preservation of retained nodes]
  \label{lem:presTruncNodes}
  Let $g,h \in \itgraphs$, $d < \omega$, $\phi\fcolon g \homto_\Delta
  h$ rigid, and $\sdepth{g}{\Delta} \ge d$. Then $\phi(\tNodes{g}{d})
  = \tNodes{h}{d}$.
\end{lemma}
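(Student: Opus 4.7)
\medskip

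My plan is to prove the equality $\phi(\tNodes{g}{d}) = \tNodes{h}{d}$ by establishing the two inclusions separately, in both cases by induction on the inductive definition of the set of retained nodes, and using that the precondition $\sdepth{g}{\Delta} \ge d$ guarantees, via Lemma~\ref{lem:depthLabelling}, that every node in $\tNodes{g}{d}$ is non-$\Delta$-labelled, so that $\phi$ is rigid on these nodes.

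For the inclusion $\phi(\tNodes{g}{d}) \subseteq \tNodes{h}{d}$, I would show by induction along the closure rules \trna{} and \trnb{} that $n \in \tNodes{g}{d}$ implies $\phi(n) \in \tNodes{h}{d}$. In the base case $\depth{g}{n} < d$, Lemma~\ref{lem:strongDepthPres} (applicable since $n$ is non-$\Delta$-labelled) yields $\depth{h}{\phi(n)} = \depth{g}{n} < d$, so $\phi(n) \in \tNodes{h}{d}$ by \trna. For the inductive step, suppose $n \in \predAcy{g}{m}$ with $m \in \tNodes{g}{d}$, and that $\phi(m) \in \tNodes{h}{d}$ by hypothesis. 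Pick $\pi \concat \seq i \in \nodePosAcy{g}{m}$ with $\pi \in \nodePos{g}{n}$. Rigidity of $\phi$ at $m$ gives $\nodePosAcy{g}{m} = \nodePosAcy{h}{\phi(m)}$, so $\pi \concat \seq i \in \nodePosAcy{h}{\phi(m)}$; and Lemma~\ref{lem:canhom} gives $\pi \in \nodePos{h}{\phi(n)}$. Hence $\phi(n) \in \predAcy{h}{\phi(m)}$, and \trnb{} yields $\phi(n) \in \tNodes{h}{d}$.

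For the reverse inclusion $\tNodes{h}{d} \subseteq \phi(\tNodes{g}{d})$, I would again do induction along \trna{}/\trnb{} but this time for $\tNodes{h}{d}$, producing preimages as I go. In the base case $\depth{h}{n'} < d$, Lemma~\ref{lem:homDepthRev} (applicable because $\depth{h}{n'} < d \le \sdepth{g}{\Delta}$) yields some $n \in \phi^{-1}(n')$ with $\depth{g}{n} \le \depth{h}{n'} < d$, so $n \in \tNodes{g}{d}$ by \trna. For the inductive step, let $n' \in \predAcy{h}{m'}$ with $m' \in \tNodes{h}{d}$, and let $m \in \tNodes{g}{d}$ with $\phi(m) = m'$ be provided by the hypothesis. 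Pick $\pi \concat \seq i \in \nodePosAcy{h}{m'}$ with $\pi \in \nodePos{h}{n'}$. Since $m$ is non-$\Delta$-labelled, rigidity gives $\nodePosAcy{h}{m'} = \nodePosAcy{g}{m}$, so $\pi \concat \seq i \in \nodePosAcy{g}{m}$; let $n = \nodeAtPos{g}{\pi}$, so that $n \in \predAcy{g}{m}$ and hence $n \in \tNodes{g}{d}$ by \trnb. Finally $\pi \in \nodePos{g}{n}$ and Lemma~\ref{lem:canhom} give $\pi \in \nodePos{h}{\phi(n)}$, while $\pi \in \nodePos{h}{n'}$, so $\phi(n) = n'$.

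I expect the main subtlety to be the reverse inclusion: one has to actually construct the preimage $n$ of $n'$ as an acyclic predecessor of the inductively obtained preimage $m$ of $m'$, and this only works because rigidity lets one transport the acyclic position $\pi \concat \seq i$ back from $h$ to $g$ and then cut it at $\pi$. The base case of the reverse inclusion similarly relies crucially on the reverse depth-preservation property of Lemma~\ref{lem:homDepthRev}, whose precondition $\depth{h}{n'} \le \sdepth{g}{\Delta}$ is exactly what the hypothesis $\sdepth{g}{\Delta} \ge d$ supplies.
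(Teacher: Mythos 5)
Your proof is correct and is essentially the paper's argument: the paper presents your two rule inductions as showing that the sets $\phi^{-1}(\tNodes{h}{d}) \cap \setcom{n \in N^g}{\glab^g(n)\nin\Delta}$ and $\phi(\tNodes{g}{d})$ satisfy the closure conditions of Definition~\ref{def:truncGraph} and then invoking leastness, which amounts to the same induction. The individual steps coincide as well: Lemma~\ref{lem:strongDepthPres} and Lemma~\ref{lem:homDepthRev} for the two base cases, and rigidity (justified via Lemma~\ref{lem:depthLabelling}) together with Lemma~\ref{lem:canhom} to transport acyclic positions in the predecessor steps.
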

\begin{proof}
  Let $N_\nabla = \setcom{ n \in N^g}{\glab^g(n) \nin \Delta}$. At
  first we will show that $\phi(\tNodes{g}{d}) \subseteq
  \tNodes{h}{d}$. To this end, we will show that
  $\phi^{-1}(\tNodes{h}{d}) \cap N_\nabla$ satisfies \trna{} and
  \trnb{} of Definition~\ref{def:truncGraph} for term graph $g$ and
  depth $d$. Since $\tNodes{g}{d}$ is the least such set, we then
  obtain $\tNodes{g}{d} \subseteq \phi^{-1}(\tNodes{h}{d}) \cap
  N_\nabla$ and, a fortiori, $\tNodes{g}{d} \subseteq
  \phi^{-1}(\tNodes{h}{d})$ which is equivalent to
  $\phi(\tNodes{g}{d}) \subseteq \tNodes{h}{d}$.

  For \trna{}, let $n \in N^g$ with $\depth{g}{n} < d$. Because
  $\sdepth{g}{\Delta} \ge d$, we know that $\glab^g(n)\nin\Delta$,
  which means by Lemma~\ref{lem:strongDepthPres} that we also have
  $\depth{h}{\phi(n)} < d$. Hence, $\phi(n) \in \tNodes{h}{d}$ by
  \trna{}. Since $\glab^g(n) \nin \Delta$, we thus have $n \in
  \phi^{-1}(\tNodes{h}{d}) \cap N_\nabla$.

  For \trnb{}, let $n \in \phi^{-1}(\tNodes{h}{d}) \cap
  N_\nabla$. That is, we have $\phi(n) \in \tNodes{h}{d}$ and
  $\glab^g(n) \nin \Delta$. Hence, by \trnb{}, it holds that
  $\predAcy{h}{\phi(n)} \subseteq \tNodes{h}{d}$. We have to show now
  that $\predAcy{g}{n} \subseteq \phi^{-1}(\tNodes{h}{d}) \cap
  N_\nabla$. Let $m \in \predAcy{g}{n}$. That is, there is some $\pi
  \concat\seq i \in \nodePosAcy{g}{n}$ with $\pi \in \nodePos{g}{m}$. As
  $\glab^g(n) \nin \Delta$ and $\phi$ is rigid, we know that $\phi$ is
  rigid in $n$. Consequently, $\pi \concat\seq i \in
  \nodePosAcy{h}{\phi(n)}$. Moreover, we have $\pi \in
  \nodePos{h}{\phi(m)}$ by Lemma~\ref{lem:canhom}. Hence, $\phi(m) \in
  \predAcy{h}{\phi(n)}$ and, therefore, $\phi(m) \in \tNodes{h}{d}$ by
  \trnb{}. Additionally, as $m$ has a successor in $g$, it cannot be
  labelled with a symbol in $\Delta$. Hence, $m \in
  \phi^{-1}(\tNodes{h}{d}) \cap N_\nabla$.

  In order to prove the converse inclusion $\phi(\tNodes{g}{d})
  \supseteq \tNodes{h}{d}$, we will show that $\phi(\tNodes{g}{d})$
  satisfies \trna{} and \trnb{} for term graph $h$ and depth $d$. This
  will prove the abovementioned inclusion since $\tNodes{h}{d}$ is the
  least such set.

  For \trna{}, let $n \in N^h$ with $\depth{h}{n} < d$. By
  Lemma~\ref{lem:homDepthRev}, there is some $m \in N^g$ with
  $\depth{g}{m} < d$ and $\phi(m) = n$. Hence, according to \trna{},
  we have $m \in \tNodes{g}{d}$ and, therefore, $n \in
  \phi(\tNodes{g}{d})$.
  
  For \trnb{}, let $n \in \phi(\tNodes{g}{d})$. That is, there is some
  $m \in \tNodes{g}{d}$ with $\phi(m) = n$. By \trnb{}, we have
  $\predAcy{g}{m} \subseteq \tNodes{g}{d}$. We must show that
  $\predAcy{h}{n} \subseteq \phi(\tNodes{g}{d})$. Let $n' \in
  \predAcy{h}{n}$. That is, there is some $\pi \concat\seq i \in
  \nodePosAcy{h}{n}$ with $\pi\in \nodePos{h}{n'}$. Since $m \in
  \tNodes{g}{d}$, we have $\glab^g(m) \nin \Delta$ by
  Lemma~\ref{lem:depthLabelling}. Consequently, $\phi$ is rigid in $m$
  which yields that $\pi \concat\seq i \in \nodePosAcy{g}{m}$. Note that
  then also $\pi \in \pos{g}$. Let $m' = \nodeAtPos{g}{\pi}$. Thus,
  $m' \in \predAcy{g}{m}$ and, therefore, $m' \in \tNodes{g}{d}$
  according to \trnb{}. Moreover, because $\pi \in \nodePos{g}{m'}
  \cap \nodePos{h}{n'}$, we are able to obtain from
  Lemma~\ref{lem:canhom} that $\phi(m') = n'$. Hence, $n' \in
  \phi(\tNodes{g}{d})$.
\end{proof}

\begin{lemma}[preservation of fringe nodes]
  \label{lem:presFNodes}
  Let $g, h \in \itgraphs$, $\phi\fcolon g \homto_\Delta h$ rigid, $0
  < d < \omega$, $\sdepth{g}{\Delta} \ge d$, $n\in N^g$, and $0\le i <
  \rank{g}{n}$. Then $n^i \in \fNodes{g}{d}$ iff $\phi(n)^i \in
  \fNodes{h}{d}$.
\end{lemma}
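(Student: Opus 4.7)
My plan is to unfold the definition of $\fNodes{\cdot}{d}$ into its three constituent conditions and show that each is preserved under $\phi$. Since $0 \le i < \rank{g}{n}$ and $\Delta \subseteq \Sigma^{(0)}$, the node $n$ is non-$\Delta$, so $\phi$ is both rigid and homomorphic at $n$; in particular, $\phi(\gsuc^g_i(n)) = \gsuc^h_i(\phi(n))$, $\nodePosAcy{g}{n} = \nodePosAcy{h}{\phi(n)}$, and $\depth{g}{n} = \depth{h}{\phi(n)}$ by Lemma~\ref{lem:strongDepthPres}. The last equality immediately gives $\depth{g}{n} \ge d-1$ iff $\depth{h}{\phi(n)} \ge d-1$.

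Next I would show $n \in \tNodes{g}{d}$ iff $\phi(n) \in \tNodes{h}{d}$. The forward direction is Lemma~\ref{lem:presTruncNodes}. For the converse, any $m \in \tNodes{g}{d}$ with $\phi(m) = \phi(n)$ must be non-$\Delta$ by Lemma~\ref{lem:depthLabelling}, and since $n$ is non-$\Delta$ too, Lemma~\ref{lem:strongD-homInj} forces $m = n$.

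I would then attack the disjunctive fringe condition by splitting on whether $\gsuc^g_i(n)$ is $\Delta$-labelled. In the non-$\Delta$ case the translation is straightforward: preservation of retained nodes combined with injectivity yields $\gsuc^g_i(n) \in \tNodes{g}{d}$ iff $\gsuc^h_i(\phi(n)) \in \tNodes{h}{d}$, and rigidity of $\phi$ at $\gsuc^g_i(n)$ together with Lemma~\ref{lem:canhom} transfers the acyclic-predecessor condition in both directions. When $\gsuc^g_i(n)$ is $\Delta$-labelled, its depth is $\ge \sdepth{g}{\Delta} \ge d$, so Lemma~\ref{lem:depthLabelling} automatically excludes it from $\tNodes{g}{d}$ and implies $\depth{g}{n} \ge d-1$; hence the $g$-side of the fringe condition is satisfied, and the remaining task is to verify the $h$-side in the delicate situation that $\gsuc^h_i(\phi(n))$ still lands in $\tNodes{h}{d}$.

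This last sub-case is where I expect the main obstacle. In it, $\phi$ identifies the $\Delta$-labelled $\gsuc^g_i(n)$ with a non-$\Delta$ pre-image $m \in \tNodes{g}{d}$ obtained from $\tNodes{h}{d} = \phi(\tNodes{g}{d})$, and I must establish $\phi(n) \nin \predAcy{h}{\gsuc^h_i(\phi(n))}$. Assuming otherwise, let $\pi' \concat \seq{j} \in \nodePosAcy{h}{\phi(m)}$ with $\pi' \in \nodePos{h}{\phi(n)}$ witness this. Rigidity at $m$ puts $\pi' \concat \seq{j}$ into $\nodePosAcy{g}{m}$, and rigidity at $n$ then yields $\pi' \in \nodePos{g}{n}$, so $m = \gsuc^g_j(n)$ in $g$. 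Independently, I would observe that every position of $\gsuc^g_i(n)$ in $g$ must be cyclic in $h$, since an acyclic such position would, by rigidity at $m$, be forced to lie in $\nodePosAcy{g}{m}$ and hence be a position of $m \neq \gsuc^g_i(n)$ in $g$. Applied to the position $\pi' \concat \seq{i}$, this forces $\phi(m)$ to reappear at some prefix $\pi^* \le \pi'$ in $h$; rigidity at $m$ once more places $m$ at position $\pi^*$ in $g$, and then $m$ occurs along the path $\pi' \concat \seq{j}$ at both $\pi^*$ and $\pi' \concat \seq{j}$, contradicting the acyclicity of $\pi' \concat \seq{j}$ in $g$.
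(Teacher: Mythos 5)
Your proof is correct, and it follows a somewhat different organisation than the paper's. The paper proves the two implications separately, case-splitting on which disjunct of the fringe condition holds and only afterwards on whether $\gsuc^g_i(n)$ is $\Delta$-labelled; in the problematic situation ($\gsuc^g_i(n)\nin\tNodes{g}{d}$ but $\gsuc^h_i(\phi(n))\in\tNodes{h}{d}$, witnessed by a retained $m$ with $\phi(m)=\gsuc^h_i(\phi(n))$ and $m\neq\gsuc^g_i(n)$) it argues directly that no position $\pi\concat\seq{i}$ with $\pi\in\nodePosAcy{h}{\phi(n)}$ can lie in $\nodePosAcy{h}{\phi(m)}$, using rigidity at $n$ and at $m$. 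You instead establish up front the biconditional transfer of each constituent (retained-ness of $n$, depth, retained-ness of the successor, the acyclic-predecessor clause), split on whether $\gsuc^g_i(n)$ is $\Delta$-labelled, and in the single delicate sub-case run a proof by contradiction: a hypothetical witness $\pi'\concat\seq{j}\in\nodePosAcy{h}{\phi(m)}$ with $\pi'\in\nodePos{h}{\phi(n)}$ is pulled back via rigidity at $m$ and $n$, the cyclicity of $\pi'\concat\seq{i}$ in $h$ forces $\phi(m)$ to occur at some prefix $\pi^*\le\pi'$, and pulling that back makes $\pi'\concat\seq{j}$ cyclic in $g$, contradicting its acyclicity. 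Both arguments rest on the same lemmas (Lemmas~\ref{lem:presTruncNodes}, \ref{lem:depthLabelling}, \ref{lem:strongDepthPres}, \ref{lem:strongD-homInj} and \ref{lem:canhom}); what your version buys is that the contradiction covers witnesses $\pi'\concat\seq{j}$ for an \emph{arbitrary} successor index $j$, exactly as the definition of $\predAcy{h}{\cdot}$ demands, whereas the paper's direct computation is phrased only for the index $i$. One small imprecision: in the non-$\Delta$ case, the backward transfer of the acyclic-predecessor clause needs more than rigidity at $\gsuc^g_i(n)$ and Lemma~\ref{lem:canhom} -- to bring the prefix $\pi$ back to a position of $n$ in $g$ you must either use rigidity at $n$ (a prefix of an acyclic position is acyclic) or injectivity via Lemma~\ref{lem:strongD-homInj}; both are already on your table and you use exactly this manoeuvre in the delicate sub-case, so this is a citation fix rather than a gap.
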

\begin{proof}
  Note that, by Lemma~\ref{lem:depthLabelling}, we have that
  $\glab^g(n) \nin \Delta$ for all nodes $n \in
  \tNodes{g}{d}$. Additionally, by Lemma~\ref{lem:presTruncNodes}, we
  obtain $\phi(\tNodes{g}{d}) = \tNodes{h}{d}$ and, therefore,
  according to the labelling condition for $\phi$, we get that
  $\glab^h(n) \nin \Delta$ for all $n \in \tNodes{h}{d}$.

  At first we will show the ``only if'' direction. To this end, let
  $n^i \in \fNodes{g}{d}$. By definition, we then have $\depth{g}{n} \ge
  d - 1$. Hence, by Lemma~\ref{lem:strongDepthPres},
  $\depth{h}{\phi(n)} \ge d - 1$. Furthermore, we have that
  $\gsuc^g_i(n) \nin \tNodes{g}{d}$ or $n \nin
  \predAcy{g}{\gsuc^g_i(n)}$. We show now that in either case we can
  conclude $\phi(n)^i \in \fNodes{h}{d}$.

  Let $\gsuc^g_i(n) \nin \tNodes{g}{d}$. If we have
  $\gsuc^h_i(\phi(n)) \nin \tNodes{h}{d}$, then $\phi(n)^i \in
  \fNodes{h}{d}$. So suppose $\gsuc^h_i(\phi(n)) \in
  \tNodes{h}{d}$. Since $\tNodes{h}{d} = \phi(\tNodes{g}{d})$,
  according to Lemma~\ref{lem:presTruncNodes}, we find some $m \in
  \tNodes{g}{d}$ with $\phi(m) = \gsuc^h_i(\phi(n))$. However, since
  $\gsuc^g_i(n) \nin \tNodes{g}{d}$, we know that $m \neq
  \gsuc^g_i(n)$. We now show $\phi(n) \nin
  \predAcy{h}{\gsuc^h_i(\phi(n))}$ by showing that $\pi\concat\seq i
  \nin \nodePosAcy{h}{\gsuc^h_i(\phi(n))}$ whenever $\pi \in
  \nodePosAcy{h}{\phi(n)}$:
  \begin{align*}
    \pi \in \nodePosAcy{h}{\phi(n)}%
    \iff& \pi \in \nodePosAcy{g}{n}%
    \tag{$\phi$ is rigid in $n$}\\%
    \implies& \pi\concat\seq i \nin \nodePosAcy{g}{m}%
    \tag{$m \neq \gsuc^g_i(n)$}\\%
    \iff& \pi\concat\seq i \nin \nodePosAcy{h}{\phi(m)}%
    \tag{$\phi$ is rigid in $m$}\\%
    \iff& \pi\concat\seq i \nin \nodePosAcy{h}{\gsuc^h_i(\phi(n))}%
    \tag{$\phi(m) = \gsuc^h_i(\phi(n))$}%
  \end{align*}
  Together with $\depth{h}{\phi(n)} \ge d - 1$, this implies that
  $\phi(n)^i \in \fNodes{h}{d}$.

  Let $n \nin \predAcy{g}{\gsuc^g_i(n)}$. If $\phi(n) \nin
  \predAcy{h}{\gsuc^h_i(\phi(n))}$, then $\phi(n)^i \in
  \fNodes{h}{d}$. So suppose that $\phi(n) \in
  \predAcy{h}{\gsuc^h_i(\phi(n))}$. Hence, $\phi(n) \in
  \predAcy{h}{\phi(\gsuc^g_i(n))}$ as $\phi$ is homomorphic in $n$. If
  $\glab^g(\gsuc^g_i(n)) \nin \Delta$, then $\phi$ is rigid in
  $\gsuc^g_i(n)$ and we would also get $n \in
  \predAcy{g}{\gsuc^g_i(n)}$ which contradicts the assumption. Hence,
  $\glab^g(\gsuc^g_i(n)) \in \Delta$ and, therefore, $\gsuc^g_i(n)
  \nin \tNodes{g}{d}$ according to
  Lemma~\ref{lem:depthLabelling}. Thus, we can employ the argument for
  this case that we have already given above.

  We now turn to the converse direction. For this purpose, let
  $\phi(n)^i \in \fNodes{h}{d}$. Then $\depth{h}{\phi(n)} \ge d - 1$
  and, consequently $\depth{g}{n} \ge d - 1$ by
  Lemma~\ref{lem:strongDepthPres}. Additionally, we also have
  $\gsuc^h_i(\phi(n)) \nin \tNodes{h}{d}$ or $\phi(n) \nin
  \predAcy{h}{\gsuc^h_i(\phi(n))}$. Again we will show that in either
  case we can conclude $n^i \in \fNodes{g}{d}$.

  If $\gsuc^h_i(\phi(n)) \nin \tNodes{h}{d}$, then $\phi(\gsuc^g_i(n))
  \nin \tNodes{h}{d}$ and, therefore, $\phi(\gsuc^g_i(n)) \nin
  \phi(\tNodes{g}{d})$ according to
  Lemma~\ref{lem:presTruncNodes}. Consequently, $\gsuc^g_i(n)
  \nin\tNodes{g}{d}$ which implies that $n^i \in \fNodes{g}{d}$.

  Let $\phi(n) \nin \predAcy{h}{\gsuc^h_i(\phi(n))}$. If $n \nin
  \predAcy{g}{\gsuc^g_i(n)}$, then we get $n^i \in \fNodes{g}{d}$
  immediately. So assume that $n \in \predAcy{g}{\gsuc^g_i(n)}$. If
  $\glab^g(\gsuc^g_i(n)) \nin \Delta$, then $\phi$ would be rigid in
  $\gsuc^g_i(n)$. Thereby, we would get $\phi(n) \in
  \predAcy{h}{\phi(\gsuc^g_i(n))}$ which contradicts the
  assumption. Hence, $\glab^g(\gsuc^g_i(n)) \in \Delta$. According to
  Lemma~\ref{lem:depthLabelling}, we then have $\gsuc^g_i(n) \nin
  \tNodes{g}{d}$ and, therefore, $n^i \in \fNodes{g}{d}$.
\end{proof}

\begin{figure}
  \centering
  \begin{tikzpicture}[node distance=4cm, allow upside down, on grid]
    \node[node name=175:r] (r) {$h$}%
    child {%
      node[node name=175:n] (n) {$h$}%
        child {%
          node[node name=175:m] (m) {$h$}%
          child {%
            node[node name=175:o] (o) {$\bot$}%
          }%
      }%
    };%

    \node[node name=5:\ol{r},right=of r] (rp) {$h$}
    child {
      node[node name=5:\ol{n}] (np) {$h$}
    };
    \draw (np) edge[min distance=1cm,out=-45,in=0,->] (rp);
    \begin{scope}[|-to,black!50,dashed, thin] 
      \path %
      (r) edge (rp)%
      (n) edge (np)%
      (m) edge[bend right=20] (rp)%
      (o) edge[bend right=20] (np)%
      ;
      \path(r) -- (rp) node[midway,above,black!70] {$\phi$};
    \end{scope}

    \begin{scope}[black!90, node distance=3.7cm]
      \node[below=of r] (g) {$h$};
      \node[node distance=5mm,left=of g] {$\phi\fcolon$};
      \node[below=of rp] (h) {$g$};
      \draw[fun,shorten=5mm] (g) -- (h) node[very near end,below] {\small$\bot$};
    \end{scope}
  \end{tikzpicture}
  \caption{Fringe nodes and rigid $\bot$-homomorphisms.}
  \label{fig:fNodes}
\end{figure}
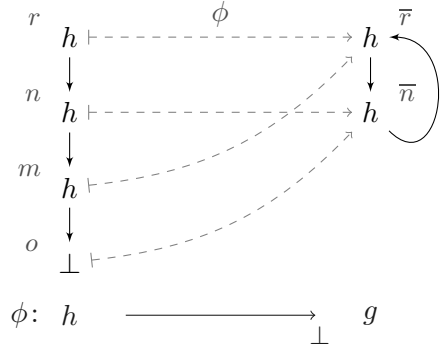

The above lemma depends on the peculiar definition of fringe nodes --
in particular those fringe nodes that are due to the condition
\[
\depth{g}{n} \ge d - 1 \text{ and } n\nin \predAcy{g}{\gsuc^g_i(n)}.
\]
Recall that this condition produces a fringe node for each edge from a
retained node that closes a cycle. Let us have a look at the term
graph $g$ depicted in Figure~\ref{fig:fNodes}. The rigid truncation
$\truncr{g}{2}$ of $g$ is shown in Figure~\ref{fig:loopTrunc}. If the
abovementioned alternative condition for fringe nodes would not be
present, then the set $\fNodes{g}{2}$ would be empty (and, thus,
$\truncr{g}{2} = g$). Then, however, the rigid $\bot$-homomorphism
$\phi$ illustrated in Figure~\ref{fig:fNodes} would violate
Lemma~\ref{lem:presFNodes}. Since the node $m$ is cut off from $h$ in
the truncation $\truncr{h}{2}$, there is a fringe node $n^0$ in
$\truncr{h}{2}$. On the other hand, there would be no fringe node $\ol
n^0$ in $\truncr{g}{2}$ if not for the alternative condition above.

\theoremstyle{plain}\newtheorem*{lemLebotTrunc}{Lemma~\ref{lem:lebotTrunc}}
\begin{lemLebotTrunc}[$\lebotr$ and rigid truncation]
  Given $g, h \in \iptgraphs$ and $d < \omega$ with $g \lebotr h$ and
  $\sdepth{g}{\bot} \ge d$, we have that $\truncr{g}{d} \isom
  \truncr{h}{d}$.
\end{lemLebotTrunc}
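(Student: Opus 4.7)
The plan is to construct an explicit isomorphism $\psi\fcolon \truncr{g}{d} \isoto \truncr{h}{d}$ by lifting the rigid $\bot$-homomorphism $\phi\fcolon g \homto_\bot h$ that witnesses $g \lebotr h$. The case $d = 0$ is handled separately: both truncations consist of a single $\bot$-labelled node and are thus trivially isomorphic. Likewise, the corner case $d = \omega$ is excluded by hypothesis, so we may assume $0 < d < \omega$. Concretely, I would define $\psi$ on $N^{\truncr{g}{d}} = \tNodes{g}{d} \uplus \fNodes{g}{d}$ by $\psi(n) = \phi(n)$ for $n \in \tNodes{g}{d}$ and $\psi(n^i) = \phi(n)^i$ for $n^i \in \fNodes{g}{d}$.

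Well-definedness of $\psi$ (i.e.\ $\psi(N^{\truncr{g}{d}}) \subseteq N^{\truncr{h}{d}}$) will follow directly from the two preservation results already in hand: Lemma~\ref{lem:presTruncNodes} gives $\phi(\tNodes{g}{d}) = \tNodes{h}{d}$, while Lemma~\ref{lem:presFNodes} gives $\phi(n)^i \in \fNodes{h}{d}$ whenever $n^i \in \fNodes{g}{d}$. Both lemmas apply because the side condition $\sdepth{g}{\bot} \ge d$ is exactly the hypothesis of our lemma.

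For bijectivity I would check injectivity and surjectivity of $\psi$ directly, and then appeal to Lemma~\ref{lem:isomBij} to promote the bijective homomorphism to an isomorphism. Injectivity on retained nodes reduces to injectivity of $\phi$ on non-$\bot$-nodes, which is Lemma~\ref{lem:strongD-homInj} (applicable because Lemma~\ref{lem:depthLabelling} ensures every node in $\tNodes{g}{d}$ is labelled in $\Sigma$). Injectivity on fringe nodes follows because $\phi(n)^i = \phi(n')^{i'}$ forces $i = i'$ and $\phi(n) = \phi(n')$, hence $n = n'$ by the previous case. The two groups cannot be identified since fringe nodes in $\truncr{h}{d}$ are fresh by construction and lie outside $N^h$. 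Surjectivity onto $\tNodes{h}{d}$ is the other inclusion of Lemma~\ref{lem:presTruncNodes}, and surjectivity onto $\fNodes{h}{d}$ uses the ``if'' direction of Lemma~\ref{lem:presFNodes}.

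It remains to verify the homomorphism conditions; I expect this to be where the bookkeeping is densest, but it proceeds by a routine case split on whether each node is retained or a fringe node. The root condition is immediate from $r^g \in \tNodes{g}{d}$ and the root condition for $\phi$. The labelling condition on a retained node reduces via Lemma~\ref{lem:depthLabelling} to the labelling condition of $\phi$, while on a fringe node both sides are $\bot$. The successor condition for $n \in \tNodes{g}{d}$ and $0 \le i < \rank{g}{n}$ splits on whether $n^i \in \fNodes{g}{d}$: if not, Lemma~\ref{lem:presFNodes} gives $\phi(n)^i \nin \fNodes{h}{d}$, and the equation follows from the successor condition for $\phi$; if so, both sides collapse to the fringe node by definition of $\psi$. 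Fringe nodes have no successors, so nothing further is needed. The main obstacle, as anticipated, is keeping the two clauses of the fringe-node definition straight in the successor case, but this has been pre-digested in Lemma~\ref{lem:presFNodes}, so the remaining work is essentially unpacking definitions.
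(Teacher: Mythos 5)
Your proposal is correct and follows essentially the same route as the paper's proof: the same map $\psi$ lifting $\phi$ on retained and fringe nodes, well-definedness via Lemmas~\ref{lem:presTruncNodes} and \ref{lem:presFNodes}, injectivity via Lemmas~\ref{lem:depthLabelling} and \ref{lem:strongD-homInj}, and the same case analysis for the homomorphism conditions before invoking Lemma~\ref{lem:isomBij}. The only cosmetic difference is that you check surjectivity explicitly, whereas the paper gets it for free since Lemma~\ref{lem:isomBij} already promotes any injective homomorphism to an isomorphism.
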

\begin{proof}[Proof of Lemma~\ref{lem:lebotTrunc}]
  For $d = 0$, this is trivial. So assume $d > 0$. Since $g \lebotr h$,
  there is a rigid $\bot$-homomorphism $\phi\fcolon g \homto_\bot
  h$. Define the function $\psi$ as follows:
  \begin{align*}
    \psi\fcolon N^{\truncr{g}{d}} &\funto N^{\truncr{h}{d}}\\
    \tNodes{g}{d} \ni n &\mapsto \phi(n) \\
    \fNodes{g}{d} \ni n^i &\mapsto \phi(n)^i
  \end{align*}
  At first we have to argue that $\psi$ is well-defined. For this
  purpose, we first need that $\phi(\tNodes{g}{d}) \subseteq
  N^{\truncr{g}{d}}$. Lemma~\ref{lem:presTruncNodes} confirms
  this. Secondly, we need that $n^i \in \fNodes{g}{d}$ implies
  $\phi(n)^i \in N^{\truncr{g}{d}}$. This is guaranteed by
  Lemma~\ref{lem:presFNodes}.

  Next we show that $\psi$ is a homomorphism from $\truncr{g}{d}$ to
  $\truncr{h}{d}$. The root condition is inherited from $\phi$ as
  $r^{\truncr{g}{d}} \in \tNodes{g}{d}$. Note that, according to
  Lemma~\ref{lem:depthLabelling}, we have $\glab^g(n) \in\Sigma$ for
  all $n \in \tNodes{g}{d}$. Hence, $\phi$ is homomorphic in
  $\tNodes{g}{d}$ which means that the labelling condition for nodes
  in $\tNodes{g}{d}$ is also inherited from $\phi$. For nodes $n^i \in
  \fNodes{g}{d}$, we have $\glab^{\truncr{g}{d}}(n^i) = \bot$. Since, by
  definition, $\psi(n^i) \in \fNodes{h}{d}$, we can conclude
  $\glab^{\truncr{h}{d}}(\psi(n^i)) = \bot$.

  The successor condition is trivially satisfied by nodes in
  $\fNodes{g}{d}$ as they do not have any successors. Let $n \in
  \tNodes{g}{d}$ and $0 \le i < \rank{\truncr{g}{d}}{n}$. We
  distinguish two cases: At first assume that $n^i \nin
  \fNodes{g}{d}$. Hence, $\gsuc^{\truncr{g}{d}}_i(n) = \gsuc^g_i(n) \in
  \tNodes{g}{d}$. Since, by Lemma~\ref{lem:presFNodes}, also
  $\phi(n)^i \nin \fNodes{h}{d}$, we additionally have
  $\gsuc^{\truncr{h}{d}}_i(\phi(n)) = \gsuc^h_i(\phi(n))$. Hence, using
  the successor condition for $\phi$, we can reason as follows:
  \begin{gather*}
    \psi(\gsuc^{\truncr{g}{d}}_i(n)) = \psi(\gsuc^g_i(n)) =
    \phi(\gsuc^g_i(n)) = \gsuc^h_i(\phi(n)) =
    \gsuc^{\truncr{h}{d}}_i(\phi(n)) = \gsuc^{\truncr{h}{d}}_i(\psi(n))
  \end{gather*}
  If, on the other hand, $n^i \in \fNodes{g}{d}$, then
  $\gsuc^{\truncr{g}{d}}_i(n) = n^i$. Moreover, since then $\phi(n)^i
  \in \fNodes{h}{d}$ by Lemma~\ref{lem:presFNodes}, we have
  $\gsuc^{\truncr{h}{d}}_i(\phi(n)) = \phi(n)^i$, too. Hence, we can
  reason as follows:
  \begin{gather*}
    \psi(\gsuc^{\truncr{g}{d}}_i(n)) = \psi(n^i) = \phi(n)^i =
    \gsuc^{\truncr{h}{d}}_i(\phi(n)) = \gsuc^{\truncr{h}{d}}_i(\psi(n))
  \end{gather*}
  This shows that $\psi$ is a homomorphism. Note that, according to
  Lemma~\ref{lem:strongD-homInj}, $\phi$ is injective in
  $\tNodes{g}{d}$. Then also $\psi$ is injective in
  $\tNodes{g}{d}$. For the same reason, $\psi$ is also injective in
  $\fNodes{g}{d}$. Moreover, we have $\psi(\tNodes{g}{d}) \subseteq
  \tNodes{h}{d}$ and $\psi(\fNodes{g}{d}) \subseteq \fNodes{h}{d}$,
  i.e.\ $\psi(\tNodes{g}{d}) \cap \psi(\fNodes{g}{d}) =
  \emptyset$. Hence, $\psi$ is injective which implies, by
  Lemma~\ref{lem:isomBij}, that $\psi$ is an isomorphism from
  $\truncr{g}{d}$ to $\truncr{h}{d}$.
\end{proof}

\end{document}